\def\beq{\begin{equation}\begin{aligned}}
\def\eeq{\end{aligned}\end{equation}}
\newcommand*{\warning}{{\fontencoding{U}\fontfamily{futs}\selectfont\char 66\relax}}
\definecolor{blue}{rgb}{0,0,0.8}
\definecolor{blue1}{rgb}{0,0,1.5}
\definecolor{red}{rgb}{0.85,0,0}
\definecolor{red1}{rgb}{1,0,0}
\definecolor{green}{rgb}{0,0.7,0}
\theoremstyle{definition}
\newtheorem{theoremc}{Theorem}
\newtheorem{definition}{Definition}[section]
\theoremstyle{definition}
\newtheorem{theorem}[definition]{Theorem}
\newtheorem{lemma}[definition]{Lemma}
\newtheorem{proposition}[definition]{Proposition}
\newtheorem{remark}[definition]{Remark}
\newtheorem{assumption}[definition]{Assumption}
\begin{document}
\title{Two-Dimensional Yang-Mills Theory on Surfaces With Corners in Batalin-Vilkovisky Formalism}

\author[a]{R. Iraso,}
\author[b,c]{P. Mnev}
\affiliation[a]{International School of Advanced Studies (SISSA),\\via Bonomea 265, I-34136 Trieste, Italy}
\affiliation[b]{University of Notre Dame,\\Notre Dame, IN 46556, USA}
\affiliation[c]{St. Petersburg Department of V. A. Steklov Institute of Mathematics \\of the Russian Academy of Sciences,\\ 
Fontanka 27, St. Petersburg 191023, Russia}
\emailAdd{riccardo.iraso@gmail.com}
\emailAdd{pmnev@nd.edu}

\keywords{}

\date{\today}

\abstract{
 In this paper we recover the non-perturbative partition function of 2D~Yang-Mills theory from the perturbative path integral. 
 To achieve this goal, we study the perturbative path integral quantization for 2D~Yang-Mills theory on surfaces with boundaries and corners in the Batalin-Vilkovisky formalism (or, more precisely, in its adaptation to the setting with boundaries, compatible with gluing and  cutting -- the BV-BFV formalism).
 We prove that cutting a surface (e.g. a closed one) into simple enough pieces -- building blocks -- and choosing a convenient gauge-fixing on the pieces, and assembling back the partition function on the surface, one recovers the known non-perturbative answers for 2D~Yang-Mills theory.\\\\\\\\\\\\\\\\\\\\
\newpage
}

\maketitle

\pagestyle{section}\thispagestyle{plain}\setcounter{page}{1}
\section{Introduction}

In this paper we study the perturbative path integral quantization of 2D~Yang-Mills theory, defined classically by the first-order action functional
\begin{equation}\label{S_cl}
 S_\mathrm{YM}^{\mathrm{cl}}=\int_\Sigma \langle B, \mathrm{d}A+A\wedge A\rangle +\frac12 \mu\, (B,B) ~,
\end{equation}
with fields~$A$, a $1$-form on the surface~$\Sigma$ with coefficients in a semi-simple Lie algebra $\mathfrak{g}=\mathrm{Lie}(G)$\,, for~$G$ a compact simply-connected structure group, and~$B$, a $0$-form valued in $\mathfrak{g}^*$\,, and where $\mu$ is a fixed background $2$-form (the ``area form'') on~$\Sigma$.
Here $\langle,\rangle$ is the pairing between~$\mathfrak{g}$ and~$\mathfrak{g}^*$ and $(,)$ is the inverse Killing form.%
\footnote{
The case $\mu=0$ of the action functional~\eqref{S_cl} defines the so-called non-abelian~$BF$ theory, which is a topological field theory, i.e., is invariant under diffeomorphisms of surfaces.
}
We study 2D~Yang-Mills theory in the Batalin-Vilkovisky formalism on oriented surfaces~$\Sigma$ with boundaries and corners allowed.%
\footnote{\label{footnote: non-or} 
We assume orientability for convenience of the formalism, but in fact one can define the theory on  non-orientable surfaces as well, twisting the field~$B$ by the orientation line bundle and defining~$\mu$ to be a density on~$\Sigma$\,, rather than a $2$-form. 
The integral~(\ref{S_cl}) is then also understood as an integral of a density.
} 
The quantization is constructed in such a way that it is 
compatible with gluing and cutting of surfaces.

Our primary motivating goal is to construct explicit partition functions of 2D Yang-Mills theory on arbitrary surfaces via the perturbative path integral quantization, $Z= \int \mathrm{e}^{\frac{\mathrm{i}}{\hbar} S_{\mathrm{YM}}}$, and to compare them with the known non-perturbative answers~\cite{Migdal:1975zg, witten:2d_quantum_gauge} formulated in terms of the representation-theoretic data of the structure group~$G$\,. 
There are two immediate problems to deal with:

\textbf{Gauge symmetry.} To define the Feynman diagrams giving the perturbative expansion of the path integral, one needs to deal with the gauge symmetry of the action creating the degeneracy of stationary phase points in the path integral. 
To do this, we employ the Batalin-Vilkovisky~(BV) formalism. 
In the BV formalism, the classical fields~$A,B$ are promoted to non-homogeneous differential forms on the surface (whose homogeneous components are the original classical fields, the Faddeev-Popov ghost for the gauge symmetry and the anti-fields for those) and the gauge-fixing consists in the choice of a Lagrangian submanifold in the BV fields.

\textbf{Computability of the perturbative answers.} Generally, Feynman diagrams are given by certain integrals over configuration spaces of points on the surface, with the integrand given by a product of propagators which depend on the details of the gauge-fixing, and typically these integrals are very hard to compute. 
The remedy for this comes from two ideas: 
\begin{enumerate}
\item Firstly, we employ the BV-BFV refinement of the Batalin-Vilkovisky formalism constructed in~\cite{CMR:classical_BV, CMR:pert_quantum_BV} -- a refinement adapted to gauge theories on manifolds with boundaries allowed, compatible with gluing and cutting (thus, it is a version of BV quantization compatible with Atiyah-Segal functorial picture of QFT). 
We give a brief overview of the BV-BFV setup in Section~\ref{background} below.
In this formalism, we can recover the perturbative partition function on a surface from cutting the surface into pieces -- the appropriate ``building blocks of surfaces'', calculating the perturbative partition function on the pieces and then assembling back into the answer for the whole surface via the gluing formula.
\item Secondly, to compute the answers on our building blocks, we employ special gauge-fixings which allow for explicit computation of Feynman diagrams on the building block. 
E.g. we use the axial gauge for cylinders.
This procedure is equivalent to imposing a very special gauge-fixing for the theory on the entire surface; it involves the data of cutting of the surface into building blocks.

For instance, the following Feynman graph for 2D Yang-Mills theory on a sphere 
\[
 \begin{tikzpicture}[scale=.4, rotate=180]
  \coordinate (x0) at (0,2);
  \coordinate (x1) at (0,-2);
  \coordinate (x2) at (2,0);
  \coordinate (x3) at (-2,0);
  
  \draw[color=black, thick] (x0) to[out=180, in=90] (x3);
  \draw[decoration={markings, mark=at position 0 with {\arrow{>}}},
        postaction={decorate},color=black, thick] (x3) to[out=-90, in=180] (x1);
  \draw[decoration={markings, mark=at position 0.5 with {\arrow{>}}},
        postaction={decorate},color=black, thick] (x1) to[out=0, in=-90] (x2) node[left=2]{$\mu$};
  \draw[decoration={markings, mark=at position 0.5 with {\arrow{<}}},
        postaction={decorate},color=black, thick] (x2) to[out=90, in=0] (x0);
  \draw[decoration={markings, mark=at position 0.5 with {\arrow{<}}},
        postaction={decorate},color=black, thick] (x0) to (x1);
        
  \draw[color=black, fill=black, thick] ($(x2)-(.15,.15)$) rectangle ($(x2)+(.15,.15)$);          
  
 \end{tikzpicture}
\]
is given by a complicated integral in Lorenz gauge on the sphere (e.g. the one associated to the standard round metric), but in our approach it is explicitly computable, once we split the sphere into four pieces (figure~\ref{feynman_salami}).
\end{enumerate}

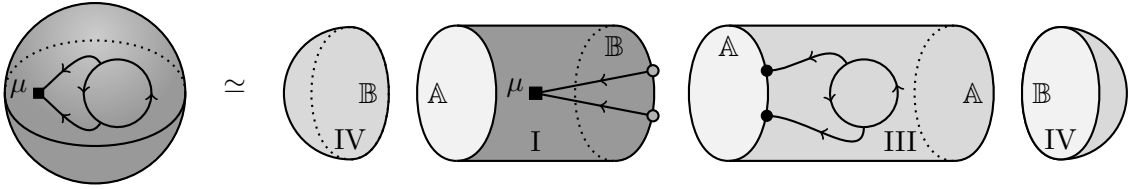
\begin{figure}[h!]
 \centering
  \[
   \begin{tikzpicture}[scale=.3, baseline=(y.base)] 

    \coordinate (x0) at (3,0);
    \coordinate (x1) at (-5,0);
    \coordinate (y) at (0,0);
    
    \filldraw[fill=black!40] (-1,0) circle (4);  
    
    \begin{scope}
     \clip (-1,0) circle (4);
     \filldraw[fill=black!40] (-1,0) circle (4);
     \shade[inner color=black!20, outer color=black!40] (-1,2.5) circle (4);
    \end{scope}
    
    \draw[black, thick] (-1,0) circle (4);  
    
    \draw[dotted, thick, draw=black] (x0) to[out=90, in=90] (x1);
    \draw[thick, draw=black] (x1) to[out=-90, in=-90] (x0) ;

   \coordinate (v5) at (120:1.5);
   \coordinate (v6) at (240:1.5);
   
   \coordinate (b) at (-3.5,0);

    \draw[decoration={markings, mark=at position 0.5 with {\arrow{>}}},
        postaction={decorate},color=black, thick] (240:1.5) arc (240:480:1.5);

   \draw[decoration={markings, mark=at position 0.5 with {\arrow{>}}},
        postaction={decorate},color=black, thick] (120:1.5) arc (120:240:1.5);

    \draw[decoration={markings, mark=at position 0.6 with {\arrow{>}}},
        postaction={decorate},color=black, thick]  (v5) to[out=120, in=40] (b);   
    
    \draw[decoration={markings, mark=at position 0.6 with {\arrow{>}}},
        postaction={decorate},color=black, thick]  (v6) to[out=240, in=-40] (b);   
        
    \draw[color=black, fill=black, thick] ($(b)-(.2,.2)$) rectangle ($(b)+(.2,.2)$) node[left=2]{$\mu$};  
        
   \end{tikzpicture}
  \quad\simeq\quad
  \begin{tikzpicture}[scale=.22, baseline=(z2.base)] 

    \coordinate (z0) at (1,4);
    \coordinate (z1) at (1,-4);
    \coordinate (z2) at (-3,0);
    
    \filldraw[fill=black!15, thick, draw=black] (z0) to[out=180, in=90] (z2) to[out=-90, in=180] (z1) node[above]{IV} to[out=0, in=0] node[left]{$\mathbb{B}$} (z0);
    
    \draw[dotted, thick, draw=black] (z0) to[out=180, in=180] (z1) to[out=0, in=0] (z0) ;
    
  \end{tikzpicture}
  \begin{tikzpicture}[scale=.3, baseline=(x.base)]
  
  \coordinate (a0) at (-1,-3);
  \coordinate (a1) at (-1,3);
  \coordinate (a2) at (6,3);
  \coordinate (x3) at (6,-3);
  \coordinate (x) at (0,0);  
  
  \filldraw[fill=black!40](a0) to[out=0, in=0] (a1)to[out=0, in=180](a2) node[below]{$\mathbb{B}$} to[in=0, out=0](x3)to[out=180, in=0]node[above]{I} (a0) ;
  \filldraw[fill=black!5, draw=black!5](a0) to[out=180, in=180]node[right]{$\mathbb{A}$} (a1)to[in=0, out=0](a0);

  \draw[color=black, thick] (a0) to[out=0, in=0] (a1);
  \draw[color=black, thick] (a1) to[out=180, in=180] (a0);
        
  \draw[color=black, thick] (x3) to[out=0, in=0] (a2);
  \draw[color=black, thick, dotted] (a2) to[out=180, in=180] (x3);

  \draw[color=black, thick](a1) to[out=0, in=180] (a2) ;
  \draw[color=black, thick](a0) to[out=0, in=180] (x3);

  
   \coordinate (B) at (2.5,0);
   \draw[decoration={markings, mark=at position 0.6 with {\arrow{<}}},
        postaction={decorate},color=black, thick]  (B)node[left]{} to (7.7,1);   
   \draw[decoration={markings, mark=at position 0.6 with {\arrow{<}}},
        postaction={decorate},color=black, thick]  (B) to (7.7,-1);   
   \draw[color=black, fill=black, thick] ($(B)-(.25,.25)$) rectangle ($(B)+(.25,.25)$) node[left=2.5]{$\mu$};  
   \draw[thick, color=black, fill=black!30] (7.7,1) circle (7pt);  
   \draw[thick, color=black, fill=black!30] (7.7,-1) circle (7pt);  
   
  \end{tikzpicture}
  \begin{tikzpicture}[scale=.3, baseline=(x.base)]

  \coordinate (x0) at (-6,-3);
  \coordinate (x1) at (-6,3);
  \coordinate (x2) at (4,3);
  \coordinate (x3) at (4,-3);
  
  \filldraw[fill=black!15](x0) to[out=0, in=0] (x1)to[out=0, in=180] (x2)to[in=0, out=0]node[left]{$\mathbb{A}$}(x3)to[out=180, in=0](x0) ;
  \filldraw[fill=black!5, draw=black!15](x0) to[out=180, in=180] (x1) node[below]{$\mathbb{A}$} to[in=0, out=0](x0);
  \node[above left] at (2.8,-3) {III};
      
  \draw[color=black, thick] (x0) to[out=0, in=0] (x1);
  \draw[color=black, thick] (x1) to[out=180, in=180]node[left](x){} (x0);
        
  \draw[color=black, thick] (x3) to[out=0, in=0] (x2);
  \draw[color=black, thick, dotted] (x2) to[out=180, in=180] (x3);

  \draw[color=black, thick](x1) to[out=0, in=180] (x2) ;
  \draw[color=black, thick](x0) to[out=0, in=180] (x3);

  
   \coordinate (v5) at (120:1.5);
   \coordinate (v6) at (270:1.5);
      
   \coordinate (b6) at (-4.3,1);
   \coordinate (b7) at (-4.3,-1);

    \draw[decoration={markings, mark=at position 0.5 with {\arrow{>}}},
        postaction={decorate},color=black, thick] (270:1.5) arc (270:480:1.5);

   \draw[decoration={markings, mark=at position 0.5 with {\arrow{>}}},
        postaction={decorate},color=black, thick] (120:1.5) arc (120:270:1.5);

    \draw[decoration={markings, mark=at position 0.5 with {\arrow{>}}},
        postaction={decorate},color=black, thick]  (v5) to[out=120, in=0] (b6);   
    \draw[color=black, fill=black] (b6) circle (7pt);     
    
    \draw[decoration={markings, mark=at position 0.5 with {\arrow{>}}},
        postaction={decorate},color=black, thick]  (v6) to[out=270, in=0] (b7);   
    \draw[color=black, fill=black] (b7) circle (7pt);     
  \end{tikzpicture}   
  \begin{tikzpicture}[scale=.22, baseline=(y.base)] 

    \coordinate (x0) at (-1,4);
    \coordinate (x1) at (-1,-4);
    \coordinate (y) at (3,0);
    
    \filldraw[fill=black!15, thick, draw=black] (x0) to[out=0, in=90] (y) to[out=-90, in=0] (x1) to[out=180, in=180] node[right]{} (x0);
    
    \filldraw[fill=black!5, thick, draw=black] (x0) to[out=180, in=180] node[right]{$\mathbb{B}$} (x1) node[above]{IV} to[out=0, in=0] (x0) ;
   \end{tikzpicture}
  \]
\caption{A two-loops Feynman diagram for 2D Yang-Mills on the sphere, computed by suitably cutting the surface. The darker regions are the ones where the $2$-form $\mu$ is allowed to be nonzero.}
\label{feynman_salami}
\end{figure}

\begin{remark}
 Axial gauge, which we often use, corresponds to a singular propagator which can be obtained as a limit of metric propagators (given by smooth forms on the configuration space of two points) corresponding to the degeneration of geometry of the cylinder where the ratio of the circumference to the length tends either to zero or to infinity (corresponding to two versions of the axial gauge), see~\cite{wernli:thesis} for a detailed discussion.
 Thus, our computable answers obtained in a convenient gauge arise as a limit (corresponding to a limiting point on a certain curve in the space of metrics on the surface~$\Sigma$) of perturbative answers computed with smooth propagators.
\end{remark}

\subsection{Surfaces of non-negative Euler characteristic}

Surfaces  of non-negative Euler characteristic (possibly with boundary, but with no corners) can be decomposed into the following building blocks: 
\begin{enumerate}[(I)]
\item Cylinder with polarizations~$\mathbb{A},\mathbb{B}$ fixed on the two boundary circles,%
 \footnote{
  That is, with boundary conditions prescribing the pullback of~$\mathsf{A}$ to one boundary circle, and the pullback of~$\mathsf{B}$ to the other circle.
 }
 with a nonzero area form~$\mu$ allowed. 
\item Cylinder with polarization~$\mathbb{B}$ fixed on both boundaries, with the background $2$-form $\mu=0$\,.
\item Cylinder with polarization~$\mathbb{A}$ fixed on both boundaries, with ~$\mu=0$\,.
\item Disk with polarization~$\mathbb{B}$ on the boundary and with~$\mu=0$\,.
\end{enumerate}
\[
   \begin{tikzpicture}[scale=.2, baseline=(x.base)] 
    \coordinate (x) at (0,0);
    \node (z0) at (0,5.2) {};
    \node (z1) at (0,-5.2) {};
   
    \coordinate (x0) at (-4,4);
    \coordinate (x1) at (-4,-4);
    \coordinate (y0) at (4,4);
    \coordinate (y1) at (4,-4);

    \filldraw[fill=black!5, thick, draw=black] (x0) to[out=180, in=180] node[right]{$\mathbb{A}$} (x1) to[out=0, in=0] (x0);
    
    \filldraw[fill=black!40, thick, draw=black] (x0) to[out=0, in=0] (x1) to[out=0, in=180] node[above]{I} (y1) to[out=0, in=0]node[left]{$\mathbb{B}$} (y0) 
      to[out=180, in=0] (x0);
    
    \draw[dotted, thick, draw=black] (y0) to[out=180, in=180] (y1);
   \end{tikzpicture}
   \hspace{.7cm}
   \begin{tikzpicture}[scale=.2, baseline=(x.base)] 
    \coordinate (x) at (0,0);
    \node (z0) at (0,5.2) {};
    \node (z1) at (0,-5.2) {};
   
    \coordinate (x0) at (-4,4);
    \coordinate (x1) at (-4,-4);
    \coordinate (y0) at (4,4);
    \coordinate (y1) at (4,-4);

    \filldraw[fill=black!5, thick, draw=black] (x0) to[out=180, in=180] node[right]{$\mathbb{B}$} (x1) to[out=0, in=0] (x0);
    
    \filldraw[fill=black!15, thick, draw=black] (x0) to[out=0, in=0] (x1) to[out=0, in=180] node[above]{II} (y1) to[out=0, in=0]node[left]{$\mathbb{B}$} (y0) 
      to[out=180, in=0] (x0);
    
    \draw[dotted, thick, draw=black] (y0) to[out=180, in=180] (y1);
   \end{tikzpicture} 
   \hspace{.7cm}
      \begin{tikzpicture}[scale=.2, baseline=(x.base)] 
    \coordinate (x) at (0,0);
    \node (z0) at (0,5.2) {};
    \node (z1) at (0,-5.2) {};
   
    \coordinate (x0) at (-4,4);
    \coordinate (x1) at (-4,-4);
    \coordinate (y0) at (4,4);
    \coordinate (y1) at (4,-4);

    \filldraw[fill=black!5, thick, draw=black] (x0) to[out=180, in=180] node[right]{$\mathbb{A}$} (x1) to[out=0, in=0] (x0);
    
    \filldraw[fill=black!15, thick, draw=black] (x0) to[out=0, in=0] (x1) to[out=0, in=180] node[above]{III} (y1) to[out=0, in=0]node[left]{$\mathbb{A}$} (y0) 
      to[out=180, in=0] (x0);
    
    \draw[dotted, thick, draw=black] (y0) to[out=180, in=180] (y1);
   \end{tikzpicture} 
   \hspace{.7cm}
   \begin{tikzpicture}[scale=.2, baseline=(y.base)] 

    \coordinate (x0) at (1,4);
    \coordinate (x1) at (1,-4);
    \coordinate (y) at (-3,0);
    
    \draw[thick, color=black, fill=black!15] (-1,0) circle (4);  
    \node[above] at (-1,-4) {IV};
    \node[right] at (-5,0) {$\mathbb{B}$};
   \end{tikzpicture}
\]
This is the premise of the BV-BFV formalism as in~\cite{CMR:pert_quantum_BV}, where the connected components of the boundary are decorated with either $\mathbb{A}$-~or $\mathbb{B}$-polarization (boundary condition), and one is allowed to glue an $\mathbb{A}$-boundary circle of one surface to a $\mathbb{B}$-boundary circle of another surface.

We manage to compute building blocks%
\footnote{
 By computing a building block, we mean computing the partition function on it.
}
(I-III) explicitly using the axial gauge, whereas the building block~(IV) can be computed in any gauge due to the vanishing of almost all Feynman diagrams by a degree counting argument.
We use the fact that the partition function can only depend on the total area of a surface to concentrate the area form~$\mu$ on building blocks~(I).%
\footnote{
 More precisely, changing the area form~$\mu$ by an exact 2-form $\mathrm{d}\gamma$ amounts in the BV language to a canonical transformation of the action (pulling back of the action by a symplectomorphism of the space of fields) and the associated change of the partition function by a BV-exact term. 
 Therefore, working modulo BV~exact terms, one can concentrate the area term in an arbitrarily small region.
 We thank Alberto S.~Cattaneo for this remark.
}
Block~(III) is the most complicated in this list. 
We only compute it modulo BV-exact terms (see Section~\ref{Quantum BV-BFV} for the definition): the latter ultimately become irrelevant once we pass from cochain-level answers to the reduced space of states and reduced partition functions (i.e. once we integrate out the bulk residual fields and pass to the cohomology of the boundary BFV differential~$\Omega$).%
\footnote{
 In particular, we exploit the gauge-invariance of the answer, known a priori from the quantum master equation, to reduce to the case of constant connections on the boundary.
}

The cohomology in degree~$0$ of the BFV differential~$\Omega$ on (the BFV model for) the space of states on a circle $\mathcal{H}_{S^1}^{\mathrm{BFV},\mathbb{A}}$ in the $\mathbb{A}$-polarization yields the standard (reduced) space of 
states of 2D Yang-Mills theory -- the space of class functions on the group~$G$\,, $\mathcal{H}_{S^1}^\mathrm{red}=L^2(G)^G$ (see Section~\ref{Omega_A_cohomology}). 
The following is the central result of this paper. 

\begin{theoremc}\label{thm: nonneg chi}
The BV-BFV partition function of 2D Yang-Mills theory for $\Sigma$ any surface with (possibly empty) boundary, with boundary circles decorated with $\mathbb{A}$-polarization, induces,  after integrating out the bulk residual fields and passing to the cohomology of the boundary BFV operator~$\Omega$\,, the Migdal-Witten non-perturbative partition function of 2D Yang-Mills:
\begin{equation}\label{[Z]=Z_non_pert}
\left[\int_\mathrm{residual\; fields} Z^\mathrm{BV-BFV}(\Sigma)\right]=
\underbrace{\sum_{R} (\dim R)^{\chi(\Sigma)}\; \mathrm{e}^{-\frac{\mathrm{i}\hbar a}{2}\cdot C_2(R)
 }\; |R\rangle^{\otimes n}}_{Z^\mathrm{non-pert}(\Sigma)} \qquad  \in (\mathcal{H}_{S^1}^\mathrm{red})^{\otimes n} ~.
\end{equation}
\end{theoremc}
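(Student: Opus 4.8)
The plan is to prove the statement by the cut-and-glue strategy outlined in the introduction: decompose $\Sigma$ into the elementary building blocks (I)--(IV), assemble the partition function via the BV-BFV gluing formula, and then perform the reduction (integrating out the bulk residual fields and passing to $\Omega$-cohomology). Since we restrict to $\chi(\Sigma)\geq 0$, the only orientable pieces needed are disks ($\chi=1$) and cylinders ($\chi=0$), so the decomposition is elementary: every such $\Sigma$ with $n$ boundary circles decorated by $\mathbb{A}$ is obtained by gluing disks~(IV) and cylinders~(I)--(III) along $\mathbb{A}$-to-$\mathbb{B}$ boundaries, as in figure~\ref{feynman_salami}. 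Using the remark that shifting $\mu$ by $\mathrm{d}\gamma$ is a BV canonical transformation, I would first concentrate the whole area $a$ onto the $\mathbb{A}\mathbb{B}$-cylinders~(I), so that all remaining blocks carry $\mu=0$.

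Next I would invoke the block-wise computations from the earlier sections. At the reduced level (after $\Omega_\mathbb{A}$-cohomology, where $\mathcal{H}^\mathrm{red}_{S^1}=L^2(G)^G$ has the orthonormal character basis $\{|R\rangle\}$ indexed by irreducible representations) I expect the dictionary: the disk~(IV) produces the ``cap'' state $\sum_R (\dim R)\,|R\rangle$; the area-free cylinders~(II),(III) produce the identity/pairing between the $\mathbb{A}$- and $\mathbb{B}$-polarizations; and the area-carrying cylinder~(I) produces the operator $\sum_R \mathrm{e}^{-\frac{\mathrm{i}\hbar a}{2}C_2(R)}\,|R\rangle\langle R|$, the Casimir arising from the quadratic $\frac12\mu(B,B)$ term. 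The BV-BFV gluing formula then realizes the $\mathbb{A}$-to-$\mathbb{B}$ gluing as the natural character pairing on $L^2(G)^G$, i.e. $\langle R|R'\rangle=\delta_{RR'}$.

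With this dictionary in place, the final formula follows from two multiplicativity facts. First, the Euler characteristic is additive under gluing along circles (as $\chi(S^1)=0$), so $\chi(\Sigma)=\sum_{\mathrm{blocks}}\chi(\mathrm{block})$; since each disk contributes a factor $\dim R$ and each cylinder a factor $(\dim R)^0=1$, the accumulated dimension factor is exactly $(\dim R)^{\chi(\Sigma)}$. Second, the Casimir exponentials multiply to $\mathrm{e}^{-\frac{\mathrm{i}\hbar a}{2}C_2(R)}$ with $a$ the total area, because the areas of the (I)-cylinders add. Tracking the $n$ free $\mathbb{A}$-boundary circles then yields $\sum_R (\dim R)^{\chi(\Sigma)}\,\mathrm{e}^{-\frac{\mathrm{i}\hbar a}{2}C_2(R)}\,|R\rangle^{\otimes n}$, which is precisely $Z^\mathrm{non-pert}(\Sigma)$.

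The main obstacle I anticipate is the interchange of gluing with reduction: the CMR gluing formula holds at the cochain (BV-BFV) level, whereas the statement asserts equality only after integrating out residual fields and passing to $\Omega$-cohomology. I would need to show that these two operations commute with gluing in the relevant sense, so that the block-by-block reduced answers can legitimately be composed; this is where the BV-exact ambiguity must be controlled. In particular, block~(III) is computed only modulo BV-exact terms, so the key technical point is to verify that such terms are $\Omega$-exact (or are killed by the residual-field integration) and hence drop out of the final reduced partition function. Establishing the cap identification $\mathrm{IV}\mapsto\sum_R(\dim R)\,|R\rangle$ from the explicit $\Omega_\mathbb{A}$-cohomology computation, and confirming that the induced pairing is the character pairing, is the remaining nontrivial input.
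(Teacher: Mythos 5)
There is a genuine gap: the decomposition you propose --- assembling $\Sigma$ from the blocks (I)--(IV) by gluing along entire circles --- is precisely what the paper carries out in Section~\ref{2d_YM_for_chi>0}, and it is shown there to work \emph{only} for the disk. For the $\mathbb{A}$-$\mathbb{A}$ cylinder the circle-gluing answer~\eqref{YM_AA_cyl_pert_funct} carries an extra factor $(\mathrm{i}/\hbar)^{\dim\mathfrak{g}}\,\delta(\mathbb{A}'_{p'},\mathbb{A}_p)$ and is explicitly stated \emph{not} to coincide with the non-perturbative answer (a gauge-fixing inequivalence, cf.\ Remark~\ref{ineq.g.f.}); for the sphere the partially globalized answer~\eqref{sphere_part_funct} retains a zero-mode factor $\rho_{\mathcal{V}}\,\mathrm{e}^{\frac{\mathrm{i}}{2\hbar}\int\langle\mathsf{b},[\mathsf{a},\mathsf{a}]\rangle}$ whose residual-field integral is not well defined; and for the torus the glued answer is ill-defined outright (Section~\ref{closed_surfaces}). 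So the "main obstacle" you flag at the end --- controlling the interchange of gluing, globalization and $\Omega$-reduction --- is not a technicality to be verified but the point where the argument actually breaks. Moreover, your restriction to $\chi(\Sigma)\geq 0$ is forced by the choice of building blocks (disks and cylinders glued along circles generate only spheres, disks, cylinders and tori), whereas the theorem is asserted for arbitrary genus, where $(\dim R)^{\chi(\Sigma)}$ has negative exponent.

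The paper's actual proof (Section~\ref{corners_and_2d_YM}) instead introduces codimension-$2$ corners: $\Sigma$ is cut into polygons --- $\mathbb{A}$-disks with boundary split into $\mathbb{A}$-arcs by $\alpha$-corners, whose partition function is the disk answer~\eqref{A_disk_YM_part_funct} evaluated on the product of arc holonomies --- and pairs of edges are glued via the "bean" of Proposition~\ref{BB_disk_part.funct.}. The edge-gluing formula~\eqref{gluing_A} reduces to $\int_G \mathrm{d}U\, Z_{\Sigma_1}[U]\,Z_{\Sigma_2}[U]$, and the character orthogonality identities $\int_G \mathrm{d}U\,\chi_R(VUWU^\dagger)=\chi_R(V)\chi_R(W)/\dim R$ and $\int_G \mathrm{d}U\,\chi_R(VU)\chi_R(WU^\dagger)=\chi_R(VW)/\dim R$ then produce $(\dim R)^{2-2g-b}$ directly, with a well-defined globalization (the $\mathsf{a}_1$-integral over the Gribov region in~\eqref{gluing_A_complete}). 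Your reduced-level dictionary for the disk, the Casimir insertion from block~(I), and the character pairing are all correct ingredients, but they must be deployed on the polygon/bean decomposition rather than on circle-gluings of (I)--(IV) for the argument to close.
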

Here on the left, $[\dots]$~stands for passing to the class in zeroth $\Omega$-cohomology.%
\footnote{
 In particular, the left side is independent of the details of gauge-fixing and independent of the choice of residual fields.
}
On the right side, non-perturbative partition function is given as the sum over irreducible representations~$R$ of the structure group~$G$\,, $\dim R$~is the dimension of the representation and~$C_2(R)$ is the value of the quadratic Casimir; $|R\rangle$~is the class function on~$G$ corresponding to the character of the representation~$R$\,, mapping $g\mapsto \mathrm{tr}_R g$\,; $\chi(\Sigma)$ is the Euler characteristic; $n$~is the number of boundary circles in~$\Sigma$\,; $a=\int_\Sigma \mu$ is the total area of the surface.

We first prove the comparison~\eqref{[Z]=Z_non_pert} for the case of~$\Sigma$ a disk in Section~\ref{section_gluing}, by presenting the disk as a gluing of building blocks~(I),~(III) and~(IV) above. 
We prove the Theorem~\ref{thm: nonneg chi} for a general surface in Section~\ref{corners_and_2d_YM}.

The gluing property of the r.h.s. of~\eqref{[Z]=Z_non_pert} is
\[
 Z(\Sigma_1\cup_{S^1}\Sigma_2)=\langle Z(\Sigma_1), Z(\Sigma_2) \rangle_{\mathcal{H}^{\mathrm{red}}_{S^1}} ~.
\]
Here on the right side one has the pairing in the space of states for the circle over which the surfaces are being glued.
In the BV-BFV framework it corresponds to gluing two $\mathbb{A}$-boundary circles via an ``infinitesimally short'' $\mathbb{B}-\mathbb{B}$ cylinder -- our building block~(II).

\subsection{General surfaces, surfaces with corners
}

To extend the result~\eqref{[Z]=Z_non_pert} to general surfaces 
we have to consider gluing and cutting with corners. 
In this setting we continue to decorate the codimension~$1$ strata --~circles and intervals~-- with a choice of polarization, $\mathbb{A}$ or $\mathbb{B}$, and we also decorate the codimension $2$ corners with a choice of polarization, $\alpha$ or $\beta$ (corresponding to fixing the value of either field $A$ or field $B$ in the corner).%
\footnote{
 We think of a corner carrying a polarization as the result of a \textit{collapse} of an interval carrying the same polarization, see the discussion of the picture~\ref{pict_1} and picture~\ref{pict_2} for corners in section~\ref{corners_and_2d_YM}.
\[
  \begin{tikzpicture}[scale=.35, baseline=(y.base)]
   \coordinate (y) at (0,.8);
   
   \filldraw[fill=black!40, draw=black!40] (-3,0) rectangle (3,1.2);
   \draw[dashed, black, thick] (-3,0) to (-2,0);
   \draw[black, thick] (-2,0) to (-1,0);
   \draw[black, thick] (-1,0) to (0,0);
   \draw[black, thick] (0,0) to (1,0);
   \draw[black, thick] (1,0) to (2,0);
   \draw[dashed, black, thick] (2,0) to (3,0);
   \node[above] at (0,0) {$\mathbb{A}$};
   
   \draw[black, thick] ($(2,0)+(0,.15)$) to ($(2,0)-(0,.15)$);
   \draw[black, thick] ($(-2,0)+(0,.15)$) to ($(-2,0)-(0,.15)$);
   
   \draw[decoration={markings, mark=at position 1 with {\arrow{>}}},
        postaction={decorate},color=black, thick] (0,-.3) to (0,-1.6) node[above right]{\footnotesize collapse};
   
  \begin{scope}[shift={(0,-3)}]
   \filldraw[fill=black!40, draw=black!40] (-3,0) rectangle (3,1.2);
   \draw[dashed, black, thick] (-3,0) to (3,0);
   
   \draw[black, thick] ($(0,0)+(0,.15)$) to ($(0,0)-(0,.15)$) node[below]{$\alpha$};
  \end{scope}
   
  \end{tikzpicture}
 \]
 }
For gluing, we require that if several domains are meeting at a corner, the respective corner polarizations are the same (unlike the situation with gluing over codimension~1 strata -- those should have the opposite polarization coming from the two sides of the stratum):%
\footnote{
 Actually the BV-BFV formalism does not prescribe, in principle, a particular compatibility between polarizations for the gluing.
 What we describe here is a choice that simplifies the computations.
}
\[
 \begin{tikzpicture}[scale=.65]
  \coordinate (x0) at (0,0);
  \coordinate (x1) at (0,2);
  \coordinate (x2) at (210:2);
  \coordinate (x3) at (330:2);
  
  \filldraw[fill=black!15, draw=black!15] (x3) to (x0)node[below=3pt]{$\alpha$} to (x2) arc (210:330:2);
  \filldraw[fill=black!20, draw=black!15] (x3) to (x0)node[above right]{$\alpha$} to (x1) arc (90:-30:2);
  \filldraw[fill=black!25, draw=black!15] (x2) to (x0)node[above left]{$\alpha$} to (x1) arc (90:210:2);
  
  \draw[color=black, thick] (x0) to node[left]{$\mathbb{B}$} node[right]{$\mathbb{A}$} (x1);
  \draw[color=black, thick] (x0) to node[above]{$\mathbb{B}$} node[below right = -2.5pt]{$\mathbb{A}$}(x2);
  \draw[color=black, thick] (x0) to node[above]{$\mathbb{A}$} node[below left = -2.5pt]{$\mathbb{B}$}(x3);
  
  \filldraw (x0) circle (3pt);
 \end{tikzpicture}
\]
In this setup, one can perform the following moves on codimension~$1$ strata:
\begin{enumerate}[(a)]
 \item\label{split} One can \textit{split} an $\mathbb{A}$-interval (or an $\mathbb{A}$-circle) on the boundary of a surface into $k\geq 2$ $\mathbb{A}$-intervals separated by $\alpha$-corners. Then the partition function for a new surface is obtained by evaluating the partition function for the old surface evaluated on the concatenation of the fields $\mathbb{A}$ on the $k$ intervals.  
 \[
  \begin{tikzpicture}[scale=.6]
   
   \filldraw[fill=black!40, draw=black!40] (-3,0) rectangle (3,1.2);
   \draw[dashed, black, thick] (-3,0) to (-2,0);
   \draw[black, thick] (-2,0) to (-1,0);
   \draw[black, thick] (-1,0) to (0,0);
   \draw[black, thick] (0,0) to (1,0);
   \draw[black, thick] (1,0) to (2,0);
   \draw[dashed, black, thick] (2,0) to (3,0);
   \node[above] at (0,0) {$\mathbb{A}$};
   
   \draw[black, thick] ($(2,0)+(0,.15)$) to ($(2,0)-(0,.15)$);
   \draw[black, thick] ($(-2,0)+(0,.15)$) to ($(-2,0)-(0,.15)$);
   
   \draw[decoration={markings, mark=at position 1 with {\arrow{>}}},
        postaction={decorate},color=black, thick] (0,-.3) to node[right]{\small split} (0,-1.6);
   
  \begin{scope}[shift={(0,-3)}]
   \filldraw[fill=black!40, draw=black!40] (-3,0) rectangle (3,1.2);
   \draw[dashed, black, thick] (-3,0) to (-2,0);
   \draw[black, thick] (-2,0) to node[above]{\small$\mathbb{A}_1$} (-1,0);
   \draw[black, thick] (-1,0) to node[above]{\small$\mathbb{A}_2$} (0,0);
   \draw[black, thick] (0,0) to node[above]{$\dots$} (1,0);
   \draw[black, thick] (1,0) to node[above]{\small$\mathbb{A}_k$} (2,0);
   \draw[dashed, black, thick] (2,0) to (3,0);
   
   \draw[black, thick] ($(2,0)+(0,.15)$) to ($(2,0)-(0,.15)$);
   \draw[black, thick] ($(-2,0)+(0,.15)$) to ($(-2,0)-(0,.15)$);
   \draw[black, thick] ($(-1,0)+(0,.15)$) to ($(-1,0)-(0,.15)$) node[below]{\small$\alpha_1$};
   \draw[black, thick] ($(0,0)+(0,.15)$) to ($(0,0)-(0,.15)$) node[below]{$\dots$};
   \draw[black, thick] ($(1,0)+(0,.15)$) to ($(1,0)-(0,.15)$) node[below]{\small$\alpha_{k-1}$};
  \end{scope}
   
  \end{tikzpicture}
 \]
 Similarly, one can split a $\mathbb{B}$-interval (or circle) into $k\geq 2$ $\mathbb{B}$-intervals separated by $\beta$-corners. 

 \item One has the inverse of the move~\eqref{split}: one can \textit{merge} $k$ $\mathbb{A}$-intervals separated by $\alpha$-corners into a single $\mathbb{A}$-interval -- this corresponds to evaluating the partition function on the field $\mathbb{A}$ restricted to the smaller sub-intervals and to the points separating them. 
 One can do the same for the $\mathbb{B}/\beta$ polarizations.

 \item One can \textit{switch} between the polarizations of the corner separating an $\mathbb{A}$-interval and a $\mathbb{B}$-interval.

 \item One can \textit{integrate out} (the field corresponding to) the $\beta$-corner separating two $\mathbb{A}$-intervals, merging them together. Likewise, one can integrate out an $\alpha$-corner separating two $\mathbb{B}$-intervals.
\end{enumerate}
The minimal set of building blocks, sufficient to construct all closed surfaces is the following:
\begin{enumerate}[(i)]
\item A disk with boundary subdivided into $k$ intervals, all in $\mathbb{A}$-polarization, and all corners taken in $\alpha$-polarization, with a possibly nonzero $2$-form $\mu$\,.  
\[
 \begin{tikzpicture}[scale=.3]
  \filldraw[fill=black!40, draw=black, thick] (0,0) circle (4.5);
  
  \node at ($(36:4.5)-(36:.9)$) {$\mathbb{A}$};
  \node at ($(108:4.5)-(108:.9)$) {$\mathbb{A}$};
  \node at ($(180:4.5)-(180:.9)$) {$\mathbb{A}$};
  \node at ($(-108:4.5)-(-108:.9)$) {$\mathbb{A}$};
  \node at ($(-36:4.5)-(-36:.9)$) {$\mathbb{A}$};
  
  \draw[black, thick] (0:4.2) to (0:4.8);
  \node at (0:5.35) {$\alpha$};
  
  \draw[black, thick] (72:4.2) to (72:4.8);
  \node at (72:5.35) {$\alpha$};
  
  \draw[black, thick] (144:4.2) to (144:4.8);
  \node at (144:5.35) {$\alpha$};
  
  \draw[black, thick] (216:4.2) to (216:4.8);
  \node at (216:5.35) {$\alpha$};
  
  \draw[black, thick] (-72:4.2) to (-72:4.8);
  \node at (-72:5.35) {$\alpha$};
  
  \draw[decoration={markings, mark=at position 0.055 with {\arrow{<}}},
        postaction={decorate},color=black, thick] (6.5,0) to node[above]{\small splitting} (11.5,0);
  
  \begin{scope}[shift={(17,0)}]
   \filldraw[fill=black!40, draw=black, thick] (0,0) circle (4.5);
   \node at ($(20:4.5)-(20:.9)$) {$\mathbb{A}$};
  \end{scope}
  
  \draw[decoration={markings, mark=at position 0.055 with {\arrow{<}}},
        postaction={decorate},color=black, thick] (22.5,0) to node[above]{\small gluing} (28.5,0);
  
  \begin{scope}[shift={(34,0)}]
   \filldraw[fill=black!40, draw=black, thick] (0,0) circle (4.5);
   \filldraw[fill=black!15, draw=black, thick] (0,0) circle (3);
   \filldraw[fill=black!15, draw=black, thick] (0,0) circle (1.5);
   
   \node at (0,3.7) {\small I};
   \node at (0,2.2) {\small III};
   \node at (0,.7) {\small IV};
   \node at ($(-150:1.5)-(-150:.7)$) {\small$\mathbb{B}$};
   \node at ($(-150:1.5)+(-150:.5)$) {\small$\mathbb{A}$};
   \node at ($(-40:3)-(-40:.7)$) {\small$\mathbb{A}$};
   \node at ($(-40:3)+(-40:.7)$) {\small$\mathbb{B}$};  
   \node at ($(20:4.5)-(20:.5)$) {\small$\mathbb{A}$};
  \end{scope}  
 \end{tikzpicture}
\]
This building block is computed via the $\mathbb{A}$-disk which is expressed in terms of the building blocks (I,III,IV) above; then one applies the splitting move to the boundary. (Recall that in our convention  the shaded regions  are those that are allowed to carry a nonvanishing $2$-form $\mu$\,.)
\item A ``bean'' -- a disk with boundary subdivided into two $\mathbb{B}$-intervals, with the two corners in $\alpha$-polarization. 
\[
 \begin{tikzpicture}[scale=.4]
  \coordinate (x1) at (0,2);
  \coordinate (x2) at (0,-2);
  
  \filldraw[fill=black!15, draw=black, thick] (x1)node[above]{$\alpha$} to[out=-20, in=20]node[right]{$\mathbb{B}$} (x2)node[below]{$\alpha$} to[out=160, in=-160]node[left]{$\mathbb{B}$} (x1);
  
  \draw[black, thick] ($(x1)+(0,.2)$) to ($(x1)-(0,.2)$);
  \draw[black, thick] ($(x2)+(0,.2)$) to ($(x2)-(0,.2)$);
 \end{tikzpicture}
\]
This block is computed from considering an axial gauge on a square and 
collapsing two opposite sides to two points.
\end{enumerate}
Then one can e.g. triangulate any surface, assign the building block~(i) with $k=3$ to each triangle and glue them using building blocks~(ii), thickening the edges of the triangulation into ``beans''.

\[
 \begin{tikzpicture}[scale=.65]
  \coordinate (x0) at (-6,0);
  \coordinate (x1) at (-3,3);
  \coordinate (x2) at (0,2.5);
  \coordinate (x3) at (3,3);
  \coordinate (x4) at (6,0);
  \coordinate (x5) at (3,-3);
  \coordinate (x6) at (0,-2.5);
  \coordinate (x7) at (-3,-3);
  
  \coordinate (b1) at (-4,-.5);
  \coordinate (b2) at (3,-.5);
  
  \coordinate (y0) at (-2,1.5);
  \coordinate (y1) at (0,1.3);
  \coordinate (y2) at (2,1.1);
  \coordinate (y3) at (-1,-.5);
  \coordinate (y4) at (1,-.7);
  \coordinate (y5) at (0,-2);
  
  \filldraw[fill=black!40, thick, draw=black] (x0) to[in=180,out=90] (x1) to[in=180,out=0] (x2) to[in=180,out=0] (x3) to[in=90,out=0] (x4) to[in=0,out=-90] (x5) to[in=0,out=180] (x6) to[in=0,out=180] (x7) to[in=-90,out=180] (x0);
  
  \filldraw[fill=white, draw=black, thick]($(b1)+(-.45,.325)$) to[out=35, in=145] ($(b1)+(1.65,.325)$)to[in=-35, out=215]($(b1)+(-.45,.325)$);
  \draw[color=black, thick]($(b1)+(-.45,.325)$) to[out=141, in=145+180-22] ($(b1)+(-.69,.6)$);
  \draw[color=black, thick]($(b1)+(1.65,.325)$) to[out=39, in=35+180+22] ($(b1)+(1.89,.6)$);
  
  \filldraw[fill=white, draw=black, thick]($(b2)+(-.45,.325)$) to[out=35, in=145] ($(b2)+(1.65,.325)$)to[in=-35, out=215]($(b2)+(-.45,.325)$);
  \draw[color=black, thick]($(b2)+(-.45,.325)$) to[out=141, in=145+180-22] ($(b2)+(-.69,.6)$);
  \draw[color=black, thick]($(b2)+(1.65,.325)$) to[out=39, in=35+180+22] ($(b2)+(1.89,.6)$);
  
  \filldraw[fill=black!15, draw=black, thick] (y0) to[out=30, in=150] (y1) to[out=-150, in=-30] (y0);
  \filldraw[fill=black!15, draw=black, thick] (y1) to[out=30, in=150] (y2) to[out=-150, in=-30] (y1);
  \filldraw[fill=black!15, draw=black, thick] (y3) to[out=30, in=150] (y4) to[out=-150, in=-30] (y3);
  
  \filldraw[fill=black!15, draw=black, thick] (y0) to[out=-30, in=90] (y3) to[out=150, in=-90] (y0);
  \filldraw[fill=black!15, draw=black, thick] (y1) to[out=-150, in=90] (y3) to[out=30, in=-90] (y1);
  \filldraw[fill=black!15, draw=black, thick] (y1) to[out=-30, in=90] (y4) to[out=150, in=-90] (y1);
  \filldraw[fill=black!15, draw=black, thick] (y2) to[out=-150, in=90] (y4) to[out=30, in=-90] (y2);
  \filldraw[fill=black!15, draw=black, thick] (y3) to[out=-90, in=150] (y5) to[out=90, in=-30] (y3);
  \filldraw[fill=black!15, draw=black, thick] (y4) to[out=-150, in=90] (y5) to[out=30, in=-90] (y4);
   
 \end{tikzpicture}
\]

This way one can construct the partition function for any surface with boundary and corners, as long as corners are all in $\alpha$-polarization (boundary intervals and circles can be in any polarization). 
To produce all decorations of both boundary components and corners, one needs the following additional building block:
\begin{enumerate}[(i)]\setcounter{enumi}{2}
 \item Disk in $\mathbb{A}$-polarization, with a single $\beta$-corner.
\end{enumerate}
Then one can use the building block~(iii), together with the moves on the boundary, to create any combination of polarizations of arcs and corners on the boundary of a surface.%
\footnote{
 One starts with a surface with the corners only in $\alpha$~polarization and creates the desired $\beta$-corners surrounded by two $\mathbb{A}$-arcs by gluing in the block~(iii) --~see Figure~\ref{changing corner} in Section~\ref{corners_and_building_blocks} and formula~\eqref{b_corner_bubble}. 
 One creates the $\beta$-corners surrounded by two $\mathbb{B}$-arcs by the splitting move and $\beta$-corners surrounded by an $\mathbb{A}$-arc and a $\mathbb{B}$-arc by the switch move.
}

Using the building blocks (i), (ii), we immediately obtain the proof of Theorem~\ref{thm: nonneg chi} for a general surface $\Sigma$ (see Sections~\ref{section_gluing_arcs},~\ref{2d YM Partition Function on Surfaces with Boundaries}).

\begin{remark}\label{rem: non-or}
Theorem~\ref{thm: nonneg chi} in fact applies also to non-orientable surfaces, assuming the theory in the non-orientable case is defined as in footnote~\ref{footnote: non-or}.
\end{remark}

\begin{remark}
In this paper we are constructing a ``pragmatic'' extension of the BV-BFV framework to codimension~$2$ corners in the case of 2D Yang-Mills theory, motivated by the problem of computing explicit partition functions on surfaces (e.g., closed ones) of arbitrary genus. 
The general theory of quantization with corners in the BV-BFV formalism is work in progress and will be expanded on in a separate publication.
\end{remark}

\subsection{Main results}
\begin{itemize}
 \item Construction, in terms of explicitly computed building blocks and the gluing rule, of the partition function of 2D~Yang-Mills in BV-BFV formalism on any oriented surface with boundary and corners, with any combination of polarizations $\in \{\mathbb{A,B}\}$ assigned to the codimension~$1$ strata and polarizations $\in \{\alpha,\beta\}$ assigned to codimension~$2$ strata.
 \item Theorem~\ref{thm: nonneg chi} above, providing the comparison between the perturative BV-BFV result in the case of a surface with $\mathbb{A}$-polarized boundary and the known non-perturbative result.
 \item In Section~\ref{Corners_space_of_states} we prove that:
 \begin{itemize}
  \item The BV-BFV partition function on a surface~$\Sigma$ with corners satisfies the modified quantum master equation $(\hbar^2 \Delta + \Omega_{\partial \Sigma})Z=0$, with $\Delta$ the BV Laplacian on bulk residual fields (in the minimal realization, they are modelled on de~Rham cohomology of the surface), and with $\Omega_{\partial \Sigma}$ the boundary BFV operator. 
  We construct the operator $\Omega_{\partial \Sigma}$ explicitly. 
  In particular, apart from the edge contributions it contains quite nontrivial corner contributions, expressed in terms of the generating function for Bernoulli numbers.
  \item We prove that $\Omega$ squares to zero, and thus the space of states $\mathcal{H}$ for a stratified boundary is a cochain complex.
  \item We show that the space of states for a stratified circle can be disassembled into contributions of edges and corners, as the tensor product of certain differential graded (dg)~bimodules --~spaces of states assigned to the intervals (depending on the polarization of the interval and of its endponts)~-- over certain dg~algebras --~the spaces of states for the corners. 
  In particular, an  $\alpha$-corner gets assigned the supercommutative dg~algebra $\wedge^\bullet\mathfrak{g}^*$ with Chevalley-Eilenberg differential. 
  A $\beta$-corner gets assigned the algebra $S^\bullet \mathfrak{g}$ endowed with zero differential and a non-commutative star-product, written in terms of Baker-Campbell-Hausdorff formula.

  This picture in particular establishes a link with Baez-Dolan-Lurie setup~\cite{lurie:TFT_classification, dolan:higher_dim_alg} of extended topological quantum field theory where (in one of the models) one maps strata of the spacetime manifold of codimension $2,1,0$, respectively, to algebras, bimodules and bimodule morphisms.%
  \footnote{
   A version of extension of Atiyah's axioms accommodating the non-perturbative answers for 2D~Yang-Mills with corners was previously suggested in~\cite{oekl:2d_YM}.
   It can be obtained from our picture by fixing polarizations on all strata to $\mathbb{A},\alpha$ and passing to the zeroth cohomology of the BFV~differential~$\Omega$.
  }
 \end{itemize}
\end{itemize}

\subsection{Plan of the paper}

The paper is organized as follows.
In Section~\ref{background} we will review, for the reader's convenience, the basics of the BV-BFV formalism introduced in~\cite{CMR:pert.Segal, CMR:classical_BV,CMR:classical_quantum_BV,CMR:pert_quantum_BV} emphasizing the constructions that we will use for the analysis in the rest of the paper.

Sections~\ref{2d_YM_for_chi>0},\ref{corners_and_2d_YM} contain the main original results of this work.
We will first, in Section~\ref{2d_YM_for_chi>0}, compute the perturbative partition function for 2D Yang-Mills on disks and cylinders. 
Then, in the first part of Section~\ref{corners_and_2d_YM}, we will discuss the extension of the BV-BFV formalism to manifolds with corners for 2D YM.
This extension will be used in the second part of Section~\ref{corners_and_2d_YM} to compute the perturbative 2D YM partition function on surfaces of arbitrary genus, which induces in $\Omega$-cohomology the known non-perturbative answer~\cite{Migdal:1975zg, witten:2d_quantum_gauge}.

The reader who is well-acquainted with BV formalism and would like to take the shortest route to the proof of Theorem~\ref{thm: nonneg chi}, might want to read the sections in the following order. 
Sections~\ref{Sect:AB_cyl},~\ref{AA polarization on the cylinder},~\ref{Sec:B_disk} for the building blocks (I), (III), (IV), which are then assembled into the Yang-Mills on a disk in Sections~\ref{Sect:A_disk},~\ref{YM_Disk_in_A_polarization}. 
Then in Section~\ref{corners_and_building_blocks} the logic of extension to corners is explained and in Section~\ref{Sect:B_disk_corners} the ``bean'' is computed. Finally, in Sections~\ref{section_gluing_arcs},~\ref{2d YM Partition Function on Surfaces with Boundaries} polygons (obtained from the disk) are glued via beans into an arbitrary surface and thus the comparison theorem is proven.

In Appendix~\ref{Wilson_loop_observables} we will discuss how to compute, in this setting, Wilson loop observables for both non-intersecting and intersecting loops, recovering in $\Omega$-cohomology the known non-perturbative result.

\subsection{Notations}

Here is a table of recurring notational conventions used throughout the paper.
\[
 \begin{tabularx}{\textwidth}{l|X|l}
  Notation		&	Meaning								& 	Introduced in\\\hline
  $W^\bullet[k]$	& 	$(W^\bullet[k])^p := W^{k+p}$ for a graded vector space $W^\bullet$
												&\\	
  $\mathrm{d}$		&	De Rham differential on the source manifold			&	\\
  $\delta$		&	De Rham differential on the space of fields			&	\\
  $\mathfrak{F}_\Sigma$	&	Space of bulk fields associated to the surface $\Sigma$		&	Section~\ref{Classical_BV-BFV}\\  
  $\mathfrak{F}_\partial,~\mathfrak{F}_{\partial\Sigma},~\mathfrak{F}_{\gamma}$		
			&	Space of boundary fields, can be associated to the boundary of a surface $\partial \Sigma$ or to a curve $\gamma$ 		
												&	Section~\ref{Classical_BV-BFV}\\  
  $\mathcal{P}$		&	Polarization of a symplectic manifold				&	Section~\ref{BV-BFV_quantization}\\
  $\mathcal{B}_\gamma^{\mathcal{P}}$
			&	Quotient space $\mathfrak{F}_\gamma/\mathcal{P}$		&	Section~\ref{BV-BFV_quantization}\\
  $\mathcal{V}^\bullet_\Sigma$&	Space of residual fields or zero modes associated to a surface~$\Sigma$
												&	Section~\ref{Quantum BV-BFV}\\
  $\mathcal{Y}_\Sigma$	&	Space of bulk fields with boundary conditions			&	Section~\ref{BV-BFV_quantization}\\
  $\mathcal{Y}'_\Sigma$	&	Space of bulk fluctuations					&	Section~\ref{BV-BFV_quantization}\\ 
 \end{tabularx}
\]
\[
 \begin{tabularx}{\textwidth}{l|X|l} 
  Notation		&	Meaning								& 	Introduced in\\\hline
  $\Omega_{\partial\Sigma}$ &	BV differential, quantization of the boundary action		&	Section~\ref{Quantum BV-BFV}\\
  $\Delta_\mathcal{V}$	&	BV Laplacian on the space of residual fields			&	Section~\ref{Quantum BV-BFV}\\
  $\mathcal{H}_{\partial\Sigma},~\mathcal{H}_\gamma$		
			&	Space of quantum states						&	Section~\ref{Quantum BV-BFV}\\
  $Z_\Sigma$		&	Partition function or state, an element of $\mathcal{H}$	&	Section~\ref{Quantum BV-BFV}\\
  $\mathcal{L}$		&	Lagrangian submanifold, used as gauge-fixing			&	Section~\ref{Quantum BV-BFV}\\
  $\eta(x;x')$		&	Propagator from the point $x$ to the point $x'$			&	Section~\ref{BF Propagators and Axial Gauge}\\
  $\mathsf{A},~\mathsf{B}$ &	Bulk fields							&	Section~\ref{BV-BFV_quantization}\\
  $\mathbb{A},~\mathbb{B}$ &	Boundary fields							&	Section~\ref{BV-BFV_quantization}\\
  $\mathsf{a},~\mathsf{b}$ &	Residual fields, zero modes					&	Section~\ref{BV-BFV_quantization}\\
  $\alpha,~\beta$	&	Corner fields 
												&	Section~\ref{Corners_space_of_states}\\
 \end{tabularx}
\]

\subsection{Acknowledgements}

We would like to thank Alberto S. Cattaneo, Andrey S. Losev, Stephan Stolz, Konstantin Wernli for inspiring discussions. 
P. M. acknowledges partial support of RFBR Grant No. 17-01-00283a.

\section{Background: BV-BFV formalism}\label{background}

We will start this section reviewing the basic constructions of the BV-BFV formalism and fixing the notation.
We will then apply this construction to obtain the BV-BFV formulation of the non-abelian BF theory and Yang-Mills theory reviewing some of the known results.
For a complete and detailed discussion of this topic we refer to~\cite{CMR:pert.Segal, CMR:classical_BV,CMR:classical_quantum_BV,CMR:pert_quantum_BV}, where this formalism was first introduced.

\subsection{Classical BV-BFV}\label{Classical_BV-BFV}

\begin{definition}
 A BFV manifold is given by the triple $(\mathfrak{F}_\partial, \alpha^\partial, \mathcal{Q}^\partial)$\,, where: the space of \emph{boundary fields} $\mathfrak{F}_\partial$ is an exact graded symplectic manifold with 0-symplectic form $\omega^\partial=\delta\alpha^\partial$ and $\mathcal{Q}^\partial$ is a homological symplectic vector field of degree 1.%
 \footnote{
  For simplicity we will consider, here and in the following, all the gradings to be $\mathbb{Z}$ gradings.
  The parity, determining the commuting/anticommuting properties of coordinates, is given by the degree$\mod{2}$\,.
 }
\end{definition}
In particular the condition $L_{\mathcal{Q}^\partial}\delta\alpha^\partial=0$ for the vector field $\mathcal{Q}^\partial$\,, since $|\mathcal{Q}^\partial|+|\omega^\partial|\neq 0$\,, implies that it is also Hamiltonian: $\imath_{\mathcal{Q}^\partial}\omega^\partial=\delta\mathcal{S}^\partial$\,.
This defines the degree~1 Hamiltonian~$\mathcal{S}^\partial$\,, which we will call the \emph{boundary BFV action}.

\begin{definition}\label{classical_BV-BFV}
 A BV-BFV manifold, over a BFV manifold $(\mathfrak{F}_\partial, \alpha^\partial, \mathcal{Q}^\partial)$\,, is a quintuple $(\mathfrak{F},\omega,\mathcal{S},\mathcal{Q},\pi)$\,, where the space of \emph{bulk fields} $(\mathfrak{F},\omega)$ is a $(-1)$-symplectic manifold, the \emph{bulk action} $\mathcal{S}$ is a function of the fields, the bulk \emph{BRST operator} $\mathcal{Q}$ is a homological vector field of degree 1 and $\pi\colon\mathfrak{F}\rightarrow\mathfrak{F}_\partial$ is a surjective submersion, satisfying the following two compatibility conditions:
 \begin{enumerate}[i)]
  \item the bulk homological vector field projects on the boundary homological vector field: $\delta\pi\,\mathcal{Q}=\mathcal{Q}^\partial$\,;
  \item the \emph{modified Classical Master Equation} (mCME) holds: $\imath_\mathcal{Q}\omega=\delta\mathcal{S} + \pi^*\alpha^\partial$\,.
 \end{enumerate}
\end{definition}

A \emph{classical BV-BFV theory} is constructed for manifolds with boundaries of some fixed dimension $n$\,.
It consists of the association to each manifold with boundary~$\Sigma$ of a BV-BFV manifold~$\mathfrak{F}_\Sigma$ over the BFV manifold~$\mathfrak{F}_\partial = \mathfrak{F}_{\partial \Sigma}$ associated to the boundary~$\partial \Sigma$\,.
This association has to be compatible with disjoint union and ``gluing'' in the following sense:
\begin{enumerate}[i)]
 \item a disjoint union maps to the direct product: $\mathfrak{F}_{\Sigma_1\sqcup \Sigma_2} = \mathfrak{F}_{\Sigma_1}\times \mathfrak{F}_{\Sigma_2}$\,;
 \item a gluing of two manifolds maps to the fiber product over the space of fields associated to the gluing interface~$\gamma$\,: $\mathfrak{F}_{\Sigma_1\cup_{\gamma}\Sigma_2} = \mathfrak{F}_{\Sigma_1} \times_{\mathfrak{F}_\gamma} \mathfrak{F}_{\Sigma_2} $\,.
\end{enumerate}

\begin{remark}
 This can be interpreted as a covariant monoidal functor from the \emph{spacetime category}, with $(n-1)$ closed manifolds as objects and $n$-manifolds with boundary as morphisms with composition given by gluing,%
 \footnote{
  Depending on the specific theory, the spacetime category could have additional structures: for examples manifolds could be oriented, Riemannian, etc.
  Also, depending on the specific theory, there may be subtleties to defining the spacetime category as an actual category. 
  This discussion is beyond the scope (and is not relevant to) this paper.
 }
 to the \emph{BFV category}, where objects are BFV manifolds, morphisms are BV-BFV manifolds over (products of in- and out-) BFV manifolds, and composition is given by fiber products.
 The monoidal structure on the spacetime category is given by the disjoint union, while on the BFV category side it is given by the direct product.
\end{remark}

\begin{remark}
 On closed manifolds this construction reduces to a classical BV theory, which gives a homological resolution of the space of classical states for Lagrangian gauge field theories and is the classical starting point for the BV quantization of such theories~\cite{BV:1,teitelboim:quantization}.
\end{remark}

\subsection{Quantum BV-BFV}\label{Quantum BV-BFV}\pagestyle{section}

A \emph{quantum BV-BFV theory} associates to an $(n-1)$~manifold~$\gamma$ a graded cochain complex~$\mathcal{H}_\gamma$\,, the \emph{space of states}, with a coboundary operator~$\Omega_\gamma$ called the \emph{quantum BFV charge}.
To $n$-manifolds with boundary~$\Sigma$ the quantum theory assigns a (finite-dimensional) $(-1)$-symplectic manifold $(\mathcal{V}_\Sigma,\omega_{\mathcal{V}_\Sigma})$\,, the space of \emph{residual fields}, and the \emph{partition function}, which is an element $Z_\Sigma\in\mathcal{H}_{\partial \Sigma}\otimes\mathrm{Dens}^{\frac{1}{2}}(\mathcal{V}_\Sigma)$ in the boundary space of states tensored with the half-densities on the residual fields.
The space of residual fields is not uniquely determined, but comes in a poset of different realizations.
The partition function for a smaller realization can be reached with a BV-pushforward (see Subsection~\ref{renormalization_and_globalization} for further discussion).%
\footnote{
 BV pushforward along a symplectic fibration is the natural version of the fiber integral in BV formalism. 
 It maps solutions of the master equation on the total space to solutions of the master equation on the base, see~\cite{CMR:pert_quantum_BV} for details.
}
The partition function has to satisfy the \emph{modified Quantum Master Equation} (mQME):
\beq\label{mQME}
 (\Omega_{\partial \Sigma} + \hbar^2 \Delta_{\mathcal{V}_\Sigma}) Z_\Sigma=0	~,
\eeq
where $\Delta_{\mathcal{V}_\Sigma}$ is the canonical BV Laplacian on the half-densities of the residual fields.
The partition function is understood to be defined modulo $(\Omega_{\partial \Sigma} + \hbar^2 \Delta_{\mathcal{V}_\Sigma})$-exact terms.
Also, the quantum theory satisfies compatibility conditions with respect to the disjoint union and the gluing of spacetime manifolds:
\begin{enumerate}[i)]
 \item To disjoint unions, reflecting the quantum nature of the theory, the BV-BFV theory associates the tensor product of the spaces of states, $\mathcal{H}_{\gamma_1\sqcup \gamma_2}=\mathcal{H}_{\gamma_1}\otimes\mathcal{H}_{\gamma_2}$\,, the direct product of residual fields $\mathcal{V}_{\Sigma_1\sqcup \Sigma_2}=\mathcal{V}_{\Sigma_1}\times\mathcal{V}_{\Sigma_2}$ and the tensor product of partition functions, $Z_{\Sigma_1\sqcup \Sigma_2}=Z_{\Sigma_1}\otimes Z_{\Sigma_2}$.
 \item To the gluing of two manifolds the theory associates the partition function obtained as the pairing, in the space of states of the gluing interface, of the partition functions of the constituent manifolds: $Z_{\Sigma_1\cup_\gamma \Sigma_2} = \langle Z_{\Sigma_1} , Z_{\Sigma_2}\rangle_{\gamma}$\,.%
 \footnote{
  For oriented spacetime manifolds, this is the dual pairing between~$\mathcal{H}_\gamma$ and~$\mathcal{H}^*_\gamma$\,: since the two boundaries that are glued together must have opposite orientations, the associated vector spaces are dual to each other.
 }
\end{enumerate}

\emph{Quantum observables} are defined to be cohomology classes of the coboundary operator $Z^{-1}_\Sigma(\Omega_{\partial \Sigma} + \hbar^2 \Delta_{\mathcal{V}_\Sigma})(Z_\Sigma\cdot\dots)$ --~the conjugation of the coboundary operator appearing in the mQME~-- with expectation value computed by a \emph{BV pushforward} of a representative~$\mathcal{O}$ times the partition function, i.e. integrating their product over a Lagrangian~$\mathcal{L}\subset\mathcal{V}_\Sigma$\,:
\beq
 \langle\mathcal{O}\rangle_\Sigma := \int_{\mathcal{L}} \mathcal{O}\, Z_\Sigma 	~.
\eeq
The Lagrangian submanifold $\mathcal{L}$ has here the meaning of \emph{gauge-fixing} for the integration over residual fields and the closedness of~$\mathcal{O} Z_\Sigma$ with respect to $\Omega_{\partial \Sigma} + \hbar^2 \Delta_{\mathcal{V}_\Sigma}$ ensures that the $\Omega_{\partial \Sigma}$-cohomology class resulting from the integration does not depend on the particular choice of gauge fixing thanks to the following theorem~\cite{CMR:pert_quantum_BV, mnev:discrete_BF}.

\begin{theorem}\label{BV_pushforward}
 Let $(\mathcal{M}_1,\omega_2)$ and $(\mathcal{M}_2,\omega_2)$ be two (finite-dimensional) graded manifolds with odd symplectic forms~$\omega_i$ and canonical Laplacians $\Delta_i$\,.
 Consider $\mathcal{M} = \mathcal{M}_1\times\mathcal{M}_2$ with product symplectic form and canonical Laplacian $\Delta$ and let $\mathcal{L},\mathcal{L}'\subset\mathcal{M}_2$ be any two Lagrangian submanifolds which can be deformed into each other.
 For any half-density $f\in\mathrm{Dens}^{\frac{1}{2}}(\mathcal{M})$:
 \begin{enumerate}[i)]
  \item $\int_{\mathcal{L}} \Delta f=\Delta_1 \int_{\mathcal{L}} f$
  \item $\int_{\mathcal{L}} f-\int_{\mathcal{L}'} f= \Delta_1\xi\qquad$ for some $\xi\in\mathrm{Dens}^{\frac{1}{2}}(\mathcal{M}_1)$\,, if $\Delta f=0$\,.
 \end{enumerate}
 In particular, when $\mathcal{M}_1$ is just a point, the r.h.s. of the two equations above vanishes.
\end{theorem}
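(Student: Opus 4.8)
The plan is to reduce everything to two structural facts about the canonical BV~Laplacian on half-densities. The first is that it is additive for products: under the canonical identification of $\mathrm{Dens}^{\frac12}(\mathcal{M})$ with the (completed) tensor product $\mathrm{Dens}^{\frac12}(\mathcal{M}_1)\otimes\mathrm{Dens}^{\frac12}(\mathcal{M}_2)$ one has $\Delta=\Delta_1+\Delta_2$, where each $\Delta_i$ differentiates only the variables of the corresponding factor. The second is the \emph{BV--Stokes lemma}: for a half-density $g$ on an odd-symplectic manifold and any Lagrangian $\mathcal{L}$ one has $\int_{\mathcal{L}}\Delta g=0$, and for a homotopy $\{\mathcal{L}_s\}_{s\in[0,1]}$ the difference $\int_{\mathcal{L}_1}g-\int_{\mathcal{L}_0}g$ is an integral of $\Delta g$ over the swept-out region. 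Both facts are most transparent in Darboux coordinates $(x^a,\xi_a)$ in which $\Delta=\sum_a\partial_{x^a}\partial_{\xi_a}$ and in which $\int_{\mathcal{L}}$ is realised through Khudaverdian's canonical restriction $\mathrm{Dens}^{\frac12}(\mathcal{M}_2)\big|_{\mathcal{L}}\cong\mathrm{Dens}(\mathcal{L})$.

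For part~(i) I would write $\int_{\mathcal{L}}\Delta f=\int_{\mathcal{L}}\Delta_1 f+\int_{\mathcal{L}}\Delta_2 f$. In the first summand $\Delta_1$ acts only on the $\mathcal{M}_1$-variables, untouched by the integration over $\mathcal{L}\subset\mathcal{M}_2$, so it pulls out: $\int_{\mathcal{L}}\Delta_1 f=\Delta_1\int_{\mathcal{L}}f$. The second summand vanishes by the BV--Stokes lemma applied fibrewise over $\mathcal{M}_1$: choosing Darboux coordinates on $\mathcal{M}_2$ adapted to $\mathcal{L}=\{\xi=0\}$ and writing $\Delta_2 f=\sum_a\partial_{x^a}(\partial_{\xi_a}f)$, the restriction to $\mathcal{L}$ equals $\sum_a\partial_{x^a}\!\left[(\partial_{\xi_a}f)|_{\xi=0}\right]$, a total $x$-divergence, which integrates to zero under the usual decay or compact-support hypotheses. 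This gives $\int_{\mathcal{L}}\Delta f=\Delta_1\int_{\mathcal{L}}f$; specialising $\mathcal{M}_1$ to a point makes $\Delta_1=0$ and recovers the classical statement $\int_{\mathcal{L}}\Delta f=0$.

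For part~(ii), assume $\Delta f=0$, i.e.\ $\Delta_2 f=-\Delta_1 f$. I would connect $\mathcal{L}'=\mathcal{L}_0$ to $\mathcal{L}=\mathcal{L}_1$ by a homotopy of Lagrangians, locally presented by a family of generating fermions $\Psi_s$ with $\mathcal{L}_s=\{\xi=\partial_x\Psi_s\}$. Differentiating $\int_{\mathcal{L}_s}f$ in $s$ and integrating by parts in $x$, the contributions involving $\partial_x\partial_x\Psi_s$ drop out by the antisymmetry of $\partial_\xi\partial_\xi$ against the symmetry of the second derivatives of $\Psi_s$, leaving $\frac{d}{ds}\int_{\mathcal{L}_s}f=\pm\int_{\mathcal{L}_s}\dot\Psi_s\,\Delta_2 f$. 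Integrating over $s\in[0,1]$ and substituting $\Delta_2 f=-\Delta_1 f$ yields $\int_{\mathcal{L}}f-\int_{\mathcal{L}'}f=\mp\,\Delta_1\!\int_0^1\!ds\int_{\mathcal{L}_s}\dot\Psi_s\,f$, since $\Delta_1$ commutes with the $\mathcal{M}_2$-integration and with multiplication by the $\mathcal{M}_2$-function $\dot\Psi_s$ up to a sign that can be absorbed into $\xi$. This exhibits the difference as $\Delta_1$-exact; when $\mathcal{M}_1$ is a point, $\Delta_1=0$ forces $\int_{\mathcal{L}}f=\int_{\mathcal{L}'}f$.

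The main obstacle is making the BV--Stokes lemma and the variation formula rigorous in the half-density formalism, rather than the heuristic coordinate computation above. The delicate points are: the coordinate-free meaning of $\int_{\mathcal{L}}$, resting on the canonical restriction $\mathrm{Dens}^{\frac12}(\mathcal{M}_2)\big|_{\mathcal{L}}\cong\mathrm{Dens}(\mathcal{L})$; verifying that the cancellation of the $\partial_x\partial_x\Psi_s$ terms is genuinely coordinate-independent; and controlling the boundary terms in the integration by parts, which in the applications is ensured by working with Gaussian or compactly-supported half-densities. Once the lemma is established, both (i) and (ii) follow from the short formal manipulations above.
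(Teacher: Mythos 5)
The paper does not prove this theorem: it is quoted as a known result with a citation to~\cite{CMR:pert_quantum_BV, mnev:discrete_BF}, so there is no in-paper proof to compare against. Your argument --- splitting $\Delta=\Delta_1+\Delta_2$, pulling $\Delta_1$ through the fiber integral, killing $\int_{\mathcal{L}}\Delta_2 f$ by the fiberwise BV--Stokes lemma, and controlling the Lagrangian homotopy via generating functions $\Psi_s$ with $\tfrac{\mathrm{d}}{\mathrm{d}s}\int_{\mathcal{L}_s}f=\pm\int_{\mathcal{L}_s}\dot\Psi_s\,\Delta_2 f$ --- is precisely the standard proof given in those references, and it is correct.
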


\subsection{Quantization}\label{BV-BFV_quantization}

The quantization procedure is a way to get a quantum BV-BFV theory from the data of a classical BV-BFV theory.
The first object to construct is the space of states~$\mathcal{H}_{\gamma}$\,, which is obtained from the space of boundary fields~$\mathfrak{F}_{\gamma}$ by choosing a Lagrangian foliation, or more generally a \emph{polarization}~$\mathcal{P}$\,.
We will assume in the following that the 1-form $\alpha_\gamma$ vanishes along the fibers of~$\mathcal{P}$\,; if this is not the case, we can use the transformation:
\beq\label{gauge_transf.}
 \alpha_\gamma \mapsto \alpha_\gamma - \delta f_\gamma	~, \qquad
 \mathcal{S}_\Sigma \mapsto \mathcal{S}_\Sigma +	\pi^*f_\gamma~,
\eeq
which uses an arbitrary function $f_\gamma$ of the boundary fields to shift $\alpha_\gamma$ and the bulk action in such a way that the mCME is preserved (cf. def.~\ref{classical_BV-BFV}).
The space of states of the quantum theory (a.k.a. the space of quantum states) is defined as the space of complex-valued functions%
\footnote{
 Another possible model for states uses half-densities instead of functions. 
 These two models are isomorphic, with the isomorphism given by multiplication by a fixed reference half-density.
}
on the leaf space~$\mathcal{B}_{\gamma}^\mathcal{P}= \mathfrak{F}_{\gamma}/\mathcal{P}$ (or more generally the space of polarized sections of the trivial ``prequantum'' $U(1)$-bundle  over~$\mathfrak{F}_\gamma$).
\beq
 \mathcal{H}_{\gamma} := 
  \mathrm{Fun}_{\mathbb{C}}(\mathcal{B}_{\gamma}^\mathcal{P})	~.
\eeq
In other words, the space of quantum states is obtained as the \emph{geometric quantization} of the space of boundary fields~\cite{CMR:pert_quantum_BV}.

The space of quantum states forms a cochain complex.
The coboundary operator~$\Omega_\gamma$ is constructed as the geometric quantization of the boundary action~$\mathcal{S}_\gamma$\,.
Suppose we have Darboux coordinates~$(q,p)$ on~$\mathfrak{F}_\gamma$\,, where~$q$ are also coordinates of~$\mathcal{B}_\gamma^\mathcal{P}$\,.
The operator $\Omega_\gamma$ is the standard-ordering quantization of the action:
\beq\label{Omega}
 \Omega_\gamma := \mathcal{S}_\gamma\bigg(q, -\mathrm{i}\hbar\frac{\partial}{\partial q}\bigg)~,
\eeq
where all the derivatives are positioned to the right.
For the theories we will consider in this paper, with this definition $\Omega_\gamma$ squares to zero; in general it could be needed to add quantum corrections to~\eqref{Omega} for $\Omega_\gamma$ to actually be a coboundary operator.%
\footnote{
 In general there might be cohomological obstructions to do that.
 Moreover, the partition function might be not compatible with the so constructed $\Omega$, causing the mQME to fail.
}

Let us consider now the data associated to the bulk $n$-manifolds.
The space of bulk fields has a fibration over $\mathcal{B}^\mathcal{P}_{\partial \Sigma}$ defined by composing the projection to the boundary fields with the projection given by the polarization: $\mathfrak{F}_\Sigma \overset{\pi}{\longrightarrow} \mathfrak{F}_{\partial \Sigma} \longrightarrow\mathcal{B}^\mathcal{P}_{\partial \Sigma}$\,.
Suppose for simplicity that this is a trivial bundle: $\mathfrak{F}_\Sigma = \widetilde{\mathcal{B}}^\mathcal{P}_{\partial \Sigma} \times \mathcal{Y}_\Sigma$\,, where $\widetilde{\mathcal{B}}^\mathcal{P}_{\partial \Sigma}$ is some bulk extension of ${\mathcal{B}}^\mathcal{P}_{\partial \Sigma}$ and $\mathcal{Y}_\Sigma$ is some $(-1)$-symplectic manifold.
This assumption will hold in all the theories considered in the following.

The space of residual fields can be taken to be any (finite-dimensional) symplectic subspace $\mathcal{V}_\Sigma$ of the space of fields,%
\footnote{
 In the framework of perturbation theory, the requirement that the integral~\eqref{part.funct.} below is perturbatively well-defined, imposes restrictions on the possible choices of~$\mathcal{V}_\Sigma$. 
 E.g.~$\mathcal{V}_\Sigma$ for perturbed BF~theories has to be modelled on a deformation retract of the de~Rham complex of the bulk manifold.
} 
separating it as $\mathcal{Y}_\Sigma=\mathcal{V}_\Sigma\times \mathcal{Y}'_\Sigma$\,, where~$\mathcal{Y}'_\Sigma$ is the space of \emph{fluctuations}.
The partition function is now defined as a BV pushforward of the exponentiated bulk action:
\beq\label{part.funct.}
 Z_\Sigma(\mathcal{P};\mathcal{V}_\Sigma) := \int_{\mathcal{L}} \mathrm{e}^{\frac{\mathrm{i}}{\hbar} \mathcal{S}_\Sigma}	~,
\eeq
where $\mathcal{L}\subset \mathcal{Y}'_\Sigma$ is a Lagrangian submanifold. 
If $\Delta_{\mathcal{Y}_\Sigma}\mathcal{S}_\Sigma=0$, theorem~\ref{BV_pushforward} implies that the partition function is a solution of the mQME~\eqref{mQME}.%
\footnote{
 In the theories considered in the following this condition is verified.
 However, this is not always the case and there can be theories where the bulk action needs quantum corrections in order for the mQME to hold.
 This is connected in particular to the presence of quantum anomalies in the theory.
}
Moreover, we have that $Z_\Sigma$ does not depend on (deformations of) the gauge-fixing Lagrangian~$\mathcal{L}$ used in the BV pushforward, up to $(\Omega_\gamma + \hbar^2\Delta_{\mathcal{V}_\Sigma})$-exact terms.

The discussion, until now, assumes a finite-dimensional situation.
This is usually not the setting of quantum field theories; for infinite-dimensional spaces one needs a more delicate analysis to prove the mQME and to prove that the dependence of the partition on the gauge-fixing is BV-exact.
A way to make sense of infinite-dimensional integrals is through perturbation theory, as discussed in the following section.

\subsubsection{Perturbative expansion}

The space of fields $\mathfrak{F}_\Sigma$ is typically infinite-dimensional, for example it can contain the de Rham complex of differential forms over $\Sigma$.
As a consequence, the integral~\eqref{part.funct.} defining the partition function is (almost always) ill-defined as a measure-theoretic integral and has to be understood as a \emph{perturbative series} written in terms of the Feynman diagrams coming from the interactions in the bulk action expanded around a point $\mathsf{x}_0\in\mathfrak{M}$ in the \emph{Euler-Lagrange moduli space} -- the space of solutions of classical equations of motion of~$\mathcal{S}_\Sigma$ (modulo gauge symmetries).

In order for the perturbative expansion to be well-defined, the \emph{gauge-fixed action} -- the restriction of~$\mathcal{S}_\Sigma$ to the gauge-fixing Lagrangian~$\mathcal{L}$ -- needs to have isolated critical points.
It is important to remark that this condition does not, in general, hold for every Lagrangian.

The existence of such a ``good gauge-fixing'' depends on the choice of residual fields.
In particular the quadratic part of the bulk action can have \emph{zero-modes}~$\mathcal{V}^0_\Sigma$\,, i.e. bulk fields configurations that are annihilated by the kinetic operator.%
\footnote{
 See section~\ref{BF Propagators and Axial Gauge} for the definition of zero modes in 2D YM.
}
Zero-modes correspond to the tangent directions to the Euler-Lagrange moduli space (cf.~\cite{CMR:pert_quantum_BV}, appendix F) and therefore their presence in the space of fluctuations indicates non-isolated critical points of the action and obstructs the perturbative expansion.
Thus, the space of residual fields has to at least contain the space of zero-modes for a good gauge-fixing Lagrangian to exist: $\mathcal{V}_\Sigma^0\subseteq\mathcal{V}_\Sigma$\,.
When the residual fields coincide with zero modes we say that the perturbative partition function is in its \emph{minimal realization}.

Another consequence of the infinite dimensions of~$\mathfrak{F}_\Sigma$ is that also the BV~Laplacian is ill-defined.
The equations containing it are thus only formal (or require a regularization).%
\footnote{
 The BV Laplacian becomes non-singular within the framework of renormalization theory on the level of residual fields. 
 See also~\cite{costello:BV_renorm.} for the discussion of how the RG~flow regularizes the BV~Laplacian.  
}
In particular, theorem~\ref{BV_pushforward} is proved in a finite-dimensional setting.
An important point is thus that even if the action is formally annihilated by the Laplacian, the mQME is only expected to hold and needs to be verified for each particular theory.
For perturbed BF~theories, including 2D Yang-Mills, the mQME has been proved in the infinite-dimensional perturbative setting in~\cite{CMR:pert_quantum_BV} and relies on the Stokes' theorem for integrals over compactified configuration spaces of points.

\subsubsection{Renormalization and globalization}\label{renormalization_and_globalization}

A non-minimal realization of a theory is obtained when the zero-modes are a proper subset of the space of residual fields.
Of course there are different, inequivalent, non-minimal realizations of any theory.
Given a non-minimal realization, one can obtain a smaller one by a BV pushforward.
If $\mathcal{V}_\Sigma'=\mathcal{V}_\Sigma''\times \mathcal{Y}'$\,, with $\mathcal{V}_\Sigma^0\subset\mathcal{V}''_\Sigma$\,, then:
\beq
 Z_\Sigma(\mathcal{P};\mathcal{V}'') = \int_\mathcal{L} Z_\Sigma(\mathcal{P};\mathcal{V}')
\eeq
for a Lagrangian submanifold $\mathcal{L}\subset\mathcal{Y}'$\,.
The set of all possible realizations forms therefore a partially ordered set, with the final object given by the minimal realization.
Passing from bigger to smaller realizations can be interpreted as following the \emph{renormalization group flow}.

\begin{remark}
 According to the gluing prescription (cf. section~\ref{Quantum BV-BFV}), the residual fields of the glued manifold are the direct product of the residual fields of the two manifolds being glued.
 In particular this means that, generally, if we glue together two partition functions in the minimal realization the result of the gluing will not be in the minimal realization.
 Let $\Sigma= \Sigma_1 \cup_\gamma \Sigma_2$\,; it typically happens that $\mathcal{V}_\Sigma^0\subset \mathcal{V}^0_{\Sigma_1}\times\mathcal{V}^0_{\Sigma_2}$\,.
 The minimal realization for the glued manifold has then to be obtained via a BV pushforward:
 \beq
  Z_\Sigma(\mathcal{V}_\Sigma^0) = \int_\mathcal{L} \langle Z_{\Sigma_1}(\mathcal{V}_{\Sigma_1}^0),Z_{\Sigma_2}(\mathcal{V}_{\Sigma_2}^0)\rangle~.
 \eeq
\end{remark}

Because of its perturbative definition, the partition function depends on the point~$\mathsf{x}_0\in\mathfrak{M}$ around which we are expanding and carries only local information on field configurations infinitesimally close to $\mathsf{x}_0$ (it is defined on a formal neighbourhood of~$\mathsf{x}_0$).
Nevertheless, at least in the theories considered in this paper, its minimal realization is the Taylor expansion of a \emph{global} half-density on the tangent bundle of the Euler-Lagrange moduli space (cf.~\cite{CMR:pert_quantum_BV}, appendix~F).
Thus, under some assumptions, it can be integrated on the zero section of~$T\mathfrak{M}$\,.
This corresponds to setting to zero all the zero-modes $\nu\in\mathcal{V}^0_\Sigma$ and integrating the partition function on the Euler-Lagrange moduli space
\beq\label{glob.part.funct.1}
 Z_\Sigma(\mathcal{P})= \int_\mathfrak{M} Z_\Sigma(\mathcal{P},\mathcal{V}^0_\Sigma;\mathsf{x}_0)\big|_{\nu=0} \in \mathrm{Dens}^{\frac{1}{2}}(\mathcal{B}^\mathcal{P}_{\partial \Sigma})~,
\eeq
obtaining a \emph{globalized partition function} depending only on the boundary fields in~$\mathcal{B}^\mathcal{P}_{\partial \Sigma}$\,.

Another way to obtain a partition function which does not depend on~$\mathcal{V}^0_\Sigma$ is to integrate its minimal realization over all the zero-modes, again using a BV pushforward.
Notice that this cannot be done in a perturbative way -- the propagator cannot be defined for zero modes -- but since $\mathcal{V}_\Sigma^0$ is a finite-dimensional space, the BV pushforward is well-defined as an ordinary integral on a supermanifold:
\beq
 Z_\Sigma(\mathcal{P}) = \int_{\mathcal{L}\subset\mathcal{V}^0_\Sigma} Z_\Sigma (\mathcal{P},\mathcal{V}^0_\Sigma)~.
\eeq
This can be viewed as an alternative definition of a globalized partition function and in fact, when both this integral and the one in~\eqref{glob.part.funct.1} can be computed explicitly, they coincide (cf.~section~\ref{2d YM Partition Function on Surfaces with Boundaries}).%
\footnote{
 One caveat is that one needs to take care to avoid possible overcounting when integrating over zero-modes, cf.~Remark~\ref{gribov_region} below.
}
However, the precise relation between the two globalization procedures is to be understood better.

\subsection{BV-BFV formulation of 2D YM}

We will review in this section the BV-BFV construction for 2D Yang-Mills and non-abelian BF theories; for a deeper discussion and for some of the proofs we refer to~\cite{CMR:classical_quantum_BV, CMR:pert_quantum_BV}.

Let~$G$ be a Lie group with Lie algebra~$\mathfrak{g}$ and let~$A$ be a connection 1-form on a principal $G$-bundle over a 2-dimensional surface~$\Sigma$.
In the first order formalism the classical YM action can be written in the following form:
\beq
 S_\Sigma(A,B) = \int_\Sigma \langle B , F_A \rangle + \frac{1}{2} \int_\Sigma (B,B) \, \mu ~,
\eeq
where the auxiliary field~$B$ is a zero-form valued in~$\mathfrak{g}^*$\,, $F_A$ is the curvature 2-form of $A$\,, $\langle\cdot,\cdot\rangle$~is the dual pairing between~$\mathfrak{g}$ and~$\mathfrak{g^*}$\,, $(\cdot,\cdot)$~is an invariant non-degenerate pairing on~$\mathfrak{g}^*$ and $\mu$ is the volume 2-form associated to a metric on~$\Sigma$\,.
We see that 2D YM can be treated as a perturbation of 2D non-abelian BF theory, which can be obtained in the zero-area limit~$\mu\rightarrow 0$\,. 
In the following sections we will generally find it useful to consider BF theory first, introducing the area term only afterwards.

On closed surfaces, the classical BV construction enhances the space of fields by adding differential forms of every degree, usually called ghosts and antifields for positive or negative internal degree respectively.
The BV space of fields over~$\Sigma$ is then 
\beq
 \mathfrak{F}_\Sigma = \Omega(\Sigma;\mathfrak{g})[1]\oplus\Omega(\Sigma;\mathfrak{g}^*)\ni (\mathsf{A},\mathsf{B})~\,,
\eeq
where~$\mathsf{A}$ and~$\mathsf{B}$ are the superfields associated to~$A=\mathsf{A}_{(1)}$ and~$B=\mathsf{B}_{(0)}$ which are their degree-zero components.%
\footnote{
 We will denote by $\mathsf{A}_{(n)}$ the $n$-form component of a superfield~$\mathsf{A}$\,.
}
The BV space of fields is a symplectic graded space, with $(-1)$-symplectic form given by:
\beq\label{BF_sympl.}
 \omega_\Sigma = \int_\Sigma \langle\delta \mathsf{B}, \delta \mathsf{A}\rangle~.
\eeq
The BV action on a closed manifold is
\beq\label{YM_BV_action}
 \mathcal{S}_\Sigma=\int_\Sigma \langle \mathsf{B} , \mathrm{d}\mathsf{A} +\frac{1}{2}[\mathsf{A},\mathsf{A}]  \rangle 
  + \frac{1}{2} \int_\Sigma (\mathsf{B},\mathsf{B}) \, \mu
\eeq
and the corresponding Hamiltonian vector field, the homological vector field~$\mathcal{Q}_\Sigma$\,, is:
\beq\label{YM_BV_charge}
 \mathcal{Q}_\Sigma= \int_\Sigma \Big\langle \mathrm{d}\mathsf{A}+\frac{1}{2} [\mathsf{A}, \mathsf{A}],\frac{\delta}{\delta\mathsf{A}}\Bigl\rangle 
  + \int_\Sigma \Big\langle\mathrm{d}\mathsf{B}+ \mathrm{ad}^*_{\mathsf{A}}\mathsf{B},\frac{\delta}{\delta\mathsf{B}}\Big\rangle + \int_\Sigma \Big(\mathsf{B},\frac{\delta}{\delta \mathsf{B}}\Big) \mu ~,
\eeq

In the BV-BFV construction the bulk fields, symplectic structure, action and homological vector field are again the ones described above.
Notice that now, when~$\Sigma$ has a non-empty boundary~$\partial \Sigma$\,, the homological vector field~\eqref{YM_BV_charge} is not the Hamiltonian vector field of the action~\eqref{YM_BV_action}.
Indeed, it is not even symplectic:
\beq\label{BF_mCME}
 \imath_{\mathcal{Q}_\Sigma}\omega_\Sigma = \delta \mathcal{S}_\Sigma + \int_{\partial \Sigma} \langle \mathsf{B},\delta\mathsf{A} \rangle~.
\eeq
The boundary fields~$\mathfrak{F}_{\partial \Sigma}$ are similar to the bulk (see~\cite{CMR:pert_quantum_BV}):
\beq
 \mathfrak{F}_{\partial \Sigma} = \Omega (\partial \Sigma;\mathfrak{g})[1]\oplus\Omega (\partial \Sigma;\mathfrak{g}^*) \ni (\mathbb{A},\mathbb{B}) ~.
\eeq
We can thus define the projection $\pi\colon\mathfrak{F}_{\Sigma}\longrightarrow\mathfrak{F}_{\partial \Sigma}$ to be just the restriction (pullback) to~$\partial \Sigma$ of the bulk fields.
This, taking into account the compatibility conditions of def.~\ref{classical_BV-BFV}, fixes the remaining boundary data.
From~\eqref{BF_mCME} we get
\beq\label{bound.sympl.sruct.}
 \alpha_{\partial \Sigma} = \int_{\partial \Sigma} \langle \mathbb{B},\delta\mathbb{A} \rangle~, \qquad 
   \omega_{\partial \Sigma} = \delta \alpha_{\partial \Sigma}= -\int_{\partial \Sigma} \langle \delta\mathbb{B},\delta\mathbb{A} \rangle~,
\eeq
the boundary homological vector field is the projection of the bulk homological vector field
\beq\label{YM_BFV_charge}
 \mathcal{Q}_{\partial \Sigma} = \mathrm{d}\pi\,\mathcal{Q}_\Sigma = \int_{\partial \Sigma} \Big(\Big\langle \mathrm{d}\mathbb{A}
  +\frac{1}{2}[\mathbb{A},\mathbb{A}],\frac{\delta}{\delta\mathbb{A}}\Big\rangle 
   + \Big\langle \mathrm{d}\mathbb{B}+ \mathrm{ad}^*_{\mathbb{A}}\mathbb{B},\frac{\delta}{\delta\mathbb{B}}\Big\rangle\Big)
\eeq
and thus the boundary action is obtained as the Hamiltonian of~$\mathcal{Q}_{\partial \Sigma}$\,:
\beq\label{YM_bound_action}
 \mathcal{S}_{\partial \Sigma}=\int_{\partial \Sigma} \langle \mathbb{B} , \mathrm{d}\mathbb{A} +\frac{1}{2}[\mathbb{A},\mathbb{A}]  \rangle~.
\eeq
Notice that, for degree reasons, the area form~$\mu$ does not appear in the boundary data.
The boundary BFV manifold for 2D YM is thus exactly the same as in BF theory; actually, the only difference between the two theories is the area term in the bulk action and, consequently, the state (partition function) defined by the two quantum theories.

To quantize the theory, we need to choose a polarization of the space of boundary fields.
From~\eqref{bound.sympl.sruct.} we see that there are two simple choices of polarization: the $\mathbb{A}$-polarization~$\mathcal{P}_\mathbb{A}$\,,%
\footnote{
 In the terminology of~\cite{CMR:pert_quantum_BV}, this is ``$\mathbb{A}$-representation'', or ``$\frac{\delta}{\delta \mathbb{B}}$-polarization'' (as those are the vector fields spanning the tangential Lagrangian distribution on the phase space).
} 
for which the leaf space is decribed by the $\mathbb{A}$~fields 
\beq
 \mathcal{B}_{\partial \Sigma}^{\mathcal{P}_\mathbb{A}}=\Omega(\partial \Sigma;\mathfrak{g})[1]
\eeq
and the $\mathbb{B}$-polarization~$\mathcal{P}_\mathbb{B}$\,, for which the leaf space is described by the $\mathbb{B}$~fields 
\beq
 \mathcal{B}_{\partial \Sigma}^{\mathcal{P}_\mathbb{B}}=\Omega(\partial \Sigma;\mathfrak{g}^*)~.
\eeq
We will, in the rest of this paper, always use these two transversal polarizations, arbitrarily splitting the boundary of a manifold into the disjoint%
\footnote{
 However, when we start considering corners in section~\ref{corners_and_2d_YM}, the disjointness assumption will fail along codimension~2 strata.
}
union of two components, $\partial \Sigma := \partial_{\mathbb{B}} \Sigma \sqcup \partial_{\mathbb{A}} \Sigma$\,, and choosing the product polarization which assigns the $\mathbb{B}$-polarization to the first and the $\mathbb{A}$-polarization to the latter boundary component: 
\beq
 \mathcal{B}_{\partial \Sigma}^{\mathcal{P}} = \Omega(\partial_{\mathbb{A}} \Sigma;\mathfrak{g})[1]\oplus \Omega(\partial_{\mathbb{B}} \Sigma;\mathfrak{g}^*)~.
\eeq

The boundary one-form~$\alpha_{\partial \Sigma}$ does not vanish on the fibers of this polarization (cf.~\eqref{bound.sympl.sruct.}) but it can be adapted to this choice using the transformation~\eqref{gauge_transf.}:
\beq
 \alpha^{\mathcal{P}}_{\partial \Sigma} &= \alpha_{\partial \Sigma} + \delta \int_{\partial_{\mathbb{B}}\Sigma} \langle \mathbb{B},\mathbb{A}\rangle =
  \int_{\partial_{\mathbb{A}}\Sigma} \langle \mathbb{B},\delta \mathbb{A}\rangle 
   - \int_{\partial_{\mathbb{B}}\Sigma} \langle \delta \mathbb{B},\mathbb{A}\rangle	~,\\
 \mathcal{S}^\mathcal{P}_{\Sigma} &= \mathcal{S}_{\Sigma} - \int_{\partial_{\mathbb{B}}\Sigma} \langle \mathsf{B},\mathsf{A}\rangle ~.
\eeq

We can now quantize, with the above polarization, the boundary action to obtain the coboundary operator~$\Omega_{\partial \Sigma}^\mathcal{P}$\,:
\beq\label{BF_Omega}
 \Omega_{\partial \Sigma}^\mathcal{P} = &\int_{\partial_{\mathbb{A}}\Sigma}  \mathrm{i}\hbar \bigg( \mathrm{d}\mathbb{A}^a 
  + \frac{1}{2}f^a_{bc}\mathbb{A}^b\mathbb{A}^c\bigg)\frac{\delta}{\delta \mathbb{A}^a} 
   + \int_{\partial_{\mathbb{B}}\Sigma} \bigg(\mathrm{i}\hbar\mathrm{d}\mathbb{B}_a\frac{\delta}{\delta \mathbb{B}_a} 
    -\frac{\hbar^2}{2}f^a_{bc}  \mathbb{B}_a\frac{\delta}{\delta \mathbb{B}_b}\frac{\delta}{\delta \mathbb{B}_c} \bigg)~.
\eeq
Here $f^a_{bc}$ are the structure constants of the Lie algebra $\mathfrak{g}$.%
\footnote{
 This is the quantization of the BFV action with a particular choice of ordering, putting the derivatives on the right and multiplication operators on the left. 
 Notice that, in fact, switching the ordering to the opposite one does not change the resulting operator due to unimodularity of $\mathfrak{g}$ (which follows from the existence of a non-degenerate invariant pairing).
}

To write the partition function we lift~$\mathcal{B}_{\partial \Sigma}^{\mathcal{P}}$ to~$\mathfrak{F}_\Sigma=\widetilde{\mathcal{B}}_{\partial \Sigma}^{\mathcal{P}}\times\mathcal{Y}_\Sigma$ by taking (discontinuous) bulk extensions~$(\widetilde{\mathbb{A}},\widetilde{\mathbb{B}})$ of the boundary fields (cf. the discussion in~\cite{CMR:pert_quantum_BV},~Section~3.4):
\beq
 \mathcal{Y}_\Sigma = \Omega(\Sigma,\partial_\mathbb{A} \Sigma ; \mathfrak{g})[1] \oplus \Omega(\Sigma,\partial_{\mathbb{B}} \Sigma;\mathfrak{g}^*)~.
\eeq
We then split the bulk fields~$\mathcal{Y}_\Sigma$ into residual fields~$(\mathsf{a},\mathsf{b})\in\mathcal{V}_\Sigma$ and fluctuations~$(\alpha,\beta)\in\mathcal{Y}'_\Sigma$\,:
\beq
 \mathsf{A} = \widetilde{\mathbb{A}} + \mathsf{a} + \alpha~,\qquad
 \mathsf{B} = \widetilde{\mathbb{B}} + \mathsf{b} + \beta~.
\eeq
The minimal realization of $\mathcal{V}_\Sigma$ is the space of zero-modes of the theory, which is written in terms of relative cohomology (c.f.~Section~\ref{BF Propagators and Axial Gauge}):
\beq
 \mathcal{V}_\Sigma = H(\Sigma,\partial_\mathbb{A} \Sigma ; \mathfrak{g})[1] \oplus H(\Sigma,\partial_{\mathbb{B}} \Sigma;\mathfrak{g}^*)~.
\eeq
We can finally define the partition function as the (perturbative) path integral:
\beq
 Z_\Sigma[\mathbb{A},\mathbb{B}; \mathsf{a}, \mathsf{b}] = \int_{\mathcal{L}\subset\mathcal{Y}'_\Sigma} \mathfrak{D}[\alpha,\beta] ~ 
  \mathrm{e}^{\frac{\mathrm{i}}{\hbar} \mathcal{S}^\mathcal{P}_{\Sigma}(\widetilde{\mathbb{A}} + \mathsf{a} + \alpha,\widetilde{\mathbb{B}} + \mathsf{b} + \beta)}~.
\eeq
This construction uses a choice of a Lagrangian subspace $\mathcal{L}\subset \mathcal{Y}'_\Sigma$ -- the choice of gauge-fixing.
\begin{remark}
 To avoid the appearance of ill-defined derivatives of the discontinuous fields~$(\widetilde{\mathbb{A}},\widetilde{\mathbb{B}})$ in the bulk action~$\mathcal{S}^\mathcal{P}_{\Sigma}$\,, we integrate by parts rewriting it as:
 \beq\label{YM_pert.action}
  &\mathcal{S}^\mathcal{P}_{\Sigma}(\widetilde{\mathbb{A}} + \mathsf{a} + \alpha,\widetilde{\mathbb{B}} + \mathsf{b} + \beta) \\
  &= \mathcal{S}^\mathcal{P}_{\Sigma}(\mathsf{a} + \alpha, \mathsf{b} + \beta)
    + \frac{1}{2} \int_\Sigma (\mathsf{b}+\beta,\mathsf{b}+\beta) \, \mu 
     + \int_{\partial_{\mathbb{A}}\Sigma} \langle \mathsf{b}+\beta , \widetilde{\mathbb{A}} \rangle
      - \int_{\partial_{\mathbb{B}}\Sigma} \langle \widetilde{\mathbb{B}},\mathsf{a}+\alpha\rangle	\\
  &= \int_\Sigma \Big(\langle \beta,\mathrm{d}\alpha \rangle +\langle \beta, [\alpha,\mathsf{a}] \rangle 
    +\frac12 \langle \beta , [\alpha,\alpha] \rangle  + \frac12\langle \beta , [\mathsf{a},\mathsf{a}] \rangle 
     + \langle \mathsf{b}, [\alpha,\mathsf{a}] \rangle +\frac12 \langle \mathsf{b} , [\alpha,\alpha] \rangle 
      + \frac12\langle \mathsf{b} , [\mathsf{a},\mathsf{a}] \rangle \\
  &+ \frac12\langle \beta ,\beta \rangle\mu + \langle\beta ,\mathsf{b} \rangle \mu
   + \frac12\langle \mathsf{b},\mathsf{b} \rangle\mu \Big) 
    + \int_{\partial_\mathbb{A}\Sigma} \Big(\langle \beta, \mathbb{A}\rangle + \langle \mathsf{b},\mathbb{A} \rangle\Big) 
     - \int_{\partial_\mathbb{B}\Sigma}\Big(\langle \mathbb{B}, \alpha\rangle + \langle \mathbb{B}, \mathsf{a}\rangle\Big) ~.
 \eeq
 The boundary fields thus act as currents in the perturbative expansion of the partition function.
\end{remark}

We are now in the position of writing down the diagrammatic elements of the Feynman diagrams expansion of the theory:%
\footnote{Zero-modes are here understood as ``loose'' half-hedges.}
\beq\label{vertices}
   \begin{tikzpicture}[scale=.23, baseline=(x.base)]
   \coordinate (x) at (0,0);
    \draw[decoration={markings, mark=at position 0.5 with {\arrow{>}}},
        postaction={decorate},color=black, thick] (-2,0)  to node[above=15pt]{propagator}node[below=3pt]{$\eta$} (2,0);     
   \end{tikzpicture}
   \hspace{.6cm}
   \begin{tikzpicture}[scale=.23, baseline=(x.base)]
    \coordinate (x) at (0,0);
    \draw[color=black, thick] (-3,0) to node[above=15pt]{$\mathbb{B}$ boundary source} node[below=3pt]{$\mathbb{B}$}  (3,0);  
    \draw[thick, dotted, color=black] (-4.35,0) to (4.35,0);
    \draw[decoration={markings, mark=at position 0.5 with {\arrow{<}}},
        postaction={decorate},color=black, thick] (0,1.5) to (0,0);
    \draw[color=black, thick, fill=black!10] (0,0) circle (10pt); 
   \end{tikzpicture}
   \hspace{.6cm}
   \begin{tikzpicture}[scale=.23, baseline=(x.base)]
   \coordinate (x) at (0,0);
    \draw[color=black, thick] (-3,0) to node[above=15pt]{$\mathbb{A}$ boundary source} node[below=3pt]{$\mathbb{A}$} (3,0);     
    \draw[thick, dotted, color=black] (-4.35,0) to (4.35,0);    
    \draw[decoration={markings, mark=at position 0.5 with {\arrow{>}}},
        postaction={decorate},color=black, thick] (0,1.5) to (0,0);
    \draw[fill=black, color=black, thick] (0,0) circle (10pt);     
   \end{tikzpicture}\hspace{1cm}\\
   \\
   \begin{tikzpicture}[scale=.23, baseline=(x.base)]
    \node[above=18pt] (x) at (0,0) {BF interaction};
    \draw[decoration={markings, mark=at position 0.5 with {\arrow{>}}},
        postaction={decorate},color=black, thick] (-1.5,0) to (0,0);     
    \draw[decoration={markings, mark=at position 0.5 with {\arrow{<}}},
        postaction={decorate},color=black, thick] (60:1.5) to (0,0); 
    \draw[decoration={markings, mark=at position 0.5 with {\arrow{<}}},
        postaction={decorate},color=black, thick] (-60:1.5) to (0,0);     
        
   \end{tikzpicture}
   \hspace{.6cm}
   \begin{tikzpicture}[scale=.23, baseline=(x.base)]
    \node[above=18pt] (x) at (0,0) {YM interaction};
    
    \draw[decoration={markings, mark=at position 0.5 with {\arrow{>}}},
        postaction={decorate},color=black, thick] (60:1) to (-1,0); 
    \draw[decoration={markings, mark=at position 0.5 with {\arrow{>}}},
        postaction={decorate},color=black, thick] (-60:1) to (-1,0);     
        
    \draw[fill=black, color=black, thick] (-1.3,-.3) rectangle (-.7,.3);     
    \node[left=2pt] at (-1,0) {$\mu$};
   \end{tikzpicture}
   \hspace{.6cm}
   \begin{tikzpicture}[scale=.23, baseline=(x.base)]
    \node[above=18pt] (x) at (0,0) {$\mathsf{b}$ zero-modes};
    \draw[decoration={markings, mark=at position 0.8 with {\arrow{>}}},
        postaction={decorate}, color=black, thick] (-2,0) to (0,0);

    \draw[color=black, thick, fill=black!10] (-2,0) circle (10pt);         
    \node[left=2pt] at (-2,0) {$\mathsf{b}$};
   \end{tikzpicture}
   \hspace{.6cm}
   \begin{tikzpicture}[scale=.23, baseline=(x.base)]
    \node[above=18pt] (x) at (0,0) {$\mathsf{a}$ zero-modes};
    
    \draw[decoration={markings, mark=at position 0.4 with {\arrow{>}}},
        postaction={decorate}, color=black, thick] (-2,0) to (0,0);            
    \draw[color=black, thick, fill=black!] (0,0) circle (10pt);         
    \node[right=2pt] at (0,0) {$\mathsf{a}$};    
   \end{tikzpicture}
\eeq

With these vertices we can compose a large set of non-trivial Feynman diagrams (e.g.~figure~\ref{diagrams_ex}).
The general strategy will be to cut the surface, and hence Feynman diagrams, in such a way that there is a simple choice of propagators on each component which allows us to compute the partition function for that surface.
Then, using the gluing properties of BV-BFV theories, we can glue back all the pieces to recover the partition function on the original surface we started with.
This procedure can be viewed as a method to construct a complicated propagator on the starting surface which, though, allows explicit computations.

\begin{figure}[h]
 \centering
\begin{tikzpicture}[scale=.44]
    
    \draw[decoration={markings, mark=at position 0.5 with {\arrow{>}}},
        postaction={decorate},color=black, thick] (-20:2) arc (-20:30:2);
    \draw[decoration={markings, mark=at position 0.5 with {\arrow{>}}},
        postaction={decorate},color=black, thick] (30:2) arc (30:80:2);        
    \draw[decoration={markings, mark=at position 0.5 with {\arrow{>}}},
        postaction={decorate},color=black, thick] (80:2) arc (80:120:2);
    \draw[decoration={markings, mark=at position 0.5 with {\arrow{>}}},
        postaction={decorate},color=black, thick] (120:2) arc (120:160:2);
    \draw[decoration={markings, mark=at position 0.5 with {\arrow{>}}},
        postaction={decorate},color=black, thick] (160:2) arc (160:240:2);
    \draw[decoration={markings, mark=at position 0.5 with {\arrow{>}}},
        postaction={decorate},color=black, thick] (240:2) arc (240:300:2);
    \draw[decoration={markings, mark=at position 0.5 with {\arrow{>}}},
        postaction={decorate},color=black, thick] (300:2) arc (300:340:2);

    \draw[thick] (-4,-4)node[left]{$\mathbb{A}$} to (-3,-4)node{$\bullet$};
    \draw[thick] (-3,-4) to (-1,-4)node{$\bullet$};
    \draw[thick] (-1,-4) to (1,-4)node{$\bullet$};
    \draw[thick] (1,-4) to (3,-4)node{$\bullet$};
    \draw[thick] (3,-4) to (4,-4);
    
    \draw[decoration={markings, mark=at position 0.5 with {\arrow{>}}},
        postaction={decorate},color=black, thick] (-20:2) to[out=-20, in=90] (1,-4);
    \draw[color=black, thick] (80:2) to (80:3);
    \draw[decoration={markings, mark=at position 0.6 with {\arrow{>}}},
        postaction={decorate},color=black, thick] (30:2)   to[out=30 , in=90] (3,-4);
    \draw[decoration={markings, mark=at position 0.5 with {\arrow{>}}},
        postaction={decorate},color=black, thick] (160:2) to[out=160, in=90] (-3,-4);
    \draw[decoration={markings, mark=at position 0.5 with {\arrow{>}}},
        postaction={decorate},color=black, thick] (240:2) to[out=240, in=90] (-1,-4);        
        
    \draw[decoration={markings, mark=at position 0.5 with {\arrow{>}}},
        postaction={decorate},color=black, thick] (120:2) to[out=-60, in=120] (0,0);
    \draw[decoration={markings, mark=at position 0.5 with {\arrow{>}}},
        postaction={decorate},color=black, thick] (300:2) to[out=120, in=300] (0,0);

    \draw[fill=black] (-.2,-.2) rectangle (.2,.2);
    \draw[fill=black] (80:3) circle (.2);
        
   \end{tikzpicture}
   \hspace{2.3cm}
   \begin{tikzpicture}[scale=.44]
    \node (a) at (0,-4) {};
    \node (b) at ($(a)+(90:1.5)$) {};
    \node (c1) at ($(b)+(30:1.5)$) {};
    \node (c2) at ($(b)+(150:1.5)$) {};
    \node (d1) at ($(c1)+(90:1.5)$) {};
    \node (d2) at ($(c1)+(-30:1.5)$) {$\bullet$};
    \node (e1) at ($(d1)+(30:1.5)$) {};
    \node (e2) at ($(d1)+(150:1.5)$) {};  
    \node (f1) at ($(c2)+(90:1.5)$) {};
    \node (f2) at ($(c2)+(210:1.5)$) {$\bullet$};   
    \node (g1) at ($(e1)+(90:1.5)$) {$\bullet$};
    \node (g2) at ($(e1)+(-30:1.5)$) {$\bullet$};     
    \coordinate (z) at ($(f1)+(30:.75)$);

    \draw[decoration={markings, mark=at position 0.6 with {\arrow{>}}},
        postaction={decorate},color=black, thick] (0,-4) to ($(a)+(90:1.5)$);
    \draw[decoration={markings, mark=at position 0.6 with {\arrow{>}}},
        postaction={decorate},color=black, thick] ($(a)+(90:1.5)$) to ($(b)+(30:1.5)$);    
    \draw[decoration={markings, mark=at position 0.6 with {\arrow{>}}},
        postaction={decorate},color=black, thick] ($(a)+(90:1.5)$) to ($(b)+(150:1.5)$);   
        
    \draw[decoration={markings, mark=at position 0.6 with {\arrow{>}}},
        postaction={decorate},color=black, thick] ($(b)+(30:1.5)$) to ($(c1)+(90:1.5)$);    
    \draw[color=black, thick] ($(b)+(30:1.5)$) to ($(c1)+(-30:1.5)$);
        
    \draw[decoration={markings, mark=at position 0.6 with {\arrow{>}}},
        postaction={decorate},color=black, thick] ($(c1)+(90:1.5)$) to ($(d1)+(30:1.5)$);    
    \draw[decoration={markings, mark=at position 0.6 with {\arrow{>}}},
        postaction={decorate}, color=black, thick] ($(c1)+(90:1.5)$) to (z);

    \draw[decoration={markings, mark=at position 0.6 with {\arrow{>}}},
        postaction={decorate}, color=black, thick] ($(b)+(150:1.5)$) to (z);
 
    \draw[color=black, thick] ($(b)+(150:1.5)$) to ($(c2)+(210:1.5)$);

    \draw[color=black, thick] ($(d1)+(30:1.5)$) to ($(e1)+(90:1.5)$);    
    \draw[color=black, thick] ($(d1)+(30:1.5)$) to ($(e1)+(-30:1.5)$);             
  
    \draw[thick] (0,-4) to (4,-4)node[right]{$\mathbb{B}$};
    \draw[thick] (-3,-4) to (0,-4);
    
    \draw[color=black, thick, fill=black!10] (a) circle (6pt); 

    \draw[fill=black] ($(z)-(.2,.2)$) rectangle ($(z)+(.2,.2)$);
   \end{tikzpicture}
 \caption{Two examples of the many possible Feynman diagrams for 2D YM on a surface with boundary.}
 \label{diagrams_ex}
\end{figure}
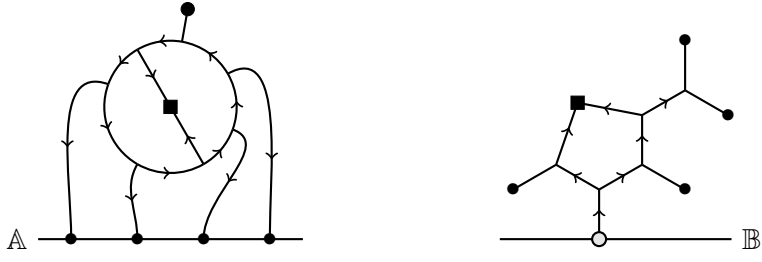

\subsubsection{$\Omega$-cohomology in $\mathbb{A}$-polarization on a circle}\label{Omega_A_cohomology}

 In~\cite{CMR:pert_quantum_BV} it was proven that the partition function for 2D YM (and other perturbations of abelian BF) solves the mQME.
 In the following sections, to simplify some computations, we will exploit this fact by choosing a suitable representative for the cohomology class of the partition function.
 In particular it will be useful to know the cohomology of $\Omega$\,, in ghost degree zero, on the space of boundary fields in $\mathbb{A}$ polarization.
 
 From equation~\eqref{BF_Omega} we see that $\Omega$ acts as a gauge transformation, thus $\Omega$-closed functionals of ghost degree zero are just gauge-invariant functionals of the connection, which are in one-to-one correspondence with class functions on the simply connected Lie group~$G$ integrating the Lie algebra~$\mathfrak{g}$\,.
 
 An alternative approach is to split~$\Omega$ into the ``abelian'' part, i.e. the de Rham differential~$\mathrm{d}$\,, plus a perturbation~$\bm\delta$ (not to be confused with the de~Rham operator $\delta$ on fields) containing the structure constants of the Lie algebra:%
 \beq\label{Omega_A}
  \Omega= \underset{\mathrm{d}}{\underbrace{\int_{S^1} \mathrm{d}\mathbb{A}^a_{(0)} \frac{\delta}{\delta \mathbb{A}^a_{(1)}} }} 
   + \underset{\bm{\delta}}{\underbrace{\int_{S^1} \bigg( \frac{1}{2}f_{bc}^a \mathbb{A}^b_{(0)}\mathbb{A}^c_{(0)} \frac{\delta}{\delta \mathbb{A}^a_{(0)}} 
    + f_{bc}^a \mathbb{A}^b_{(0)}\mathbb{A}^c_{(1)} \frac{\delta}{\delta \mathbb{A}^a_{(1)}} \bigg)}} ~.
 \eeq
 We can then compute the cohomology via the homological perturbation lemma~\cite{Gugenheim:hpt}.
 The cohomology of $\mathrm{d}$ is given by functions on the de Rham cohomology $H_{\mathrm{dR}}^\bullet(S^1;\mathfrak{g})[1]$\,; choosing the coordinate $t$ on the circle, these can be represented as functions of the ``constant fields'' $\underline{\mathbb{A}}_{(0)}$ and $\underline{\mathbb{A}}_{(1)}\mathrm{d}t$\,, where~$\underline{\mathbb{A}}_{(0)}\in\mathfrak{g}[1]$ and~$\underline{\mathbb{A}}_{(1)}\in\mathfrak{g}$\,.
 Now if we compute the cohomology of the induced differential
 \beq\label{delta_A}
  \underline{\bm{\delta}}=\frac{1}{2}f_{bc}^a \underline{\mathbb{A}}^b_{(0)}\underline{\mathbb{A}}^c_{(0)} \frac{\delta}{\delta \underline{\mathbb{A}}^a_{(0)}} + f_{bc}^a \underline{\mathbb{A}}^b_{(0)}\underline{\mathbb{A}}^c_{(1)} \frac{\delta}{\delta \underline{\mathbb{A}}^a_{(1)}}
 \eeq
 on $H_{\mathrm{dR}}^\bullet(S^1;\mathfrak{g})[1]$\,, we get that in ghost degree zero it is given by $G$-invariant functions on the Lie algebra~$\mathfrak{g}$\,.
 Comparing with the previous answer, we see that the correct $\Omega$-cohomology corresponds to the subspace of $G$-invariant functions on~$\mathfrak{g}$ coming as the pullback by the exponential map $\mathrm{exp}\colon \mathfrak{g} \rightarrow G$ of class functions on $G$\,. 
 Such functions on~$\mathfrak{g}$ are determined by their values on the fundamental domain~$B_0$ of the exponential map (e.g. for $G=\mathrm{SU}(2)$\,, $B_0$~is a ball in $\mathfrak{g}$ centered at the origin).%
 \footnote{
  Generally, $B_0$~is the connected component of the origin in $\mathfrak{g} - \phi^{-1}(0)$ where the function $\phi\colon \mathfrak{g} \rightarrow \mathbb{R}$\,, $\phi(x):= \det \frac{\sinh(\mathrm{ad}_x /2)}{\mathrm{ad}_x /2}$\,, is the Jacobian of the exponential map. 
  In other words, $B_0$~is the set of elements~$x\in \mathfrak{g}$ such that all eigenvalues of~$\mathrm{ad}_x$ 
  are contained in the interval $(-2\pi \mathrm{i}, 2\pi \mathrm{i}) \subset \mathrm{i} \mathbb{R}$\,.
 }
 The discrepancy between the correct cohomology of~$\Omega$ and the cohomology of~$\underline{\bm\delta}$ is due to a convergence issue arising in homological perturbation theory.%
 \footnote{
  This problem is a version of the Gribov ambiguity (Gribov copies) problem in 4d Yang-Mills theory -- the problem of gauge-fixing ``section'' intersecting the gauge orbits more than once.
  For that reason, we will refer to~$B_0$ as the ``Gribov region''.
 }
 
 The useful remark coming from this discussion is that, modulo $\Omega$-exact terms, the partition function and the physical observables can be represented as a ($G$-invariant) function of \emph{constant fields} valued in a neighbourhood of zero in~$\mathfrak{g}$\,.
 Moreover, in ghost degree zero, for $\Omega$-closed objects depending only on ${\mathbb{A}}_{(1)}$\,, any ``reduced wavefunction''~$\Psi(\underline{\mathbb{A}}_{(1)})$ can be lifted to an $\Omega$-closed function in the non-reduced space of states by evaluating~$\Psi$ on the logarithm of the holonomy of~$\mathbb{A}_{(1)}$\,.

\subsubsection{Hodge propagators and axial gauge}\label{BF Propagators and Axial Gauge}

The kinetic term in the YM action~\eqref{YM_BV_action} is of the kind~$\int_{\Sigma} \langle \mathsf{B},{D}\mathsf{A}\rangle$\,, where~$D$ is a differential on~$\mathcal{Y}_\Sigma^\bullet$ (in our case $D=\mathrm{d}$, but the construction presented in this section can be applied to more general cases).
Since the propagator is the integral kernel of the inverse of~$D$\,, we want to find where the differential can actually be inverted.

Let~$(K,i,p)$ be a \emph{retraction} of~$(\mathcal{Y}^\bullet_\Sigma, D)$ on its cohomology $(\mathcal{V}^\bullet_\Sigma,0)$\,, i.e. a triple where $K\colon \mathcal{Y}_\Sigma^\bullet\longrightarrow \mathcal{Y}_\Sigma^{\bullet-1}$ is a chain homotopy, $i\colon \mathcal{V}^\bullet_\Sigma \hookrightarrow \mathcal{Y}_\Sigma^\bullet $ a chain inclusion and $p\colon \mathcal{Y}_\Sigma^\bullet\twoheadrightarrow \mathcal{V}_\Sigma^\bullet$ a chain projection satisfying:
\beq\label{chain_contraction}
 K^2= p\circ K = K\circ i = 0 ~, \qquad i\circ p = \mathrm{id}~,\qquad 
  D K + K D = \mathrm{id} - i\circ p~.
\eeq
Then the complex $\mathcal{Y}^\bullet_\Sigma$ has a \emph{weak Hodge decomposition}:
\beq
 \mathcal{Y}^\bullet_\Sigma = \underset{\simeq \mathcal{V}^\bullet_\Sigma}{\underbrace{\Pi \mathcal{Y}^\bullet_\Sigma }} \oplus 
  \underset{= \mathcal{Y}'^\bullet_\Sigma}{\underbrace{K\mathcal{Y}^{\bullet+1}_\Sigma \oplus D\mathcal{Y}^{\bullet-1}_\Sigma}}~,
\eeq
where we have defined~$\Pi := i \circ p$\,.
From eq.~\eqref{chain_contraction}, the differential~$D$ is invertible as an operator from the image of the chain homotopy~$K$ to $D$-exact cochains~$D\colon K\mathcal{Y}^{\bullet+1}_\Sigma \longrightarrow D\mathcal{Y}^{\bullet-1}_\Sigma$ and its inverse is precisely the chain homotopy itself: $K = D^{-1}$\,.

The gauge can thus be fixed on the Lagrangian~$\mathcal{L} = K\mathcal{Y}_\Sigma$\,; the \emph{propagator}~$\eta(x';x)$\,, with this gauge-fixing, is defined as the integral kernel of the chain homotopy:
\beq
 K \omega(x) = \int_{\Sigma\ni x'} \eta(x;x') \wedge \omega(x') ~,\qquad \omega\in\mathcal{Y}_\Sigma ~.
\eeq

When spacetime is a product manifold, $\Sigma=\Sigma_1\times \Sigma_2$\,, there is a particular class of propagators which can be induced on~$\Sigma$ from lower-dimensional propagators on the two factors~\cite{bonechi:PSM_on_closed.}.
Since the differential forms on a product manifold are the (closure) of the sum of products of the differential forms on the two factors, we have~$\mathcal{Y}_\Sigma=\mathcal{Y}_{\Sigma_1} \otimes \mathcal{Y}_{\Sigma_2}$\,.
For each pair of contractions~$(K_\ell,i_\ell,p_\ell)$ on the factors~$\mathcal{Y}_{\Sigma_\ell}$ we have an induced weak Hodge decomposition on~$\mathcal{Y}_\Sigma$\,:
\beq
 \mathcal{Y}_\Sigma= \overbrace{\big(\Pi_1 \mathcal{Y}_{\Sigma_1} \otimes \Pi_2 \mathcal{Y}_{\Sigma_2}\big)}^{=\Pi\mathcal{Y}_\Sigma\simeq \mathcal{V}_\Sigma} 
  \oplus\overbrace{\big(\Pi_1 \mathcal{Y}_{\Sigma_1}\otimes K_2 \mathcal{Y}_{\Sigma_2}\big)
   \oplus \big(K_1 \mathcal{Y}_{\Sigma_1}\otimes \mathcal{Y}_{\Sigma_2}\big)}^{K\mathcal{Y}_{\Sigma}}\\
 \oplus{\big(\Pi_1 \mathcal{Y}_{\Sigma_1}\otimes D_2\mathcal{Y}_{\Sigma_2}\big)\oplus \big(D_1 \mathcal{Y}_{\Sigma_1}
  \otimes \mathcal{Y}_{\Sigma_2}\big)}	~.
\eeq
The zero modes are the product of the zero modes on the two factors and the induced chain homotopy is $K=\Pi_1\otimes K_2 \oplus K_1\otimes \mathrm{id}_{\mathcal{Y}_{\Sigma_2}}$\,.
The associated gauge is called \emph{axial gauge}.
If we call~$\pi_\ell$ the integral kernel of~$\Pi_\ell$\,, the \emph{axial gauge propagator} is:
\beq\label{axial_gauge_prop.}
 \eta(x_1,x_2;x'_1,x'_2) = \pi_1(x_1;x_1') \wedge \eta_2(x_2;x'_2) + \eta_1(x_1;x_1')\wedge \delta(x_2;x_2') ~.
\eeq

\section{2D YM for surfaces of non-negative Euler characteristic}\label{2d_YM_for_chi>0}

In this section we will consider 2D YM on manifolds with codimension 1 boundaries.
With a good choice of propagators and exploiting the gluing properties of BV-BFV theories, we will be able in this setting to explicitly compute all Feynman diagrams and sum the perturbative series to find the complete partition function of this theory on disks and cylinders. 
The globalized realization of the partition function on a disk in the $\mathbb{A}$~polarization will coincide with the well-known non-perturbative solution of 2D YM~\cite{Migdal:1975zg, witten:2d_quantum_gauge}.%
\footnote{
 Although we can present, e.g., the sphere and the torus as assembled from building blocks considered in this section, globalization integrals for them are perturbatively obstructed, see Section~\ref{closed_surfaces}. 
 We obtain a non-singular globalized answer in these cases as a part of the general result of Section~\ref{corners_and_2d_YM}.
}

We consider a set of generators, under gluing, for orientable surfaces of non-negative Euler characteristic: the disk and the cylinder.
At the level of the field theory constructed on such surfaces, we have to also consider the data of the polarization associated to the boundaries.
The building blocks for 2D YM can be thus chosen to be the disk in the $\mathbb{B}$ polarization, the cylinder in $\mathbb{A}-\mathbb{A}$ polarization and the cylinder in the $\mathbb{B}-\mathbb{B}$ polarization.
Moreover, using the invariance of the theory under area-preserving diffeomorphisms, as a convenient choice we can concentrate the support of the volume form~$\mu$ near the boundaries; this allows to use as generators the above surfaces in the limit of zero area, i.e. for BF theory, at the cost of introducing as fourth generator a YM cylinder in $\mathbb{A}-\mathbb{B}$ polarization with finite volume~(figure~\ref{building_blocks_x>o}).

\begin{figure}[h]
 \centering
   \begin{tikzpicture}[scale=.2, baseline=(y.base)] 

    \coordinate (x0) at (1,4);
    \coordinate (x1) at (1,-4);
    \coordinate (y) at (-3,0);
    
    \filldraw[fill=black!15, thick, draw=black] (x0) to[out=180, in=90] (y) to[out=-90, in=180] (x1) to[out=0, in=0] node[right]{$\mathbb{B}$} (x0);
    
    \draw[dotted, thick, draw=black] (x0) to[out=180, in=180] (x1) to[out=0, in=0] (x0) ;
    
    \node (k1) at (1,0) {BF};
   \end{tikzpicture}
   \hspace{.7cm}
   \begin{tikzpicture}[scale=.15, baseline=(y.base)] 
    \coordinate (x) at (6,0);
    \coordinate (z) at (-1,0);
   
    \coordinate (x0) at (-6,12);
    \coordinate (x1) at (-6,4);
    \coordinate (y0) at (-6,-4);
    \coordinate (y1) at (-6,-12);

    \filldraw[fill=black!5, thick, draw=black] (x0) to[out=180, in=180] node[left]{$\mathbb{A}$} (x1) to[out=0, in=0] (x0);
    \filldraw[fill=black!5, thick, draw=black] (y0) to[out=180, in=180] node[left]{$\mathbb{A}$} (y1) to[out=0, in=0] (y0);

    \filldraw[fill=black!15, thick, draw=black] (x0) to[out=0, in=90] (x) to[out=-90, in=0] (y1) to[out=0, in=0] (y0) 
      to[out=0, in=-90] (z) to[out=90, in=0] (x1) to[out=0, in=0] (x0);
      
    \node (k1) at (2.2,0) {BF};      
   \end{tikzpicture}
   \hspace{1cm}
   \begin{tikzpicture}[scale=.15, baseline=(y.base)] 
    \coordinate (x) at (-6,0);
    \coordinate (z) at (1,0);
   
    \coordinate (x0) at (6,12);
    \coordinate (x1) at (6,4);
    \coordinate (y0) at (6,-4);
    \coordinate (y1) at (6,-12);
      
    \filldraw[fill=black!15, thick, draw=black] (x0) to[out=180, in=90] (x) to[out=-90, in=180] (y1) to[out=0, in=0]node[right]{$\mathbb{B}$} (y0) 
      to[out=180, in=-90] (z) to[out=90, in=180] (x1) to[out=0, in=0]node[right]{$\mathbb{B}$} (x0);

    \draw[dotted, thick, draw=black] (x0) to[out=180, in=180] (x1);
    \draw[dotted, thick, draw=black] (y0) to[out=180, in=180] (y1);
    
    \node (k1) at (-2.2,0) {BF};    
   \end{tikzpicture}   
   \hspace{.7cm}
   \begin{tikzpicture}[scale=.2, baseline=(x.base)] 
    \coordinate (x) at (0,0);
    \node (z0) at (0,5.2) {};
    \node (z1) at (0,-5.2) {};
   
    \coordinate (x0) at (-6,4);
    \coordinate (x1) at (-6,-4);
    \coordinate (y0) at (6,4);
    \coordinate (y1) at (6,-4);

    \filldraw[fill=black!5, thick, draw=black] (x0) to[out=180, in=180] node[left]{$\mathbb{A}$} (x1) to[out=0, in=0] (x0);
    
    \filldraw[fill=black!40, thick, draw=black] (x0) to[out=0, in=0] (x1) to[out=0, in=180] (y1) to[out=0, in=0]node[right]{$\mathbb{B}$} (y0) 
      to[out=180, in=0] (x0);
    
    \draw[dotted, thick, draw=black] (y0) to[out=180, in=180] (y1);
    \node (z) at (0,0) {YM};
   \end{tikzpicture} 

 \caption{Building blocks for 2D YM on surfaces with non-negative Euler characteristic.}
 \label{building_blocks_x>o}
\end{figure}
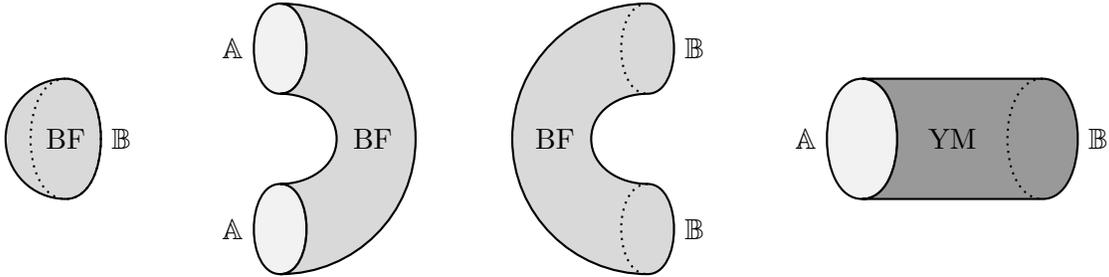

\subsection{$\mathbb{A}$-$\mathbb{B}$ polarization on the cylinder} \label{Sect:AB_cyl}

Let us start studying the BF theory on the cylinder, $\Sigma=S^1\times I\ni(\tau,t)$\,, $I=[0,1]$\,.
We will firstly choose $\mathbb{B}$~polarization on $S^1\times\{0\}=\partial_{\mathbb{B}} \Sigma$ and $\mathbb{A}$~polarization on $S^1\times\{1\}=\partial_{\mathbb{A}} \Sigma$\,.
The space of bulk fields, with this polarization, is $\mathcal{Y}=\Omega(\Sigma,\partial_{\mathbb{B}} \Sigma;\mathfrak{g})[1]\oplus\Omega(\Sigma,\partial_{\mathbb{A}} \Sigma;\mathfrak{g})$\,.
Since the relative cohomology $H(\Sigma,\partial_i \Sigma)$ is trivial with the above choice of boundaries, we have no zero-modes.
Thus the connected diagrams contributing to the effective action of the theory are trees with one root on $\partial_{\mathbb{B}} \Sigma$ and leafs on $\partial_{\mathbb{A}} \Sigma$ or 1-loop diagrams with trees rooted on a point of the loop and leafs on $\partial_{\mathbb{A}} \Sigma$ (figure~\ref{AB_cyl}).
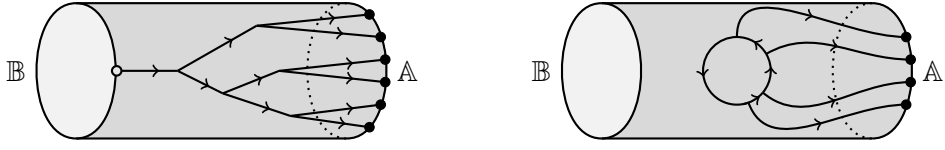
\begin{figure}[h]
 \centering
 \begin{tikzpicture}[scale=.3, baseline=(x.base)]
  
  \coordinate (x0) at (-6,-3);
  \coordinate (x1) at (-6,3);
  \coordinate (x2) at (6,3);
  \coordinate (x3) at (6,-3);
  
  \filldraw[fill=black!15](x0) to[out=0, in=0] (x1)to[out=0, in=180] (x2)to[in=0, out=0](x3)to[out=180, in=0](x0) ;
  \filldraw[fill=black!5, draw=black!15](x0) to[out=180, in=180] (x1)to[in=0, out=0](x0);

  \draw[color=black, thick] (x0) to[out=0, in=0] (x1);
  \draw[color=black, thick] (x1) to[out=180, in=180]node[left](x){$\mathbb{B}$} (x0);
        
  \draw[color=black, thick] (x3) to[out=0, in=0]node[right](x){$\mathbb{A}$} (x2);
  \draw[color=black, thick, dotted] (x2) to[out=180, in=180] (x3);

  \draw[color=black, thick](x1) to[out=0, in=180] (x2) ;
  \draw[color=black, thick](x0) to[out=0, in=180] (x3);

  
   \coordinate (B) at (-4.2,0);
   
   \coordinate (v1) at (-1.5,0);
   \coordinate (v2) at (2,2);
   \coordinate (v3) at (.5,-1);
   \coordinate (v4) at (3,0);
   \coordinate (v5) at (3.5,-2);

   \coordinate (b1) at (7,2.5);
   \coordinate (b2) at (7.5,1.5);
   \coordinate (b3) at (7.7,.5);
   \coordinate (b4) at (7.7,-.5);
   \coordinate (b6) at (7.5,-1.5);
   \coordinate (b5) at (7,-2.5);

   \draw[decoration={markings, mark=at position 0.7 with {\arrow{>}}},
        postaction={decorate},color=black, thick] (B) to (v1);   
    
   \draw[decoration={markings, mark=at position 0.7 with {\arrow{>}}},
        postaction={decorate},color=black, thick] (v1) to (v2);     
   
   \draw[decoration={markings, mark=at position 0.7 with {\arrow{>}}},
        postaction={decorate},color=black, thick] (v1) to (v3);    
   
   \draw[decoration={markings, mark=at position 0.7 with {\arrow{>}}},
        postaction={decorate},color=black, thick] (v3) to (v4);    
   
   \draw[decoration={markings, mark=at position 0.7 with {\arrow{>}}},
        postaction={decorate},color=black, thick] (v3) to (v5);

    \draw[decoration={markings, mark=at position 0.7 with {\arrow{>}}},
        postaction={decorate},color=black, thick]  (v2) to (b1);   
    \draw[color=black, fill=black] (b1) circle (6pt);      
      
    \draw[decoration={markings, mark=at position 0.7 with {\arrow{>}}},
        postaction={decorate},color=black, thick]  (v2) to (b2);   
    \draw[color=black, fill=black] (b2) circle (6pt);     
    
    \draw[decoration={markings, mark=at position 0.7 with {\arrow{>}}},
        postaction={decorate},color=black, thick]  (v4) to (b3);   
    \draw[color=black, fill=black] (b3) circle (6pt);     
    
    \draw[decoration={markings, mark=at position 0.7 with {\arrow{>}}},
        postaction={decorate},color=black, thick]  (v4) to (b4);   
    \draw[color=black, fill=black] (b4) circle (6pt);     
    
    \draw[decoration={markings, mark=at position 0.7 with {\arrow{>}}},
        postaction={decorate},color=black, thick]  (v5) to (b5);   
    \draw[color=black, fill=black] (b5) circle (6pt);     
    
    \draw[decoration={markings, mark=at position 0.7 with {\arrow{>}}},
        postaction={decorate},color=black, thick]  (v5) to (b6);   
    \draw[color=black, fill=black] (b6) circle (6pt);     
    
    \draw[color=black, thick, fill=black!10] (B) circle (6pt); 
   
 \end{tikzpicture}
 \hspace{1cm}  
 \begin{tikzpicture}[scale=.3, baseline=(x.base)]
  
  \coordinate (x0) at (-6,-3);
  \coordinate (x1) at (-6,3);
  \coordinate (x2) at (6,3);
  \coordinate (x3) at (6,-3);
  
  \filldraw[fill=black!15](x0) to[out=0, in=0] (x1)to[out=0, in=180] (x2)to[in=0, out=0](x3)to[out=180, in=0](x0) ;
  \filldraw[fill=black!5, draw=black!15](x0) to[out=180, in=180] (x1)to[in=0, out=0](x0);

  \draw[color=black, thick] (x0) to[out=0, in=0] (x1);
  \draw[color=black, thick] (x1) to[out=180, in=180]node[left](x){$\mathbb{B}$} (x0);
        
  \draw[color=black, thick] (x3) to[out=0, in=0]node[right](x){$\mathbb{A}$} (x2);
  \draw[color=black, thick, dotted] (x2) to[out=180, in=180] (x3);

  \draw[color=black, thick](x1) to[out=0, in=180] (x2) ;
  \draw[color=black, thick](x0) to[out=0, in=180] (x3);


   \coordinate (v1) at (90:1.5);
   \coordinate (v2) at (30:1.5);
   \coordinate (v3) at (-40:1.5);
   \coordinate (v4) at (-70:1.5);
   
   \coordinate (b2) at (7.5,1.5);
   \coordinate (b3) at (7.7,.5);
   \coordinate (b4) at (7.7,-.5);
   \coordinate (b5) at (7.5,-1.5);
   
    \draw[decoration={markings, mark=at position 0.5 with {\arrow{>}}},
        postaction={decorate},color=black, thick] (90:1.5) arc (90:290:1.5);
   
    \draw[decoration={markings, mark=at position 0.5 with {\arrow{<}}},
        postaction={decorate},color=black, thick] (320:1.5) arc (320:280:1.5);
        
    \draw[decoration={markings, mark=at position 0.5 with {\arrow{<}}},
        postaction={decorate},color=black, thick] (30:1.5) arc (30:-40:1.5);
        
    \draw[decoration={markings, mark=at position 0.5 with {\arrow{>}}},
        postaction={decorate},color=black, thick] (30:1.5) arc (30:90:1.5);

    \draw[decoration={markings, mark=at position 0.5 with {\arrow{>}}},
        postaction={decorate},color=black, thick]  (v1) to[out=90, in=180] (b2);   
    \draw[color=black, fill=black] (b2) circle (6pt);     
    
    \draw[decoration={markings, mark=at position 0.5 with {\arrow{>}}},
        postaction={decorate},color=black, thick]  (v2) to[out=30, in=180] (b3);   
    \draw[color=black, fill=black] (b3) circle (6pt);     
    
    \draw[decoration={markings, mark=at position 0.5 with {\arrow{>}}},
        postaction={decorate},color=black, thick]  (v3) to[out=-40, in=180] (b4);   
    \draw[color=black, fill=black] (b4) circle (6pt);     
    
    \draw[decoration={markings, mark=at position 0.5 with {\arrow{>}}},
        postaction={decorate},color=black, thick]  (v4) to[out=-70, in=180] (b5);   
    \draw[color=black, fill=black] (b5) circle (6pt);     

 \end{tikzpicture}  
 \caption{Connected diagrams for non-abelian BF on the cylinder in $\mathbb{A}$-$\mathbb{B}$ polarization.}
 \label{AB_cyl}
\end{figure}
\\
To compute these diagrams we can use the axial-gauge, with propagator (cf.~\ref{Axial_Gauge_Cylinder}):
\beq
 \begin{tikzpicture}[scale=.5, baseline=(x.base)]
  \draw[decoration={markings, mark=at position 0.55 with {\arrow{>}}},
        postaction={decorate},color=black, thick]  (-1,0) node[left](x){$(t,\tau)$} to (1,0)node[right]{$(t',\tau')$};
 \end{tikzpicture}
 =\eta(t,\tau;t',\tau')= -\Theta(t'-t)\delta(\tau-\tau')(\mathrm{d}\tau'-\mathrm{d}\tau)~.
\eeq
Looking at this propagator we immediately notice that it is a zero-form on the interval~$I$\,.
Since each bulk vertex carries an integration over $S^1\times I$\,, the differential form associated to a diagram has the right form components (that is, s.t. its integral over the configuration space doesn't vanish) only if it doesn't contain any bulk vertex.
Thus there is only one non-vanishing diagram contributing to the effective action:
\beq
 \mathcal{S}_{\mathrm{BF}}^{\mathrm{eff}}[\mathbb{B},\mathbb{A}]=
  \begin{tikzpicture}[scale=.15, baseline=(x.base)]
  
  \coordinate (x0) at (-6,-3);
  \coordinate (x1) at (-6,3);
  \coordinate (x2) at (6,3);
  \coordinate (x3) at (6,-3);
  
  \filldraw[fill=black!15](x0) to[out=0, in=0] (x1)to[out=0, in=180] (x2)to[in=0, out=0](x3)to[out=180, in=0](x0) ;
  \filldraw[fill=black!5, draw=black!15](x0) to[out=180, in=180] (x1)to[in=0, out=0](x0);

  \draw[color=black, thick] (x0) to[out=0, in=0] (x1);
  \draw[color=black, thick] (x1) to[out=180, in=180]node[right](x){$\phantom{x}$} (x0);
        
  \draw[color=black, thick] (x3) to[out=0, in=0] (x2);
  \draw[color=black, thick, dotted] (x2) to[out=180, in=180] (x3);

  \draw[color=black, thick](x1) to[out=0, in=180] (x2) ;
  \draw[color=black, thick](x0) to[out=0, in=180] (x3);

  
   \coordinate (B) at (-4.2,0);
   \draw[decoration={markings, mark=at position 0.5 with {\arrow{>}}},
        postaction={decorate},color=black, thick]  (B) to (7.8,0);   
   \draw[color=black, fill=black] (7.8,0) circle (10pt);  
   \draw[color=black, thick, fill=black!10] (-4.2,0) circle (10pt);  

 \end{tikzpicture}
   =\int_{\partial_\mathbb{A}\Sigma} \langle p^* \mathbb{B},\mathbb{A}\rangle~,
\eeq
where $p\colon\Sigma\longrightarrow \partial_\mathbb{B} \Sigma$ is a projection to the $\mathbb{B}$-boundary.

The Yang-Mills action can be rewritten as a perturbation of BF:
 \beq
  \mathcal{S}_{\mathrm{YM}} = \mathcal{S}_{\mathrm{BF}} + \frac{1}{2}\int_\Sigma \mu\;\mathrm{tr}(\mathsf{B}^2)~.
 \eeq
The additional bivalent interaction vertex is proportional to the volume form $\mu$\,.
For degree counting reasons analogous to the one described above, the only additional non-vanishing Feynman diagram is the one containing a single YM vertex:
\beq
  \mathcal{S}_{\mathrm{YM}}^{\mathrm{eff}} =
    \begin{tikzpicture}[scale=.15, baseline=(x.base)]
  
  \coordinate (x0) at (-6,-3);
  \coordinate (x1) at (-6,3);
  \coordinate (x2) at (6,3);
  \coordinate (x3) at (6,-3);
  
  \filldraw[fill=black!15](x0) to[out=0, in=0] (x1)to[out=0, in=180] (x2)to[in=0, out=0](x3)to[out=180, in=0](x0) ;
  \filldraw[fill=black!5, draw=black!5](x0) to[out=180, in=180] (x1)to[in=0, out=0](x0);

  \draw[color=black, thick] (x0) to[out=0, in=0] (x1);
  \draw[color=black, thick] (x1) to[out=180, in=180]node[right](x){$\phantom{x}$} (x0);
        
  \draw[color=black, thick] (x3) to[out=0, in=0] (x2);
  \draw[color=black, thick, dotted] (x2) to[out=180, in=180] (x3);

  \draw[color=black, thick](x1) to[out=0, in=180] (x2) ;
  \draw[color=black, thick](x0) to[out=0, in=180] (x3);

  
   \coordinate (B) at (-4.2,0);
   \draw[decoration={markings, mark=at position 0.5 with {\arrow{>}}},
        postaction={decorate},color=black, thick]  (B) to (7.8,0);   
   \draw[color=black, fill=black] (7.8,0) circle (10pt);  
   \draw[color=black, thick, fill=black!10] (-4.2,0) circle (10pt);  

 \end{tikzpicture}
   + 
 \begin{tikzpicture}[scale=.15, baseline=(x.base)]
  
  \coordinate (x0) at (-6,-3);
  \coordinate (x1) at (-6,3);
  \coordinate (x2) at (6,3);
  \coordinate (x3) at (6,-3);
  
  \filldraw[fill=black!40](x0) to[out=0, in=0] (x1)to[out=0, in=180] (x2)to[in=0, out=0](x3)to[out=180, in=0](x0) ;
  \filldraw[fill=black!5, draw=black!5](x0) to[out=180, in=180] (x1)to[in=0, out=0](x0);

  \draw[color=black, thick] (x0) to[out=0, in=0] (x1);
  \draw[color=black, thick] (x1) to[out=180, in=180] (x0);
        
  \draw[color=black, thick] (x3) to[out=0, in=0] (x2);
  \draw[color=black, thick, dotted] (x2) to[out=180, in=180]node[right](x){$\phantom{x}$}  (x3);

  \draw[color=black, thick](x1) to[out=0, in=180] (x2) ;
  \draw[color=black, thick](x0) to[out=0, in=180] (x3);

  
   \coordinate (B) at (1.2,0);
   \draw[decoration={markings, mark=at position 0.6 with {\arrow{<}}},
        postaction={decorate},color=black, thick]  (B)node[right]{$\mu$} to (-4.3,1);   
   \draw[decoration={markings, mark=at position 0.6 with {\arrow{<}}},
        postaction={decorate},color=black, thick]  (B) to (-4.3,-1);   
   \draw[color=black, fill=black, thick] ($(B)-(.3,.3)$) rectangle ($(B)+(.3,.3)$);  
   \draw[thick, color=black, fill=black!10] (-4.3,1) circle (10pt);  
   \draw[thick, color=black, fill=black!10] (-4.3,-1) circle (10pt);  
   
 \end{tikzpicture}
 = \int_{\partial_\mathbb{A}\Sigma} \langle p^*\mathbb{B},\mathbb{A}\rangle +\frac{1}{2} \int_{\partial_\mathbb{B}\Sigma}p_*\mu \; \mathrm{tr} \mathbb{B}^2 ~,
\eeq
where the last integral is the integral of a density, with $p_*\mu$ the pushforward of $\mu$, viewed as a density on the cylinder, to the $\mathbb{B}$-circle.
Thus we proved the following:

\begin{proposition}[YM on $\mathbb{A}$-$\mathbb{B}$ cylinder]
 The partition function for a YM cylinder in the $\mathbb{A}$-$\mathbb{B}$ polarization is:
 \beq
  Z[\mathbb{A},\mathbb{B}] = \exp \frac{\mathrm{i}}{\hbar} \Big( \int_{\partial_\mathbb{A}\Sigma} \langle p^*\mathbb{B},\mathbb{A}\rangle
   +\frac{1}{2} \int_{\partial_\mathbb{B}\Sigma}p_*\mu \; \mathrm{tr} \mathbb{B}^2 \Big)~.
 \eeq
\end{proposition}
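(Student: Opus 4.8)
The plan is to compute the partition function \eqref{part.funct.} directly as a sum of connected Feynman diagrams dressed with boundary sources, and then exponentiate. First I would note that with $\mathbb{B}$-polarization on $S^1\times\{0\}$ and $\mathbb{A}$-polarization on $S^1\times\{1\}$ the relevant relative cohomology $H(\Sigma,\partial_i\Sigma)$ is trivial, so there are no residual fields, $\mathcal{V}_\Sigma=0$. Consequently the BV pushforward reduces to an ordinary perturbative integral over the fluctuations whose result has the form $\exp\big(\tfrac{\mathrm i}{\hbar}\mathcal S^{\mathrm{eff}}[\mathbb A,\mathbb B]\big)$, with $\mathcal S^{\mathrm{eff}}$ the generating function of connected diagrams built from the vertices of \eqref{YM_pert.action}, the axial-gauge propagator, and the $\mathbb A$- and $\mathbb B$-boundary source insertions. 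Everything then reduces to identifying which connected diagrams are nonzero.

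The key tool is the axial-gauge propagator $\eta(t,\tau;t',\tau')=-\Theta(t'-t)\,\delta(\tau-\tau')(\mathrm d\tau'-\mathrm d\tau)$, which carries no $\mathrm d t$-component: it is a $0$-form in the interval direction. I would run the following degree-counting argument. Every BF (trivalent) bulk vertex contributes an integration over $S^1\times I$, so the form attached to it must contain the area element $\mathrm d\tau\wedge\mathrm d t$ for the integral over that bulk point not to vanish; but all edges meeting a bulk vertex are propagators, none of which supply a $\mathrm d t$. Hence any diagram containing a BF bulk vertex integrates to zero, and in particular all the one-loop diagrams of Figure~\ref{AB_cyl} vanish. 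The only surviving diagram for pure BF is therefore the single propagator joining a $\mathbb B$-source to an $\mathbb A$-source with no bulk vertex, which evaluates, using the $\Theta$ and $\delta$ factors in $\eta$ together with the boundary couplings in \eqref{YM_pert.action}, to $\int_{\partial_\mathbb A\Sigma}\langle p^*\mathbb B,\mathbb A\rangle$, where $p\colon\Sigma\to\partial_\mathbb B\Sigma$ is the projection to the $\mathbb B$-boundary.

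For Yang-Mills I would treat the area term as a perturbation of BF, adding the bivalent vertex $\tfrac12\int_\Sigma\mu\,\mathrm{tr}(\mathsf B^2)$. This vertex already carries the $2$-form $\mu=\mathrm d\tau\wedge\mathrm d t$, so it supplies precisely the $\mathrm d t$ that the propagators lack; the same degree count then permits exactly one new connected diagram, the YM vertex with its two $\mathsf B$-legs closed off by propagators onto two $\mathbb B$-boundary sources. Evaluating it, the interval integration of $\mu$ against the two $t$-independent propagators produces the fibre integral $p_*\mu$ on the $\mathbb B$-circle and yields $\tfrac12\int_{\partial_\mathbb B\Sigma}p_*\mu\,\mathrm{tr}\,\mathbb B^2$. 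Adding this to the BF term gives $\mathcal S^{\mathrm{eff}}$, and exponentiating produces the claimed $Z[\mathbb A,\mathbb B]$.

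The main obstacle I anticipate is making the degree-counting argument fully rigorous in the presence of the distributional propagator: one must check that the $\Theta(t'-t)$ and $\delta(\tau-\tau')$ factors, pulled back to the compactified configuration space of bulk points, genuinely force every diagram with a real bulk vertex to vanish rather than producing boundary or coincidence contributions, and that inserting $\mu$ more than once never creates a surviving diagram. I would also need to fix the combinatorial symmetry factors so that the connected-diagram sum exponentiates cleanly with the stated coefficient $\tfrac12$, and confirm that, since $\mathcal V_\Sigma=0$, the BV pushforward \eqref{part.funct.} is an ordinary perturbative integral whose exponent is exactly $\mathcal S^{\mathrm{eff}}$, so that no residual-field integration or globalization step remains and the mQME \eqref{mQME} reduces to $\Omega_{\partial\Sigma}Z=0$.
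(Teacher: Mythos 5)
Your proposal is correct and follows essentially the same route as the paper: trivial relative cohomology (no residual fields), the axial-gauge propagator $-\Theta(t'-t)\delta(\tau-\tau')(\mathrm{d}\tau'-\mathrm{d}\tau)$ having no $\mathrm{d}t$-component, the degree-counting argument killing every diagram with a trivalent bulk vertex, and the two surviving diagrams (single propagator for BF, single $\mu$-vertex for YM) assembling into the stated effective action. The paper's own argument is exactly this, presented with the same two Feynman diagrams.
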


A YM $\mathbb{A}$-$\mathbb{B}$ cylinder can be glued to other YM surfaces with boundary to modify their volume.
In particular in this way one can convert BF ($\mu=0$) to YM.

\subsection{$\mathbb{B}$-$\mathbb{B}$ polarization on the cylinder}

Another possible choice is to take the $\mathbb{B}$ polarization on both the boundary components of the cylinder.
This time the bulk fields are $\mathcal{Y}=\Omega(\Sigma;\mathfrak{g})[1]\oplus\Omega(\Sigma,\partial \Sigma;\mathfrak{g})$\,, with zero-modes $\mathcal{V}=H(\Sigma;\mathfrak{g})[1]\oplus H(\Sigma,\partial \Sigma;\mathfrak{g})\simeq H(S^1;\mathfrak{g})[1]\oplus H(S^1;\mathfrak{g})[-1]$\,.%
\footnote{
 Here $\simeq$ stands for the isomorphic model for cohomology.  
 The first term on the right is the cohomology of the circle (isomorphic to the cohomology of the cylinder) shifted \emph{down} by $1$ -- i.e. a complex concentrated in degrees $-1, 0$. 
 The second term is the cohomology of the cylinder relative to the boundary, which is isomorphic to the cohomology of the circle shifted \emph{up} by $1$ -- thus, it is a complex concentrated in degrees $1,2$. 
 Our general convention for homological degree shifts is $(W^\bullet[k])^p=W^{k+p}$ for a graded vector space $W^\bullet$. 
}
More explicitly, the zero-modes can be described expanding with respect to a basis $[\chi_i]$ of $H(S^1)$ and its dual $[\chi^i]$\,:
\beq
 \mathsf{a}= \mathsf{a}_i\chi^i \in H(S^1;\mathfrak{g})[1]~,\qquad \mathsf{b}= \mathsf{b}^i\chi_i\wedge \mathrm{d}t \in H(S^1;\mathfrak{g})[-1]~.
\eeq
We can again fix the gauge using the axial-gauge, obtaining the propagator~\eqref{axial_gauge_table}:
\beq\label{BB_ax.gauge_prop}
  \eta(t,\tau;t',\tau')=\big(t'-\Theta(t'-t)\big)\delta(\tau'-\tau)(\mathrm{d}\tau'-\mathrm{d}\tau) 
    +\mathrm{d}t' \big(\Theta(\tau-\tau')-\tau+\tau'-\frac{1}{2}\big)~.
\eeq

Now the effective action contains trees with root on one of the boundaries and leafs in the bulk or 1-loop diagrams with trees rooted on the loop and leafs in the bulk.
Luckily, a lot of these diagrams vanish as it is shown by the following
\begin{lemma}\label{no_trees_lemma}
 For BF theory on a cylinder with $\mathbb{B}$-$\mathbb{B}$ polarization in the axial gauge, all the diagrams containing a bulk vertex with attached two $\mathsf{a}$ zero-modes vanish:
 \beq\label{no_trees}
  \begin{tikzpicture}[scale=.2, baseline=(x.base)]
      \draw[fill=black, color=black!10] (0:2) to[out=70, in=-30] (60:2.3) to[out=150, in=40] (130:2) to[out=220, in=150] node[color=black, above left]{$\Gamma$} (220:2) to[out=330, in=250] (0:2);
      \draw[color=black, thick, dashed] (0:2) to[out=70, in=-30] (60:2.3) to[out=150, in=40] (130:2) to[out=220, in=150] (220:2) to[out=330, in=250] (0:2);
      
      \draw[decoration={markings, mark=at position 0.5 with {\arrow{>}}},
        postaction={decorate},color=black, thick] (.5,0) to (3,0);
      
      \draw[color=black, thick] (3,0) to ($(3,0)+(40:1)$);
      \draw[color=black, fill=black] ($(3,0)+(40:1)$) circle (6pt);
      
      \draw[color=black, thick] (3,0) to ($(3,0)+(-40:1)$);
      \draw[color=black, fill=black] ($(3,0)+(-40:1)$) circle (6pt);
    
    \node (x) at (5,0) [right]{$= 0~.$};

  \end{tikzpicture} 
 \eeq
\end{lemma}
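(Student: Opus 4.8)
The plan is to prove the vanishing \eqref{no_trees} by a form-type (degree) counting at the distinguished bulk vertex, exploiting the product structure $\Sigma=S^1_\tau\times I_t$ of the cylinder. Every bulk vertex carries an integration $\int_\Sigma=\int(\,\cdot\,)\,\mathrm{d}t\wedge\mathrm{d}\tau$, so its contribution can be nonzero only if the product of the differential forms sitting on its three legs contains the top form $\mathrm{d}t\wedge\mathrm{d}\tau$ at the vertex point $z=(t,\tau)$. I would therefore track, leg by leg, whether a factor of $\mathrm{d}t$ can be produced at $z$, and show that it cannot.

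First I would pin down the vertex type. The only cubic bulk vertex in \eqref{YM_pert.action} carrying two $\mathsf{a}$-legs is $\frac12\langle\beta,[\mathsf{a},\mathsf{a}]\rangle$, whose third ($\mathsf{B}$-type) leg is the fluctuation $\beta$ contracted into the propagator that links the vertex to the remaining subdiagram $\Gamma$, exactly as drawn in \eqref{no_trees}. Next I would record the $\mathrm{d}t$-content of each leg. The two $\mathsf{a}$ zero-modes are $\mathfrak{g}$-valued harmonic representatives of $H^\bullet(S^1)$ pulled back to the cylinder, i.e.\ each is proportional to $1$ or to $\mathrm{d}\tau$; in particular neither carries a $\mathrm{d}t$. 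For the remaining leg I would read off from the axial-gauge propagator \eqref{BB_ax.gauge_prop} that a factor $\mathrm{d}t$ occurs there only as $\mathrm{d}t'$, that is, only at the $\mathsf{A}$-end (the $x'$-slot), while the $\mathsf{B}$-end (the $x$-slot, which is the one attached to our vertex) carries only the form types $1$ and $\mathrm{d}\tau$. This is precisely the structural feature of the axial gauge encoded in \eqref{axial_gauge_prop.}: the $\mathrm{d}t$-line propagates only out of the $\mathsf{A}$-leg.

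Combining these observations, all three form factors at the vertex point $z$ are built solely from $1$ and $\mathrm{d}\tau$, so their product is again a combination of $1$ and $\mathrm{d}\tau$ (with $\mathrm{d}\tau\wedge\mathrm{d}\tau=0$) and never contains $\mathrm{d}t\wedge\mathrm{d}\tau$. The $z$-integration $\int_\Sigma$ of such a form — a $0$- or $1$-form on the $2$-manifold $\Sigma$ — vanishes identically; performing it first, it yields zero for every position of the remaining vertices, so the whole diagram vanishes irrespective of the internal structure of $\Gamma$. I expect the one point demanding care to be the bookkeeping that fixes the $\mathsf{B}$-leg of the vertex to the $x$-slot of the propagator rather than the $x'$-slot: were this orientation reversed, the mixed configuration (one $\mathsf{a}\propto\mathrm{d}\tau$, one $\mathsf{a}\propto1$) could a priori absorb the $\mathrm{d}t'$ term and survive. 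Resolving it amounts to the propagator orientation convention — the arrow running from the $\mathsf{B}$-end to the $\mathsf{A}$-end, as fixed by the source and zero-mode conventions in \eqref{vertices} — after which the argument closes cleanly.
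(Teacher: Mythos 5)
Your degree count does not close, and the issue you flag yourself as ``the one point demanding care'' is resolved the wrong way around. The distinguished vertex $\tfrac12\langle\beta,[\mathsf{a},\mathsf{a}]\rangle$ is attached to the propagator through its $\beta$-leg, and in the paper's conventions this is the \emph{primed} slot of $\eta(t,\tau;t',\tau')$: arrowheads land on $\mathsf{B}$-slots (compare the BF vertex and the $\mathsf{b}$/$\mathsf{a}$ half-edges in~\eqref{vertices}), and the arrowhead end of the propagator as drawn in~\eqref{axial_gauge_table} is $(t',\tau')$; this is also exactly how the paper writes the amplitude, $\Gamma_c(t,\tau)\,\eta(t,\tau;t',\tau')\,f^c_{ab}\,\mathsf{a}^a(\tau')\,\mathsf{a}^b(\tau')$. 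But in the $\mathbb{B}$-$\mathbb{B}$ axial propagator~\eqref{BB_ax.gauge_prop} the term $\mathrm{d}t'\,\big(\Theta(\tau-\tau')-\tau+\tau'-\tfrac12\big)$ --- coming from $\pi_I(t;t')=\mathrm{d}t'$ in the axial-gauge formula~\eqref{axial_gauge_prop.} --- carries its $\mathrm{d}t$ precisely at the primed slot, i.e.\ at the vertex. So the ``mixed'' configuration (propagator supplying $\mathrm{d}t'$, one zero-mode equal to $\mathsf{a}_1\,\mathrm{d}\tau'$, the other equal to the constant $\mathsf{a}_0$) does produce the top form $\mathrm{d}t'\wedge\mathrm{d}\tau'$ at the vertex and survives your counting. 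That counting alone cannot suffice here is also visible by contrast with the $\mathbb{A}$-$\mathbb{A}$ analogue~\eqref{no_trees_AA}, where the paper \emph{does} argue by pure degree counting --- precisely because there the $\mathsf{a}$ zero-modes all carry $\mathrm{d}t$, so two of them multiply to zero.

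The missing step is analytic rather than combinatorial: for the surviving configuration the $\tau'$-integration at the vertex yields
\[
 \int_{S^1}\chi^i(\tau')\,\chi_j(\tau')\,\Big(\Theta(\tau-\tau')-\tau+\tau'-\tfrac12\Big)=0
\]
for the harmonic basis $\chi^0=1$, $\chi^1=\mathrm{d}\tau'$ (in the nontrivial case this is $\int_0^1\mathrm{d}\tau'\,\big(\Theta(\tau-\tau')-\tau+\tau'-\tfrac12\big)=0$). This is the statement that the circle chain homotopy has no overlap with the zero modes, $p\circ K=0$ in~\eqref{chain_contraction}, and it is what the paper's proof actually invokes to kill the diagram. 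Without it your argument says nothing about the mixed configuration, so the proposal as written is incomplete.
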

\begin{proof}
 Consider any diagram of the kind depicted in formula~\eqref{no_trees}.
 The associated differential form on the configuration space of the diagram will be of the kind
 \[
  \Gamma_c(t,\tau)\;\eta(t,\tau;t',\tau')\;f_{ab}^c\;\mathsf{a}^a(\tau')\;\mathsf{a}^b(\tau')~.
 \]
 Here $\Gamma_c(t,\tau)$ is the expression associated to the subgraph $\Gamma$.
 Since $\mathsf{a}$ has no form component along $\mathrm{d}t$\,, using the axial-gauge propagator~\eqref{BB_ax.gauge_prop} we have for the corresponding amplitude:
 \beq
 \begin{aligned}
  &\int \Gamma_c(t,\tau)\;f_{ab}^c\;\mathsf{a}^a(\tau')\;\mathsf{a}^b(\tau')\;\eta(t,\tau;t',\tau')\\
  &=f_{ab}^c\;\mathsf{a}^{ia}\;\mathsf{a}^{jb} \int \Gamma_c(t,\tau)\int_{S^1} \chi^i(\tau')\;\chi_j(\tau')\, \Big(\Theta(\tau-\tau')-\tau+\tau'-\frac{1}{2}\Big) = 0~,
 \end{aligned}
 \eeq
 were the inner integral over $\tau'$ vanishes by a direct computation.
\end{proof}

In particular this means that contributions to the effective action only come from either one single zero-mode attached to one of the two boundaries or from 1-loop diagrams with $n\geqslant 2$ vertices, each attached to a single $\mathsf{a}$ zero-mode (figure~\ref{BB_cyl}).
These diagrams can be explicitly evaluated and the perturbative series can be summed to obtain the effective action.

\begin{figure}[h!t]
 \centering{
  \begin{tikzpicture}[scale=.3, baseline=(x.base)]
  
  \coordinate (x0) at (-6,-3);
  \coordinate (x1) at (-6,3);
  \coordinate (x2) at (1.5,3);
  \coordinate (x3) at (1.5,-3);
  
  \filldraw[fill=black!15](x0) to[out=0, in=0] (x1)to[out=0, in=180] (x2)to[in=0, out=0](x3)to[out=180, in=0](x0) ;
  \filldraw[fill=black!5, draw=black!15](x0) to[out=180, in=180] (x1)to[in=0, out=0](x0);

  \draw[color=black, thick] (x0) to[out=0, in=0] (x1);
  \draw[color=black, thick] (x1) to[out=180, in=180]node[left](x){$\mathbb{B}$} (x0);
        
  \draw[color=black, thick] (x3) to[out=0, in=0]node[above right]{$\widetilde{\mathbb{B}}$} (x2);
  \draw[color=black, thick, dotted] (x2) to[out=180, in=180] (x3);

  \draw[color=black, thick](x1) to[out=0, in=180] (x2) ;
  \draw[color=black, thick](x0) to[out=0, in=180] (x3);

   \coordinate (B) at (-3,0);
   \coordinate (v0) at (-1.5,0);
   \coordinate (v1) at ($(-1.5,0)+(60:1.5)$);
   \coordinate (v2) at ($(-1.5,0)+(-60:1.5)$);

   \draw[color=black, thick] (B) to (v0);
   \draw[color=black, thick] (v0) to (v1);   
   \draw[color=black, thick] (v0) to (v2);   
   
   \draw[color=black, fill=black] (v1) circle (6pt);     
   \draw[color=black, fill=black] (v2) circle (6pt);     
   \draw[color=black, thick, fill=black!10] (B) circle (6pt);    
 \end{tikzpicture}
 \hspace{.2cm}  
  \begin{tikzpicture}[scale=.3, baseline=(x.base)]
  
  \coordinate (x0) at (-6,-3);
  \coordinate (x1) at (-6,3);
  \coordinate (x2) at (1,3);
  \coordinate (x3) at (1,-3);
  
  \filldraw[fill=black!15](x0) to[out=0, in=0] (x1)to[out=0, in=180] (x2)to[in=0, out=0](x3)to[out=180, in=0](x0) ;
  \filldraw[fill=black!5, draw=black!15](x0) to[out=180, in=180] (x1)to[in=0, out=0](x0);

  \draw[color=black, thick] (x0) to[out=0, in=0] (x1);
  \draw[color=black, thick] (x1) to[out=180, in=180]node[left](x){$\mathbb{B}$} (x0);
        
  \draw[color=black, thick] (x3) to[out=0, in=0]node[above right]{$\widetilde{\mathbb{B}}$} (x2);
  \draw[color=black, thick, dotted] (x2) to[out=180, in=180] (x3);

  \draw[color=black, thick](x1) to[out=0, in=180] (x2) ;
  \draw[color=black, thick](x0) to[out=0, in=180] (x3);

   \coordinate (B) at (-4.2,0);
   \coordinate (v1) at (-1.5,0);

   \draw[color=black, thick] (B) to (v1);   
   \draw[color=black, fill=black] (v1) circle (6pt);     
    
   \draw[color=black, thick, fill=black!10] (B) circle (6pt);    
 \end{tikzpicture}
 \hspace{.2cm}  
  \begin{tikzpicture}[scale=.3, baseline=(x.base)]
  
  \coordinate (x0) at (-6,-3);
  \coordinate (x1) at (-6,3);
  \coordinate (x2) at (1,3);
  \coordinate (x3) at (1,-3);
  
  \filldraw[fill=black!15](x0) to[out=0, in=0] (x1)to[out=0, in=180] (x2)to[in=0, out=0](x3)to[out=180, in=0](x0) ;
  \filldraw[fill=black!5, draw=black!15](x0) to[out=180, in=180] (x1)to[in=0, out=0](x0);

  \draw[color=black, thick] (x0) to[out=0, in=0] (x1);
  \draw[color=black, thick] (x1) to[out=180, in=180]node[left](x){$\mathbb{B}$} (x0);
        
  \draw[color=black, thick] (x3) to[out=0, in=0]node[above right]{$\widetilde{\mathbb{B}}$} (x2);
  \draw[color=black, thick, dotted] (x2) to[out=180, in=180] (x3);

  \draw[color=black, thick](x1) to[out=0, in=180] (x2) ;
  \draw[color=black, thick](x0) to[out=0, in=180] (x3);

   \coordinate (B) at (2.7,0);
   \coordinate (v1) at (0,0);

   \draw[color=black, thick] (B) to (v1);   
   \draw[color=black, fill=black] (v1) circle (6pt);     
    
   \draw[color=black, thick, fill=black!10] (B) circle (6pt);    
 \end{tikzpicture}

 \vspace{.5cm}
 \begin{tikzpicture}[scale=.225, baseline=(x.base)]
   
   \coordinate (x0) at (-7,-4);
   \coordinate (x1) at (-7,4);
   \coordinate (x2) at (7,4);
   \coordinate (x3) at (7,-4);
   
   \filldraw[fill=black!15](x0) to[out=0, in=0] (x1)to[out=0, in=180] (x2)to[in=0, out=0](x3)to[out=180, in=0](x0) ;
   \filldraw[fill=black!5, draw=black!15](x0) to[out=180, in=180] (x1)to[in=0, out=0](x0);

   \draw[color=black, thick] (x0) to[out=0, in=0] (x1);
   \draw[color=black, thick] (x1) to[out=180, in=180]node[left](x){$\mathbb{B}$} (x0);
         
   \draw[color=black, thick] (x3) to[out=0, in=0]node[above right]{$\widetilde{\mathbb{B}}$} (x2);
   \draw[color=black, thick, dotted] (x2) to[out=180, in=180] (x3);

   \draw[color=black, thick](x1) to[out=0, in=180] (x2) ;
   \draw[color=black, thick](x0) to[out=0, in=180] (x3);

    \draw[decoration={markings, mark=at position 0.5 with {\arrow{>}}},
        postaction={decorate},color=black, thick] (-40:2) arc (-40:80:2);
        
    \draw[decoration={markings, mark=at position 0.5 with {\arrow{>}}},
        postaction={decorate},color=black, thick] (60:2) arc (60:160:2);
        
    \draw[decoration={markings, mark=at position 0.5 with {\arrow{>}}},
        postaction={decorate},color=black, thick] (160:2) arc (160:240:2);
        
    \draw[decoration={markings, mark=at position 0.5 with {\arrow{>}}},
        postaction={decorate},color=black, thick] (240:2) arc (240:340:2);
     
    \draw[color=black, thick] (-40:2) to (-40:3);
    \draw[color=black, fill=black] (-40:3) circle (8pt);
    
    \draw[color=black, thick] (60:2) to (60:3);
    \draw[color=black, fill=black] (60:3) circle (8pt);
    
    \draw[color=black, thick] (160:2) to (160:3);
    \draw[color=black, fill=black] (160:3) circle (8pt);
    
    \draw[color=black, thick] (240:2) to (240:3);
    \draw[color=black, fill=black] (240:3) circle (8pt);
    
    \draw[color=black, thick, dotted] (-20:3) arc   (-20:40:3)node[right]{$n$};
    
   \end{tikzpicture}}
 \caption{Connected diagrams for non-abelian BF on the cylinder in $\mathbb{B}$-$\mathbb{B}$ polarization.}
 \label{BB_cyl}
\end{figure}
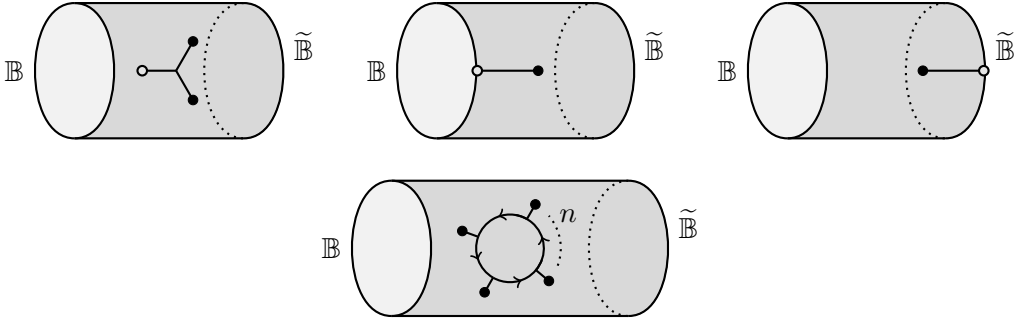

\begin{proposition}\label{BB_eff_act}
 The partition function for BF theory on the cylinder in $\mathbb{B}$-$\mathbb{B}$ polarization~is:
 \beq\label{BB_cyl_part_funct}
 \begin{aligned}
  Z[\mathbb{B},\widetilde{\mathbb{B}},{\mathsf{a}},{\mathsf{b}}] &= 
   \exp \bigg(\frac{\mathrm{i}}{2\hbar}\int_{I\times S^1} \langle \mathsf{b}, [\mathsf{a},\mathsf{a}] \rangle 
   + \frac{\mathrm{i}}{\hbar}\int_{S^1} \langle \mathbb{B}-\widetilde{\mathbb{B}},{\mathsf{a}} \rangle 
    + \sum_{n\geqslant 2} \frac{1}{n} \mathrm{tr} (\mathrm{ad}_{\mathsf{a}_1})^n \;\frac{B_n}{n!}\bigg) \cdot \rho_{\mathcal{V}} \\ 
  &= \mathrm{e}^{\frac{\mathrm{i}}{2\hbar}\int_{I\times S^1} \langle \mathsf{b}, [\mathsf{a},\mathsf{a}] \rangle 
   + \frac{\mathrm{i}}{\hbar}\int_{S^1} \langle \mathbb{B}-\widetilde{\mathbb{B}},{\mathsf{a}} \rangle } \,
    \mathrm{det}  \bigg(\frac{\sinh\big(\mathrm{ad}_{\mathsf{a}_1} /2\big)}{\mathrm{ad}_{\mathsf{a}_1} /2}\bigg) \cdot \rho_{\mathcal{V}}~.   
 \end{aligned}
 \eeq
 Here $\rho_{\mathcal{V}} = (-\mathrm{i}\hbar)^{\dim \mathfrak{g}} \, D^{\frac12}\mathsf{a} ~ D^{\frac12}\mathsf{b}$  is the reference half-density on the space of zero-modes.
\end{proposition}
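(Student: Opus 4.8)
The plan is to compute $Z$ as the BV pushforward --~the Gaussian integral over the fluctuations $(\alpha,\beta)\in\mathcal L\subset\mathcal Y'_\Sigma$~-- of $\exp(\tfrac{\mathrm i}{\hbar}\mathcal S^{\mathcal P}_\Sigma)$, so that $Z=\exp(\tfrac{\mathrm i}{\hbar}\mathcal S^{\mathrm{eff}})\cdot\rho_{\mathcal V}$ where $\mathcal S^{\mathrm{eff}}$ is the sum of connected Feynman diagrams built from the vertices and sources of \eqref{YM_pert.action} (with $\mu=0$) and the axial propagator \eqref{BB_ax.gauge_prop}, the external legs being the zero modes $\mathsf a,\mathsf b$ and the boundary fields $\mathbb B,\widetilde{\mathbb B}$. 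First I would enumerate the survivors. Lemma~\ref{no_trees_lemma} kills every diagram in which a bulk vertex is joined by a propagator to two $\mathsf a$-legs; a degree count then shows that the only tree-type survivors are the bare cubic vertex $\tfrac12\langle\mathsf b,[\mathsf a,\mathsf a]\rangle$ evaluated on zero modes and the source terms linear in $\mathsf a$, while all tree diagrams with internal propagators vanish. The only loop survivors are single loops of $n\ge 2$ cubic vertices $\langle\beta,[\alpha,\mathsf a]\rangle$, each carrying one external $\mathsf a_1$-leg.

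The two polynomial pieces are immediate. Evaluating the bare vertex on zero modes gives $\tfrac{\mathrm i}{2\hbar}\int_{I\times S^1}\langle\mathsf b,[\mathsf a,\mathsf a]\rangle$. The boundary sources $-\int_{\partial\Sigma}\langle\mathbb B,\mathsf a\rangle$ contribute on the two $\mathbb B$-circles; because $\partial\Sigma=(S^1\times\{1\})-(S^1\times\{0\})$ carries opposite induced orientations, the two contributions assemble into $\tfrac{\mathrm i}{\hbar}\int_{S^1}\langle\mathbb B-\widetilde{\mathbb B},\mathsf a\rangle$. This accounts for the exponential prefactor in the second line of \eqref{BB_cyl_part_funct}.

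The heart of the matter is the one-loop sum. For a ring of $n$ vertices the Lie-algebra weight is $\mathrm{tr}(\mathrm{ad}_{\mathsf a_1})^n$, and the analytic factor is an integral over $(S^1\times I)^n$ of a product of $n$ propagators \eqref{BB_ax.gauge_prop}. Since each external $\mathsf a_1$-leg supplies a circle form $\mathrm d\tau_k$, a degree count in the $t$-directions forces each propagator to contribute its second summand $\mathrm dt'\,g(\tau-\tau')$, with $g(s)=\Theta(s)-s-\tfrac12=\tfrac12-\{s\}$; the $t$-integrations are then trivial and the loop reduces to the cyclic convolution $\int_{(S^1)^n}\prod_k g(\tau_k-\tau_{k+1})\,d\tau_1\cdots d\tau_n$. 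Expanding $g$ in Fourier modes, $\hat g_k=(2\pi\mathrm i k)^{-1}$ for $k\ne0$ and $\hat g_0=0$, the convolution collapses to $\sum_{k\ne0}(2\pi\mathrm i k)^{-n}$, which vanishes for odd $n$ and by Euler's formula is a rational multiple of $B_n$ for even $n$. Tracking the per-vertex factors $\tfrac{\mathrm i}{\hbar}$, the per-propagator factors (which cancel the $\hbar$'s, leaving an $\hbar$-independent loop as in the statement), and the cyclic symmetry factor $\tfrac1n$ of the $n$-gon, the $n$-th loop contributes exactly $\tfrac1n\,\tfrac{B_n}{n!}\,\mathrm{tr}(\mathrm{ad}_{\mathsf a_1})^n$, giving the sum in the first line of \eqref{BB_cyl_part_funct}.

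Finally I would resum. Since connected diagrams exponentiate, it remains to identify $\sum_{n\ge2}\tfrac1n\tfrac{B_n}{n!}\lambda^n$ with $\log\tfrac{\sinh(\lambda/2)}{\lambda/2}$; this follows by differentiating in $\lambda$ and using $\tfrac{\lambda}{2}\coth\tfrac{\lambda}{2}=1+\sum_{n\ge2}\tfrac{B_n}{n!}\lambda^n$, then integrating from $0$. Applying $\log\det=\mathrm{tr}\log$ with $\lambda=\mathrm{ad}_{\mathsf a_1}$ turns the loop sum into $\det\tfrac{\sinh(\mathrm{ad}_{\mathsf a_1}/2)}{\mathrm{ad}_{\mathsf a_1}/2}$, matching the second line, while $\rho_{\mathcal V}=(-\mathrm i\hbar)^{\dim\mathfrak g}D^{1/2}\mathsf a\,D^{1/2}\mathsf b$ records the Gaussian normalization of the fluctuation integral together with the measure on zero modes. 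The main obstacle is the explicit one-loop evaluation: arranging the configuration-space integral so the $t$-directions integrate out cleanly, and extracting the precise Bernoulli coefficient with the correct symmetry factor $\tfrac1n$ and overall sign, is where all the genuine work (and the only real risk of a sign error) lies.
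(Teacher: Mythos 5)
Your proof is correct and shares the paper's diagrammatic skeleton --- the same enumeration of survivors via Lemma~\ref{no_trees_lemma} plus form-degree counting, and the same reduction of the $n$-vertex wheel to the cyclic integral $\int_{(S^1)^n}\prod_k\eta_{S^1}(\tau_k;\tau_{k+1})$ once you observe that every propagator must contribute its $\mathrm{d}t'$-summand and the $t$-integrals trivialize --- but you evaluate that integral by a genuinely different method. The paper (Appendix~\ref{diagram_computations}) works in the monomial basis $\{\tau^m\}$, builds a generating function $f_m(x,\tau)$ for powers of $K(\chi_1\wedge\bullet)$, solves the resulting ODE with periodic boundary data, and extracts $\mathrm{tr}\,\mathcal{M}^n=-B_n/n!$ from the diagonal coefficients. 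You instead diagonalize the sawtooth kernel $g(s)=\tfrac12-\{s\}$ in Fourier modes, obtaining $\sum_{k\neq0}(2\pi\mathrm{i}k)^{-n}$, and invoke Euler's formula for $\zeta(n)$ at even $n$; this is shorter, makes the vanishing for odd $n$ transparent, and lands on the same value $-B_n/n!$. What the paper's route buys is that the identical generating-function machinery is reused verbatim for the interval kernel in Proposition~\ref{BB_disk_part.funct.}, where Fourier analysis is less natural. One caveat you rightly flag: the raw convolution is $-B_n/n!$, so matching the $+\tfrac{1}{n}\tfrac{B_n}{n!}\mathrm{tr}(\mathrm{ad}_{\mathsf{a}_1})^n$ in~\eqref{BB_cyl_part_funct} requires one further overall sign per wheel from the ordering conventions of the odd fields in the loop; the paper's own~\eqref{BB_loop_amp} is equally terse about where that sign is absorbed, so your treatment is at parity with the source, but that is the one factor a fully watertight writeup would still need to pin down.
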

\begin{proof}
We refer for the proof to Appendix~\ref{diagram_computations}.
\end{proof}

\begin{remark}[Reference half-densities on residual fields]
Generally, we choose the following reference half-density on the space of residual fields $\mathcal{V}$:%
\footnote{
 We are suppressing the index $\Sigma$ to lighten the notations.
}
\begin{equation}
 \rho_{\mathcal{V}}=\prod_{k=0}^2 (\xi_k)^{d_k}\cdot  \mathrm{D}^{\frac12}\mathsf{a}~ \mathrm{D}^{\frac12} \mathsf{b} ~.
\end{equation}
Here $\mathrm{D}^{\frac12}\mathsf{a}~ \mathrm{D}^{\frac12} \mathsf{b}$ is the standard half-density on $\mathcal{V}$, inducing the standard Berezin-Lebesgue densities $\mathrm{d}\mathsf{a}$, $\mathrm{d}\mathsf{b}$ on the Lagrangians $\mathsf{b}=0$ and $\mathsf{a}=0$, respectively. 
Also, $d_k$ is the dimension of the subspace of $\mathcal{V}$ corresponding to $\mathsf{a}$-fields of de~Rham degree $k\in \{0,1,2\}$; in particular, for $\mathcal{V}$ the zero-modes, $d_k=\dim H^k(\Sigma,\partial_\mathbb{A}\Sigma;\mathfrak{g})$ are the Betti numbers of relative de~Rham cohomology. 
Factors $\xi_k$ are as follows:
\begin{equation}
 \xi_0=-\mathrm{i}\hbar,\quad  \xi_1 = 1,\quad \xi_2 = \frac{1}{2\pi \hbar} ~.
\end{equation}
The logic behind this normalization is that for $\mathcal{V} = W[1]\oplus W^*[-2]$ with $W$ a complex (concentrated in degrees $0,1,2$) and $\mathcal{V}' = W'[1] \oplus W^{'*}[-2]$ with $W'$ a deformation retract of $W$, we would like the BV~pushforward of the half-density $\rho_{\mathcal{V}}\, \mathrm{e}^{\frac{\mathrm{i}}{\hbar} \langle \mathsf{b}, \mathrm{d}\mathsf{a} \rangle}$ on $\mathcal{V}$ (corresponding to abstract abelian BF~theory associated to $W$) to yield $\rho_{\mathcal{V'}}\, \mathrm{e}^{\frac{\mathrm{i}}{\hbar}\langle \mathsf{b}', \mathrm{d}\mathsf{a}' \rangle}$ on $\mathcal{V}'$. 
Thus, we recover the normalization of reference half-densities from the automorphicity with respect to BV~pushforwards. 
The most general normalization satisfying this condition is:
\begin{equation}
 \xi_0 = -\mathrm{i} \hbar\phi,\quad  \xi_1 = \phi^{-1},\quad \xi_2 = \frac{\phi}{2\pi \hbar} ~,
\end{equation}
with $\phi\neq 0$ an arbitrary constant. 
Our choice is to set $\phi=1$ which will ultimately lead to the number-valued partition function of 2D~Yang-Mills with standard normalization. 
Choosing any other $\phi$ would induce a rescaling of partition functions by 
\begin{equation}
 Z_\Sigma\mapsto \phi^{\chi(\Sigma)\cdot\dim\mathfrak{g}}~ Z_\Sigma~,
\end{equation} 
which reflects an inherent ambiguity of the normalization of path integral measure. 
We refer the reader to~\cite{CMR:cellular} for details on the normalization of half-densities compatible with BV~pushforwards.
\end{remark}

\subsection{$\mathbb{A}$-$\mathbb{A}$ polarization on the cylinder}\label{AA polarization on the cylinder}

The last polarization choice we will consider consists in taking $\mathbb{A}$ polarization for both the boundaries of the cylinder.
With this polarization the bulk fields are $\mathcal{Y}=\Omega(\Sigma,\partial \Sigma;\mathfrak{g})[1]\oplus\Omega(\Sigma;\mathfrak{g})$\,.
The zero-modes $\mathcal{V}=H(\Sigma,\partial \Sigma;\mathfrak{g})[1]\oplus H(\Sigma;\mathfrak{g})\simeq H(S^1;\mathfrak{g})\oplus H(S^1;\mathfrak{g})$ can be expanded as:
\beq
 \mathsf{a}= \mathsf{a}_i\chi^i\wedge \mathrm{d}t ~,\qquad \mathsf{b}= \mathsf{b}^i\chi_i ~.
\eeq
The axial-gauge propagator is now~\eqref{axial_gauge_table}:
\beq\label{AA_ax.gauge_prop}
   \eta(t,\tau;t',\tau')=\big(\Theta(t-t')-t\big)\delta(\tau'-\tau)(\mathrm{d}\tau'-\mathrm{d}\tau) 
    -\mathrm{d}t \big(\Theta(\tau-\tau')-\tau+\tau'-\frac{1}{2}\big) ~.
\eeq

The effective action contains trees with the root in the bulk and leafs on one of the boundaries or 1-loop diagrams with trees rooted on the loop and leafs either on the boundary or decorated with the zero mode~$a^1$\,.
Also with this polarization an analogue of lemma~\ref{no_trees_lemma} holds:

\begin{lemma}
 For BF theory on a cylinder with $\mathbb{A}$-$\mathbb{A}$ polarization in the axial gauge, all the diagrams containing a bulk vertex with attached two $\mathsf{a}$ zero-modes vanish:
 \beq\label{no_trees_AA}
  \begin{tikzpicture}[scale=.2, baseline=(x.base)]
      \draw[fill=black, color=black!10] (0:2) to[out=70, in=-30] (60:2.3) to[out=150, in=40] (130:2) to[out=220, in=150] node[color=black, above left]{$\Gamma$} (220:2) to[out=330, in=250] (0:2);
      \draw[color=black, thick, dashed] (0:2) to[out=70, in=-30] (60:2.3) to[out=150, in=40] (130:2) to[out=220, in=150] (220:2) to[out=330, in=250] (0:2);
      
      \draw[decoration={markings, mark=at position 0.5 with {\arrow{>}}},
        postaction={decorate},color=black, thick] (.5,0) to (3,0);
      
      \draw[color=black, thick] (3,0) to ($(3,0)+(40:1)$);
      \draw[color=black, fill=black] ($(3,0)+(40:1)$) circle (6pt);
      
      \draw[color=black, thick] (3,0) to ($(3,0)+(-40:1)$);
      \draw[color=black, fill=black] ($(3,0)+(-40:1)$) circle (6pt);
    
    \node (x) at (5,0) [right]{$= 0~.$};

  \end{tikzpicture} 
 \eeq
\end{lemma}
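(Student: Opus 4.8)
The plan is to follow the same template as the proof of Lemma~\ref{no_trees_lemma}: write down the differential form on the configuration space associated to the depicted diagram and check that it vanishes. The point I would emphasize is that in the $\mathbb{A}$-$\mathbb{A}$ polarization the obstruction is even more immediate than in the $\mathbb{B}$-$\mathbb{B}$ case, and does not require evaluating any integral over $\tau'$.

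First I would identify the vertex. A trivalent bulk vertex carrying two $\mathsf{a}$ zero-mode legs together with one propagator leg can only arise from the cubic term $\frac12\langle\beta,[\mathsf{a},\mathsf{a}]\rangle$ in the expanded action~\eqref{YM_pert.action}, the fluctuation $\beta$ being the end of the propagator $\eta$ that joins the vertex to the remaining subgraph $\Gamma$. Writing the vertex point as $x=(t',\tau')$, the associated form is therefore of the shape
\[
 \Gamma_c(t,\tau)\,\eta(t,\tau;t',\tau')\,f^c_{ab}\,\mathsf{a}^a(x)\,\mathsf{a}^b(x)~,
\]
exactly as in Lemma~\ref{no_trees_lemma}, with $\Gamma_c$ the contribution of the rest of the graph attached through $\eta$.

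The key step is then the form-degree bookkeeping at the vertex. Unlike the $\mathbb{B}$-$\mathbb{B}$ case, in the $\mathbb{A}$-$\mathbb{A}$ polarization the $\mathsf{a}$ zero-mode is $\mathsf{a}=\mathsf{a}_i\chi^i\wedge\mathrm{d}t$, i.e. \emph{every} component carries a factor $\mathrm{d}t$. Hence both $\mathsf{a}$-legs inserted at the common point $x$ carry the factor $\mathrm{d}t'$, and every monomial in $f^c_{ab}\,\mathsf{a}^a(x)\wedge\mathsf{a}^b(x)$ contains $\mathrm{d}t'\wedge\mathrm{d}t'=0$. Thus $[\mathsf{a},\mathsf{a}]$ vanishes identically as a differential form on $\Sigma$, and the whole diagram is zero \emph{before} any integration over the configuration space, independently of $\Gamma$ and of the particular propagator~\eqref{AA_ax.gauge_prop}.

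There is essentially no hard step here: the only thing demanding care is to verify that no component of the $\mathsf{a}$ zero-mode is free of $\mathrm{d}t$ (so that it could not compensate the missing $\mathrm{d}t'$), which is immediate from the relative-cohomology expansion above, since the two surviving classes are represented by $\mathrm{d}t$ and $\mathrm{d}\tau\wedge\mathrm{d}t$. I would close by remarking that this is precisely the mechanism dual to Lemma~\ref{no_trees_lemma}: there the $\mathsf{b}$ zero-mode carried the $\mathrm{d}t$ and the vanishing of two-$\mathsf{a}$ vertices genuinely needed the $\tau'$-integral of the propagator, whereas here the roles of $\mathsf{a}$ and $\mathsf{b}$ are exchanged and the same conclusion reduces to a one-line form-degree argument.
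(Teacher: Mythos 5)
Your proof is correct and follows the same route as the paper: the paper's own proof is the one-line observation that the argument ``follows trivially from degree counting, since $\mathsf{a}$ zero-modes always have a component along $\mathrm{d}t$,'' which is precisely your point that $[\mathsf{a},\mathsf{a}]$ contains $\mathrm{d}t'\wedge\mathrm{d}t'=0$ at the vertex. Your write-up simply makes explicit the vertex identification and the contrast with the $\mathbb{B}$-$\mathbb{B}$ case, which the paper leaves implicit.
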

\begin{proof}
 The proof follows trivially from degree counting, since $\mathsf{a}$ zero-modes always have a component along $\mathrm{d}t$\,.
\end{proof}

The diagrams contributing to the effective action can be restricted further by reducing to the case of \textit{constant} fields~$\mathbb{A},\widetilde{\mathbb{A}}$\,.
Indeed, the gauge-invariance of the partition function (expressed by the mQME) implies that it is sufficient to evaluate it on constant 1-form fields $\mathbb{A}=\mathrm{d}t\,  \underline{\mathbb{A}}_{(1)}$\,, $\widetilde{\mathbb{A}}= \mathrm{d}t\, \widetilde{\underline{\mathbb{A}}}_{(1)}$\,, with 
$\underline{\mathbb{A}}_{(1)}, \widetilde{\underline{\mathbb{A}}}_{(1)} \in \mathfrak{g}$ two constants.
Then the value of the partition function for generic fields $\mathbb{A}, \widetilde{\mathbb{A}}$ is recovered (modulo a BV-exact term, cf.~\eqref{i p Z= Z+ exact term} below) by evaluating the constant-field result on the logs of holonomies $\underline{\mathbb{A}}_{(1)}= \log U(\mathbb{A})$\,,
$\widetilde{\underline{\mathbb{A}}}_{(1)}= \log U(\widetilde{\mathbb{A}})$\,.%
\footnote{
 For simplicity of notations we are omitting the subscript of the 1-form $\mathbb{A}_{(1)}$ when it appears in the holonomy~$U(\mathbb{A})$.
}
Here~$U(\cdots)$ stands for the holonomy of a connection $1$-form around a circle.
In other words, using the language of homological perturbation theory, we have a quasi-isomorphism between the two models for the space of states for an $\mathbb{A}$-circle: 
\begin{enumerate}[(i)]
 \item The full BFV model $\mathcal{H}^\mathbb{A}=\mathrm{Fun}_\mathbb{C}(\Omega^\bullet(S^1,\mathfrak{g})[1])$ given by functions of a general differential form~$\mathbb{A}$ on the circle, with differential~$\Omega$ defined by~\eqref{Omega_A}.
 \item The constant-field model $\mathcal{H}^{\mathbb{A},\mathrm{const}}=\mathrm{Fun}_\mathbb{C}(H^\bullet(S^1,\mathfrak{g})[1])$ -- functions of a constant form $\underline{\mathbb{A}}_{(0)}+\mathrm{d}t\, \underline{\mathbb{A}}_{(1)}$\,, with differential~$\underline{\bm\delta}$ defined by~\eqref{delta_A}.%
 \footnote{
  A subtlety related to Gribov ambiguities (see the discussion in Section~\ref{Omega_A_cohomology}) is that, to have a quasi-isomorphism, one actually needs to restrict to functions of $\underline{\mathbb{A}}_{(1)}$ in the Gribov region $B_0\subset H^1(S^1,\mathfrak{g})[1]$.
 }
\end{enumerate}
We have two chain maps: first, the projection $p_\mathcal{H}\colon \mathcal{H}^\mathbb{A} \rightarrow \mathcal{H}^{\mathbb{A},\mathrm{const}}$  -- evaluation of a wavefunction on constant forms or equivalently the pullback $p_\mathcal{H}=\iota^*_{\mathcal{H}}$ by the inclusion of the cohomology as constant forms $H^\bullet(S^1,\mathfrak{g})\mapsto \Omega^\bullet(S^1,\mathfrak{g})$\,. 
Second, the inclusion $i_\mathcal{H}\colon \mathcal{H}^{\mathbb{A},\mathrm{const}}\rightarrow  \mathcal{H}^\mathbb{A} $ sending $\underline\Psi\mapsto \left(\Psi\colon \mathbb{A} \mapsto \underline\Psi \big(\left.\mathbb{A}_{(0)}\right|_{p} , \log U(\mathbb{A})\big) \right)$ where~${p}$ is the base point on the circle used to define the holonomy. 
Denoting~$K_\mathcal{H}$ the chain homotopy for the retraction of chain complexes $(\mathcal{H}^\mathbb{A},\Omega) \rightsquigarrow (\mathcal{H}^{\mathbb{A},\mathrm{const}},\underline{\bm{\delta}})$\,, we have the following (cf. the discussion of the reduced partition function in~\cite{CMR:cellular}, section~7.4): 
\begin{equation}\label{i p Z= Z+ exact term}
 i_\mathcal{H}\circ p_\mathcal{H}\,  Z = (\mathrm{id}-K_\mathcal{H} \Omega-\Omega K_\mathcal{H})Z= 
 Z + (\Omega+ \hbar^2 \Delta) (\cdots)~,
\end{equation}
where $\cdots= -K_\mathcal{H}Z$ and $Z$ stands for the partition function on non-constant fields. 
The left hand side in~\eqref{i p Z= Z+ exact term} is exactly the partition function evaluated on constant $1$-form fields having the same holonomy as the original non-constant ones.
Thus, the full partition function differs from the reduced partition function by a BV-exact term.

\begin{lemma}\label{no_trees_AA_2}
 For BF theory on a cylinder with $\mathbb{A}$-$\mathbb{A}$ polarization in the axial gauge, all the diagrams containing a bulk vertex with attached two boundary fields vanish, assuming that $\mathbb{A}$ and $\widetilde{\mathbb{A}}$ are constant $1$-forms.
 \beq
   \begin{tikzpicture}[scale=.2, baseline=(x.base)]
      \draw[color=black, thick] (5,3)node[right]{$\mathbb{A}\,,~\widetilde{\mathbb{A}}$} to[out=-80,in=80] (5,-3);

      \draw[fill=black, color=black!10] (0:2) to[out=70, in=-30] (60:2.3) to[out=150, in=40] (130:2) to[out=220, in=150] node[color=black, above left]{$\Gamma$} (220:2) to[out=330, in=250] (0:2);
      \draw[color=black, thick, dashed] (0:2) to[out=70, in=-30] (60:2.3) to[out=150, in=40] (130:2) to[out=220, in=150] (220:2) to[out=330, in=250] (0:2);
      
      \draw[decoration={markings, mark=at position 0.5 with {\arrow{>}}},
        postaction={decorate},color=black, thick] (.5,0) to (3,0);
      
      \draw[decoration={markings, mark=at position 0.7 with {\arrow{>}}},
        postaction={decorate},color=black, thick] (3,0) to ($(3,0)+(30:2.6)$);
      \draw[color=black, fill=black] ($(3,0)+(30:2.6)$) circle (6pt);
      
      \draw[decoration={markings, mark=at position 0.7 with {\arrow{>}}},
        postaction={decorate},color=black, thick] (3,0) to ($(3,0)+(-30:2.6)$);
      \draw[color=black, fill=black] ($(3,0)+(-30:2.6)$) circle (6pt);
      
    \node (x) at (6,0) [right]{$\phantom{=}$};

   \end{tikzpicture}
 = 
   \begin{tikzpicture}[scale=.2, baseline=(x.base)]
      \draw[color=black, thick] (5,3)node[right]{$\widetilde{\mathbb{A}}$} to[out=-80,in=80] (5,-3);
      \draw[color=black, thick] (-5,3)node[left]{$\mathbb{A}$} to[out=-80,in=80] (-5,-3);

      \draw[fill=black, color=black!10] ($(0,1)+(0:2)$) to[out=70, in=-30] ($(0,1)+(60:2.3)$) to[out=150, in=40] ($(0,1)+(130:2)$) to[out=220, in=150] node[color=black, above left]{$\Gamma$} ($(0,1)+(220:2)$) to[out=330, in=250] ($(0,1)+(0:2)$);
      \draw[color=black, thick, dashed] ($(0,1)+(0:2)$) to[out=70, in=-30] ($(0,1)+(60:2.3)$) to[out=150, in=40] ($(0,1)+(130:2)$) to[out=220, in=150] node[color=black, above left]{$\Gamma$} ($(0,1)+(220:2)$) to[out=330, in=250] ($(0,1)+(0:2)$);
      
      \draw[decoration={markings, mark=at position 0.5 with {\arrow{>}}},
        postaction={decorate},color=black, thick] (0,1) to (0,-1.2);
      
      \draw[decoration={markings, mark=at position 0.7 with {\arrow{>}}},
        postaction={decorate},color=black, thick] (0,-1.2) to ($(-7.1,-.7)+(-30:2.6)$);
      \draw[color=black, fill=black] ($(-7.1,-.7)+(-30:2.6)$) circle (6pt);
      
      \draw[decoration={markings, mark=at position 0.7 with {\arrow{>}}},
        postaction={decorate},color=black, thick] (0,-1.2) to ($(2.9,-.7)+(-30:2.6)$);
      \draw[color=black, fill=black] ($(2.9,-.7)+(-30:2.6)$) circle (6pt);
      
    \node (x) at (6,0) [right]{$\phantom{=}$};

   \end{tikzpicture}   
  =0 ~.
 \eeq
\end{lemma}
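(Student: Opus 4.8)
The plan is to mirror the degree--counting argument used in Lemma~\ref{no_trees_lemma}, but now tracking what a \emph{constant} boundary field contributes once it is propagated to a bulk vertex. First I would pin down the combinatorial structure forced on the diagram. The only bulk interaction of BF~theory is the trivalent vertex $\frac12\langle\beta,[\alpha,\alpha]\rangle$, carrying one $\mathsf{B}$-leg ($\beta$) and two $\mathsf{A}$-legs ($\alpha$). Since in the $\mathbb{A}$~polarization the boundary field sources the $\mathsf{B}$-type fields through $\int_{\partial_\mathbb{A}\Sigma}\langle\beta,\mathbb{A}\rangle$ (cf.~\eqref{YM_pert.action}), each of the two propagators reaching the boundary has its $\mathsf{B}$-end on the boundary (paired with $\mathbb{A}$ or $\widetilde{\mathbb{A}}$) and its $\mathsf{A}$-end at the vertex. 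Hence both $\alpha$-legs of the vertex are consumed by the boundary sources, while the remaining $\beta$-leg connects by a propagator to the subgraph $\Gamma$; this is the diagram of the statement, in either of its two drawings (both legs on one circle, or one on each circle).

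Next I would compute the effective field produced at an $\mathsf{A}$-leg of the vertex by a constant boundary source. Writing the vertex at $(t,\tau)$ and the boundary end at $(t_b,\tau')$ with $t_b\in\{0,1\}$, I would insert the axial-gauge propagator~\eqref{AA_ax.gauge_prop}, extract its $\mathsf{B}$-component along the boundary circle (the part carrying no $\mathrm{d}\tau'$, so that its pairing with the $1$-form $\mathbb{A}=\underline{\mathbb{A}}_{(1)}\,\mathrm{d}\tau'$ is a top form there, while the leftover $\mathrm{d}t,\mathrm{d}\tau$ legs sit at the vertex), and integrate over $\tau'\in S^1$. The term of~\eqref{AA_ax.gauge_prop} proportional to $\delta(\tau'-\tau)(\mathrm{d}\tau'-\mathrm{d}\tau)$ yields a contribution proportional to $\mathrm{d}\tau$ at the vertex, whereas the term proportional to $\mathrm{d}t$ comes multiplied by $\int_{S^1}\big(\Theta(\tau-\tau')-\tau+\tau'-\tfrac12\big)\,\mathrm{d}\tau'$. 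For a constant source this last integral vanishes --- exactly the computation already used in the proof of Lemma~\ref{no_trees_lemma} --- so the $\mathrm{d}t$-component of the effective leg drops out. The effective $\alpha$-leg is therefore a $1$-form proportional to $\mathrm{d}\tau$ alone (of the schematic form $c(t)\,\underline{\mathbb{A}}_{(1)}\,\mathrm{d}\tau$, with $c(t)$ a linear function of $t$ fixed by the boundary in question), with no $\mathrm{d}t$- and no $0$-form part.

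Finally I would conclude by the same wedge-degree obstruction as in the zero-mode lemmas. Both $\mathsf{A}$-legs entering the vertex are $1$-forms proportional to $\mathrm{d}\tau$, so the vertex factor contains $[\alpha^{\mathrm{eff}}_1,\alpha^{\mathrm{eff}}_2]\propto\mathrm{d}\tau\wedge\mathrm{d}\tau=0$ pointwise in the vertex position, killing the whole amplitude irrespective of the content of $\Gamma$. I expect the only real work --- and the main obstacle --- to be the careful bookkeeping in the second step: correctly isolating the $\mathsf{B}$-component of the propagator that pairs with the boundary current and verifying that its $\mathrm{d}t$-coefficient integrates to zero. This is precisely where the hypothesis that $\mathbb{A}$ and $\widetilde{\mathbb{A}}$ are constant is essential: for a non-constant source the integral $\int_{S^1}\big(\Theta(\tau-\tau')-\tau+\tau'-\tfrac12\big)\,\mathbb{A}(\tau')\,\mathrm{d}\tau'$ need not vanish, the effective leg would acquire a $\mathrm{d}t$-component, and the bracket would no longer be forced to zero --- consistent with the reduction to constant fields justified by~\eqref{i p Z= Z+ exact term}.
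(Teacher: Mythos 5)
Your proof is correct and follows essentially the same route as the paper: the paper writes the product of the two boundary propagators directly and kills the only wedge-surviving cross term via $\int_{S^1}\big(\Theta(\tau-\tau')-\tau+\tau'-\tfrac12\big)\,\mathrm{d}\tau'=0$, while you package the same two facts (that integral vanishing for constant sources, plus $\mathrm{d}\tau\wedge\mathrm{d}\tau=0$) as an "effective leg proportional to $\mathrm{d}\tau$" before invoking the bracket. The content and the key cancellation are identical.
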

\begin{proof}
 Using the assumption of constancy of boundary fields and the axial-gauge propagator~\eqref{AA_ax.gauge_prop}, when we have two boundary fields connected to the same bulk vertex we find the amplitude:
 \beq
 \begin{aligned}
  &\int \Gamma_c(\tilde{t},\tilde{\tau})\;f_{ab}^c\;\underline{\mathbb{A}}^a\;\underline{\mathbb{A}}^b\;\eta(\tilde{t},\tilde{\tau};t,\tau)\;\eta(t,\tau;0,\tau')\;\eta(t,\tau;0,\tau'')\\
  &=\frac{1}{2}f_{ab}^c\;\underline{\mathbb{A}}^a\;\underline{\mathbb{A}}^b \int \Gamma_c(\tilde{t},\tilde{\tau})\eta(\tilde{t},\tilde{\tau};t,\tau)	\int_{S^1} \mathrm{d}\tau' \Big(\Theta(\tau-\tau')-\tau+\tau'-\frac{1}{2}\Big) = 0~.
 \end{aligned}
 \eeq
 Similar amplitudes are found also in the case one or both the boundary fields live on the boundary at $t=1$\,.
\end{proof}

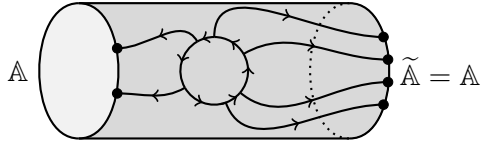
\begin{figure}[ht]
 \centering
 \begin{tikzpicture}[scale=.3, baseline=(x.base)]
  
  \coordinate (x0) at (-6,-3);
  \coordinate (x1) at (-6,3);
  \coordinate (x2) at (6,3);
  \coordinate (x3) at (6,-3);
  
  \filldraw[fill=black!15](x0) to[out=0, in=0] (x1)to[out=0, in=180] (x2)to[in=0, out=0](x3)to[out=180, in=0](x0) ;
  \filldraw[fill=black!5, draw=black!15](x0) to[out=180, in=180] (x1)to[in=0, out=0](x0);

  \draw[color=black, thick] (x0) to[out=0, in=0] (x1);
  \draw[color=black, thick] (x1) to[out=180, in=180]node[left](x){$\mathbb{A}$} (x0);
        
  \draw[color=black, thick] (x3) to[out=0, in=0]node[right](x){$\widetilde{\mathbb{A}}=\mathbb{A}$} (x2);
  \draw[color=black, thick, dotted] (x2) to[out=180, in=180] (x3);

  \draw[color=black, thick](x1) to[out=0, in=180] (x2) ;
  \draw[color=black, thick](x0) to[out=0, in=180] (x3);

  
   \coordinate (v1) at (90:1.5);
   \coordinate (v2) at (30:1.5);
   \coordinate (v3) at (-40:1.5);
   \coordinate (v4) at (-70:1.5);
   
   \coordinate (v5) at (120:1.5);
   \coordinate (v6) at (210:1.5);

   \coordinate (b2) at (7.5,1.5);
   \coordinate (b3) at (7.7,.5);
   \coordinate (b4) at (7.7,-.5);
   \coordinate (b5) at (7.5,-1.5);
   
   \coordinate (b6) at (-4.3,1);
   \coordinate (b7) at (-4.3,-1);

    \draw[decoration={markings, mark=at position 0.7 with {\arrow{>}}},
        postaction={decorate},color=black, thick] (90:1.5) arc (90:120:1.5);

   \draw[decoration={markings, mark=at position 0.5 with {\arrow{>}}},
        postaction={decorate},color=black, thick] (120:1.5) arc (120:210:1.5);
        
   \draw[decoration={markings, mark=at position 0.5 with {\arrow{>}}},
        postaction={decorate},color=black, thick] (210:1.5) arc (210:290:1.5);

    \draw[decoration={markings, mark=at position 0.5 with {\arrow{<}}},
        postaction={decorate},color=black, thick] (320:1.5) arc (320:280:1.5);
        
    \draw[decoration={markings, mark=at position 0.5 with {\arrow{<}}},
        postaction={decorate},color=black, thick] (30:1.5) arc (30:-40:1.5);
        
    \draw[decoration={markings, mark=at position 0.5 with {\arrow{>}}},
        postaction={decorate},color=black, thick] (30:1.5) arc (30:90:1.5);

    \draw[decoration={markings, mark=at position 0.5 with {\arrow{>}}},
        postaction={decorate},color=black, thick]  (v1) to[out=90, in=180] (b2);   
    \draw[color=black, fill=black] (b2) circle (6pt);     
    
    \draw[decoration={markings, mark=at position 0.5 with {\arrow{>}}},
        postaction={decorate},color=black, thick]  (v2) to[out=30, in=180] (b3);   
    \draw[color=black, fill=black] (b3) circle (6pt);     
    
    \draw[decoration={markings, mark=at position 0.5 with {\arrow{>}}},
        postaction={decorate},color=black, thick]  (v3) to[out=-40, in=180] (b4);   
    \draw[color=black, fill=black] (b4) circle (6pt);     
    
    \draw[decoration={markings, mark=at position 0.5 with {\arrow{>}}},
        postaction={decorate},color=black, thick]  (v4) to[out=-70, in=180] (b5);   
    \draw[color=black, fill=black] (b5) circle (6pt);     

    \draw[decoration={markings, mark=at position 0.5 with {\arrow{>}}},
        postaction={decorate},color=black, thick]  (v5) to[out=120, in=0] (b6);   
    \draw[color=black, fill=black] (b6) circle (6pt);     
    \draw[decoration={markings, mark=at position 0.5 with {\arrow{>}}},
        postaction={decorate},color=black, thick]  (v6) to[out=210, in=0] (b7);   
    \draw[color=black, fill=black] (b7) circle (6pt);     
 \end{tikzpicture}
 \caption{Relevant loop diagrams for the globalized effective action of BF theory on a cylinder with $\mathbb{A}$-$\mathbb{A}$ polarization}
 \label{AA_rel_diag}
\end{figure}

Like in the case of non-abelian BF theory on closed surfaces, the computation of the effective action greatly simplifies if we look at the \emph{globalized} answer.
In this case we can define the globalized partition function by integrating the perturbative partition function over a Lagrangian submanifold of the space of residual fields.
If $\mathcal{S}_{\mathrm{eff}}[\mathbb{A},\mathbb{B},\mathsf{a},\mathsf{b}]$ is the perturbative effective action on the space of boundary fields and zero-modes and $\mathcal{L}\subset\mathcal{V}$ is a Lagrangian submanifold, the globalized partition function can be defined as:
\beq
 Z=  \int_\mathcal{L} \mathrm{e}^{\frac{\mathrm{i}}{\hbar}\mathcal{S}_{\mathrm{eff}}[\mathbb{A},\mathbb{B},\mathsf{a},\mathsf{b}]} ~.
\eeq
A possible choice for the Lagrangian is $\mathcal{L}:=\{\mathsf{a}=0\}$\,, which in particular implies that all diagrams containing $\mathsf{a}$ zero-modes will not contribute to the globalized effective action.
The effective action of BF theory is always linear in the $\mathsf{b}$ zero-modes.
Moreover, for the $\mathbb{A}$-$\mathbb{A}$ polarization on the cylinder in the axial gauge, lemma~\ref{no_trees_AA_2} implies that there are no tree diagrams with this $\mathcal{L}$\,.
Thus:
\beq
 Z=\int \frac{\mathrm{d}\mathsf{b}}{(2\pi\hbar)^{\dim\mathfrak{g}}}~ \mathrm{e}^{\frac{\mathrm{i}}{\hbar}\int_{S^1}
  \langle \mathsf{b},\mathbb{A}-\widetilde{\mathbb{A}}\rangle 
   + \frac{\mathrm{i}}{\hbar} \mathcal{S}_{\mathrm{eff}}(\mathbb{A},\widetilde{\mathbb{A}}, \mathsf{a}=0,\mathsf{b}=0)}
    = \Big(\frac{\mathrm{i}}{\hbar}\Big)^{\dim\mathfrak{g}} \delta(\mathbb{A},\widetilde{\mathbb{A}})\mathrm{e}^{\frac{\mathrm{i}}{\hbar}\mathcal{S}_{\mathrm{eff}}
     (\mathbb{A},\widetilde{\mathbb{A}}=\mathbb{A}, \mathsf{a}=0,\mathsf{b}=0)}   ~.
\eeq	
Here the $\delta$-function of the pair $\mathbb{A},\widetilde{\mathbb{A}}$ is the delta of the difference. 
Our partition functions are allowed to be distributions; they are obtained, using singular gauge-fixing, as a limit of a family of smooth partition functions laying in the same BV-cohomology class. 

The loop diagrams contributing to the globalized effective action are now only those where each loop vertex is connected to a boundary field with a single propagator and the fields on the two boundary components coincide (figure~\ref{AA_rel_diag}).
The amplitude of such a diagram with $n$ boundary fields is:
\beq
 -\frac{1}{n} \mathrm{tr} (\mathrm{ad}_{\underline{\mathbb{A}}_{(1)}}^n) \int_{(S^1)^n} \mathrm{d}\tau_1\cdots\mathrm{d}\tau_n\; \eta_{S^1}(\tau_1;\tau_n)\eta_{S^1}(\tau_{n};\tau_{n-1})\cdots 
  \eta_{S^1}(\tau_2;\tau_1)   ~.
\eeq
The integrals involved are exactly the same as the ones of the case of $\mathbb{B}$-$\mathbb{B}$ polarization~\eqref{BB_loop_amp}.
Thus we have:

\begin{proposition}
 The globalized partition function for BF theory on the cylinder in the $\mathbb{A}$-$\mathbb{A}$~polarization is
 \beq
 \begin{aligned}\label{AA_cylind.part.funct.}
  Z[{\mathbb{A}},\widetilde{{\mathbb{A}}}]&= 
   \Big(\frac{\mathrm{i}}{\hbar}\Big)^{\dim\mathfrak{g}} \delta(\log U(\mathbb{A}), \log U(\mathbb{\widetilde{A}})) \cdot 
    \delta(\mathbb{A}_p, \mathbb{\widetilde{A}}_{\tilde{p}})~     
     \mathrm{det}\bigg(\frac{\sinh\big(\mathrm{ad}_{\log U(\mathbb{A})} /2\big)}{\mathrm{ad}_{\log U(\mathbb{A})} /2}\bigg)^{-1} 
 \end{aligned}
 \eeq
 where~$U(\cdots)$ is the holonomy of the connection around a circle and $\mathbb{A}_p, \mathbb{\widetilde{A}}_{\widetilde{p}}$ are the zero-form components of boundary fields~$\mathbb{A}, \widetilde{\mathbb{A}}$ evaluated at the base points~$p,\widetilde{p}$ on the two boundary circles.
\end{proposition}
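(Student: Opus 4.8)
The plan is to evaluate the globalized partition function directly as $Z=\int_{\mathcal{L}}\mathrm{e}^{\frac{\mathrm{i}}{\hbar}\mathcal{S}_{\mathrm{eff}}}$ over the Lagrangian $\mathcal{L}=\{\mathsf{a}=0\}\subset\mathcal{V}$, after reducing to constant boundary $1$-forms $\mathbb{A}=\mathrm{d}\tau\,\underline{\mathbb{A}}_{(1)}$, $\widetilde{\mathbb{A}}=\mathrm{d}\tau\,\widetilde{\underline{\mathbb{A}}}_{(1)}$; the legitimacy of this reduction, up to a $(\Omega+\hbar^2\Delta)$-exact term, is precisely the content of \eqref{i p Z= Z+ exact term}. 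Imposing $\mathsf{a}=0$ annihilates every diagram carrying an $\mathsf{a}$ zero-mode, and Lemma~\ref{no_trees_AA_2} removes every diagram in which a single bulk vertex absorbs two boundary insertions. What survives, besides the $\mathsf{b}$-linear boundary source $\frac{\mathrm{i}}{\hbar}\int_{S^1}\langle\mathsf{b},\mathbb{A}-\widetilde{\mathbb{A}}\rangle$, are only the one-loop chains of $n\geq 2$ bulk vertices, each decorated by a single boundary field (Figure~\ref{AA_rel_diag}).

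First I would perform the $\mathsf{b}$-integration. Since the theory is linear in $\mathsf{b}$ and the cubic term $\langle\mathsf{b},[\mathsf{a},\mathsf{a}]\rangle$ is killed by $\mathsf{a}=0$, the only $\mathsf{b}$-dependence left is the source above. Splitting the zero-mode $\mathsf{b}=\mathsf{b}^i\chi_i$ into its harmonic $0$-form and $1$-form parts (along $\chi_0=1$ and $\chi_1=\mathrm{d}\tau$), the integration of the $0$-form part against $(\mathbb{A}-\widetilde{\mathbb{A}})_{(1)}$ yields $\delta\!\left(\oint(\mathbb{A}-\widetilde{\mathbb{A}})_{(1)}\right)=\delta(\log U(\mathbb{A}),\log U(\widetilde{\mathbb{A}}))$, while the integration of the $1$-form part against the ghost $0$-form components produces the base-point delta $\delta(\mathbb{A}_p,\widetilde{\mathbb{A}}_{\tilde p})$. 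The Fourier normalization together with the reference half-density factors $\xi_k$ assembles the overall prefactor $(\mathrm{i}/\hbar)^{\dim\mathfrak{g}}$, and the deltas force $\widetilde{\mathbb{A}}=\mathbb{A}$ in the surviving factor.

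It then remains to sum the loop chains on the support of the deltas. Each chain of $n$ vertices contributes $-\frac{1}{n}\mathrm{tr}(\mathrm{ad}_{\underline{\mathbb{A}}_{(1)}}^n)$ times the configuration-space integral $\int_{(S^1)^n}\eta_{S^1}(\tau_1;\tau_n)\cdots\eta_{S^1}(\tau_2;\tau_1)$, which is the very integral already evaluated for the $\mathbb{B}$-$\mathbb{B}$ cylinder in \eqref{BB_loop_amp} and equals $B_n/n!$. The only change relative to Proposition~\ref{BB_eff_act} is an overall sign, traceable to the relative minus sign between the $S^1$-components of the two axial propagators \eqref{BB_ax.gauge_prop} and \eqref{AA_ax.gauge_prop} (the $+\mathrm{d}t'(\cdots)$ versus $-\mathrm{d}t(\cdots)$ terms). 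Hence the sum of these amplitudes is $-\sum_{n\geq2}\frac1n\mathrm{tr}(\mathrm{ad}_{\log U(\mathbb{A})}^n)\frac{B_n}{n!}=-\log\det\frac{\sinh(\mathrm{ad}_{\log U(\mathbb{A})}/2)}{\mathrm{ad}_{\log U(\mathbb{A})}/2}$, so that the corresponding factor in $Z$ is the inverse determinant appearing in \eqref{AA_cylind.part.funct.}.

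The step demanding the most care, and the genuine obstacle, is the reduction to constant boundary fields: it rests on the homological-perturbation quasi-isomorphism \eqref{i p Z= Z+ exact term}, whose defining series converges only when $\log U(\mathbb{A})$ lies in the Gribov region $B_0$, so the answer is canonically defined only there --- consistently, the determinant degenerates at the boundary of $B_0$. The remaining points to verify are the bookkeeping of the two delta functions coming from the $0$-form and $1$-form parts of $\mathsf{b}$, and the sign of the loop series that distinguishes the inverse determinant here from the determinant of Proposition~\ref{BB_eff_act}; the loop integrals themselves require no new work, being imported verbatim from the $\mathbb{B}$-$\mathbb{B}$ computation.
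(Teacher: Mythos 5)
Your proposal is correct and follows essentially the same route as the paper: reduce to constant boundary $1$-forms via the quasi-isomorphism~\eqref{i p Z= Z+ exact term}, globalize over $\mathcal{L}=\{\mathsf{a}=0\}$ so that Lemma~\ref{no_trees_AA_2} leaves only the $\mathsf{b}$-source and the one-loop chains of Figure~\ref{AA_rel_diag}, integrate out $\mathsf{b}$ to produce the two delta functions, and import the loop integrals from~\eqref{BB_loop_amp} to obtain the inverse determinant. The only place where you are no more rigorous than the paper itself is the overall sign of the loop series (your ``one minus sign from the propagator'' heuristic would naively give $(-1)^n$ rather than $-1$), but you correctly flag this as the point needing verification and your final answer agrees with~\eqref{AA_cylind.part.funct.}.
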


\begin{remark}
 Since $\mathrm{det}\bigg(\frac{\sinh\big(\mathrm{ad}_{x} /2\big)}{\mathrm{ad}_{x} /2}\bigg)$ is the determinant of the Jacobian of the exponential map $\exp\colon \mathfrak{g} \longrightarrow G$, we can rewrite~\eqref{AA_cylind.part.funct.} in terms of the delta function on the Lie group:
 \beq
  Z[{\mathbb{A}},\widetilde{{\mathbb{A}}}] = 
   \Big(\frac{\mathrm{i}}{\hbar}\Big)^{\dim\mathfrak{g}} \delta(\mathbb{A}_p, \mathbb{\widetilde{A}}_{\tilde{p}}) \cdot 
    \delta_G( U(\mathbb{A}), U(\mathbb{\widetilde{A}})) ~.
 \eeq
\end{remark}

\subsection{$\mathbb{B}$ polarization on the disk}\label{Sec:B_disk}

Let us consider now non-abelian BF theory on the disk $D$\,.
Using the $\mathbb{B}$ polarization on the boundary, the bulk fields are $\mathcal{Y}= \Omega(D;\mathfrak{g})[1] \oplus \Omega(D,S^1;\mathfrak{g})$\,.
The zero-modes are $\mathcal{V} = \mathfrak{g}[1]\oplus\mathfrak{g}^*[-2]$ with generators the constant zero-form $[1\cdot t_a]$ and an area 2-form $[\mu\cdot t^a]$\,, where $t_a$ and $t^a$ are dual basis of $\mathfrak{g}$ and $\mathfrak{g}^*$\,.

The Feynman graphs appearing in the effective action are trees, with root either in a boundary $\mathbb{B}$-field or in a $\mathsf{b}$ zero-mode in the bulk, or 1-loop diagrams (figure~\ref{B_disk}).

\begin{figure}[h]
\centering
   \begin{tikzpicture}[scale=.25]
   
   \node[left](x) at (-5.5,0) {$\mathbb{B}$};

   \draw[fill=black!15, thick] (0,0) circle (5);   
   
   \coordinate (B) at (-5,0);
   
   \coordinate (v1) at (-3.5,0);
   \coordinate (v2) at ($(v1)+(40:1.5)$);
   \coordinate (v3) at ($(v1)+(-40:1.5)$);
   \coordinate (v4) at ($(v3)+(0:1.5)$);
   \coordinate (v5) at ($(v3)+(-80:1.5)$);
   \coordinate (v6) at ($(v4)+(-40:1.5)$);

   \coordinate (b1) at ($(v2)+(800:1)$);
   \coordinate (b2) at ($(v2)+(0:1)$);
   \coordinate (b3) at ($(v4)+(40:1)$);
   \coordinate (b4) at ($(v5)+(-40:1)$);
   \coordinate (b5) at ($(v5)+(-120:1)$);
   \coordinate (b6) at ($(v6)+(0:1)$);
   \coordinate (b7) at ($(v6)+(-80:1)$);

   \draw[decoration={markings, mark=at position 0.7 with {\arrow{>}}},
        postaction={decorate},color=black, thick] (B) to (v1);   
    
   \draw[decoration={markings, mark=at position 0.7 with {\arrow{>}}},
        postaction={decorate},color=black, thick] (v1) to (v2);     
   
   \draw[decoration={markings, mark=at position 0.7 with {\arrow{>}}},
        postaction={decorate},color=black, thick] (v1) to (v3);    
   
   \draw[decoration={markings, mark=at position 0.7 with {\arrow{>}}},
        postaction={decorate},color=black, thick] (v3) to (v4);    
   
   \draw[decoration={markings, mark=at position 0.7 with {\arrow{>}}},
        postaction={decorate},color=black, thick] (v3) to (v5);      
        
   \draw[decoration={markings, mark=at position 0.7 with {\arrow{>}}},
        postaction={decorate},color=black, thick] (v4) to (v6);

   \draw[color=black, thick, fill=black!10] (B) circle (8pt);

    \draw[color=black, thick]  (v2) to (b1);   
    \draw[color=black, fill=black] (b1) circle (6pt);      
      
    \draw[color=black, thick]  (v2) to (b2);   
    \draw[color=black, fill=black] (b2) circle (6pt);     
    
    \draw[color=black, thick]  (v4) to (b3);   
    \draw[color=black, fill=black] (b3) circle (6pt);     
    
    \draw[color=black, thick]  (v5) to (b4);   
    \draw[color=black, fill=black] (b4) circle (6pt);     
    
    \draw[color=black, thick]  (v5) to (b5);   
    \draw[color=black, fill=black] (b5) circle (6pt);     
    
    \draw[color=black, thick]  (v6) to (b6);   
    \draw[color=black, fill=black] (b6) circle (6pt);     
    
    \draw[color=black, thick]  (v6) to (b7);   
    \draw[color=black, fill=black] (b7) circle (6pt); 
    
   \end{tikzpicture}
   \hspace{1.5cm}
   \begin{tikzpicture}[scale=.25]

   \node[left](x) at (-5.5,0) {$\mathbb{B}$};

   \draw[fill=black!15, thick] (0,0) circle (5);   
   
   \coordinate (B) at (-3,0);
   
   \coordinate (v1) at (-1.5,0);
   \coordinate (v2) at ($(v1)+(40:1.5)$);
   \coordinate (v3) at ($(v1)+(-40:1.5)$);
   \coordinate (v4) at ($(v3)+(0:1.5)$);
   \coordinate (v5) at ($(v3)+(-80:1.5)$);
   \coordinate (v6) at ($(v4)+(-40:1.5)$);

   \coordinate (b1) at ($(v2)+(800:1)$);
   \coordinate (b2) at ($(v2)+(0:1)$);
   \coordinate (b3) at ($(v4)+(40:1)$);
   \coordinate (b4) at ($(v5)+(-40:1)$);
   \coordinate (b5) at ($(v5)+(-120:1)$);
   \coordinate (b6) at ($(v6)+(0:1)$);
   \coordinate (b7) at ($(v6)+(-80:1)$);

   \draw[color=black, thick] (B) to (v1);   
    
   \draw[decoration={markings, mark=at position 0.7 with {\arrow{>}}},
        postaction={decorate},color=black, thick] (v1) to (v2);     
   
   \draw[decoration={markings, mark=at position 0.7 with {\arrow{>}}},
        postaction={decorate},color=black, thick] (v1) to (v3);    
   
   \draw[decoration={markings, mark=at position 0.7 with {\arrow{>}}},
        postaction={decorate},color=black, thick] (v3) to (v4);    
   
   \draw[decoration={markings, mark=at position 0.7 with {\arrow{>}}},
        postaction={decorate},color=black, thick] (v3) to (v5);      
        
   \draw[decoration={markings, mark=at position 0.7 with {\arrow{>}}},
        postaction={decorate},color=black, thick] (v4) to (v6);

   \draw[color=black, thick, fill=black!10] (B) circle (8pt);

    \draw[color=black, thick]  (v2) to (b1);   
    \draw[color=black, fill=black] (b1) circle (6pt);      
      
    \draw[color=black, thick]  (v2) to (b2);   
    \draw[color=black, fill=black] (b2) circle (6pt);     
    
    \draw[color=black, thick]  (v4) to (b3);   
    \draw[color=black, fill=black] (b3) circle (6pt);     
    
    \draw[color=black, thick]  (v5) to (b4);   
    \draw[color=black, fill=black] (b4) circle (6pt);     
    
    \draw[color=black, thick]  (v5) to (b5);   
    \draw[color=black, fill=black] (b5) circle (6pt);     
    
    \draw[color=black, thick]  (v6) to (b6);   
    \draw[color=black, fill=black] (b6) circle (6pt);     
    
    \draw[color=black, thick]  (v6) to (b7);   
    \draw[color=black, fill=black] (b7) circle (6pt); 
    
   \end{tikzpicture}
   \hspace{1.5cm}
   \begin{tikzpicture}[scale=.25]
   
    \node[left](x) at (-5.5,0) {$\mathbb{B}$};
    \draw[fill=black!15, thick] (0,0) circle (5);

    \draw[decoration={markings, mark=at position 0.5 with {\arrow{>}}},
        postaction={decorate},color=black, thick] (-40:1.5) arc (-40:80:1.5);
        
    \draw[decoration={markings, mark=at position 0.5 with {\arrow{>}}},
        postaction={decorate},color=black, thick] (60:1.5) arc (60:160:1.5);
        
    \draw[decoration={markings, mark=at position 0.5 with {\arrow{>}}},
        postaction={decorate},color=black, thick] (160:1.5) arc (160:240:1.5);
        
    \draw[decoration={markings, mark=at position 0.5 with {\arrow{>}}},
        postaction={decorate},color=black, thick] (240:1.5) arc (240:340:1.5);

    \draw[color=black, thick] (160:1.5) to (160:2.5);
    \draw[color=black, fill=black] (160:2.5) circle (6pt);

    \draw[color=black, thick, dotted] (-20:2.5) arc  (-20:40:2.5);

   \draw[decoration={markings, mark=at position 0.7 with {\arrow{>}}},
        postaction={decorate},color=black, thick] (60:1.5) to (60:2.7);   
    
   \draw[decoration={markings, mark=at position 0.7 with {\arrow{>}}},
        postaction={decorate},color=black, thick] (60:2.7) to ($(60:2.7)+(110:1.2)$);     
   
   \draw[decoration={markings, mark=at position 0.7 with {\arrow{>}}},
        postaction={decorate},color=black, thick] (60:2.7) to ($(60:2.7)+(10:1.2)$);     
    
   \draw[color=black, thick] ($(60:2.7)+(110:1.2)$) to ($(60:2.7)+(110:1.2)+(150:1)$);
   \draw[color=black, fill=black] ($(60:2.7)+(110:1.2)+(150:1)$) circle (6pt);  
    
   \draw[color=black, thick] ($(60:2.7)+(110:1.2)$) to ($(60:2.7)+(110:1.2)+(70:1)$);
   \draw[color=black, fill=black] ($(60:2.7)+(110:1.2)+(70:1)$) circle (6pt);  
   
   \draw[color=black, thick] ($(60:2.7)+(10:1.2)$) to ($(60:2.7)+(10:1.2)+(50:1)$);
   \draw[color=black, fill=black] ($(60:2.7)+(10:1.2)+(50:1)$) circle (6pt);  
   
   \draw[color=black, thick] ($(60:2.7)+(10:1.2)$) to ($(60:2.7)+(10:1.2)+(-30:1)$);
   \draw[color=black, fill=black] ($(60:2.7)+(10:1.2)+(-30:1)$) circle (6pt);

   \draw[decoration={markings, mark=at position 0.7 with {\arrow{>}}},
        postaction={decorate},color=black, thick] (-40:1.5) to (-40:2.7);   
    
   \draw[color=black, thick] (-40:2.7) to ($(-40:2.7)+(10:1.2)$);     
   \draw[color=black, fill=black] ($(-40:2.7)+(10:1.2)$) circle (6pt);      
   
   \draw[decoration={markings, mark=at position 0.7 with {\arrow{>}}},
        postaction={decorate},color=black, thick] (-40:2.7) to ($(-40:2.7)+(-90:1.2)$);     
    
   \draw[color=black, thick] ($(-40:2.7)+(-90:1.2)$) to ($(-40:2.7)+(-90:1.2)+(-50:1)$);
   \draw[color=black, fill=black] ($(-40:2.7)+(-90:1.2)+(-50:1)$) circle (6pt);  
    
   \draw[color=black, thick] ($(-40:2.7)+(-90:1.2)$) to ($(-40:2.7)+(-90:1.2)+(-130:1)$);
   \draw[color=black, fill=black] ($(-40:2.7)+(-90:1.2)+(-130:1)$) circle (6pt);

   \draw[decoration={markings, mark=at position 0.7 with {\arrow{>}}},
        postaction={decorate},color=black, thick] (240:1.5) to (240:2.7);   
    
   \draw[color=black, thick] (240:2.7) to ($(240:2.7)+(280:1)$);     
   \draw[color=black, fill=black] ($(240:2.7)+(280:1)$) circle (6pt);      
   \draw[color=black, thick] (240:2.7) to ($(240:2.7)+(200:1)$);     
   \draw[color=black, fill=black] ($(240:2.7)+(200:1)$) circle (6pt);

   \end{tikzpicture}  
   \caption{Connected diagrams for non-abelian BF on the disk in $\mathbb{B}$ polarization.}
   \label{B_disk}
\end{figure}
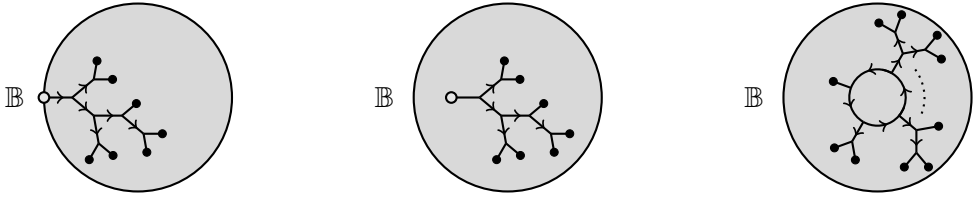

\begin{proposition}
 In the effective action for BF theory on the disk in $\mathbb{B}$-polarization, all the diagrams containing at least one propagator are vanishing.
 In particular the partition function reads:
 \beq\label{B_disk_eff_act}
  Z[\mathbb{B},\mathsf{a},\mathsf{b}] = \exp\frac{\mathrm{i}}{\hbar}\bigg(-\int_{S^1} \langle \mathbb{B},\mathsf{a}\rangle 
   + \int_D \frac{1}{2}\langle\mathsf{b},[\mathsf{a},\mathsf{a}]\rangle\bigg) \cdot \rho_{\mathcal{V}} ~,
 \eeq
 with $\rho_{\mathcal{V}}=(-\mathrm{i}\hbar)^{\dim \mathfrak{g}}\, D^{\frac12}\mathsf{a} ~ D^{\frac12} \mathsf{b}$ the reference half-density on residual fields.
\end{proposition}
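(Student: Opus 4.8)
The plan is to prove the first assertion --- that every connected Feynman diagram containing at least one propagator $\eta$ vanishes --- by a degree-counting argument on the configuration spaces, and then to read off the partition function as the exponential of the two surviving propagator-free connected diagrams. Granting the vanishing, the only connected diagrams left are the boundary source term $-\int_{S^1}\langle\mathbb{B},\mathsf{a}\rangle$ and the pure zero-mode vertex $\frac12\int_D\langle\mathsf{b},[\mathsf{a},\mathsf{a}]\rangle$, both read off directly from the expanded action~\eqref{YM_pert.action}; exponentiating their sum and recording the trivial fluctuation normalization as the reference half-density $\rho_{\mathcal{V}}$ yields~\eqref{B_disk_eff_act}.

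For the degree count I would first record the de~Rham degrees of the ingredients. The propagator $\eta$, being the integral kernel of the chain homotopy $K\colon\Omega^\bullet(D)\to\Omega^{\bullet-1}(D)$ of Section~\ref{BF Propagators and Axial Gauge}, is a form of total degree $\dim D-1=1$ on $C_2(D)$ (a short bidegree computation of $K\omega(x)=\int_{x'}\eta(x;x')\wedge\omega(x')$ shows this is independent of the form degree of $\omega$). The $\mathsf{a}$ zero-mode is the constant $0$-form generating $H^0(D;\mathfrak{g})$, so it carries degree $0$; the $\mathsf{b}$ zero-mode is the area $2$-form generating $H^2(D,S^1;\mathfrak{g}^*)$, so it carries degree $2$; and the boundary field $\mathbb{B}$, a superfield of total degree $0$ on $S^1$, contributes de~Rham degree at most $1$ per insertion. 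Next I would exploit the rigid combinatorics of the $BF$-vertex $\langle\mathsf{B},[\mathsf{A},\mathsf{A}]\rangle$: it has exactly one $\mathsf{B}$-slot, filled either by a fluctuation $\beta$ (contracted into a propagator) or by an external $\mathsf{b}$ zero-mode. Since every fluctuation is contracted and each propagator absorbs exactly one $\beta$, this gives the identity $v=P+n_{\mathsf{b}}$ relating the number of bulk vertices $v$, propagators $P$ and $\mathsf{b}$-insertions $n_{\mathsf{b}}$.

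I would then compare the total integrand degree with the dimension of the (compactified) configuration space $\mathrm{Conf}_v(D)\times(S^1)^{n_{\mathbb{B}}}$, namely $2v+n_{\mathbb{B}}$. The integrand has total degree $P+2n_{\mathsf{b}}+d_{\mathbb{B}}$, where $d_{\mathbb{B}}\le n_{\mathbb{B}}$ is the degree contributed by the $\mathbb{B}$-insertions. Substituting $v=P+n_{\mathsf{b}}$, the dimension minus the degree equals $P+n_{\mathbb{B}}-d_{\mathbb{B}}\ge P$, so whenever $P\ge1$ the integrand has strictly too low a degree to be a top form and its integral vanishes identically --- pointwise in degree, with no Stokes or configuration-space-boundary analysis required. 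Imposing $P=0$ then forces precisely the two surviving diagrams and, in particular, kills all vacuum loops, so the fluctuation determinant is trivial and contributes only $\rho_{\mathcal{V}}$.

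The step requiring the most care is the boundary bookkeeping: one must check that the source term $-\langle\mathbb{B},\alpha\rangle$ attaches to the graph solely through the $\mathsf{A}$-end of a propagator, so that $d_{\mathbb{B}}\le n_{\mathbb{B}}$ is the only input needed from the boundary, and that the direct pairing $-\langle\mathbb{B},\mathsf{a}\rangle$ is correctly counted as a propagator-free ($P=0$, $v=0$) contribution with matching degree on its single $S^1$ factor. I would also make explicit the elementary fact that a form of degree below the dimension of the configuration space has vanishing pushforward, since this is what converts the degree mismatch into the asserted vanishing.
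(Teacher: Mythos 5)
Your proof is correct and uses the same mechanism as the paper's: a pure degree count comparing the total form degree of the integrand against the dimension of the configuration space, with the conclusion that any diagram with $P\geq 1$ propagators integrates a form of too low degree. The only difference is organizational — the paper runs the count separately for bulk-rooted trees, boundary-rooted trees and one-loop graphs, whereas you package all cases into the single inequality $\dim - \deg \geq P$ via the structural identity $v = P + n_{\mathsf{b}}$ (each bulk vertex has exactly one $\mathsf{B}$-slot, filled by a propagator or a $\mathsf{b}$ zero-mode) — which is a valid and slightly cleaner way to reach the identical conclusion.
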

\begin{proof}
 The result follows from degree counting.
 Let us consider first a tree diagram rooted in the bulk.
 If $n$ is the number of bulk vertices, then we have $n-1$ propagators, $n+1$ $\mathsf{a}$ zero-modes and one $\mathsf{b}$ zero-mode.
 Since propagators are 1-forms, $\mathsf{a}$ only has the zero-form component and $\mathsf{b}$ is a 2-form, then the differential form associated to the diagram is a $(n+1)$-form.
 This has to be integrated on the configuration space of the diagram, which is of dimension $2n$\,.
 Thus the only possibly non-vanishing diagram is for $n=1$\,.
 Its contribution is:
 \beq
  \frac{1}{2}\int_D \langle\mathsf{b},[\mathsf{a},\mathsf{a}]\rangle~.
 \eeq
 Consider now a tree diagram rooted on the boundary with $n$ bulk vertices. 
 We have $n$ propagators, $n+1$ $\mathsf{a}$ zero-modes and one boundary field $\mathbb{B}$\,.
 Thus the differential form associated to the diagram is a $n$-form or a $(n+1)$-form, depending on the form degree of~$\mathbb{B}$\,, and has to be integrated again on a $(2n+1)$-dimensional configuration space.
 In this case we only have a contribution with $n=0$\,:
 \beq
  -\int_{S^1} \langle \mathbb{B},\mathsf{a}\rangle~.
 \eeq
 Last, for a 1-loop diagram with $n\geqslant 1$ vertices in the loop and $l$ vertices in the trees rooted on the $n$ loop vertices, we have $n+l$ propagators and $n+l$ $\mathsf{a}$ zero-modes.
 Thus we have to integrate a differential form of degree $n+l$ on a $2(n+l)$-dimensional configuration space and, since $n\geqslant 1$\,, we have no non-vanishing contributions.
\end{proof}

\subsection{Gluing}\label{section_gluing}

We computed the YM partition function on the $\mathbb{A}$-$\mathbb{B}$ cylinder and the BF partition function on the $\mathbb{B}$-disk and the $\mathbb{A}$-$\mathbb{A}$ cylinder.
As we will show in this section, using the gluing property of BV-BFV theories, this is sufficient to prove a gluing formula between $\mathbb{A}$-polarized boundaries and to find the YM state on any surface with non-negative Euler characteristic.

\subsubsection{BF disk in $\mathbb{A}$ polarization}\label{Sect:A_disk}

The BF disk in $\mathbb{A}$ polarization can be obtained changing polarization to the disk in $\mathbb{B}$ polarization by gluing to it an $\mathbb{A}$-$\mathbb{A}$ BF cylinder (figure~\ref{A_disk}).

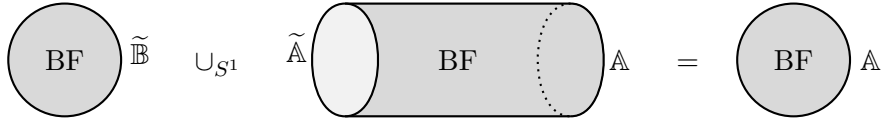
\begin{figure}[h]
 \[
 \begin{tikzpicture}[scale=.25, baseline=(x.base)]
  
    \draw[fill=black!15, thick] (0,0) circle (3);   

    \node (x) at (0,0) {BF};
    
    \node[right] (b) at (10:3) {$\widetilde{\mathbb{B}}$};

   \end{tikzpicture}
   \quad\cup_{S^1}\quad
   \begin{tikzpicture}[scale=.25, baseline=(x.base)]
  
    \coordinate (x0) at (-6,-3);
    \coordinate (x1) at (-6,3);
    \coordinate (x2) at (6,3);
    \coordinate (x3) at (6,-3);
    
    \filldraw[fill=black!15](x0) to[out=0, in=0] (x1)to[out=0, in=180] (x2)to[in=0, out=0](x3)to[out=180, in=0](x0) ;
    \filldraw[fill=black!5, draw=black!5](x0) to[out=180, in=180] (x1)to[in=0, out=0](x0);

    \draw[color=black, thick] (x0) to[out=0, in=0] (x1);
    \draw[color=black, thick] (x1) to[out=180, in=180] (x0);
        
    \draw[color=black, thick] (x3) to[out=0, in=0] (x2);
    \draw[color=black, thick, dotted] (x2) to[out=180, in=180] (x3);

    \draw[color=black, thick](x1) to[out=0, in=180] (x2) ;
    \draw[color=black, thick](x0) to[out=0, in=180] (x3);
    
    \node (x) at (0,0) {BF};
    
    \node[left] (b1) at (175:7.5) {$\widetilde{\mathbb{A}}$};
    \node[right] (b2) at (0:7.5) {$\mathbb{A}$};

   \end{tikzpicture}
   \quad=\quad
   \begin{tikzpicture}[scale=.25, baseline=(x.base)]
  
    \draw[fill=black!15, thick] (0,0) circle (3);   

    \node (x) at (0,0) {BF};
    
    \node[right] (b) at (0:3) {$\mathbb{A}$};

   \end{tikzpicture}
  \]
 \caption{$\mathbb{A}$ disk as the gluing of a $\mathbb{B}$ disk with an $\mathbb{A}$-$\mathbb{A}$ cylinder.}
 \label{A_disk} 
\end{figure}

For the $\mathbb{A}$-$\mathbb{A}$ cylinder we only know the projection in $\Omega$ cohomology of the globalized answer; since both globalization and projection to cohomology commute with gluing, we are still able to compute the partition function for the disk.
The glued partition function~is:
\beq
\begin{aligned}\label{A_disk_part.funct.}
 Z[\mathbb{A}] &= \int \mathrm{d}\mathsf{a}\, \mathrm{d}\widetilde{\mathbb{B}} \, \mathrm{d}\widetilde{\mathbb{A}} ~
  \mathrm{e}^{-\frac{\mathrm{i}}{\hbar}\int_{S^1} \langle \widetilde{\mathbb{B}} , \mathsf{a}-\widetilde{\mathbb{A}} \rangle }~
   \delta(\mathbb{A}_{p},\widetilde{\mathbb{A}}_{\widetilde{p}})\\
 &\qquad \qquad \qquad \cdot \det\bigg(\frac{\sinh\big(\mathrm{ad}_{\log U(\mathbb{{A}})} /2\big)}{\mathrm{ad}_{\log U(\mathbb{{A}})} /2}\bigg)^{-1}
    \delta(\log U(\mathbb{\widetilde{A}}),\log U(\mathbb{{A}})) &\\
 &= \det\bigg(\frac{\sinh\big(\mathrm{ad}_{\log U(\mathbb{{A}})} /2\big)}{\mathrm{ad}_{\log U(\mathbb{{A}})} /2}\bigg)^{-1}
    \delta(\log U(\mathbb{{A}}),0) = \delta_G(\mathrm{e}^{\log U(\mathbb{{A}})},\mathbb{I}) \\
 &= \delta_G(U(\mathbb{A}),\mathbb{I}) ~.
\end{aligned}    
\eeq

\begin{remark}
To have consistency with gluing, we assume that the integration measure over the boundary fields is normalized in such a way that 
\beq
 \int \mathrm{d}\widetilde{\mathbb{B}}\, \mathrm{d}\widetilde{\mathbb{A}}~ \mathrm{e}^{\frac{\mathrm{i}}{\hbar}\int_{S^1} 
  \langle \widetilde{\mathbb{B}}, \widetilde{\mathbb{A}}  \rangle }=1 ~.
\eeq
As a matter of convenience, we moreover distribute the normalization between $\mathrm{d}\widetilde{\mathbb{A}}$ and $\mathrm{d}\widetilde{\mathbb{B}}$ in such a way that 
\beq
 \int \mathrm{d}\widetilde{\mathbb{B}}~ \mathrm{e}^{\frac{\mathrm{i}}{\hbar}\int_{S^1} 
  \langle \widetilde{\mathbb{B}}, \widetilde{\mathbb{A}}  \rangle }=\delta(\widetilde{\mathbb{A}}) ~.
\eeq
\end{remark}

\subsubsection{YM disk in $\mathbb{A}$ polarization}\label{YM_Disk_in_A_polarization}

We can obtain the partition function for the YM disk in $\mathbb{A}$ polarization gluing to the BF disk a YM cylinder in $\mathbb{A}$-$\mathbb{B}$ polarization (figure~\ref{YM_A_disk}).

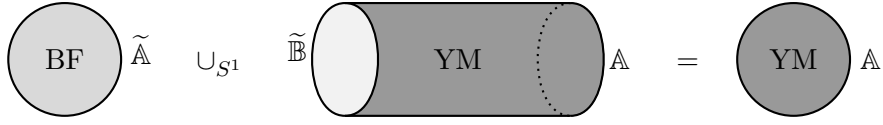
\begin{figure}[h]
 \[
 \begin{tikzpicture}[scale=.25, baseline=(x.base)]
  
    \draw[fill=black!15, thick] (0,0) circle (3);   

    \node (x) at (0,0) {BF};
    
    \node[right] (b) at (10:3) {$\widetilde{\mathbb{A}}$};

   \end{tikzpicture}
   \quad\cup_{S^1}\quad
   \begin{tikzpicture}[scale=.25, baseline=(x.base)]
  
    \coordinate (x0) at (-6,-3);
    \coordinate (x1) at (-6,3);
    \coordinate (x2) at (6,3);
    \coordinate (x3) at (6,-3);
    
    \filldraw[fill=black!40](x0) to[out=0, in=0] (x1)to[out=0, in=180] (x2)to[in=0, out=0](x3)to[out=180, in=0](x0) ;
    \filldraw[fill=black!5, draw=black!5](x0) to[out=180, in=180] (x1)to[in=0, out=0](x0);

    \draw[color=black, thick] (x0) to[out=0, in=0] (x1);
    \draw[color=black, thick] (x1) to[out=180, in=180] (x0);
        
    \draw[color=black, thick] (x3) to[out=0, in=0] (x2);
    \draw[color=black, thick, dotted] (x2) to[out=180, in=180] (x3);

    \draw[color=black, thick](x1) to[out=0, in=180] (x2) ;
    \draw[color=black, thick](x0) to[out=0, in=180] (x3);
    
    \node (x) at (0,0) {YM};
    
    \node[left] (b1) at (175:7.5) {$\widetilde{\mathbb{B}}$};
    \node[right] (b2) at (0:7.5) {$\mathbb{A}$};

   \end{tikzpicture}
   \quad=\quad
   \begin{tikzpicture}[scale=.25, baseline=(x.base)]
  
    \draw[fill=black!40, thick] (0,0) circle (3);   

    \node (x) at (0,0) {YM};
    
    \node[right] (b) at (0:3) {$\mathbb{A}$};

   \end{tikzpicture}
  \]
 \caption{YM $\mathbb{A}$ disk as the gluing of a BF $\mathbb{A}$ disk with a YM $\mathbb{A}$-$\mathbb{B}$ cylinder.}
 \label{YM_A_disk} 
\end{figure}

As the partition function for the BF disk, also the YM partition function coincides with the non-perturbative answer~\cite{witten:2d_quantum_gauge, Migdal:1975zg, cordes:lectures_2d_YM}.

\begin{proposition}
 The globalized partition function for 2D~YM on the disk in $\mathbb{A}$-polarization~is:
 \beq\label{A_disk_YM_part_funct}
 \begin{aligned}
  Z_{\mathrm{YM}}[\mathbb{A}] = \sum_R (\mathrm{dim}~R) \; \chi_R(U(\mathbb{A})) \;
   \mathrm{e}^{-\frac{\mathrm{i}\hbar a}{2}C_2(R)}	~,
 \end{aligned}
 \eeq
 where $a=\int \mu$ is the area of the disk, $\chi_R$ the character and $C_2(R)$ the quadratic Casimir of the representation~$R$\,.
\end{proposition}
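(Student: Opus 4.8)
The plan is to obtain the Yang--Mills $\mathbb{A}$-disk by gluing the BF $\mathbb{A}$-disk computed in \eqref{A_disk_part.funct.} to the Yang--Mills $\mathbb{A}$-$\mathbb{B}$ cylinder of Section~\ref{Sect:AB_cyl}, exactly as in Figure~\ref{YM_A_disk}, and then to evaluate the resulting pairing in the representation basis. Writing the gluing over the interface circle, where the $\widetilde{\mathbb{A}}$-boundary of the BF disk is paired with the $\widetilde{\mathbb{B}}$-boundary of the cylinder, gives
\[
 Z_{\mathrm{YM}}[\mathbb{A}] = \int \mathrm{d}\widetilde{\mathbb{A}}\, \mathrm{d}\widetilde{\mathbb{B}} ~ \delta_G(U(\widetilde{\mathbb{A}}),\mathbb{I})~ \mathrm{e}^{\frac{\mathrm{i}}{\hbar}\langle \widetilde{\mathbb{B}}, \mathbb{A}-\widetilde{\mathbb{A}}\rangle + \frac{\mathrm{i} a}{2\hbar}\mathrm{tr}\,\widetilde{\mathbb{B}}^2}~,
\]
where the $-\langle\widetilde{\mathbb{B}},\widetilde{\mathbb{A}}\rangle$ term is the Fourier kernel implementing the $\mathbb{A}$-$\mathbb{B}$ pairing of the gluing, the remaining $\langle\widetilde{\mathbb{B}},\mathbb{A}\rangle+\frac{a}{2}\mathrm{tr}\,\widetilde{\mathbb{B}}^2$ comes from the cylinder, and $a=\int\mu$ is the area it carries.

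First I would read this off as the action of an ``area evolution operator''. The $a=0$ cylinder is the identity written in the mixed $\mathbb{A}$-$\mathbb{B}$ basis, while the factor $\mathrm{e}^{\frac{\mathrm{i} a}{2\hbar}\mathrm{tr}\,\widetilde{\mathbb{B}}^2}$ is diagonal in $\widetilde{\mathbb{B}}$. Hence performing the $\widetilde{\mathbb{B}}$-integral amounts to replacing $\widetilde{\mathbb{B}}$ by its standard-ordering quantization $-\mathrm{i}\hbar\,\delta/\delta\mathbb{A}$ (the same ordering used to define $\Omega$ in \eqref{BF_Omega}) and applying $\exp\big(\tfrac{\mathrm{i} a}{2\hbar}\widehat{\mathrm{tr}\,\mathbb{B}^2}\big)$ to the BF-disk state. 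Thus $Z_{\mathrm{YM}}[\mathbb{A}] = \exp\big(\tfrac{\mathrm{i} a}{2\hbar}\widehat{\mathrm{tr}\,\mathbb{B}^2}\big)\, Z_{\mathrm{BF}}[\mathbb{A}]$, reducing the claim to an eigenvalue computation.

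Second I would expand the BF-disk state in characters: by Peter--Weyl, the group delta-function $\delta_G(U(\mathbb{A}),\mathbb{I})$ of \eqref{A_disk_part.funct.} equals $\sum_R (\dim R)\,\chi_R(U(\mathbb{A}))$. Working modulo $\Omega$-exact terms, I would represent these class functions on $G$ as in Section~\ref{Omega_A_cohomology}, so that $\widehat{\mathrm{tr}\,\mathbb{B}^2}$ becomes the quadratic Casimir differential operator on $G$, for which the characters are eigenfunctions; with the normalization of $(\cdot,\cdot)$ used here this gives $\widehat{\mathrm{tr}\,\mathbb{B}^2}\,\chi_R = -\hbar^2 C_2(R)\,\chi_R$. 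Applying the evolution operator term by term then yields exactly $\sum_R (\dim R)\,\chi_R(U(\mathbb{A}))\,\mathrm{e}^{-\frac{\mathrm{i}\hbar a}{2}C_2(R)}$, which is \eqref{A_disk_YM_part_funct}.

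The hard part is the second step: identifying the quantized area term with the \emph{group} Casimir and fixing its eigenvalue, rather than with the flat Laplacian on $\mathfrak{g}$. Carrying out the $\widetilde{\mathbb{B}}$-integral naively by completing the square produces a Gaussian in $\mathbb{A}-\widetilde{\mathbb{A}}$ --~the heat kernel on the Lie algebra, i.e. only the semiclassical (WKB) approximation of the true group heat kernel. Passing to the exact character sum requires working with the correct $\Omega$-cohomology representative (class functions on $G$, not $G$-invariant functions on $\mathfrak{g}$), which is precisely where the Gribov-region subtlety of Section~\ref{Omega_A_cohomology} enters; one must also track the reference half-density $\rho_{\mathcal{V}}$, the Jacobian $\det\big(\sinh(\mathrm{ad}/2)/(\mathrm{ad}/2)\big)$ inherited from the $\mathbb{A}$-$\mathbb{A}$ cylinder, and the powers of $\hbar$, so that the prefactor comes out exactly $\dim R$ and the Casimir normalization matches.
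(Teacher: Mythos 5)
Your proposal is correct and follows essentially the same route as the paper: glue the BF $\mathbb{A}$-disk $\delta_G(U(\mathbb{A}),\mathbb{I})$ to the YM $\mathbb{A}$-$\mathbb{B}$ cylinder, integrate out $\widetilde{\mathbb{B}}$ to turn the area term into the operator $\mathrm{e}^{\frac{\mathrm{i}\hbar a}{2}(\frac{\partial}{\partial\mathbb{A}},\frac{\partial}{\partial\mathbb{A}})}$, and read the result as the heat-kernel matrix element $\langle\mathbb{I}|\,\mathrm{e}^{-\frac{\mathrm{i}}{\hbar}H_{\mathrm{YM}}}\,|U(\mathbb{A})\rangle$, expanded in characters. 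Your explicit discussion of why the quantized area operator acts as the group Casimir (rather than the flat Laplacian on $\mathfrak{g}$) makes precise a step the paper treats tersely, but it is the same argument.
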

\begin{proof}
 Gluing a BF disk to a YM cylinder in $\mathbb{A}$-$\mathbb{B}$~polarization we get:
 \beq
 Z_{\mathrm{YM}}[\mathbb{A}] &= \int \mathrm{d}\widetilde{\mathbb{B}} \, \mathrm{d}\widetilde{\mathbb{A}} ~
   \mathrm{e}^{-\frac{\mathrm{i}}{\hbar} \int_{S^1} \langle \widetilde{\mathbb{B}}, (\widetilde{\mathbb{A}}-\mathbb{A}) \rangle -\frac{\mathrm{i}}{2\hbar}\int_{S^1\times I} \mu\;\mathrm{tr}(\widetilde{\mathbb{B}}^2)}	\delta(U(\widetilde{\mathbb{A}}),\mathbb{I})\\
  &= \mathrm{e}^{\frac{\mathrm{i}\hbar a}{2} \big(\frac{\partial}{\partial\mathbb{A}}
  ,\frac{\partial}{\partial\mathbb{A}}\big)}~
   \delta(U({\mathbb{A}}),\mathbb{I})= \langle\mathbb{I}| \; \mathrm{e}^{-\frac{\mathrm{i}}{\hbar}H_{\mathrm{YM}}} \; |U(\mathbb{A})\rangle \\
  &= \sum_R (\mathrm{dim}~R) \; \chi_R(U(\mathbb{A})) \;
   \mathrm{e}^{-\frac{\mathrm{i}\hbar a}{2}C_2(R)}~.
\eeq
\end{proof}

\subsubsection{Gluing circles in $\mathbb{A}$ polarization} 

Two boundaries in $\mathbb{A}$ polarization can be glued together using a BF cylinder in $\mathbb{B}$-$\mathbb{B}$ polarization.
If $Z_{\Sigma_i}[\mathbb{A}_i]$ is the globalized partition function on a surface $\Sigma_i$\,, $i=1,2$\,, and $\Sigma$ is the gluing $\Sigma_1\cup_{S^1}\Sigma_2$ along a common boundary in $\mathbb{A}$ polarization, we get:
\beq\label{S1_gluing}
\begin{aligned}
 Z_\Sigma &= \int \mathrm{d}\widetilde{\mathbb{B}} \, \mathrm{d}\widetilde{\mathbb{A}} 
  \, \mathrm{d}\mathbb{B} \, \mathrm{d}\mathbb{A} \, \mathrm{d}\mathsf{a}_1 ~
   \mathrm{e}^{\frac{\mathrm{i}}{\hbar} \int_{S^1} \langle \widetilde{\mathbb{B}}, (\widetilde{\mathbb{A}}-\mathsf{a}) \rangle
    -\frac{\mathrm{i}}{\hbar} \int_{S^1} \langle {\mathbb{B}}, ({\mathbb{A}}-\mathsf{a}) \rangle 
     + \frac{\mathrm{i}}{2\hbar} \langle \mathsf{b}^{(2)},[\mathsf{a}^{(0)},\mathsf{a}^{(0)}] \rangle}\\
 &\qquad\cdot\det\bigg(\frac{\sinh\big(\mathrm{ad}_{\mathsf{a}_1} /2\big)}{\mathrm{ad}_{\mathsf{a}_1} /2}\bigg)
      Z_{\Sigma_1}[\mathbb{A}]\;Z_{\Sigma_2}[\widetilde{\mathbb{A}}] \cdot \rho_{\mathcal{V}} \\
 &= \rho_{\mathcal{V}} \cdot \mathrm{e}^{\frac{\mathrm{i}}{2\hbar} \langle \mathsf{b}^{(2)},[\mathsf{a}^{(0)},\mathsf{a}^{(0)}] \rangle}\int_G \mathrm{d}U ~ Z_{\Sigma_1}[U]\;Z_{\Sigma_2}[U]    ~,
\end{aligned}
\eeq
which coincides with the gluing formula for YM known in literature~\cite{witten:2d_quantum_gauge, Migdal:1975zg} up to a zero-mode dependent factor.
Here, instead of integrating out the zero-modes on the $\mathbb{B}$-$\mathbb{B}$~cylinder completely (which would yield an ill-defined integral), we performed a partial integration (BV pushforward), retaining the zero-modes~$\mathsf{a}^{(0)},\mathsf{b}^{(2)}$. 
Here the index in brackets stands for the form degree of a zero-mode and $\rho_{\mathcal{V}}=(-\mathrm{i}\hbar)^{\dim \mathfrak{g}}\, D^{\frac12} \mathsf{a}^{(0)}~ D^{\frac12} \mathsf{b}^{(2)}$ is the reference half-density on the remaining zero-modes.

\begin{remark}\label{gribov_region}
 In~\eqref{S1_gluing}, the domain of integration over~$\mathsf{a}_1$ is the ``Gribov region'' $B_0\subset\mathfrak{g}$ (cf.~subsection~\ref{Omega_A_cohomology}) --~the preimage of an open dense subset of the group~$G$ under the exponential map $\exp\colon\mathfrak{g} \rightarrow G$\,.
 On one hand, this is the domain corresponding to values of~$\mathsf{a}$ for which the sum of Feynman diagrams converges. 
 On the other hand, this corresponds to avoiding overcounting when performing the globalization via integrating over zero-modes as opposed to integrating over the moduli space of solutions of Euler-Lagrange equations. 
\end{remark}

\subsubsection{Other surfaces of non-negative Euler characteristic}\label{closed_surfaces}

To obtain the YM cylinder in $\mathbb{A}$-$\mathbb{A}$ polarization we can simply change polarization to the YM $\mathbb{A}$-$\mathbb{B}$ cylinder by gluing a BF cylinder (figure~\ref{YM_AA_cyl}):
\beq\label{YM_AA_cyl_pert_funct}
\begin{aligned}
 Z_{\mathrm{YM}}[\mathbb{A},\mathbb{A}'] &= \Big(\frac{\mathrm{i}}{\hbar}\Big)^{\dim \mathfrak{g}} \int \mathrm{d}\widetilde{\mathbb{B}} \,
  \mathrm{d}\widetilde{\mathbb{A}} ~
   \mathrm{e}^{-\frac{\mathrm{i}}{\hbar} \int_{S^1} \langle \widetilde{\mathbb{B}}, (\widetilde{\mathbb{A}}-\mathbb{A}) \rangle 
    -\frac{\mathrm{i}}{2\hbar}\int \mu\;\mathrm{tr}(\widetilde{\mathbb{B}}^2)}	\delta_G(U(\widetilde{\mathbb{A}}),U(\mathbb{A}')) ~ 
     \delta(\mathbb{A}'_{p'},\widetilde{\mathbb{A}}_{\widetilde{p}})\\
 &=\Big(\frac{\mathrm{i}}{\hbar}\Big)^{\dim \mathfrak{g}} \delta(\mathbb{A}'_{p'},{\mathbb{A}}_{{p}})~ \sum_R (\mathrm{dim}~R)~\chi_R\big(U^{-1}(\mathbb{A}')\cdot U(\mathbb{A})\big) \; \mathrm{e}^{-\frac{\mathrm{i}\hbar a}{2}C_2(R)}	~.
\end{aligned}
\eeq

\begin{figure}[h!]
   \[
    \begin{tikzpicture}[scale=.23, baseline=(x.base)]
  
    \coordinate (x0) at (-5,-3);
    \coordinate (x1) at (-5,3);
    \coordinate (x2) at (5,3);
    \coordinate (x3) at (5,-3);
    
    \filldraw[fill=black!15](x0) to[out=0, in=0] (x1)to[out=0, in=180] (x2)to[in=0, out=0](x3)to[out=180, in=0](x0) ;
    \filldraw[fill=black!5, draw=black!5](x0) to[out=180, in=180] (x1)to[in=0, out=0](x0);

    \draw[color=black, thick] (x0) to[out=0, in=0] (x1);
    \draw[color=black, thick] (x1) to[out=180, in=180] (x0);
        
    \draw[color=black, thick] (x3) to[out=0, in=0] (x2);
    \draw[color=black, thick, dotted] (x2) to[out=180, in=180] (x3);

    \draw[color=black, thick](x1) to[out=0, in=180] (x2) ;
    \draw[color=black, thick](x0) to[out=0, in=180] (x3);
    
    \node (x) at (0,0) {BF};
    \node[left] (b1) at (180:7.5) {$\mathbb{A}'$};
    \node[right] (b2) at (0:7.5) {$\widetilde{\mathbb{A}}$};

   \end{tikzpicture}
   \cup_{S^1}
   \begin{tikzpicture}[scale=.23, baseline=(x.base)]
  
    \coordinate (x0) at (-5,-3);
    \coordinate (x1) at (-5,3);
    \coordinate (x2) at (5,3);
    \coordinate (x3) at (5,-3);
    
    \filldraw[fill=black!40](x0) to[out=0, in=0] (x1)to[out=0, in=180] (x2)to[in=0, out=0](x3)to[out=180, in=0](x0) ;
    \filldraw[fill=black!5, draw=black!5](x0) to[out=180, in=180] (x1)to[in=0, out=0](x0);

    \draw[color=black, thick] (x0) to[out=0, in=0] (x1);
    \draw[color=black, thick] (x1) to[out=180, in=180] (x0);
        
    \draw[color=black, thick] (x3) to[out=0, in=0] (x2);
    \draw[color=black, thick, dotted] (x2) to[out=180, in=180] (x3);

    \draw[color=black, thick](x1) to[out=0, in=180] (x2) ;
    \draw[color=black, thick](x0) to[out=0, in=180] (x3);
    
    \node (x) at (0,0) {YM};
    \node[left] (b1) at (180:7.5) {$\widetilde{\mathbb{B}}$};
    \node[right] (b2) at (0:7.5) {${\mathbb{A}}$};

   \end{tikzpicture}
   =
      \begin{tikzpicture}[scale=.23, baseline=(x.base)]
  
    \coordinate (x0) at (-5,-3);
    \coordinate (x1) at (-5,3);
    \coordinate (x2) at (5,3);
    \coordinate (x3) at (5,-3);
    
    \filldraw[fill=black!40](x0) to[out=0, in=0] (x1)to[out=0, in=180] (x2)to[in=0, out=0](x3)to[out=180, in=0](x0) ;
    \filldraw[fill=black!5, draw=black!5](x0) to[out=180, in=180] (x1)to[in=0, out=0](x0);

    \draw[color=black, thick] (x0) to[out=0, in=0] (x1);
    \draw[color=black, thick] (x1) to[out=180, in=180] (x0);
        
    \draw[color=black, thick] (x3) to[out=0, in=0] (x2);
    \draw[color=black, thick, dotted] (x2) to[out=180, in=180] (x3);

    \draw[color=black, thick](x1) to[out=0, in=180] (x2) ;
    \draw[color=black, thick](x0) to[out=0, in=180] (x3);
    
    \node (x) at (0,0) {YM};
    \node[left] (b1) at (178:7.5) {$\mathbb{A}'$};
    \node[right] (b2) at (0:7.5) {$\mathbb{A}$};

   \end{tikzpicture}
   \]
 \caption{YM $\mathbb{A}$-$\mathbb{A}$ cylinder as the gluing of a BF $\mathbb{A}$-$\mathbb{A}$ cylinder with a YM $\mathbb{A}$-$\mathbb{B}$ cylinder.}
 \label{YM_AA_cyl} 
\end{figure}
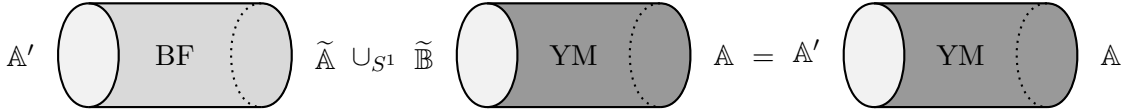

\begin{remark}
 The answer~\eqref{YM_AA_cyl_pert_funct} \emph{does not} coincide with the non-perturbative answer, which will be recovered perturbatively in Section~\ref{corners_and_2d_YM} using manifolds with corners.
 This discrepancy is due to the presence of inequivalent gauge-fixings in the globalization process.
\end{remark}

Let us now compute the YM partition function for a sphere $S^2$ obtained by the gluing of two disks: one with area~$a$ and in the $\mathbb{A}$~polarization, the other with zero area and in $\mathbb{B}$~polarization.

\begin{figure}[h!]
\centering
 \begin{tikzpicture}[scale=.35]
 
  \filldraw[black!15] (0,0) circle (3);
  \node at (30:4.5) {$D$};
  \node at (-30:4.5) {$D_{\mathbb{B}}$};
  
  \begin{scope}
   \clip (-3,0) to[out=-90, in=-90] (3,0) to(3,4) to (-3,4);
   \clip (0,0) circle (3);
   \filldraw[fill=black!40] (0,0) circle (3);
   \shade[inner color=black!20, outer color=black!40] (0,2.5) circle (3);
  \end{scope}
  
  \draw[black, thick] (0,0) circle (3);
  
  \draw[black, thick] (-3,0) to[out=-90, in=-90] node[below]{\small$\mathbb{B}$} node[above]{\small$\mathbb{A}$} (3,0);
  \draw[black, thick, dashed] (-3,0) to[out=90, in=90] (3,0);
  
 \end{tikzpicture}
\end{figure}

Using the \emph{globalized} partition function~\eqref{A_disk_YM_part_funct} for the $\mathbb{A}$-disk and the \emph{non-globalized} answer~\eqref{B_disk_eff_act} for the $\mathbb{B}$-disk we get a \emph{partially globalized} partition function for the sphere:
\beq\label{sphere_part_funct}
 Z_{\mathrm{YM}}^{S^2}[\mathsf{a},\mathsf{b}] &= \rho_{\mathcal{V}} \int \mathrm{d}\mathbb{A}\,\mathrm{d}\mathbb{B}~
  \mathrm{e}^{-\frac{\mathrm{i}}{\hbar} \int_{S^1} \langle \mathbb{B}, \mathbb{A} - \mathsf{a} \rangle 
   + \frac{\mathrm{i}}{2\hbar} \int_{D_{\mathbb{B}}} \langle \mathsf{b}, [\mathsf{a},\mathsf{a}] \rangle} ~Z_{\mathrm{YM}}^{D}[\mathbb{A}]
   = \rho_{\mathcal{V}} \cdot \mathrm{e}^{\frac{\mathrm{i}}{2\hbar} \int_{D_{\mathbb{B}}} \langle \mathsf{b}, [\mathsf{a},\mathsf{a}] \rangle} ~ Z_{\mathrm{YM}}^{D}[U(\mathbb{A})=\mathbb{I}]\\
 &= \rho_{\mathcal{V}} \cdot \mathrm{e}^{\frac{\mathrm{i}}{2\hbar} \int \langle \mathsf{b}, [\mathsf{a},\mathsf{a}] \rangle} ~
  \sum_R (\mathrm{dim}~R)^2 \; \mathrm{e}^{-\frac{\mathrm{i}\hbar a}{2}C_2(R)}~.
\eeq
Here $\rho_{\mathcal{V}}=(-\mathrm{i}\hbar)^{\dim\mathfrak{g}} \, D^{\frac12}\mathsf{a} ~ D^{\frac12}\mathsf{b}$ is the reference half-density on $\mathbb{B}$-disk zero-modes.
We immediately notice that this \emph{partially-globalized} answer consists of the product of a function of the zero-modes times the non-perturbative Migdal-Witten partition function for the sphere.
Moreover, the partition function~\eqref{sphere_part_funct} \emph{does not} produce well-defined global answers by integrating out the zero modes.

Similarly, trying to calculate the globalized partition function for the torus by gluing a YM cylinder in $\mathbb{A}$-$\mathbb{A}$~polarization~\eqref{YM_AA_cyl_pert_funct} with a cylinder in $\mathbb{B}$-$\mathbb{B}$~polarization~\eqref{BB_cyl_part_funct}, one obtains an ill-defined answer.

\begin{remark}
 We remark that the form of the perturbative answer here --~as the non-perturbative (number-valued) answer times the exponential of a cubic term in zero-modes~-- is similar to the form of the perturbative result for Chern-Simons theory in BV formalism on a rational homology 3-sphere~\cite{cm:remarks_on_CS}:
 \[
  Z_\mathrm{CS} = \rho_{\mathcal{V}}\cdot \mathrm{e}^{\frac{\mathrm{i}}{2\hbar} \langle \mathsf{a}^{(3)}, [\mathsf{a}^{(0)}, \mathsf{a}^{(0)}] \rangle}
   \cdot \mathrm{e}^{\frac{\mathrm{i}}{\hbar}\zeta(\hbar)}~.
 \]
 Here $\zeta(\hbar)$ is the sum of contributions of connected 3-valent graphs without leaves.
\end{remark}

\begin{remark}\label{ineq.g.f.}
 The gluing construction of Section~\ref{corners_and_2d_YM} (gluing along edges rather than circles) produces a well-defined globalized answer for all surfaces -- including the cylinder, the sphere and the torus -- coinciding with the non-perturbative answer in case of surfaces with boundary in $\mathbb{A}$-polarization~\eqref{[Z]=Z_non_pert}. 
 In particular, the gluing construction of Section~\ref{corners_and_2d_YM} produces the answer for the sphere as in~\eqref{sphere_part_funct} but without the zero-mode factor. 
 This discrepancy is due to inequivalence of gauge-fixings used in the two approaches.
\end{remark}

\subsection{Results summary}

We collect here the results for the partition functions obtained in this section.
As we discussed above, some of these partition function (which we mark with the symbol \warning) do not coincide with the non-perturbative results because of inequivalent gauge-fixing.
The main result of this section is the $\mathbb{A}$-disk partition function with finite area.

\begin{tabularx}{\textwidth}{c|X|c}
  \begin{tikzpicture}[scale=.14, baseline=-4pt] 
    \coordinate (x) at (0,0);
    \node (z0) at (0,5.2) {};
    \node (z1) at (0,-5.2) {};
   
    \coordinate (x0) at (-6,4);
    \coordinate (x1) at (-6,-4);
    \coordinate (y0) at (6,4);
    \coordinate (y1) at (6,-4);

    \filldraw[fill=black!5, thick, draw=black] (x0) to[out=180, in=180] node[right]{\footnotesize$\mathbb{A}$} (x1) to[out=0, in=0] (x0);
    
    \filldraw[fill=black!40, thick, draw=black] (x0) to[out=0, in=0] (x1) to[out=0, in=180] (y1) to[out=0, in=0]node[left]{\footnotesize$\mathbb{B}$} (y0) 
      to[out=180, in=0] (x0);
    
    \draw[dotted, thick, draw=black] (y0) to[out=180, in=180] (y1);
   \end{tikzpicture} 
  &	$Z[\mathbb{A},\mathbb{B}] = \exp \frac{\mathrm{i}}{\hbar} \Big( \int_{\partial_\mathbb{A}\Sigma} \langle p^*\mathbb{B},\mathbb{A}\rangle
   +\frac{1}{2} \int_{\partial_\mathbb{B}\Sigma}p_*\mu \; \mathrm{tr} \mathbb{B}^2 \Big)$		
      & \Large\checkmark\\
  \begin{tikzpicture}[scale=.14, baseline=-4pt] 
    \coordinate (x) at (0,0);
    \node (z0) at (0,5.2) {};
    \node (z1) at (0,-5.2) {};
   
    \coordinate (x0) at (-6,4);
    \coordinate (x1) at (-6,-4);
    \coordinate (y0) at (6,4);
    \coordinate (y1) at (6,-4);

    \filldraw[fill=black!5, thick, draw=black] (x0) to[out=180, in=180] node[right]{\footnotesize$\mathbb{B}$} (x1) to[out=0, in=0] (x0);
    
    \filldraw[fill=black!15, thick, draw=black] (x0) to[out=0, in=0] (x1) to[out=0, in=180] (y1) to[out=0, in=0]node[left]{\footnotesize$\widetilde{\mathbb{B}}$} (y0) 
      to[out=180, in=0] (x0);
    
    \draw[dotted, thick, draw=black] (y0) to[out=180, in=180] (y1);
   \end{tikzpicture}
  &	$Z[\mathbb{B},\widetilde{\mathbb{B}},{\mathsf{a}},{\mathsf{b}}] = \mathrm{e}^{\frac{\mathrm{i}}{2\hbar}\int_{I\times S^1} \langle \mathsf{b}, [\mathsf{a},\mathsf{a}] \rangle  + \frac{\mathrm{i}}{\hbar}\int_{S^1} \langle \mathbb{B}-\widetilde{\mathbb{B}},{\mathsf{a}} \rangle } \,
    \mathrm{det}  \bigg(\frac{\sinh\big(\mathrm{ad}_{\mathsf{a}_1} /2\big)}{\mathrm{ad}_{\mathsf{a}_1} /2}\bigg) \cdot \rho_{\mathcal{V}}$	
      & \Large\checkmark\\
  \begin{tikzpicture}[scale=.14, baseline=-4pt] 
    \coordinate (x) at (0,0);
    \node (z0) at (0,5.2) {};
    \node (z1) at (0,-5.2) {};
   
    \coordinate (x0) at (-6,4);
    \coordinate (x1) at (-6,-4);
    \coordinate (y0) at (6,4);
    \coordinate (y1) at (6,-4);

    \filldraw[fill=black!5, thick, draw=black] (x0) to[out=180, in=180] node[right]{\footnotesize$\mathbb{A}$} (x1) to[out=0, in=0] (x0);
    
    \filldraw[fill=black!15, thick, draw=black] (x0) to[out=0, in=0] (x1) to[out=0, in=180] (y1) to[out=0, in=0]node[left]{\footnotesize$\widetilde{\mathbb{A}}$} (y0) 
      to[out=180, in=0] (x0);
    
    \draw[dotted, thick, draw=black] (y0) to[out=180, in=180] (y1);
   \end{tikzpicture}
  &	$Z[{\mathbb{A}},\widetilde{{\mathbb{A}}}] = 
   \Big(\frac{\mathrm{i}}{\hbar}\Big)^{\dim\mathfrak{g}} \delta(\mathbb{A}_p, \mathbb{\widetilde{A}}_{\tilde{p}}) \cdot 
    \delta_G( U(\mathbb{A}), U(\mathbb{\widetilde{A}}))$	
      & \Large\warning\\[15pt]
   \begin{tikzpicture}[scale=.20, baseline=-4pt]
  
    \draw[fill=black!15, thick] (0,0) circle (3);   
     
    \node[left] (b) at (0:3) {\footnotesize$\mathbb{B}$};

   \end{tikzpicture}
  &	$Z[\mathbb{B},\mathsf{a},\mathsf{b}] = \exp\frac{\mathrm{i}}{\hbar}\bigg(-\int_{S^1} \langle \mathbb{B},\mathsf{a}\rangle 
   + \int_D \frac{1}{2}\langle\mathsf{b},[\mathsf{a},\mathsf{a}]\rangle\bigg) \cdot \rho_{\mathcal{V}}$	
      & \Large\checkmark\\[15pt]
  \begin{tikzpicture}[scale=.20, baseline=-4pt]
  
    \draw[fill=black!15, thick] (0,0) circle (3);   
     
    \node[left] (b) at (0:3) {\footnotesize$\mathbb{A}$};

   \end{tikzpicture}
  &	$Z[\mathbb{A}]=\delta_G(U(\mathbb{A}),\mathbb{I})$	
      & \Large\checkmark\\[15pt]
   \begin{tikzpicture}[scale=.20, baseline=-4pt]
  
    \draw[fill=black!40, thick] (0,0) circle (3);   
     
    \node[left] (b) at (0:3) {\footnotesize$\mathbb{A}$};

   \end{tikzpicture}
  &	$Z[\mathbb{A}]= \sum_R (\mathrm{dim}~R) \; \chi_R(U(\mathbb{A})) \;
   \mathrm{e}^{-\frac{\mathrm{i}\hbar a}{2}C_2(R)}$		
      &\Large\checkmark\\
  \begin{tikzpicture}[scale=.14, baseline=-4pt] 
    \coordinate (x) at (0,0);
    \node (z0) at (0,5.2) {};
    \node (z1) at (0,-5.2) {};
   
    \coordinate (x0) at (-6,4);
    \coordinate (x1) at (-6,-4);
    \coordinate (y0) at (6,4);
    \coordinate (y1) at (6,-4);

    \filldraw[fill=black!5, thick, draw=black] (x0) to[out=180, in=180] node[right]{\footnotesize$\mathbb{A}$} (x1) to[out=0, in=0] (x0);
    
    \filldraw[fill=black!40, thick, draw=black] (x0) to[out=0, in=0] (x1) to[out=0, in=180] (y1) to[out=0, in=0]node[left]{\footnotesize$\mathbb{A}'$} (y0) 
      to[out=180, in=0] (x0);
    
    \draw[dotted, thick, draw=black] (y0) to[out=180, in=180] (y1);
   \end{tikzpicture}
  &	$\!\begin{aligned}Z[\mathbb{A},\mathbb{A}'] =\Big(\frac{\mathrm{i}}{\hbar}\Big)^{\dim \mathfrak{g}} &\delta(\mathbb{A}'_{p'},{\mathbb{A}}_{{p}})~ \\
	  &\cdot\sum_R (\mathrm{dim}~R)~\chi_R\big(U^{-1}(\mathbb{A}')\cdot U(\mathbb{A})\big) \; \mathrm{e}^{-\frac{\mathrm{i}\hbar a}{2}C_2(R)}\end{aligned}$
      & \Large\warning\\
  \begin{tikzpicture}[scale=.25, baseline=-4pt]
 
  \filldraw[black!15] (0,0) circle (3);
  
  \begin{scope}
   \clip (-3,0) to[out=-90, in=-90] (3,0) to(3,4) to (-3,4);
   \clip (0,0) circle (3);
   \filldraw[fill=black!40] (0,0) circle (3);
   \shade[inner color=black!20, outer color=black!40] (0,2.5) circle (3);
  \end{scope}
  
  \draw[black, thick] (0,0) circle (3);
  
  \draw[black, thick] (-3,0) to[out=-90, in=-90]  (3,0);
  \draw[black, thick, dashed] (-3,0) to[out=90, in=90] (3,0);
  
 \end{tikzpicture}
 &	$Z[\mathsf{a},\mathsf{b}] = \rho_{\mathcal{V}} \cdot \mathrm{e}^{\frac{\mathrm{i}}{2\hbar} \int \langle \mathsf{b}, [\mathsf{a},\mathsf{a}] \rangle} ~\sum_R (\mathrm{dim}~R)^2 \; \mathrm{e}^{-\frac{\mathrm{i}\hbar a}{2}C_2(R)}$
      & \Large\warning\\[20pt]
  \begin{tikzpicture}[scale=.4, baseline=(x.base)]
	\draw[thick, fill=black!40] (0,0) ellipse  (100pt and 50pt);	
	\draw[thick, color=black] (-1.2,.1) to[bend right] (1.2,.1);
	\draw[thick, fill=white] (-1,0) to[bend left] (1,0)  to[bend left] (-1,0);

	\draw[color=black, thick] (-.95,-.855-47pt) to[out=180, in=180] (-.95,-.855-1pt);
	\draw[color=black, thick, dotted] (-.95,-.855-47pt) to[out=0, in=0] (-.95,-.855-1pt);
	
	\begin{scope}[shift={(1.9,1.75)}]
	 \draw[color=black, thick] (-.95,-.855-47pt) to[out=180, in=180] (-.95,-.855-1pt);
	 \draw[color=black, thick, dotted] (-.95,-.855-47pt) to[out=0, in=0] (-.95,-.855-1pt);
	\end{scope}

  \end{tikzpicture}
 &	Ill-defined global partition function
      & \Large\warning
\end{tabularx}

\section{2D Yang-Mills for general surfaces with boundaries and corners}\label{corners_and_2d_YM}

To be able to compute the partition function of 2D YM for general surfaces, we need to also consider corners, i.e. codimension 2 strata -- marked points on the boundary.
In topology surfaces can be described as collections of polygons modulo an equivalence relation which identifies pairs of edges.
The idea is to transport this description to the level of field theory: if we can compute the partition function on polygons with arbitrary combinations of polarizations associated to the edges, then we can recover the partition function on surfaces with boundary by gluing pairs of edges with transversal polarizations.

In this section we will formulate a set of rules for corners, dictated by the logic of the path integral and find a set of building blocks that generates, under gluing, 2D YM on all manifolds with boundaries and corners.
We will then discuss the mQME in presence of corners and compute the partition function of the various building blocks.
Finally, we will use the results of this analysis to prove a gluing formula in presence of corners and compute the 2D YM partition function on a generic surface with boundary, recovering the well known non-perturbative solution.

\subsection{Corners and building blocks for 2D YM}\label{corners_and_building_blocks}

The partition function is an element of the space of boundary states, which are defined by the data of a choice of polarization on the boundary; this choice reflects on the (fluctuations of the) bulk fields by imposing boundary conditions.
In the presence of corners dividing two arcs with different polarizations, we have to consider mixed boundary conditions for the bulk fields.
More generally we can associate a polarization also to corners, inducing boundary conditions for all adjacent bulk or boundary fields.
In this case, corners can be considered as collapsed arcs, with associated polarization the same as the corner they represent, but carrying only some of the boundary fields, namely the ones pulled back from the corner (i.e. constant zero-forms).

Notice that the presence of a corner with the same polarization as one of the adjacent edges has no effect on the partition function (but could require modifications of~$\Omega$\,: cf.~section~\ref{Corners_space_of_states}).
For example taking a corner with the same polarization of both adjacent edges simply means that we are formally splitting the boundary field into two concatenating fields, but this doesn't change the boundary conditions for the bulk fields: 
\beq\label{corner_rules_1}
  Z\bigg(
   \begin{tikzpicture}[scale=.21, baseline=(x.base)]
    \coordinate (x) at (0,-.5);
    \coordinate (x1) at (60:3);
    \coordinate (x2) at (0:3);
    \coordinate (x3) at (-60:3);
        
    \filldraw[fill=black!15, color=black!15] (0,0) circle (3);
    
    \draw[black, thick] (x1) arc (60:-60:3);    
        
    \draw[black, thick] (0:2.7) to (0:3.3);
    
    \node at (30:4) {$\mathbb{A}$};
    \node at (0:4) {$\alpha$};
    \node at (-30:4) {$\mathbb{B}$};
    
   \end{tikzpicture}
  \bigg)
  &\simeq 
  Z\bigg(
   \begin{tikzpicture}[scale=.2, baseline=(x.base)]
    \coordinate (x) at (0,-.5);
    \coordinate (x1) at (60:3);
    \coordinate (x2) at (0:3);
    \coordinate (x3) at (-60:3);
    
    \filldraw[fill=black!15, color=black!15] (0,0) circle (3);
    
    \draw[black, thick] (x1) arc (60:-60:3);    
    
    \draw[black, thick] (0:2.7) to (0:3.3);
    
    \draw[black, thick] (0:2.7) to (0:3.3);
    
    \node at (30:4) {$\mathbb{A}$};
    \node at (0:4) {$\beta$};
    \node at (-30:4) {$\mathbb{B}$};
    
   \end{tikzpicture}
  \bigg)~;
  \\
  Z\bigg(
  \begin{tikzpicture}[scale=.2, baseline=(x.base)]
    \coordinate (x) at (0,-.5);
    \coordinate (x1) at (60:3);
    \coordinate (x2) at (0:3);
    \coordinate (x3) at (-60:3);
    
    \filldraw[fill=black!15, color=black!15] (0,0) circle (3);
    
    \draw[black, thick] (x1) arc (60:-60:3);    
    
    \draw[black, thick] (0:2.7) to (0:3.3);
    
    \node at (30:4) {$\mathbb{A}$};
    \node at (0:4) {$\alpha$};
    \node at (-30:4) {$\mathbb{A}$};
        
   \end{tikzpicture}
   \bigg)
   \simeq
   Z\bigg(
   \begin{tikzpicture}[scale=.2, baseline=(x.base)]
    \coordinate (x) at (0,-.5);
    \coordinate (x1) at (60:3);
    \coordinate (x2) at (0:3);
    \coordinate (x3) at (-60:3);
    
    \filldraw[fill=black!15, color=black!15] (0,0) circle (3);
    
    \draw[black, thick] (x1) arc (60:-60:3);    
    
    \node at (0:3.9) {$\mathbb{A}$};

   \end{tikzpicture}
   \bigg)
  ~;&\qquad
   Z\bigg(
   \begin{tikzpicture}[scale=.2, baseline=(x.base)]
    \coordinate (x) at (0,-.50);
    \coordinate (x1) at (60:3);
    \coordinate (x2) at (0:3);
    \coordinate (x3) at (-60:3);
    
    \filldraw[fill=black!15, color=black!15] (0,0) circle (3);
    
    \draw[black, thick] (x1) arc (60:-60:3);    
    
    \draw[black, thick] (0:2.7) to (0:3.3);
        
    \node at (30:4) {$\mathbb{B}$};
    \node at (0:4) {$\beta$};
    \node at (-30:4) {$\mathbb{B}$};
    
   \end{tikzpicture}
   \bigg)
   \simeq
   Z\bigg(
   \begin{tikzpicture}[scale=.2, baseline=(x.base)]
    \coordinate (x) at (0,-.50);
    \coordinate (x1) at (60:3);
    \coordinate (x2) at (0:3);
    \coordinate (x3) at (-60:3);
    
    \filldraw[fill=black!15, color=black!15] (0,0) circle (3);
    
    \draw[black, thick] (x1) arc (60:-60:3);    
    
    \node at (0:3.9) {$\mathbb{B}$};

   \end{tikzpicture}
   \bigg)~.
\eeq
Here $\simeq$ means equality under appropriate identification of the boundary data.

Moreover, by ``freeing'' the bulk fields from the boundary conditions imposed by a corner, i.e. upon integrating over all possible values of corner fields, the partition function of the surface without that corner is recovered:
\beq\label{corner_rules_2}
   \int \mathfrak{D}\beta ~Z\bigg(
   \begin{tikzpicture}[scale=.2, baseline=(x.base)]
    \coordinate (x) at (0,-.5);
    \coordinate (x1) at (60:3);
    \coordinate (x2) at (0:3);
    \coordinate (x3) at (-60:3);
    
    \filldraw[fill=black!15, color=black!15] (0,0) circle (3);
    
    \draw[black, thick] (x1) arc (60:-60:3);    
    
    \draw[black, thick] (0:2.7) to (0:3.3);
    
    \node at (30:4) {$\mathbb{A}$};
    \node at (0:4) {$\beta$};
    \node at (-30:4) {$\mathbb{A}$};

   \end{tikzpicture}
   \bigg) ~=~ Z\bigg(
   \begin{tikzpicture}[scale=.2, baseline=(x.base)]
    \coordinate (x) at (0,-.5);
    \coordinate (x1) at (60:3);
    \coordinate (x2) at (0:3);
    \coordinate (x3) at (-60:3);
    
    \filldraw[fill=black!15, color=black!15] (0,0) circle (3);
    
    \draw[black, thick] (x1) arc (60:-60:3);    
    
    \node at (0:3.9) {$\mathbb{A}$}; 
   \end{tikzpicture}
   \bigg)
   ~;\qquad
   \int \mathfrak{D}\alpha ~Z\bigg(
   \begin{tikzpicture}[scale=.2, baseline=(x.base)]
    \coordinate (x) at (0,-.5);
    \coordinate (x1) at (60:3);
    \coordinate (x2) at (0:3);
    \coordinate (x3) at (-60:3);
    
    \filldraw[fill=black!15, color=black!15] (0,0) circle (3);
    
    \draw[black, thick] (x1) arc (60:-60:3);    
    
    \draw[black, thick] (0:2.7) to (0:3.3);
    
    \node at (30:4) {$\mathbb{B}$};
    \node at (0:4) {$\alpha$};
    \node at (-30:4) {$\mathbb{B}$};
        
   \end{tikzpicture}
   \bigg) ~=~ Z\bigg(
   \begin{tikzpicture}[scale=.2, baseline=(x.base)]
    \coordinate (x) at (0,-.5);
    \coordinate (x1) at (60:3);
    \coordinate (x2) at (0:3);
    \coordinate (x3) at (-60:3);
    
    \filldraw[fill=black!15, color=black!15] (0,0) circle (3);
    
    \draw[black, thick] (x1) arc (60:-60:3);    
    
    \node at (0:3.9) {$\mathbb{B}$}; 
   \end{tikzpicture}
   \bigg)~.
\eeq
The gluing of two arcs is analogous to the case without corners, with the only additional condition that the fields on those common corners that
will be identified in the gluing are required to coincide:
\beq\label{gluing_formula}
   \int \mathfrak{D}(\mathbb{A},\mathbb{B})~ \mathrm{e}^{-\frac{\mathrm{i}}{\hbar}\int\langle \mathbb{B},\mathbb{A}\rangle}~Z\bigg(
   \begin{tikzpicture}[scale=.2, baseline=(x.base)]
    \coordinate (x) at (0,-.5);
    \coordinate (x1) at (50:3);
    \coordinate (x2) at (0:3);
    \coordinate (x3) at (-50:3);
    
    \filldraw[fill=black!15, color=black!15] (0,0) circle (3);
    
    \draw[black, thick] (x1) arc (50:-50:3);    
    
    \draw[black, thick] (50:2.7) to (50:3.3);
    \draw[black, thick] (-50:2.7) to (-50:3.3);
    
    \node at (0:4) {$\mathbb{A}$};
    
   \end{tikzpicture}
   \bigg)~Z\bigg(
   \begin{tikzpicture}[scale=.2, baseline=(x.base)]
    \coordinate (x) at (0,-.5);
    \coordinate (x1) at (130:3);
    \coordinate (x2) at (0:3);
    \coordinate (x3) at (-50:3);
    
    \filldraw[fill=black!15, color=black!15] (0,0) circle (3);
    
    \draw[black, thick] (x1) arc (130:230:3);    
    
    \draw[black, thick] (130:2.7) to (130:3.3);
    \draw[black, thick] (230:2.7) to (230:3.3);
    
    \node at (180:4) {$\mathbb{B}$};
    
   \end{tikzpicture}   \bigg)
   ~=~
   Z\bigg(
   \begin{tikzpicture}[scale=.2, baseline=(x.base)]
    \coordinate (x) at (0,-.5);
    \coordinate (x1) at (60:3);
    \coordinate (x2) at (0:3);
    \coordinate (x3) at (-60:3);
    
    \filldraw[fill=black!15, color=black!15] (0,0) circle (3);
    
    \draw[black, thick, dashed] (90:3) to (-90:3);    
    
    \draw[black, thick] (90:2.7) to (90:3.3);    
    \draw[black, thick] (-90:2.7) to (-90:3.3);    

   \end{tikzpicture}   
   \bigg)~.
\eeq
Statements above are our working axioms extending the BV-BFV setup to corners and will be tested with explicit computations in the following sections.

\begin{remark}
 Let the surface $\Sigma$ be the result of gluing of surfaces $\Sigma_1$ and $\Sigma_2$ along an interval~$I$, as above. 
 Assume that the partition functions for $\Sigma_1$, $\Sigma_2$ are computed perturbatively, using the propagators $\eta_1,\eta_2$. 
 Then the gluing formula~\eqref{gluing_formula} above yields the partition function for the glued surface $\Sigma$ computed using the ``glued propagator'' $\eta=\eta_1*\eta_2$ on~$\Sigma$, constructed as follows:
 \begin{itemize}
  \item For $x,y\in \Sigma_1$, $\eta(x,y)=\eta_1(x,y)$.
  \item For $x,y\in \Sigma_2$, $\eta(x,y)=\eta_2(x,y)$.
  \item For $x\in \Sigma_2$, $y\in \Sigma_1$, $\eta(x,y)=0$.
  \item For $x\in \Sigma_1$, $y\in \Sigma_2$, we have 
  \begin{equation}
   \eta(x,y)=\int_{I\ni z} \eta_1(x,z) \eta_2(z,y)
  \end{equation}
 \end{itemize}
 \[
  \begin{tikzpicture}[scale=.33, baseline=(x.base)]
    
    \begin{scope}[xscale=1.5]
     \filldraw[color=black!15] (0,0) circle (3);
     \node at (105:2.2) {\small$\mathbb{A}$};
     \node at (-75:2.2) {\small$\mathbb{B}$};
     
     \node[above right] at (0,.5) {\small$z$};
     \draw[decoration={markings, mark=at position 0.5 with {\arrow{>}}}, postaction={decorate},color=black, thick]
      (225:1.8) to (0,.5);
     \node at (225:2.2) {\small$x$};
     \draw[decoration={markings, mark=at position 0.5 with {\arrow{>}}}, postaction={decorate},color=black, thick]
      (0,.5) to ($(0,.5)+(-20:1.8)$);
     \node at ($(0,.5)+(-20:2.2)$) {\small$y$};
     
     \draw[black, dashed] (90:3) to (-90:3);  
     
     \node[below] at (0,-3) {\small$I$};
     \node[below right] at (-45:3) {\small$\Sigma_2$};
     \node[below left] at (225:3) {\small$\Sigma_1$};
     
    \end{scope}
    
  \end{tikzpicture}
 \]
 This is precisely the gluing construction for propagators from~\cite{CMR:pert_quantum_BV}, which turns out to work also in the setting with corners.
\end{remark}

Assuming this set of rules for the corners, we have the following set of building blocks for 2D YM, as illustrated in figure~\ref{building_blocks}.
The disk in the $\mathbb{A}$~polarization was already computed in section~\ref{YM_Disk_in_A_polarization} and, using equation~\eqref{corner_rules_1}, it is equivalent to a polygon with an arbitrary number of edges where all the edges and the corners are in $\mathbb{A}$-polarization.
To change polarization of one of its edges, we can glue to it the BF disk with two corners in the $\alpha$~polarization and two edges in $\mathbb{B}$~polarization.
The last BF disk of figure~\ref{building_blocks}, with only one $\mathbb{A}$-edge and one corner in the opposite polarization, can be then used in combination with the other building blocks to change the polarization of one corner (figure~\ref{changing corner}).
In this way we can obtain a polygon with any number of edges and with any combination of polarizations associated to edges and corners; thus we can also obtain the partition function for any given surface with boundary (and corners).

\begin{figure}[h]
 \centering
 \begin{subfigure}[b]{0.2\textwidth}
  \begin{tikzpicture}[scale=.23, baseline=(x.base)]
    
    \filldraw[fill=black!40, thick] (0,0) circle (3);
    
    \node at (150:4) {$\mathbb{A}$};

    \node at (0,0) {YM};
    \node at (180:4.8) {};
    \node at (0:4.8) {};
    \node at (90:4.8) {};
    \node at (-90:4.8) {};
        
   \end{tikzpicture}
  \end{subfigure}
  \begin{subfigure}[b]{0.2\textwidth}
   \begin{tikzpicture}[scale=.23, baseline=(x.base)]
    
    \filldraw[fill=black!15, thick] (0,0) circle (3);
    
    \draw[black, thick] (90:2.7) to (90:3.3);    
    \draw[black, thick] (-90:2.7) to (-90:3.3);
    
    \node at (0:4) {$\mathbb{B}$};
    \node at (180:4) {$\mathbb{B}$};
    
    \node at (90:4) {$\alpha$};
    \node at (-90:4) {$\alpha$};
    
    \node at (0,0) {BF};
    
   \end{tikzpicture}
  \end{subfigure}

  \begin{subfigure}[b]{0.2\textwidth}
   \begin{tikzpicture}[scale=.23, baseline=(x.base)]
    
    \filldraw[fill=black!15, thick] (0,0) circle (3);
    
    \draw[black, thick] (0:2.7) to (0:3.3);    
    
    \node at (150:4) {$\mathbb{A}$};
    
    \node at (0:4) {$\beta$};
    
    \node at (0,0) {BF};
    
   \end{tikzpicture}
  \end{subfigure}
 \caption{Building blocks for 2D YM with corners.}
 \label{building_blocks}
\end{figure}
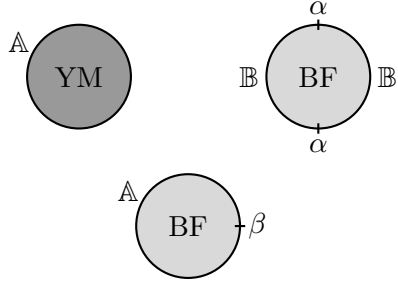

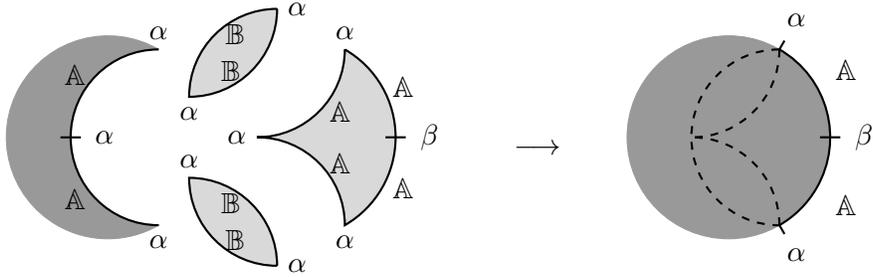
\begin{figure}[h]
 \centering
 \[
  \begin{tikzpicture}[scale=.45, baseline=(x.base)]
  
  \coordinate (x) at (0,-.5);
    \coordinate (x1) at (60:3);
    \coordinate (x2) at (0:3);
    \coordinate (x3) at (-60:3);
    
    \filldraw[fill=black!40, draw= black!40] (60:3) arc (60:300:3) arc (-90:-270:2.6); 
    \draw[black, thick] (-60:3) node[below] {$\alpha$} arc (-90:-180:2.6) node[midway, left]{$\mathbb{A}$};
    \draw[black, thick] ($(60:3)+(180:2.6)-(0,2.6)$) arc (-180:-270:2.6) node[midway, left]{$\mathbb{A}$} node[above] {$\alpha$}; 
    
    \node at ($(60:3)+(180:2.6)-(0,2.6)+(1,0)$) {$\alpha$};
    \draw[black, thick] ($(60:3)+(180:2.6)-(0,2.6)+(-.3,0)$) to ($(60:3)+(180:2.6)-(0,2.6)+(.3,0)$);

    \begin{scope}[shift={(3.5,-1.2)}]
     \filldraw[fill=black!15, draw= black, thick] (-60:3) arc (0:90:2.6) node[midway, left]{$\mathbb{B}$} node[above] {$\alpha$} arc (180:270:2.6) node[midway, right]{$\mathbb{B}$} node[right] {$\alpha$}; 
    \end{scope}

    \begin{scope}[shift={(3.5,1.2)}]
     \filldraw[fill=black!15, draw= black, thick] (60:3) arc (90:180:2.6) node[midway, right]{$\mathbb{B}$} node[below] {$\alpha$} arc (-90:0:2.6) node[midway, left]{$\mathbb{B}$} node[right] {$\alpha$};     
    \end{scope}

    \begin{scope}[shift={(5.5,0)}]
     \filldraw[fill=black!15, draw= black, thick] (60:3) node[above] {$\alpha$} arc (60:0:3) node[midway, right]{$\mathbb{A}$}  arc (0:-60:3) node[midway, right]{$\mathbb{A}$}  node[below] {$\alpha$} arc (0:90:2.6) node[midway, right]{$\mathbb{A}$}  node[left] {$\alpha$} arc (-90:0:2.6) node[midway, right]{$\mathbb{A}$};     
     \node at (0:4) {$\beta$};
     \draw[black, thick] (0:2.7) to (0:3.3);   
    \end{scope}
    
  \end{tikzpicture}
  \qquad\longrightarrow\qquad
  \begin{tikzpicture}[scale=.45, baseline=(x.base)]
    \coordinate (x) at (0,-.5);
    \coordinate (x1) at (60:3);
    \coordinate (x2) at (0:3);
    \coordinate (x3) at (-60:3);
    
    \filldraw[fill=black!40, draw=black!40] (0,0) circle (3);
    
    \draw[black, thick] (x1) arc (60:-60:3);    
    
    \draw[black, thick] (60:3) to (60:3.3);
    \draw[black, thick] (-60:3) to (-60:3.3);
    
    \draw[black, thick] (0:2.7) to (0:3.3);    

    \node at (30:4) {$\mathbb{A}$};
    \node at (-30:4) {$\mathbb{A}$};
    \node at (0:4) {$\beta$};
    \node at (60:4) {$\alpha$};
    \node at (-60:4) {$\alpha$};
    
    \draw[black, thick, dashed] (60:3) arc (90:270:2.6);    
    \draw[black, thick, dashed] (60:3) arc (0:-90:2.6);   
    \draw[black, thick, dashed] (-60:3) arc (0:90:2.6);   
    
  \end{tikzpicture}  
 \]
 \caption{The polarization on a corner can be changed by gluing. In this picture it is illustrated how to convert a corner in $\alpha$~polarization to a corner in $\beta$~polarization using the building blocks of figure~\ref{building_blocks}.}
 \label{changing corner}
\end{figure}

\subsection{Corners, spaces of states and the modified quantum master equation}\label{Corners_space_of_states}

We have two pictures for surfaces with boundary and corners.
\begin{enumerate}[I.]
 \item\label{pict_1} (\textbf{Non-polarized corners.}) Boundary circles are split into intervals by vertices (corners). 
 Each interval carries a polarization~$\mathbb{A}$ or~$\mathbb{B}$, corresponding to imposing the boundary condition on the pullback to the interval of the bulk field~$\mathsf{A}$ or~$\mathsf{B}$.  
 Corners do not carry a polarization.
 \item\label{pict_2} (\textbf{Polarized corners.}) In addition to the intervals carrying a polarization~$\mathbb{A}$ or~$\mathbb{B}$, each corner is also equipped with a polarization~$\alpha$ or~$\beta$ corresponding to prescribing the pullback of either~$\mathsf{A}$ or~$\mathsf{B}$ field to the corner.
\end{enumerate}

Picture~\ref{pict_2} is our main framework in this paper. 
One can transition from picture~\ref{pict_1} to picture~\ref{pict_2} by \emph{collapsing} every other arc on a circle (assuming that initially the number of arcs was even) into a vertex with the corresponding polarization, by the rule $\mathbb{A}\rightarrow \alpha$, $\mathbb{B}\rightarrow\beta$. 
One obtains the partition function~$Z_\mathrm{II}$ in the picture~\ref{pict_2} by evaluating the partition function $Z_\mathrm{I}$ of picture~\ref{pict_1} on constant $0$-form fields on the arcs that are being collapsed -- pullbacks of the corner fields to the arc. 
E.g., for a disk with the boundary split into $4$~arcs of alternating polarizations in picture~\ref{pict_1}, collapsing the $\mathbb{B}$-arcs into $\beta$-corners corresponds to the following:
\begin{equation}
 Z_\mathrm{II}(\mathbb{A}_1,\beta_1,\mathbb{A}_2,\beta_2;\mbox{ zero-modes})=\;\; Z_\mathrm{I}(\mathbb{A}_1,\mathbb{B}_1=\beta_1,\mathbb{A}_2,\mathbb{B}_2=\beta_2;\mbox{ zero-modes})~.
\end{equation}
\[
 \begin{tikzpicture}[scale=.4]
  \filldraw[color=black!15] (0,0) circle (3);
  
  \draw[black, thick] (45:3) arc (45:-45:3) node[midway, left=3pt]{\small$\mathbb{A}_1$};
  \draw[black, thick, dashed] (45:3) arc (45:135:3)node[midway, below=3pt]{\small$\mathbb{B}_1$};
  \draw[black, thick] (135:3) arc (135:225:3)node[midway, right=3pt]{\small$\mathbb{A}_2$};
  \draw[black, thick, dashed] (225:3) arc (225:315:3)node[midway, above=3pt]{\small$\mathbb{B}_2$} node[midway, below=3pt]{\textbf{I}};
  
  \filldraw[black] (45:3) circle (.15);
  \filldraw[black] (135:3) circle (.15);
  \filldraw[black] (225:3) circle (.15);
  \filldraw[black] (315:3) circle (.15);
  
  \draw[decoration={markings, mark=at position 1 with {\arrow{>}}},
        postaction={decorate},color=black, thick] (70:3.7) to[out=20, in=160] node[below]{\small{collapse}} ($(10,0)+(110:3.7)$);
  \draw[decoration={markings, mark=at position 1 with {\arrow{>}}},
        postaction={decorate},color=black, thick] (-70:3.7) to[out=-20, in=-160] node[above]{\small collapse} ($(10,0)+(-110:3.7)$);        
  
  \begin{scope}[shift={(10,0)}]
   \filldraw[color=black!15] (0,0) circle (3);
  
   \draw[black, thick] (45:3) arc (45:-45:3) node[midway, left=3pt]{\small$\mathbb{A}_1$};
   \draw[black, thick] (45:3) arc (45:135:3)node[midway, below=3pt]{\small$\beta_1$};
   \draw[black, thick] (135:3) arc (135:225:3)node[midway, right=3pt]{\small$\mathbb{A}_2$};
   \draw[black, thick] (225:3) arc (225:315:3)node[midway, above=3pt]{\small$\beta_2$} node[midway, below=3pt]{\textbf{II}};
  
   \filldraw[black] (-90:3) circle (.15);
   \filldraw[black] (90:3) circle (.15);
   
  \end{scope}

 \end{tikzpicture}
\]

\subsubsection{Picture I: non-polarized corners. Modified quantum master equation}

Consider a circle  (thought of as a boundary component of a surface~$\Sigma$) split by $n$~points $p_1,p_2,\ldots,p_{2m}=p_0$ (``corners'')  into intervals $I_1,I_2,\ldots,I_{2m}$ with $I_k=[p_{k-1},p_k]$. 
\[
 \begin{tikzpicture}[scale=.4]
  \draw[decoration={markings, mark=at position 0.5 with {\arrow{>}}},
        postaction={decorate},color=black, thick] (0:3) arc (0:60:3);
  \draw[decoration={markings, mark=at position 0.5 with {\arrow{>}}},
        postaction={decorate},color=black, thick] (60:3) arc (60:120:3);
  \draw[decoration={markings, mark=at position 0.5 with {\arrow{>}}},
        postaction={decorate},color=black, thick] (120:3) arc (120:180:3);
  \draw[decoration={markings, mark=at position 0.5 with {\arrow{>}}},
        postaction={decorate},color=black, thick] (180:3) arc (180:240:3);
  \draw[decoration={markings, mark=at position 0.5 with {\arrow{>}}},
        postaction={decorate},color=black, thick] (240:3) arc (240:300:3);
  \draw[decoration={markings, mark=at position 0.5 with {\arrow{>}}},
        postaction={decorate},color=black, thick] (300:3) arc (300:360:3);      
        
  \draw[color=black, thick, dotted] (280:2) arc (280:320:2);  
        
  \node[right] at (70:3.65) {\footnotesize$p_0=p_{2m}$};
  \node at (120:4) {\footnotesize$p_1$};
  \node at (180:4) {\footnotesize$p_2$};
  \node at (240:4) {};
  \node at (360:4.7) {\footnotesize$p_{2m-1}$};
    
  \node at (30:4) {\footnotesize$I_{2m}$};
  \node at (90:4) {\footnotesize$I_1$};
  \node at (150:4) {\footnotesize$I_2$};
  
  \node at (30:2.2) {\footnotesize$\mathbb{B}$};
  \node at (90:2) {\footnotesize$\mathbb{A}$};
  \node at (150:2) {\footnotesize$\mathbb{B}$};
  \node at (210:2) {\footnotesize$\mathbb{A}$};
  
  \filldraw[black, thick] (60:3) circle (.15);   
  \filldraw[black, thick] (120:3) circle (.15); 
  \filldraw[black, thick] (180:3) circle (.15); 
  \filldraw[black, thick] (240:3) circle (.15);  
  \filldraw[black, thick] (300:3) circle (.15);
  \filldraw[black, thick] (360:3) circle (.15); 

 \end{tikzpicture}
\]
Assume that we fix the $\mathbb{A}$-polarization on the intervals~$I_k$ with $k$~odd and the $\mathbb{B}$-polarization for $k$~even. 
We understand that we can, by a tautological transformation, further subdivide each $\mathbb{A}$- or $\mathbb{B}$-interval into several intervals carrying the same polarization. 
No polarization data is assigned to the corners~$p_k$ (this is our ``picture I'' for corners).

The BFV space of states~$\mathcal{H}$, associated to the circle with such a stratification and a choice of polarizations, is the space of complex-valued functions of the fields on the intervals:
\begin{equation}\label{H for stratified circle}
 \mathcal{H} = \Big\{ \mbox{functions }\Psi(\mathbb{A}|_{I_1},\mathbb{B}|_{I_2},\ldots, \mathbb{B}|_{I_{2m}}) \Big\} ~.
\end{equation}
The space of states is equipped with the BFV operator (which with an appropriate refinement becomes a differential, see Remark~\ref{rem: composite fields} below)
\begin{equation}\label{Omega_stratified_I}
 \Omega=\underbrace{\sum_{k~\mathrm{odd}} \Omega_{I_k}^{\mathbb{A}}+ \sum_{k~\mathrm{even}} \Omega_{I_k}^{\mathbb{B}}}_{\mathrm{edge~contributions}} 
  +  \underbrace{\sum_{k~\mathrm{odd}}  \Omega_{p_k}^{\mathbb{A}\mathbb{B}}+\sum_{k~\mathrm{even}}  
   \Omega_{p_k}^{\mathbb{B}\mathbb{A}}}_{\mathrm{corner~contributions}}~.
\end{equation}
Here the edge contributions from the intervals, depending on the polarization, are:
\begin{eqnarray}\label{Omega A}
 \Omega_I^{\mathbb{A}} & =&  \mathrm{i}\hbar\int_I \Big\langle \mathrm{d} \mathbb{A}+\frac12 [\mathbb{A},\mathbb{A}],
  \frac{\delta}{\delta \mathbb{A}} \Big\rangle~, \\ \label{Omega B}
 \Omega_I^{\mathbb{B}} & =& \int_I \mathrm{i}\hbar\, \Big\langle \mathrm{d}\mathbb{B}, \frac{\delta}{\delta \mathbb{B}} \Big\rangle 
  + (\mathrm{i}\hbar)^2  \Big\langle \mathbb{B} , \frac12\, \Big[ \frac{\delta}{\delta \mathbb{B}} , \frac{\delta}{\delta \mathbb{B}} \Big] \Big\rangle ~.
\end{eqnarray}
The corner contributions from the vertices~$p_k$ are the multiplication operators by the product of the limiting values of the $\mathbb{A}$-field and the $\mathbb{B}$-field coming from the incident arcs, with a sign depending on the order of the arcs relative to the orientation:
\begin{equation}\label{Omega corner picture I}
 \Omega_p^{\mathbb{A}\mathbb{B}} =-\langle\mathbb{B}_p,\mathbb{A}_p\rangle ~, \qquad 
  \Omega_p^{\mathbb{B}\mathbb{A}} = \langle \mathbb{B}_p, \mathbb{A}_p \rangle ~.
\end{equation}
These corner contributions to the boundary BFV operator~$\Omega$ and their necessity for the modified quantum master equation were observed by Alberto S. Cattaneo~\cite{cattaneo:private}.

The following is a refinement of Lemma~4.11 in~\cite{CMR:pert_quantum_BV} for a surface with boundary, with non-polarized corners allowed, in the case of 2D~Yang-Mills theory.

\begin{proposition}[mQME in picture I]\label{prop: mQME I}
 The BV-BFV partition function~$Z$ of 2D~Yang-Mills theory on a surface with boundary consisting of stratified circles decorated with a choice of $\mathbb{A},\mathbb{B}$ polarizations on the codimension~$1$ strata (and no polarization data on codimension~$2$ strata) satisfies the mQME
 \begin{equation}\label{mQME I}
  (\hbar^2\Delta+\Omega)Z=0~,
 \end{equation}
 where $\Omega$ is the sum of expressions~\eqref{Omega_stratified_I} for the stratified boundary circles.
\end{proposition}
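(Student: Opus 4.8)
The plan is to follow the strategy of Lemma~4.11 in \cite{CMR:pert_quantum_BV}, where the mQME for closed (uncorned) boundary circles is derived from Stokes' theorem on the compactified configuration spaces of bulk points on $\Sigma$, and to isolate the genuinely new ingredient coming from the codimension~2 corners. Writing $Z$ as the perturbative BV~pushforward $\int_{\mathcal{L}}e^{\frac{\mathrm{i}}{\hbar}\mathcal{S}^{\mathcal{P}}_\Sigma}$, each term in the expansion of $(\hbar^2\Delta+\Omega)Z$ is an integral over a compactified configuration space $\overline{C_n(\Sigma)}$ of a product of propagators and boundary-current insertions. First I would recall the standard dictionary: applying the de~Rham differential on $\overline{C_n(\Sigma)}$ to the Feynman integrand and using that the differential of the propagator~$\eta$ equals the diagonal delta-form minus the projector onto residual fields, Stokes' theorem expresses the interior-collision contributions (which assemble into $\hbar^2\Delta Z$) as a sum of integrals over the codimension~1 boundary faces of $\overline{C_n(\Sigma)}$. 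The faces on which a bulk vertex reaches an $\mathbb{A}$- or $\mathbb{B}$-arc assemble, via the mCME~\eqref{BF_mCME}, into the edge part $\sum_{k}\Omega^{\mathbb{A}}_{I_k}Z+\sum_{k}\Omega^{\mathbb{B}}_{I_k}Z$ of \eqref{Omega_stratified_I}.

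The decisive new point is that a surface with corners carries additional boundary faces in $\overline{C_n(\Sigma)}$: those on which bulk points approach a corner $p_k$. I would analyze the local contribution of such a face. Near a corner where an $\mathbb{A}$-arc meets a $\mathbb{B}$-arc, the relevant piece of $\mathcal{S}^{\mathcal{P}}_\Sigma$ is the polarization-adaptation term $-\int_{\partial_{\mathbb{B}}\Sigma}\langle\mathsf{B},\mathsf{A}\rangle$ of \eqref{gauge_transf.}: the boundary one-form $\alpha^{\mathcal{P}}_{\partial\Sigma}$ is the primitive $\langle\mathbb{B},\delta\mathbb{A}\rangle$ on the $\mathbb{A}$-arc and $-\langle\delta\mathbb{B},\mathbb{A}\rangle$ on the $\mathbb{B}$-arc, and these two primitives of $\omega_{\partial\Sigma}$ differ by $\delta\langle\mathbb{B},\mathbb{A}\rangle$. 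Evaluating this discontinuity at the corner point and tracking the orientation of the two incident arcs, I expect the corner face to contribute exactly the multiplication operator $\mp\langle\mathbb{B}_{p_k},\mathbb{A}_{p_k}\rangle$ of \eqref{Omega corner picture I}, with the sign fixed by whether the $\mathbb{A}$-arc precedes or follows the $\mathbb{B}$-arc along the boundary orientation. A form-degree count—the propagator being a $1$-form on each configuration-space edge, while the corner face has codimension set by the number of colliding points—should confirm that only the face with a single point limiting to the corner, carrying the two boundary currents $\mathbb{A}_{p_k}$ and $\mathbb{B}_{p_k}$, survives, so that higher corner degenerations drop out.

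Assembling the pieces, Stokes' theorem on $\overline{C_n(\Sigma)}$ gives $0=\int_{\overline{C_n(\Sigma)}}d(\cdots)=\sum_{\mathrm{faces}}\int_{\mathrm{face}}(\cdots)$, and the dictionary above identifies the interior-collision faces with $\hbar^2\Delta Z$, the arc faces with $\sum_k\Omega^{\mathbb{A}}_{I_k}Z+\sum_k\Omega^{\mathbb{B}}_{I_k}Z$, and the corner faces with $\sum_{k\,\mathrm{odd}}\Omega^{\mathbb{A}\mathbb{B}}_{p_k}Z+\sum_{k\,\mathrm{even}}\Omega^{\mathbb{B}\mathbb{A}}_{p_k}Z$. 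Summing over all diagrams and all boundary circles then yields $(\hbar^2\Delta+\Omega)Z=0$ with $\Omega$ the operator~\eqref{Omega_stratified_I}, which is precisely \eqref{mQME I}.

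The main obstacle I anticipate is the corner-stratum analysis itself: controlling the local behaviour of the (axial-gauge, or more generally metric) propagator as points approach a corner where the polarization jumps, extracting the leading $\langle\mathbb{B}_p,\mathbb{A}_p\rangle$ contribution, and pinning down all signs consistently with the boundary orientation. As an independent check I would exploit that both $\Delta$ and $\Omega$ are compatible with the gluing rule~\eqref{gluing_formula}: since every stratified surface is assembled from the building blocks of figure~\ref{building_blocks} by gluings of arcs, it suffices to verify \eqref{mQME I} on the building blocks—where $Z$ is known in closed form—and to check that the mQME is preserved under the gluing pairing, reducing the general statement to a finite, explicit computation.
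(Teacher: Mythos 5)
Your proposal is correct and follows essentially the same route as the paper's proof: Stokes' theorem on the compactified configuration spaces as in Lemma~4.11 of~\cite{CMR:pert_quantum_BV}, with bulk collapse faces cancelling via the classical master equation, arc faces yielding the edge terms of $\Omega$, and the new corner faces yielding the multiplication operators $\mp\langle\mathbb{B}_p,\mathbb{A}_p\rangle$. The only differences are cosmetic: the paper extracts the corner term from the explicit degenerating subgraph --- a single propagator joining one vertex on the $\mathbb{A}$-arc to one on the $\mathbb{B}$-arc, i.e.\ \emph{two} boundary points collapsing at $p$, not one --- rather than from the jump of the primitive $\alpha^{\mathcal{P}}_{\partial\Sigma}$ across the corner (an equivalent heuristic), and note that $\hbar^2\Delta Z$ arises from the projector part of $\mathrm{d}\eta$ on the left-hand side of Stokes' theorem rather than from the interior-collision faces, which cancel by the CME or vanish.
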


\begin{proof}[Sketch of proof]
 The proof follows the proof of Lemma~4.11 in~\cite{CMR:pert_quantum_BV} where we need to take care of collapses of point near a corner. 
 Let~$\Gamma$ be a Feynman graph for the partition function; its contribution to~$Z$ is $\int_{C_\Gamma}\omega_\Gamma$\,: the integral over the configuration space~$C_\Gamma$ -- where vertices of~$\Gamma$ are restricted to the respective strata of~$\Sigma$ (bulk, boundary arcs or corners) -- of~$\omega_\Gamma$, the differential form on~$C_\Gamma$, which is the product of propagators, boundary fields and zero-modes, as prescribed by the combinatorics of~$\Gamma$.%
 \footnote{
  A tacit assumption in this proof is that the propagator is a \textit{smooth} $1$-form on the configuration space of two points. 
  E.g, the ``metric propagator'' arising from Hodge theory satisfies this property. Singular propagators considered in this paper arise as limits of such smooth propagators.}   
 One considers the Stokes' theorem for configuration space integrals:
 \begin{equation}\label{mQME Stokes}
  \mathrm{i}\hbar \sum_\Gamma \int_{C_\Gamma} \mathrm{d}\omega_\Gamma = \mathrm{i}\hbar \sum_\Gamma \int_{\partial C_\Gamma} \omega_\Gamma~.
 \end{equation}
 On the left hand side, the terms with $\mathrm{d}$~acting on the propagators assemble into $\hbar^2\Delta Z$ and the terms with $\mathrm{d}$ acting on $\mathbb{A},\mathbb{B}$ fields assemble into $\Omega_0 Z$ where $\Omega_0 = \mathrm{i}\hbar\int_{\partial_\mathbb{A} \Sigma} \big\langle \mathrm{d}\mathbb{A}\,,\, \frac{\delta}{\delta\mathbb{A}} \big\rangle + \mathrm{i}\hbar\int_{\partial_\mathbb{B} \Sigma} \big\langle \mathrm{d}\mathbb{B}\, ,\, \frac{\delta}{\delta\mathbb{B}} \big\rangle$. 
 Here $\partial_{\mathbb{A}}\Sigma$ and $\partial_\mathbb{B}\Sigma$ are the parts of the boundary equipped with polarizations $\mathbb{A}$ and $\mathbb{B}$, respectively. 
 Thus, the l.h.s. of~\eqref{mQME Stokes} is~$(\hbar^2\Delta+\Omega_0)Z$. 
 The r.h.s. contains several types of terms, corresponding to types of boundary strata of~$C_\Gamma$:
 \begin{enumerate}[(i)]
  \item Collapses of $2$ points in the bulk -- cancel out when summed over graphs, due to the classical master equation satisfied by the BV action.
  \item Collapses of $\geq 3$ points in the bulk -- vanish by the standard vanishing arguments for hidden strata of the configuration spaces~\cite{konstevich:def_quant}.
  \item Collapses of one or more points at a point on a boundary arc. These contributions assemble into~$-\Omega_{1}Z$, where contributions to the differential operator $\Omega_{1}$ are given by the collapsed subgraphs.
  \item Collapses of several points at a corner -- they assemble into~$-\Omega_{2}Z$.
 \end{enumerate}
 Thus, one obtains the modified quantum master equation~\eqref{mQME I} with $\Omega=\Omega_0+\Omega_1+\Omega_2$. 
 Analyzing the possible contributing collapses at an arc yields two graphs contributing to~$\Omega_1$:
 \beq\label{Omega_I_edge_graphs}
  \begin{tikzpicture}[scale=.4,  baseline=(x.base)]
   \coordinate (x) at (0,.7);
   \begin{scope}
    \clip (-3, 0) rectangle (3, 3) ;
    \shade[inner color=black!15, outer color=black!2] (0,0) circle (3);
   \end{scope}
   \draw [thick] (-3,0) -- (3,0) node[midway, below=3pt]{\small$\mathbb{A}$};
   \filldraw[black] (-1.5,0) circle (.15);
   \filldraw[black] (1.5,0) circle (.15);
   \draw[decoration={markings, mark=at position 0.5 with {\arrow{<}}},
        postaction={decorate},color=black, thick] (-1.5,0) to (0,1.5);
   \draw[decoration={markings, mark=at position 0.5 with {\arrow{<}}},
        postaction={decorate},color=black, thick] (1.5,0) to (0,1.5);
   \draw[decoration={markings, mark=at position 0.45 with {\arrow{>}}},
        postaction={decorate},color=black, thick, dashed] (0,3.5) to (0,1.5);
   \draw[black, thick, dotted] (2.5,0) arc (0:180:2.5);
  \end{tikzpicture}
  \rightarrow \mathrm{i}\hbar \int_{\partial_\mathbb{A}\Sigma} \Big\langle\frac12  \big[\mathbb{A},\mathbb{A}\big]\, , \, \frac{\delta}{\delta \mathbb{A}} \Big\rangle~, \quad
  \begin{tikzpicture}[scale=.4,  baseline=(x.base)]
   \coordinate (x) at (0,.7);
   \begin{scope}
    \clip (-3, 0) rectangle (3, 3) ;
    \shade[inner color=black!15, outer color=black!2] (0,0) circle (3);
   \end{scope}
   
   \draw [thick] (-3,0) -- (3,0) node[midway, below=3pt]{\small$\mathbb{B}$};

   \draw[decoration={markings, mark=at position 0.4 with {\arrow{<}}},
        postaction={decorate},color=black, thick, dashed] (-1.5,3.5) to (0,1.5);
   \draw[decoration={markings, mark=at position 0.4 with {\arrow{<}}},
        postaction={decorate},color=black, thick, dashed] (1.5,3.5) to (0,1.5);
   \draw[decoration={markings, mark=at position 0.5 with {\arrow{>}}},
        postaction={decorate},color=black, thick] (0,0) to (0,1.5);
   \draw[black, thick, dotted] (2.5,0) arc (0:180:2.5);
   \filldraw[draw=black, fill=black!10, thick] (0,0) circle (.17);
  \end{tikzpicture}
  \rightarrow  (\mathrm{i}\hbar)^2 \int_{\partial_\mathbb{B}\Sigma} \Big\langle  \mathbb{B} \, , \,
   \frac{1}{2}\Big[\frac{\delta}{\delta\mathbb{B}}, \frac{\delta}{\delta\mathbb{B}} \Big] \Big\rangle ~.
 \eeq
 For $\Omega_2$, the only contributing graphs are 
 \begin{equation}\label{Omega_I_AB_and_BA_collapse}
  \begin{tikzpicture}[scale=.4,  baseline=(x.base)]
   \coordinate (x) at (0,.7);
   \begin{scope}
    \clip (-3, 0) rectangle (3, 3) ;
    \shade[inner color=black!15, outer color=black!2] (0,0) circle (3);
   \end{scope}
   \draw [thick] (-3,0) -- (3,0);
      
   \draw[decoration={markings, mark=at position 0.5 with {\arrow{>}}},
        postaction={decorate},color=black, thick] (1.5,0) arc (0:180:1.5);
        
   \draw[black, thick, dotted] (2.5,0) arc (0:180:2.5);
   
   \filldraw[black] (-1.5,0) circle (.15) node[below=3pt]{\small$\mathbb{A}$};
   \filldraw[draw=black, fill=black!10, thick] (1.5,0) circle (.17) node[below=3pt]{\small$\mathbb{B}$};   
   
   \draw[black, thick] (0,.25) to (0,-.25) node[below]{\small$p$};
  \end{tikzpicture}
  \rightarrow -\langle \mathbb{B}_p ,\mathbb{A}_p\rangle ~,\qquad\quad
  \begin{tikzpicture}[scale=.4,  baseline=(x.base)]
   \coordinate (x) at (0,.7);
   \begin{scope}
    \clip (-3, 0) rectangle (3, 3) ;
    \shade[inner color=black!15, outer color=black!2] (0,0) circle (3);
   \end{scope}
   \draw [thick] (-3,0) -- (3,0);
      
   \draw[decoration={markings, mark=at position 0.5 with {\arrow{<}}},
        postaction={decorate},color=black, thick] (1.5,0) arc (0:180:1.5);
        
   \draw[black, thick, dotted] (2.5,0) arc (0:180:2.5);
   
   \filldraw[black] (1.5,0) circle (.15) node[below=3pt]{\small$\mathbb{A}$};
   \filldraw[draw=black, fill=black!10, thick] (-1.5,0) circle (.17) node[below=3pt]{\small$\mathbb{B}$};   
   
   \draw[black, thick] (0,.25) to (0,-.25) node[below]{\small$p$};
  \end{tikzpicture}
  \rightarrow \langle\mathbb{B}_p, \mathbb{A}_p\rangle~.
 \end{equation}
 
\end{proof}

\begin{remark}\label{rem: composite fields}
 It was found out in~\cite{CMR:pert_quantum_BV} that, in order to have the property $\Omega^2=0$ for the BFV operator, generally one should consider a certain refinement of the space of states, allowing the states to depend on the so-called ``composite fields'' on the boundary, which correspond in Feynman diagrams to boundary vertices of valency $\geq 2$.%
 \footnote{
  In fact, the operator $\Omega$ constructed above (\ref{Omega_stratified_I}) with edge contributions~(\ref{Omega A},\ref{Omega B}) and corner contributions~\eqref{Omega corner picture I} does not satisfy $\Omega^2=0$ right away, whenever corners are present. 
  In the setting of~\cite{CMR:pert_quantum_BV} this is remedied by adding corrections to $\Omega$, depending on composite boundary fields. 
  Then in addition to the diagrams~\eqref{Omega_I_AB_and_BA_collapse} at a corner one should consider other diagrams, involving boundary  vertices of valency~$\geq 2$.
 }
 We are not considering composite fields in this paper: below, in Section~\ref{sec: Picture II for corners}, we manage to construct~$\Omega$ for the setting of polarized corners, which squares to zero on the nose, without having to introduce composite fields. 
\end{remark}

\subsubsection{Picture II: polarized corners}\label{sec: Picture II for corners}

Now consider a circle split by $n$~points $p_1,\ldots,p_n=p_0$ (``corners'')  into intervals $I_1,\ldots,I_n$.
Assume that for each $k$ we fix on the interval $I_k$ the polarization $\mathbb{P}_k \in \{\mathbb{A},\mathbb{B}\}$ -- i.e. we  prescribe  either the the pullback of $\mathbb{A}_k$ field $A$ or the pullback $\mathbb{B}_k$ of the field $B$ on $I_k$ (by an abuse of notations, we denote the differential form $\mathbb{A}_k$ or $\mathbb{B}_k$ also by $\mathbb{P}_k$). 
Likewise, we fix a polarization $\xi_k\in \{\alpha,\beta\}$ on the corners $p_k$.
\beq\label{stratified circle II}
 \begin{tikzpicture}[scale=.4]
  \draw[decoration={markings, mark=at position 0.5 with {\arrow{>}}},
        postaction={decorate},color=black, thick] (0:3) arc (0:72:3);
  \draw[decoration={markings, mark=at position 0.5 with {\arrow{>}}},
        postaction={decorate},color=black, thick] (72:3) arc (72:144:3);
  \draw[decoration={markings, mark=at position 0.5 with {\arrow{>}}},
        postaction={decorate},color=black, thick] (144:3) arc (144:216:3);
  \draw[decoration={markings, mark=at position 0.5 with {\arrow{>}}},
        postaction={decorate},color=black, thick] (216:3) arc (216:288:3);
  \draw[decoration={markings, mark=at position 0.5 with {\arrow{>}}},
        postaction={decorate},color=black, thick] (288:3) arc (288:360:3);
        
  \draw[color=black, thick, dotted] (268:2) arc (268:308:2);  
  \draw[color=black, thick, dotted] (278:4) arc (278:298:4);

  \node at (62:4.2) {\footnotesize$p_0=p_n$};
  \node at (144:4) {\footnotesize$p_1$};
  \node at (216:4) {\footnotesize$p_2$};
  \node at (288:4) {};
  \node at (360:4.7) {\footnotesize$p_{n-1}$};
  
  \node at (72:2) {\footnotesize$\xi_n$};
  \node at (144:2) {\footnotesize$\xi_1$};
  \node at (216:2) {\footnotesize$\xi_2$};
  \node at (288:2) {};
  \node at (360:1.6) {\footnotesize$\xi_{n-1}$};

  \node at (36:4) {\footnotesize$I_n$};
  \node at (108:4) {\footnotesize$I_1$};
  \node at (180:4) {\footnotesize$I_2$};
  \node at (252:4) {};
  \node at (324:4) {};
  
  \node at (36:2.2) {\footnotesize$\mathbb{P}_n$};
  \node at (108:2) {\footnotesize$\mathbb{P}_1$};
  \node at (180:2) {\footnotesize$\mathbb{P}_2$};
  
  \filldraw[black, thick] (72:3) circle (.15);   
  \filldraw[black, thick] (144:3) circle (.15); 
  \filldraw[black, thick] (216:3) circle (.15); 
  \filldraw[black, thick] (288:3) circle (.15);  
  \filldraw[black, thick] (0:3) circle (.15); 

 \end{tikzpicture}
\eeq

The BFV space of states $\mathcal{H}$, associated to the circle with such a stratification and a choice of polarizations, is the space of complex-valued functions of the fields on the intervals and the corners, subject to the natural corner value conditions:
\begin{equation}\label{H for stratified circle}
 \mathcal{H} = \Bigg\{ \mbox{functions }\Psi(\mathbb{P}_1,\xi_1,\mathbb{P}_2,\xi_n,\ldots, \mathbb{P}_n,\xi_n) \;\; \Big{|} \;\; 
 \begin{array}{ll}
  \left.\mathbb{P}_k\right|_{p_k}=\xi_k &\mbox{if polarizations } \mathbb{P}_k \mbox{ and }\xi_k \mbox{ agree}\\
  \left.\mathbb{P}_k\right|_{p_{k-1}}=\xi_{k-1} &\mbox{if polarizations } \mathbb{P}_k \mbox{ and }\xi_{k-1} \mbox{ agree}
 \end{array}
 \Bigg\}~.
\end{equation}
Here we say that the polarization of an interval ``agrees'' with the polarization of the incident corner if this pair of polarizations is either $(\mathbb{A},\alpha)$ or $(\mathbb{B},\beta)$.
The space of states is a cochain complex with the differential
\begin{equation}\label{Omega_stratified}
 \Omega=\sum_k \underbrace{\Omega_{I_k}^{\mathbb{P}_k}}_{\mathrm{edge\;contribution\;from\;} I_k} 
  + \sum_k \underbrace{\Omega_{p_k}^{\mathbb{P}_{k}\xi_k\mathbb{P}_{k+1}}}_{\mathrm{corner\;contribution\;from\;}p_k} ~,
\end{equation}
where the edge contributions are given by~(\ref{Omega A},\ref{Omega B}).

The corner contributions to $\Omega$ depend on the polarization $\xi_k$ at the corner and polarizations of the incident edges $\mathbb{P}_k, \mathbb{P}_{k+1}$ and are assembled from the contribution of the corner itself and the contributions of the corner interacting with the incident edges:
\begin{equation}\label{Omega corner = HE+corner+HE}
 \Omega_{p_k}^{\mathbb{P}_{k}\xi_k\mathbb{P}_{k+1}}=\Omega_{p_k}^{\mathbb{P}_{k}\xi_k}+\Omega_{p_k}^{\xi_k}+\Omega_{p_k}^{\xi_k\mathbb{P}_{k+1}} ~.
\end{equation}
Here the pure corner contributions are:
\begin{equation}\label{Omega pure corner}
 \Omega^\alpha_p= \mathrm{i}\hbar \big\langle \frac12\big[\alpha,\alpha\big],\frac{\partial}{\partial \alpha} \big\rangle ~ , 
  \qquad   \Omega^\beta_p=0 ~.
\end{equation}
The corner-edge contributions $\Omega^{\mathbb{P}\xi}_p, \Omega^{\xi\mathbb{P}}_p$ vanish if the polarization $\xi$ at the corner matches the polarization $\mathbb{P}$ of the incident edge. For mismatching corner-edge polarizations, we have nontrivial contributions to $\Omega$:

\beq\label{HE_contributions_to_Omega}
 \begin{tikzpicture}[scale=.4, baseline=(x.base)]
  \coordinate (x) at (0,-0.3);
  \draw[decoration={markings, mark=at position 0.5 with {\arrow{>}}},
        postaction={decorate},black, thick] (-2,0) to node[above]{\small$\mathbb{A}$} (1,0);
  \filldraw[black] (1,0) node[above]{\footnotesize$\beta$} node[below]{\footnotesize$p$} circle (.15);
 \end{tikzpicture}
  ~ &\longrightarrow &&  \big\langle \beta , \mathsf{F}_-\Big(\mathrm{ad}_{\mathrm{i}\hbar\frac{\partial}{\partial \beta}}\Big) \mathbb{A}_p\big\rangle ~,
   \qquad&\begin{tikzpicture}[scale=.4, baseline=(x.base)]
  \coordinate (x) at (0,-0.3);
  \draw[decoration={markings, mark=at position 0.5 with {\arrow{>}}},
        postaction={decorate},black, thick] (-2,0) to node[above]{\small$\mathbb{A}$} (1,0);
  \filldraw[black] (-2,0) node[above]{\footnotesize$\beta$} node[below]{\footnotesize$p$} circle (.15);
 \end{tikzpicture}
  &&\longrightarrow &~~  \big\langle \beta , \mathsf{F}_+\Big(\mathrm{ad}_{\mathrm{i}\hbar\frac{\partial}{\partial \beta}}\Big) \mathbb{A}_p\big\rangle~,\\
 \begin{tikzpicture}[scale=.4, baseline=(x.base)]
  \coordinate (x) at (0,-0.3);
  \draw[decoration={markings, mark=at position 0.5 with {\arrow{>}}},
        postaction={decorate},black, thick] (-2,0) to node[above]{\small$\mathbb{B}$} (1,0);
  \filldraw[black] (1,0) node[above]{\footnotesize$\alpha$} node[below]{\footnotesize$p$} circle (.15);
 \end{tikzpicture}
  ~ &\longrightarrow &&  \big\langle \mathbb{B}_p , \mathsf{F}_+\Big(\mathrm{ad}_{\mathrm{i}\hbar\frac{\partial}{\partial \mathbb{B}_p}}\Big) \alpha \big\rangle ~,
   &\begin{tikzpicture}[scale=.4, baseline=(x.base)]
  \coordinate (x) at (0,-0.3);
  \draw[decoration={markings, mark=at position 0.5 with {\arrow{>}}},
        postaction={decorate},black, thick] (-2,0) to node[above]{\small$\mathbb{B}$} (1,0);
  \filldraw[black] (-2,0) node[above]{\footnotesize$\alpha$} node[below]{\footnotesize$p$} circle (.15);
 \end{tikzpicture}
   &&\longrightarrow &~~  \big\langle \mathbb{B}_p , \mathsf{F}_-\Big(\mathrm{ad}_{\mathrm{i}\hbar\frac{\partial}{\partial \mathbb{B}_p}}\Big) \alpha \big\rangle ~.
\eeq
Here we have introduced the following functions:
\beq
 \mathsf{F}_+(x) &= \frac{x}{1-\mathrm{e}^{-x}} &&=\sum_{j=0}^\infty(-1)^j\frac{ B_{j}}{j!}x^j &&= 1+\frac{x}{2}+\frac{x^2}{12}-\frac{x^4}{720}+\cdots~,\\
 \mathsf{F}_-(x) &= \frac{x}{1-\mathrm{e}^{x}} &&=-\sum_{j=0}^\infty\frac{ B_{j}}{j!}x^j &&=-1+\frac{x}{2}-\frac{x^2}{12}+\frac{x^4}{720}+\cdots~,
\eeq
where $B_j$ are the Bernoulli numbers $B_0=1,\, B_1=-\frac12,\, B_2=\frac16,\, B_3=0,\, B_4=-\frac{1}{30},\,\ldots$
In~\eqref{HE_contributions_to_Omega}, functions $\mathsf{F}_\pm$ are evaluated on $x=\mathrm{ad}_{\mathrm{i}\hbar\frac{\partial}{\partial\beta}}$, producing $\mathrm{End}(\mathfrak{g})$-valued derivations (of infinite order) of the space of functions of $\beta$. 
Further in this section we will also need the following two functions, related to the generating functions for Bernoulli polynomials:
\begin{equation}\label{G pm}
 \mathsf{G}_+(t,x)=\frac{1-\mathrm{e}^{-tx}}{1-\mathrm{e}^{-x}}~,\qquad \mathsf{G}_-(t,x)=\frac{1-\mathrm{e}^{(1-t)x}}{1-\mathrm{e}^{x}}~.
\end{equation}

Note that, when acting on the partition function, the complicated operators $\Omega_p^{\mathbb{B}\alpha}, \Omega_p^{\alpha\mathbb{B}}$ from~\eqref{HE_contributions_to_Omega} act simply as multiplication operators 
\begin{equation}
 \Omega_p^{\mathbb{B}\alpha}\sim \langle \mathbb{B}_p,\alpha\rangle ~, \qquad \Omega_p^{\alpha\mathbb{B}}\sim -\langle \mathbb{B}_p,\alpha\rangle ~,
\end{equation}
since the derivative in the corner value of the field $\mathbb{B}$ acts by zero. 
Thus, we have:
\begin{equation}
 \Omega_p^{\mathbb{B}\alpha}= \langle \mathbb{B}_p,\alpha\rangle +\cdots ~,\qquad \Omega_p^{\alpha\mathbb{B}}= -\langle \mathbb{B}_p,\alpha\rangle+\cdots ~,
\end{equation} 
where we have added the terms $\cdots$ (irrelevant for the master equation) so as to have the property $\Omega^2=0$.  
To be precise, we impose the following mild restriction on the states (or, in other words, it is a clarification of our model for the space of states).
\begin{assumption}[Admissible states] \label{assump: admissible states}
 We assume that the states do not depend explicitly on the limiting values of $1$-form components of fields $\mathbb{A,B}$ at corners. 
 I.e., for~$p$ a corner, the derivatives $\frac{\partial}{\partial \mathbb{A}_p^{(1)}}$, $\frac{\partial}{\partial \mathbb{B}_p^{(1)}}$ act by zero on on admissible states.
\end{assumption}

Then, by a direct computation, one verifies the following (we give the explicit proof in Appendix~\ref{Omega^2=0}).

\begin{proposition}\label{prop: Omega^2=0}
 For a stratified circle, with any choice of polarizations on the strata, the operator $\Omega$ as defined by~(\ref{Omega_stratified},\ref{Omega pure corner},\ref{HE_contributions_to_Omega}) satisfies $\Omega^2=0$ on admissible states in the sense of Assumption~\ref{assump: admissible states}. 
\end{proposition}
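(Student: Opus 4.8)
The plan is to exploit the locality of $\Omega$ together with its odd parity. Since $\Omega$ has ghost degree $+1$, proving $\Omega^2=0$ amounts to showing that the self-anticommutator $\Omega^2=\tfrac12\{\Omega,\Omega\}$ vanishes. Writing $\Omega=\sum_k\Omega_{I_k}^{\mathbb{P}_k}+\sum_k\Omega_{p_k}^{\mathbb{P}_k\xi_k\mathbb{P}_{k+1}}$ as in~\eqref{Omega_stratified}, each summand is an odd operator supported on a single stratum: an edge operator involves only the field on that interval and its limiting values at the two endpoints, while a corner operator involves only the corner field and the limiting values of the incident edge fields. Two odd operators acting on disjoint sets of (graded) field variables anticommute to zero by the Koszul sign rule, so $\{\Omega_X,\Omega_Y\}=0$ whenever the strata $X,Y$ are non-adjacent. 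Consequently $\Omega^2$ collects only the edge self-squares $(\Omega_{I_k})^2$ and the anticommutators of corner-sharing strata, and once the edge self-squares are reduced to boundary residues the whole operator $\Omega^2$ becomes a sum of contributions localized at the individual corners. It then suffices to verify the vanishing of the total contribution at one fixed corner.

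First I would dispose of the edge interiors. The square $(\Omega_{I}^{\mathbb{A}})^2$ of the Chevalley--Eilenberg-type operator~\eqref{Omega A} is computed by letting the functional derivative of one factor hit the coefficient $\mathrm{d}\mathbb{A}+\tfrac12[\mathbb{A},\mathbb{A}]$ of the other; the purely algebraic part vanishes by the Jacobi identity, while the term in which $\tfrac{\delta}{\delta\mathbb{A}}$ meets $\mathrm{d}\mathbb{A}$ produces, after integration by parts on $I$, a total derivative whose interior contribution cancels and whose boundary value localizes at the two endpoints of $I$. Similarly $(\Omega_I^{\mathbb{B}})^2$ vanishes in the interior by the Jacobi identity together with the unimodularity of $\mathfrak{g}$ (the same property that makes the ordering in~\eqref{Omega B} immaterial, cf. the footnote after~\eqref{BF_Omega}), leaving an endpoint residue. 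Thus each edge square is replaced by two corner-localized terms, to be matched against the genuine corner contributions.

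Fixing a corner $p=p_k$ between the incoming edge $I_k$ and the outgoing edge $I_{k+1}$, I would then collect every term of $\Omega^2$ supported at $p$: the endpoint residues of $(\Omega_{I_k})^2$ and $(\Omega_{I_{k+1}})^2$ at $p$; the pure-corner square $(\Omega_p^{\xi_k})^2$, which vanishes for $\beta$ and equals the square of the operator~\eqref{Omega pure corner} for $\alpha$; the anticommutators $\{\Omega_{I_k}^{\mathbb{P}_k},\Omega_{p}\}$ and $\{\Omega_{I_{k+1}}^{\mathbb{P}_{k+1}},\Omega_{p}\}$ of each incident edge with the corner operators; and the internal corner anticommutators $\{\Omega_p^{\mathbb{P}_k\xi_k},\Omega_p^{\xi_k}\}$, $\{\Omega_p^{\xi_k},\Omega_p^{\xi_k\mathbb{P}_{k+1}}\}$ and $\{\Omega_p^{\mathbb{P}_k\xi_k},\Omega_p^{\xi_k\mathbb{P}_{k+1}}\}$ from the decomposition~\eqref{Omega corner = HE+corner+HE}. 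At this point Assumption~\ref{assump: admissible states} is used crucially: on admissible states the derivatives in the $1$-form corner values act by zero, so the infinite-order operators $\mathsf{F}_\pm(\mathrm{ad}_{\mathrm{i}\hbar\partial/\partial\mathbb{B}_p})$ occurring in~\eqref{HE_contributions_to_Omega} truncate and several would-be higher-order cross terms collapse. The verification then proceeds by a finite case analysis over $(\mathbb{P}_k,\xi_k,\mathbb{P}_{k+1})\in\{\mathbb{A},\mathbb{B}\}\times\{\alpha,\beta\}\times\{\mathbb{A},\mathbb{B}\}$, reduced by the evident left/right reflection symmetry of~\eqref{HE_contributions_to_Omega}. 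In the ``matching'' configurations the corner--edge pieces vanish and the cancellation is the standard one: the edge boundary residue is absorbed by $(\Omega_p^\alpha)^2$ and the $\mathrm{d}$-part of the adjacent edge operator.

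The main obstacle will be the ``mismatching'' configurations, where a $\beta$-corner abuts an $\mathbb{A}$-edge (or an $\alpha$-corner abuts a $\mathbb{B}$-edge) and the full Bernoulli-series operators $\mathsf{F}_\pm(\mathrm{ad}_{\mathrm{i}\hbar\partial/\partial\beta})$ genuinely contribute. There one must show that the anticommutator of such a corner--edge operator with the pure corner operator $\Omega_p^\alpha=\mathrm{i}\hbar\langle\tfrac12[\alpha,\alpha],\tfrac{\partial}{\partial\alpha}\rangle$, together with the cross term between the two corner--edge operators attached to $p$, exactly reproduces (with opposite sign) the endpoint residues left over from the edge squares. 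This hinges on the arithmetic of the generating functions: the basic relation $\mathsf{F}_+(x)+\mathsf{F}_-(x)=x$ ties the two endpoint operators of a single edge to its $\mathrm{ad}$-term, and the difference equations satisfied by $\mathsf{G}_\pm(t,x)$ from~\eqref{G pm} govern how $\mathsf{F}_\pm(\mathrm{ad})$ intertwines with the bracket. Conceptually this is the infinitesimal form of the associativity of the Baker--Campbell--Hausdorff star product carried by the $\beta$-corner (cf.\ the introduction), so I expect the decisive identity to be a cocycle/difference relation among $\mathsf{F}_\pm$ and $\mathsf{G}_\pm$. I would carry this out configuration by configuration, tracking the Koszul signs coming from ghost degrees, and defer the explicit Bernoulli-function manipulations to Appendix~\ref{Omega^2=0}.
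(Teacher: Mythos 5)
Your overall strategy — reduce $\Omega^2$ by locality and Koszul signs to corner-localized contributions as in~\eqref{Omega^2 corner contributions}, then run a case analysis over the polarization triples $(\mathbb{P}_k,\xi_k,\mathbb{P}_{k+1})$ using the admissibility assumption to kill the $\frac{\partial}{\partial\mathbb{B}_p^{(1)}}$-terms and BCH/Bernoulli identities for the mismatched configurations — is the paper's strategy, and your instinct that the decisive identity is the associativity of the Baker--Campbell--Hausdorff law is exactly right (the paper's proof evaluates the corner operators on exponential test states $\mathrm{e}^{-\frac{\mathrm{i}}{\hbar}\langle\beta,x\rangle}$ and uses $\mathrm{BCH}(\mathrm{BCH}(x,\epsilon_1\mathbb{A}),\epsilon_2\mathbb{A})=\mathrm{BCH}(x,\mathrm{BCH}(\epsilon_1\mathbb{A},\epsilon_2\mathbb{A}))$).

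However, there is a concrete error in your accounting of which terms are nonzero, and the cancellation scheme you build on it would fail. You claim that $(\Omega_I^{\mathbb{A}})^2$ and $(\Omega_I^{\mathbb{B}})^2$ leave ``endpoint residues'' after integration by parts, and that the corner contributions must cancel against these residues. This is not the case: the edge operators~(\ref{Omega A},\ref{Omega B}) are Chevalley--Eilenberg-type differentials of the dg Lie algebra $\Omega^\bullet(I,\mathfrak{g})$ and square to zero \emph{identically}, with no boundary terms. The term where $\frac{\delta}{\delta\mathbb{A}}$ meets $\mathrm{d}\mathbb{A}$ produces $\mathrm{d}$ applied to the coefficient of the vector field, i.e.\ $\mathrm{d}^2\mathbb{A}=0$ and $\mathrm{d}[\mathbb{A},\mathbb{A}]$ cancelling against $[\mathrm{d}\mathbb{A},\mathbb{A}]$ pointwise — no integration by parts is ever performed, so no residue localizes at the endpoints. (Likewise $(\Omega_p^\alpha)^2=0$ by Jacobi; it does not ``absorb'' anything, and the matching corners $\mathbb{A}\alpha\mathbb{A}$, $\mathbb{B}\beta\mathbb{B}$ contribute trivially.) The actual cancellation at a mismatched corner is between the \emph{square of the corner--edge operator itself} and its graded commutator with the bracket part of the adjacent edge operator: e.g.\ $\bigl(\Omega_p^{\beta\mathbb{A}}\bigr)^2=-\bigl[\Omega_I^{\mathbb{A}},\Omega_p^{\beta\mathbb{A}}\bigr]$, where the right-hand side is nonzero because $\Omega_I^{\mathbb{A}}$ acts on the limiting value $\mathbb{A}_p$ through $\frac12[\mathbb{A}_p,\mathbb{A}_p]$. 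Since the residues you propose to cancel against are zero, the identity you would set out to verify in each mismatched case is not the correct one; you would need to redirect the BCH computation to establish $\bigl(\Omega_p^{\xi\mathbb{P}}\bigr)^2+\bigl[\Omega_p^{\xi\mathbb{P}},\Omega_I^{\mathbb{P}}\bigr]=0$ (and the analogous relation with $\Omega_p^\alpha$ for $\alpha$-corners on $\mathbb{B}$-edges, plus the vanishing of the cross anticommutator $[\Omega_p^{\mathbb{A}\beta},\Omega_p^{\beta\mathbb{A}}]$), which is what the paper does.
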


Let us introduce the following terminology. 
For a product of intervals (or circles) $I\times J$ with $I$ parameterized by coordinate $t$ and $J$ parameterized by $\tau$, we call the axial gauge propagator containing $\delta(\tau-\tau')$ \textit{parallel} to~$I$ (and \textit{perpendicular} to~$J$), since the intervals on which the $\delta$-term is supported are parallel to~$I$. 
Note that in all the computations of Section~\ref{2d_YM_for_chi>0}, the axial gauge was always chosen to be perpendicular to the boundary.

Consider a surface $\Sigma$ with stratified boundary circles in picture~\ref{pict_2}, as in~\eqref{stratified circle II}, i.e., with arcs and corners carrying polarization data. 
We can view such a surface as a limit at $s\rightarrow 0$ of a family of surfaces $\Sigma_s$, with corners of $\Sigma$ expanded into arcs of corresponding polarization (thus, surfaces $\Sigma_s$ for $s>0$ are in picture~\ref{pict_1}). 
Let $\eta_s$ be a family of propagators (corresponding to a family of gauge-fixings) on the surfaces $\Sigma_s$, converging to a propagator $\eta$ on $\Sigma$. 
We make the following assumption.

\begin{assumption}[Collapsible gauge condition]\label{assump: collapsible gauge}
 The contraction of the propagator $\eta_s$ with a $1$-form $\mathbb{B}^{(1)}$ on the  $\mathbb{B}$-interval $I_s\subset \partial\Sigma_s$ that is being collapsed into a $\beta$-corner $p$ of $\Sigma$, becomes supported at $p$ in the limit $s\rightarrow 0$.
\end{assumption}
This assumption can be realized by considering an $s$-dependent family of metric gauge-fixings associated to equipping $\Sigma_s$ with a metric $g_s$ in which the $\mathbb{B}$-arc undergoing the collapse is placed at the end of a long ``tentacle''. 
Thus, at $s\rightarrow 0$, the $\beta$-corner is placed infinitely far from the rest of the surface.
\[
 \begin{tikzpicture}[scale=.3]
  \filldraw[draw=black, fill=black!15] (-130:3) arc (-130:130:3) to[out=220, in=0] (-5,1) to node[left]{$\mathbb{B}$} (-5,-1) to[out=0, in=-220] (-130:3);
  
  \filldraw[black] (-5,-1) rectangle (-4.9,1);
  
  \draw[decoration={markings, mark=at position 1 with {\arrow{>}}},
        postaction={decorate},black, thick] (5,0) to node[above]{$s\rightarrow 0$}(11,0);
  
  \begin{scope}[shift={(24,0)}]
   \filldraw[draw=black, fill=black!15] (-9,-.2) to[out=0, in=-220] (-130:3) arc (-130:130:3) to[out=220, in=0] (-9,.2);
   \node[left] at (-9,0) {$\beta$};
  \end{scope}

 \end{tikzpicture}
\]
Put another way, if both arcs adjacent to $I_s$ are in $\mathbb{A}$-polarization, the assumption requires that $\eta_s$ asymptotically approaches the axial gauge propagator $\eta(t,\tau;t',\tau')=(\Theta(t-t')-t)$ $\cdot\delta(\tau-\tau')\, (\mathrm{d}\tau'- \mathrm{d}\tau)-\mathrm{d}t\, \Theta(\tau'-\tau)$ 
(the axial propagator \textit{parallel} to $I_s$)
near the collapsing interval $I_s$, as $s\rightarrow 0$, with $t$ the coordinate along $I_s$ and $\tau$ the coordinate along the ``tentacle''.

\begin{proposition}[mQME in picture~\ref{pict_2}] \label{prop: mQME II}
 Under the assumption above, the partition function~$Z$ for the surface~$\Sigma$ with boundary and corners equipped with polarization data, satisfies the modified quantum master equation 
 \[
  (\hbar^2 \Delta+\Omega)Z=0~,
 \]
 where $\Omega$ is given as the sum of expressions~\eqref{Omega_stratified} over the boundary circles, with edge contributions given by~(\ref{Omega A},\ref{Omega B}) and corner contributions defined by~(\ref{Omega corner = HE+corner+HE},\ref{Omega pure corner},\ref{HE_contributions_to_Omega}). 
\end{proposition}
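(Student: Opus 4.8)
\emph{Proof strategy.} The plan is to deduce the mQME in picture~\ref{pict_2} as the $s\to 0$ limit of the mQME in picture~\ref{pict_1}, which is already available from Proposition~\ref{prop: mQME I}. I would realize $\Sigma$ as the limit of the family $\Sigma_s$ in which each corner of $\Sigma$ is expanded into a short arc of matching polarization ($\alpha$-corners into $\mathbb{A}$-arcs, $\beta$-corners into $\mathbb{B}$-arcs), equipped with the family of gauge-fixings of Assumption~\ref{assump: collapsible gauge}. For every $s>0$ the surface $\Sigma_s$ is in picture~\ref{pict_1}, so Proposition~\ref{prop: mQME I} gives $(\hbar^2\Delta+\Omega_s)Z_s=0$ with $\Omega_s$ the operator~\eqref{Omega_stratified_I}. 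Since $Z_s\to Z$ by continuity of the collapse/gluing construction of Section~\ref{corners_and_building_blocks}, it then suffices to identify $\lim_{s\to 0}\Omega_s Z_s$ with $\Omega Z$, where $\Omega$ is the picture~\ref{pict_2} operator~\eqref{Omega_stratified}.

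First I would dispose of the easy pieces. The edge contributions~\eqref{Omega A},\eqref{Omega B} of the arcs of $\Sigma_s$ that survive in the limit pass verbatim to the corresponding picture~\ref{pict_2} edge contributions. For a collapsing $\mathbb{A}$-arc the field becomes the constant $0$-form $\alpha$, so $\mathrm{d}\mathbb{A}\to 0$, $\tfrac12[\mathbb{A},\mathbb{A}]\to\tfrac12[\alpha,\alpha]$ and $\tfrac{\delta}{\delta\mathbb{A}}\to\tfrac{\partial}{\partial\alpha}$, whence $\Omega^{\mathbb{A}}_{I}\to\Omega^{\alpha}_p$ as in~\eqref{Omega pure corner}; for a collapsing $\mathbb{B}$-arc the term $\mathrm{d}\mathbb{B}\to 0$ and the remaining $(\mathrm{i}\hbar)^2$-term is annihilated on admissible states (Assumption~\ref{assump: admissible states}), since it contains $\tfrac{\partial}{\partial\mathbb{B}^{(1)}_p}$, yielding $\Omega^\beta_p=0$. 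This accounts for the pure-corner contributions~\eqref{Omega pure corner} together with the surviving edge terms in~\eqref{Omega_stratified}.

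The heart of the argument is the corner-edge contributions~\eqref{HE_contributions_to_Omega}. These arise from the boundary strata of the compactified configuration space in which a chain of boundary interaction vertices sitting on a collapsing $\mathbb{B}$-arc adjacent to an $\mathbb{A}$-arc, connected along the arc by the parallel axial propagator dictated by Assumption~\ref{assump: collapsible gauge}, collapses onto the $\beta$-corner $p$. I would compute the contribution of a chain of length $n$: each internal vertex carries a structure constant, i.e.\ an $\mathrm{ad}$, and the iterated integral of the $n$ parallel propagators along the interval, with one endpoint pinned at the corner and the constant field $\mathbb{A}_p$ inserted, produces the coefficient $B_n/n!$. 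Resumming over $n$ gives exactly the generating series $\mathsf{F}_\pm\bigl(\mathrm{ad}_{\mathrm{i}\hbar\partial/\partial\beta}\bigr)$ of~\eqref{HE_contributions_to_Omega}; the intermediate step with the insertion at a variable point $t\in(0,1)$ along the collapsing arc produces the Bernoulli-polynomial generating functions $\mathsf{G}_\pm(t,x)$ of~\eqref{G pm}, and the $t$-integration followed by the endpoint limit yields $\mathsf{F}_\pm$. The left/right distinction ($\mathsf{F}_+$ versus $\mathsf{F}_-$) and the overall signs track the orientation of the arc relative to $p$, and the two picture~\ref{pict_1} corner vertices~\eqref{Omega corner picture I} at the ends of the collapsing arc are recovered as the leading, $\mathrm{ad}$-independent terms of these series. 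An identical computation with $\mathbb{A}$ and $\mathbb{B}$ exchanged treats a collapsing $\mathbb{A}$-arc adjacent to a $\mathbb{B}$-arc, producing the $\mathsf{F}_\pm$-terms acting through $\mathrm{ad}_{\mathrm{i}\hbar\partial/\partial\mathbb{B}_p}$.

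Assembling these pieces, $\lim_{s\to 0}\Omega_s$ reproduces~\eqref{Omega_stratified} with corner contributions of the form~\eqref{Omega corner = HE+corner+HE}, and passing to the limit in the picture~\ref{pict_1} master equation gives the desired $(\hbar^2\Delta+\Omega)Z=0$. The main obstacle is the chain resummation of the previous paragraph: one must verify that the precise iterated integrals of the parallel axial propagator reproduce the Bernoulli numbers with the correct normalization, and that the resulting series makes sense as a formal operator, which it does order by order in the perturbative expansion thanks to $\mathrm{ad}$-nilpotency on each fixed graph. Finally, that $\Omega^2=0$ on admissible states---needed for $\Omega$ to be a genuine differential and hence for consistency of the mQME---is established independently in Proposition~\ref{prop: Omega^2=0}.
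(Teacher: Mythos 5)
Your overall strategy --- realize $\Sigma$ as the $s\to0$ limit of picture-\ref{pict_1} surfaces $\Sigma_s$, invoke Proposition~\ref{prop: mQME I} at each $s>0$, and identify $\lim_{s\to0}\Omega_s Z_s$ with $\Omega Z$ --- is exactly the paper's. Where you diverge is in how the corner--edge terms $\mathsf{F}_\pm$ of~\eqref{HE_contributions_to_Omega} are produced. The paper's main argument cuts a rectangle $\mathcal{R}$ out of $\Sigma_s$ at the collapsing edge, computes $Z_{\mathcal{R}}$ explicitly in the parallel axial gauge (this is where the Bernoulli-polynomial generating functions $\mathsf{G}_\pm$ of~\eqref{G pm} appear), and then verifies by direct calculation that the picture-\ref{pict_1} operator attached to the collapsing interval and its endpoints, acting on the glued partition function, is reproduced by $\Omega^{\mathbb{A}\beta}_p+\Omega^{\beta\mathbb{A}}_p$ acting on the collapsed one. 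Your route --- resumming chains of interaction vertices joined by the parallel axial propagator, with the iterated integrals supplying $B_n/n!$ --- is the derivation the paper offers as ``yet another approach'' at the end of the proof; it is viable, and the Bernoulli bookkeeping is as you describe (compare the analogous computation for the bean in Appendix~\ref{diagram_computations}). Both derivations buy the same thing; the rectangle computation has the advantage of also exhibiting the intermediate $\mathsf{G}_\pm$ profile of the field along the collapsing edge, which is reused elsewhere in the paper.

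One point needs repair. In the limit-of-picture-\ref{pict_1} framing, the edge operator $\Omega^{\mathbb{B}}_{I_s}$ on the collapsing arc does \emph{not} simply die: by Assumption~\ref{assump: collapsible gauge}, $Z_s$ retains a nontrivial dependence on $\mathbb{B}^{(1)}|_{I_s}$ of the form $\mathrm{e}^{\frac{\mathrm{i}}{\hbar}\int_{I_s}\langle\mathbb{B}^{(1)},\,\cdots\rangle}$, and the $(\mathrm{i}\hbar)^2$-term of $\Omega^{\mathbb{B}}_{I_s}$, pairing $\tfrac{\delta}{\delta\mathbb{B}^{(0)}}$ with $\tfrac{\delta}{\delta\mathbb{B}^{(1)}}$ acting on that exponential, is precisely what generates the $\mathrm{ad}$-corrections in $\mathsf{F}_\pm$ beyond the leading $\pm\langle\beta,\mathbb{A}_p\rangle$ (which, as you correctly note, come from the picture-\ref{pict_1} corner vertices). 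Assumption~\ref{assump: admissible states} constrains the limiting states in picture~\ref{pict_2}, not $Z_s$ at finite $s$, so it does not license discarding this term. As written, you both declare $\Omega^{\mathbb{B}}_{I_s}\to\Omega^\beta_p=0$ and then re-derive the $\mathsf{F}_\pm$ terms from new configuration-space strata --- that drops a contribution in one accounting scheme and double-counts in the other. Either stay within the limit of picture~\ref{pict_1} and track $\Omega^{\mathbb{B}}_{I_s}Z_s$ as the paper does, or redo the Stokes argument directly on $\Sigma$ with the singular limit propagator, in which case the subgraph collapses at the corner are the whole story and there is no $\mathbb{B}$-edge operator to dispose of. With that bookkeeping made consistent, your argument closes.
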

Here $Z$ is understood as the limit $s\rightarrow 0$ of the evaluation of  partition function of picture~\ref{pict_1} on $\Sigma_s$ on the fields pulled back from edges and corners of $\Sigma$ along the collapse map $\Sigma_s\rightarrow \Sigma$. 
See Remark~\ref{rem: proof of mQME via Abeta disk} below for an explicit example of the mQME with corners in the picture~\ref{pict_2}. 
Also, in Remark~\ref{rem: collapses of AB disk} we will have a non-example showing that the mQME does indeed fail without the Assumption~\ref{assump: collapsible gauge}.

\begin{proof}[Sketch of proof]
Proposition~\ref{prop: mQME II} arises as a corollary of Proposition~\ref{prop: mQME I}, since~$Z$ is understood as a limit $s\rightarrow 0$, in the sense explained above, of partition functions $Z_s$ on surfaces $\Sigma_s$ in picture~\ref{pict_1}, which do satisfy the modified quantum master equation by Proposition~\ref{prop: mQME I}. 
Contribution $\mathrm{i}\hbar \big\langle \frac12\big[\alpha,\alpha\big],\frac{\partial}{\partial \alpha} \big\rangle $ to $\Omega$ in picture~\ref{pict_2} at a $\mathbb{B}-\alpha-\mathbb{B}$ corner arises as an $s\rightarrow 0$ limit of the $\mathbb{A}$-edge contribution~\eqref{Omega A} from the $\mathbb{A}$-edge of $\Sigma_s$ collapsing to the corner. 

Next, consider an $\mathbb{A}-\beta-\mathbb{A}$ corner where, in addition to Assumption~\ref{assump: collapsible gauge}, we assume for the moment that the $0$-form component of the $\mathbb{A}$ field is continuous through the corner. 
In this case, the corner contribution to $\Omega$ given by~(\ref{Omega corner = HE+corner+HE},\ref{HE_contributions_to_Omega}) simplifies to $\Omega^{\mathbb{A}\beta\mathbb{A}}= \mathrm{i}\hbar\big\langle [\mathbb{A}_p,\beta],\frac{\partial}{\partial \beta} \big\rangle$ and it arises from the fact that $Z_s$ depends on the $1$-form field $\mathbb{B}^{(1)}$ at the edge $I_s$ collapsing into the $\beta$-corner $p$, and this dependence is important for the mQME in picture~\ref{pict_1}. 
Using the Assumption~\ref{assump: collapsible gauge} and the continuity of $\mathbb{A}^{(0)}$ through the corner, the dependence of $Z_s$ on $\mathbb{B}^{(1)}$ for small $s$ is: $Z_s\sim \mathrm{e}^{\frac{\mathrm{i}}{\hbar}\int_{I_s}\langle\mathbb{B}^{(1)},\mathbb{A}_p\rangle}$, and one has:
\[
  \Omega_{I_s}^{\mathbb{B}}Z_s\sim \Big(\mathrm{i}\hbar\int_{I_s}\big\langle [\mathbb{A}_p,\mathbb{B}^{(0)}],\frac{\partial}{\partial \mathbb{B}^{(0)}} \big\rangle \Big) Z_s \rightarrow  \mathrm{i}\hbar\big\langle [\mathbb{A}_p,\beta],\frac{\partial}{\partial \beta} \big\rangle Z ~.
\]
Therefore, one can compensate for the loss of dependence on $\mathbb{B}^{(1)}$ during the collapse by inclusion of the term $\mathrm{i}\hbar\big\langle [\mathbb{A}_p,\beta],\frac{\partial}{\partial \beta} \big\rangle$ in~$\Omega$.

Finally, consider the $\mathbb{A}-\beta-\mathbb{A}$ corner without assuming the continuity of $\mathbb{A}^{(0)}$ through the corner. 
To analyze the dependence of $Z_s$ on $\mathbb{B}^{(1)}$, we cut a rectangle $\mathcal{R}$ out of $\Sigma_s$ at the collapsing edge:
\beq
 \begin{tikzpicture}[scale=.3]
 
  \begin{scope}[scale=1.2]
    \clip (-8,-.3) to[out=0, in=-220] (-130:3) arc (-130:130:3) to[out=220, in=0] (-8,.3);
    \shade[inner color=black!15, outer color=black!4] (-8,0) circle (6.5);
  \end{scope}
  
  \begin{scope}[scale=1.2]
   \draw[decoration={markings, mark=at position 0.5 with {\arrow{>}}},
        postaction={decorate},draw=black, thick] (-8,-.3) to[out=0, in=-220]node[below]{\small$\mathbb{A}$} (-130:3);
   \draw[decoration={markings, mark=at position 0.5 with {\arrow{>}}},
        postaction={decorate},black, thick] (130:3) to[out=220, in=0] node[above]{\small$\mathbb{A}$} (-8,.3);
   \node[left] at (-8.5,0) {$\small\mathbb{B}$};
   \draw[dashed, black] (-7.2,0) circle (1.3);
   \filldraw (-8,.3) rectangle (-7.9,-.3);
  \end{scope}
  
  \draw[decoration={markings, mark=at position 1 with {\arrow{>}}},
        postaction={decorate},black, thick] (0,0) to node[above]{\small zoom in}(5,0);
        
  \begin{scope}[shift={(12,0)}, scale=1.35]
   \filldraw[color=black!15] (-2,-2) rectangle (2,2);   
   
   \draw[decoration={markings, mark=at position .5 with {\arrow{>}}},
        postaction={decorate},black, thick] (-2,-2) to node[above]{\footnotesize$\mathbb{A}_1$}(2,-2);
        
   \draw[decoration={markings, mark=at position .5 with {\arrow{>}}},
        postaction={decorate},black, thick, dashed] (2,-2) to node[left]{\footnotesize$\mathbb{A}_2$} node[right]{\footnotesize$I_2$}(2,2);
        
   \draw[decoration={markings, mark=at position .5 with {\arrow{>}}},
        postaction={decorate},black, thick] (2,2) to node[below]{\footnotesize$\mathbb{A}_3$}(-2,2);
        
   \draw[decoration={markings, mark=at position .5 with {\arrow{>}}},
        postaction={decorate},black, thick] (-2,2) to node[right]{\footnotesize$\mathbb{B}$} node[left]{\footnotesize${I}_1$}(-2,-2);  
   \node at (0,0) {$\small\mathcal{R}$};
   
   \draw[decoration={markings, mark=at position 1 with {\arrow{>}}},
        postaction={decorate},black, thick] (-2,-3.5) to node[below]{\footnotesize$\tau$}(2,-3.5);
        
   \draw[decoration={markings, mark=at position 1 with {\arrow{>}}},
        postaction={decorate},black, thick] (-3.5,-2) to node[left]{\footnotesize$t$}(-3.5,2);
  \end{scope}
  
  \begin{scope}[shift={(19,0)}, scale=1.35]
    \clip (-2,2) to (-2,-2) to (2,-3.5) to (5,-3.5) to (5,3.5) to (2,3.5);
    \shade[inner color=black!15, outer color=black!2] (-2,0) circle (5.3);
  \end{scope}
  
  \begin{scope}[shift={(19,0)}, scale=1.35]
   \draw[decoration={markings, mark=at position .5 with {\arrow{<}}},
        postaction={decorate},black, thick, dashed] (-2,-2) to node[right]{\footnotesize$\mathbb{B}_2$}(-2,2);
        
   \draw[decoration={markings, mark=at position .5 with {\arrow{<}}},
        postaction={decorate},black, thick] (-2,2) to (2,3.5);
        
   \draw[decoration={markings, mark=at position .5 with {\arrow{>}}},
        postaction={decorate},black, thick] (-2,-2) to (2,-3.5);
        
   \node at (1,0) {$\widetilde{\Sigma}$};     
  \end{scope} 
  
 \end{tikzpicture}
\eeq
Thus, we present the surface $\Sigma_s$ as $\mathcal{R}\cup_{I_2} \widetilde{\Sigma}$. 
Computing the partition function on the rectangle in the axial gauge,%
\footnote{
 We are using the axial gauge propagator parallel to $I_s$, i.e. $\eta(t,\tau \,;\, t',\tau')= {(\Theta(t-t')-t)\, \delta(\tau-\tau') \, (\mathrm{d}\tau'-\mathrm{d}\tau)-\mathrm{d}t\,\Theta(\tau'-\tau)}$ with $t,\tau$ the vertical and horizontal coordinate on the rectangle. 
 This choice is the one consistent with the Assumption~\ref{assump: collapsible gauge}.
} 
setting $\mathbb{A}_1=\mathbb{A}_{p+0}$, $\mathbb{A}_3=\mathbb{A}_{p-0}$ -- constant zero-forms, the limiting values of $\mathbb{A}^{(0)}$ to the right and left of the corner $p$ on $\Sigma$, and setting $\mathbb{A}_2= \mathrm{d}t\,\underline{\mathbb{A}}$ -- a constant $1$-form, we find the following:
\begin{equation}
 Z_\mathcal{R}=\mathrm{e}^{\frac{\mathrm{i}}{\hbar}\int_{\bar{I}_s} \mathrm{d}t\,\langle\mathbb{B}^{(0)},\underline{\mathbb{A}}\rangle
  + \langle\mathbb{B}^{(1)}, \mathsf{G}_-(t,\mathrm{ad}_{\underline{\mathbb{A}}})\,\mathbb{A}_{p+0}
   + \mathsf{G}_+(t,\mathrm{ad}_{\underline{\mathbb{A}}})\,\mathbb{A}_{p-0}\rangle ~,
}
\end{equation}
with $\mathsf{G}_\pm$ as in~\eqref{G pm}. 
Here the integral is over $t\in [0,1]$, or equivalently over $I_s$ with reversed orientation. 
This implies 
\begin{equation}
 (\Omega^{\mathbb{B}}_{I_s}+\langle \beta, \mathbb{A}_{p+0}\rangle - \langle \beta, \mathbb{A}_{p-0} \rangle ) Z_\mathcal{R}\Big|_{\mathbb{B}=\beta}
  = -\langle \beta , \mathsf{F}_-(\mathrm{ad}_{\underline{\mathbb{A}}})\,\mathbb{A}_{p+0} + \mathsf{F}_+(\mathrm{ad}_{\underline{\mathbb{A}}})\,
   \mathbb{A}_{p-0}\rangle\; \mathrm{e}^{\frac{\mathrm{i}}{\hbar}\langle \beta, \underline{\mathbb{A}} \rangle} ~.
\end{equation}
The operator acting on $Z_\mathcal{R}$ on the left hand side is the part of the $\Omega$ in picture~\ref{pict_1} corresponding to the collapsing interval $I_s$ and its two endpoints. 
Thus, combining with the gluing formula for partition functions we have:
\begin{multline}
 (\Omega^{\mathbb{B}}_{I_s}+\langle \beta, \mathbb{A}_{p+0}\rangle - \langle \beta, \mathbb{A}_{p-0} \rangle )\; 
  \underbrace{\int \mathrm{d}\underline{\mathbb{A}}\; \mathrm{d}\beta' \;Z_\mathcal{R} \cdot \mathrm{e}^{-\frac{i}{\hbar}\langle \beta', \underline{\mathbb{A}} \rangle}\cdot Z_{\widetilde \Sigma}(\beta',\cdots)}_{Z_{\Sigma_s}} =\\
=\langle \beta , \mathsf{F}_+(\mathrm{ad}_{i\hbar \frac{\partial}{\partial \beta}})\,\mathbb{A}_{p+0} + \mathsf{F}_-(\mathrm{ad}_{i\hbar \frac{\partial}{\partial \beta}})\,\mathbb{A}_{p-0}\rangle\; Z_{\widetilde\Sigma}(\beta,\cdots) ~.
\end{multline}
Thus, the action on $Z_{\Sigma_s}$ of the part of $\Omega$ in picture~\ref{pict_1} corresponding to the collapsing interval (with its endpoints) is compensated by the action on the partition function in picture~\ref{pict_2} of the operator appearing on the right hand side -- which is precisely our anticipated corner contribution in picture~\ref{pict_2}, $\Omega^{\mathbb{A}\beta \mathbb{A}}=\Omega^{\mathbb{A}\beta }+\Omega^{\beta \mathbb{A}}$, see~\eqref{HE_contributions_to_Omega}. 
\end{proof}

\begin{remark}\label{rem: proof of mQME via Abeta disk}
Another argument for the contribution to $\Omega$ from a $\beta$-corner is as follows. 
In Section~\ref{BF A-disk with b corners} we will obtain the explicit partition function for an $\mathbb{A}$-disk $D$ with a single $\beta$-corner 
\[
 \begin{tikzpicture}[scale=.5]
   \filldraw[fill=black!15, thick, draw=black] (0,0) circle (1.5);
   
   \draw[black, thick, ->] (0,-1.5) arc (-90:0:1.5) node[right=2pt]{\small$\mathbb{A}$};
   
   \filldraw[black] (-1.5,0) circle (.12) node[right]{\small$\beta$} node[left]{\small$p$};
   
   \draw[black, thick, ->] (-2.5,1.2) node[above]{\footnotesize$p-0$} to (-1.85,.5);
   
   \draw[black, thick, ->] (-2.5,-1.2) node[below]{\footnotesize$p+0$} to (-1.85,-.5);
   
 \end{tikzpicture}
\]
in the form
\begin{equation}
 Z_D=\mathrm{e}^{\frac{\mathrm{i}}{\hbar}\langle \beta, \log U(\mathbb{A}) \rangle} ~,
\end{equation}
with $U(\mathbb{A})$ the holonomy of the $1$-form field $\mathbb{A}^{(1)}$ along the boundary circle. 
From Baker-Campbell-Hausdorff formula, one finds 
\begin{equation}
 \Omega^\mathbb{A}\log U(\mathbb{A})=-\mathrm{i}\hbar\, (\mathsf{F}_+(\mathrm{ad}_{\log U(\mathbb{A})})\, \mathbb{A}_{p-0}
  +\mathsf{F}_-(\mathrm{ad}_{\log U(\mathbb{A})})\, \mathbb{A}_{p+0})~,
\end{equation}
which implies
\begin{equation}
 \Omega^\mathbb{A} Z_D= \langle \beta,  \mathsf{F}_+(\mathrm{ad}_{\log U(\mathbb{A})})\, \mathbb{A}_{p-0}
  +\mathsf{F}_-(\mathrm{ad}_{\log U(\mathbb{A})})\, \mathbb{A}_{p+0}\rangle \cdot Z_D ~.
\end{equation}
From this one immediately sees that 
\begin{equation}
 (\Omega^\mathbb{A}+\Omega^{\beta \mathbb{A}}_p+\Omega^{\mathbb{A}\beta}_p)Z_D=0 ~,
\end{equation}
with the corner contributions as prescribed by~\eqref{HE_contributions_to_Omega}. 
Thus, the mQME works by a direct computation. 
For a general surface $\Sigma$ containing a $\beta$-corner, surrounded by $\mathbb{A}$-edges, one can cut out a disk around the corner and the mQME will follow from the one we just checked for the disk and from the one for the remaining part of the surface (thus by induction one can reduce to the case of surfaces without $\mathbb{A}-\beta-\mathbb{A}$-corners).
\beq\label{b_corner_bubble}
 \begin{tikzpicture}[scale=.4, baseline=(x.base)]
  
  \begin{scope}[scale=1.2, shift={(.6,0)}]
   \clip (-3,-3) rectangle (3,3);
   \clip (0,0) circle (3);
   \shade[inner color= black!40, outer color=black!3] (-3,0) circle (4);
  \end{scope}

  \begin{scope}[scale=1.2, shift={(.6,0)}]
   \clip (-3,0) circle (4);
   \draw[thick] (0,0) circle (3);
  \end{scope}

  \begin{scope}[shift={(.3,0)}]
   \filldraw[black!15] (-3,0) circle (1.4);
   \draw[thick, dashed] ($(-3,0)+ (-85:1.4)$) arc (-85:85:1.4);
   \draw[thick] ($(-3,0)+ (85:1.4)$) arc (85:275:1.4);
   
   \filldraw[black] (-4.4,0) circle (.18) node[left]{\footnotesize$\beta$};
  \end{scope}
  
  \node at (-1.9,0) {\footnotesize$\mathbb{A}$};
  \node at (-.7,0) {\footnotesize$\mathbb{B}$};
  
  \node at (-3.2,.6) {\footnotesize$\mathbb{A}$};
  \node at (-3.2,-.6) {\footnotesize$\mathbb{A}$};
  
  \node at (-2.9,1.9) {\footnotesize$\alpha$};
  \node at (-2.9,-1.9) {\footnotesize$\alpha$};
  
  \node at (-1.2,2.3) {\footnotesize$\mathbb{A}$};
  \node at (-1.2,-2.3) {\footnotesize$\mathbb{A}$};
    
 \end{tikzpicture}
\eeq

\end{remark}

Yet another approach to the proof of Proposition~\ref{prop: mQME II}, explaining the corner contributions~\eqref{HE_contributions_to_Omega}, is in the vein of the proof of Proposition~\ref{prop: mQME I}, with $\Omega$ given by Feynman subgraphs collapsing at the boundary/corners. 
Consider e.g. a collapse at a $\mathbb{A}-\beta-\mathbb{A}$ corner. 
The following subgraphs are contributing:
\beq\label{AbetaA corner collapse}
 \begin{tikzpicture}[scale=.3]
 
  \begin{scope}[scale=1.2]
    \clip (-8,-.2) to[out=0, in=-220] (-130:3) arc (-130:130:3) to[out=220, in=0] (-8,.2);
    \shade[inner color=black!15, outer color=black!4] (-8,0) circle (6.5);
  \end{scope}
  
  \begin{scope}[scale=1.2]
   \draw[decoration={markings, mark=at position 0.5 with {\arrow{>}}},
        postaction={decorate},draw=black, thick] (-8,-.2) to[out=0, in=-220]node[below]{\small$\mathbb{A}$} (-130:3);
   \draw[decoration={markings, mark=at position 0.5 with {\arrow{>}}},
        postaction={decorate},black, thick] (130:3) to[out=220, in=0] node[above]{\small$\mathbb{A}$} (-8,.2);
   \node[left] at (-8.5,0) {\small$\beta$};
   \draw[dashed, black] (-7.2,0) circle (1.3);
  \end{scope}
  
  \draw[decoration={markings, mark=at position 1 with {\arrow{>}}},
        postaction={decorate},black, thick] (0,0) to node[above]{\small zoom in}(5,0);

  \begin{scope}[yscale=-1, shift={(13,0)}]
  
    \shade[left color=black!15, right color=black!3] (-3,-3) rectangle (5,3);

    \draw[black, thick] (-3,-3) to node[left]{$\mathbb{B}$} (-3,3);
    \draw[black, thick] (-3,3) to node[below]{$\mathbb{A}$} (5,3);
    \draw[black, thick] (-3,-3) to node[above]{$\mathbb{A}$} (5,-3);
    
    \draw[black, dashed] (3,3) to (3,-3);
    
    \draw[->, thick] (-5,3) to (-5,-3) node[below left]{\small$t$};
    \draw[<-, thick] (3,5) node[below left]{\small$\tau$} to (-3,5);

    \draw[decoration={markings, mark=at position 0.5 with {\arrow{>}}},
        postaction={decorate},color=black, thick] (-3,-1) arc (-90:-70:4);
    \draw[decoration={markings, mark=at position 0.3 with {\arrow{>}}},
        postaction={decorate}, black, thick, dashed] ($(-3,3)+(-70:4)$) to[out=-70, in=180] (3,-2.5) to[out=0, in=180] (5,-2.5);

    \draw[decoration={markings, mark=at position 0.5 with {\arrow{>}}},
        postaction={decorate},color=black, thick] ($(-3,3)+(-70:4)$) arc (-70:-50:4);
    \draw[decoration={markings, mark=at position 0.3 with {\arrow{>}}},
        postaction={decorate}, black, thick, dashed] ($(-3,3)+(-50:4)$) to[out=-50, in=180] (5,-1.5);

    \draw[decoration={markings, mark=at position 0.5 with {\arrow{>}}},
        postaction={decorate},color=black, thick] ($(-3,3)+(-50:4)$) arc (-50:-20:4);
    \draw[decoration={markings, mark=at position 0.3 with {\arrow{>}}},
        postaction={decorate}, black, thick, dashed] ($(-3,3)+(-20:4)$) to[out=-20, in=180] (5,.8);

    \draw[decoration={markings, mark=at position 0.5 with {\arrow{>}}},
        postaction={decorate},color=black, thick] ($(-3,3)+(-20:4)$) arc (-20:0:4);
    
    \draw[color=black, thick, fill=black!10] (-3,-1) circle (7pt);
    \draw[color=black, fill=black] (1,3) circle (6pt);
    
    \draw[color=black, thick, dotted] ($(-3,3)+(-37:5.5)$) arc   (-37:-26:5.5);
    
   \end{scope}   

   \node at (20,-1.5) {,};
   
   \begin{scope}[yscale=1, shift={(27,0)}]
  
    \shade[left color=black!15, right color=black!3] (-3,-3) rectangle (5,3);

    \draw[black, thick] (-3,-3) to node[left]{$\mathbb{B}$} (-3,3);
    \draw[black, thick] (-3,3) to node[above]{$\mathbb{A}$} (5,3);
    \draw[black, thick] (-3,-3) to node[below]{$\mathbb{A}$} (5,-3);
    
    \draw[black, dashed] (3,3) to (3,-3);

    \draw[decoration={markings, mark=at position 0.5 with {\arrow{>}}},
        postaction={decorate},color=black, thick] (-3,-1) arc (-90:-70:4);
    \draw[decoration={markings, mark=at position 0.3 with {\arrow{>}}},
        postaction={decorate}, black, thick, dashed] ($(-3,3)+(-70:4)$) to[out=-70, in=180] (3,-2.5) to[out=0, in=180] (5,-2.5);

    \draw[decoration={markings, mark=at position 0.5 with {\arrow{>}}},
        postaction={decorate},color=black, thick] ($(-3,3)+(-70:4)$) arc (-70:-50:4);
    \draw[decoration={markings, mark=at position 0.3 with {\arrow{>}}},
        postaction={decorate}, black, thick, dashed] ($(-3,3)+(-50:4)$) to[out=-50, in=180] (5,-1.5);

    \draw[decoration={markings, mark=at position 0.5 with {\arrow{>}}},
        postaction={decorate},color=black, thick] ($(-3,3)+(-50:4)$) arc (-50:-20:4);
    \draw[decoration={markings, mark=at position 0.3 with {\arrow{>}}},
        postaction={decorate}, black, thick, dashed] ($(-3,3)+(-20:4)$) to[out=-20, in=180] (5,.8);

    \draw[decoration={markings, mark=at position 0.5 with {\arrow{>}}},
        postaction={decorate},color=black, thick] ($(-3,3)+(-20:4)$) arc (-20:0:4);
    
    \draw[color=black, thick, fill=black!10] (-3,-1) circle (7pt);
    \draw[color=black, fill=black] (1,3) circle (6pt);
    
    \draw[color=black, thick, dotted] ($(-3,3)+(-37:5.5)$) arc   (-37:-26:5.5);
    
   \end{scope}
    
 \end{tikzpicture}
\eeq
One computes these contributions to $\Omega$ using the propagator $\eta=(\Theta(t-t')-t)\delta(\tau-\tau')(\mathrm{d}\tau-\mathrm{d}\tau')-\mathrm{d}t\, \Theta (\tau'-\tau)$ in the rectangle that we see when zooming into the corner. 
In the zoomed-in picture we  are considering configurations of points modulo the horizontal rescalings $\tau\mapsto c\cdot \tau$. 
We fix a representative of the quotient by fixing the horizontal position of one marked vertex. 
Edges leaving the collapsing subgraph are assigned the expression $\mathrm{d}t\cdot \mathrm{i}\hbar\frac{\partial}{\partial \beta}$ (the factor $\mathrm{d}t$ comes from the propagator associated to the external edge). 
The graphs in~\eqref{AbetaA corner collapse} are easily computed and yield, when summed over the number of external edges, $\Omega^{\mathbb{A}\beta\mathbb{A}}=\Omega^{\mathbb{A}\beta} + \Omega^{\beta\mathbb{A}}$ with $\Omega^{\mathbb{A}\beta},\Omega^{\beta\mathbb{A}}$ given by the formulae~\eqref{HE_contributions_to_Omega}.

\subsubsection{Space of states for the stratified circle as assembled from spaces of states for edges and corners}

One can regard the space of states~\ref{H for stratified circle} for the stratified circle as constructed from the spaces of states for individual edges.
One assigns  to an interval (with chosen polarization $\mathbb{P}$ in the bulk and $\xi,\xi'\in \{\alpha,\beta\}$ on the endpoints) a space of states -- a cochain complex -- 
constructed as the space of functions on the $\mathbb{P}$-field at the edge and fields at the corners (understood as independent fields if the corner and edge polarizations disagree; if the polarizations agree, the corner field is the limiting value of the edge field):
\beq
 \begin{tikzpicture}[scale=.4, baseline=(x.base)]
  \coordinate (x) at (0,-0.3);
  
  \draw[decoration={markings, mark=at position 0.5 with {\arrow{>}}},
        postaction={decorate},black, thick] (-2,0) to node[above]{\small$\mathbb{A}$} node[below]{\small$I$} (2,0);

  \filldraw[black] (-2,0) node[above]{\footnotesize$\xi$} node[below]{\footnotesize$p_\mathrm{in}$} circle (.15);
  \filldraw[black] (2,0) node[above]{\footnotesize$\xi'$} node[below]{\footnotesize$p_\mathrm{out}$} circle (.15);
 \end{tikzpicture}
 ~ \longrightarrow ~\mathcal{H}_I^{\xi,\mathbb{A},\xi'}
  &= \mathrm{Fun}_\mathbb{C}\Bigg( \Bigg\{\begin{array}{l} \mathfrak{g}[1] \\ \mathfrak{g}^*\end{array} \Bigg\} \times_{\mathfrak{g}[1]}
   \Omega^\bullet(I,\mathfrak{g})[1] \times_{\mathfrak{g}[1]} \Bigg\{\begin{array}{l} \mathfrak{g}[1] \\  
    \mathfrak{g}^*\end{array} \Bigg\} \Bigg)~,\\ 
 \Omega_I^{\xi,\mathbb{A},\xi'} &=\Omega_\mathrm{in}^{\xi}+\Omega_\mathrm{in}^{\xi\mathbb{A}}
  +\Omega_I^{\mathbb{A}} +\Omega_\mathrm{out}^{\mathbb{A}\xi'} +\Omega_\mathrm{out}^{\xi'} ~,\\
 \begin{tikzpicture}[scale=.4, baseline=(x.base)]
  \coordinate (x) at (0,-0.3);
  
  \draw[decoration={markings, mark=at position 0.5 with {\arrow{>}}},
        postaction={decorate},black, thick] (-2,0) to node[above]{\small$\mathbb{B}$} node[below]{\small$I$} (2,0);

  \filldraw[black] (-2,0) node[above]{\footnotesize$\xi$} node[below]{\footnotesize$p_\mathrm{in}$} circle (.15);
  \filldraw[black] (2,0) node[above]{\footnotesize$\xi'$} node[below]{\footnotesize$p_\mathrm{out}$} circle (.15);
 \end{tikzpicture}
 ~ \longrightarrow  ~\mathcal{H}_I^{\xi,\mathbb{B},\xi'}
  &= \mathrm{Fun}_\mathbb{C}\Bigg( \Bigg\{\begin{array}{l}\mathfrak{g}[1] \\  \mathfrak{g}^* \end{array} \Bigg\}
   \times_{ \mathfrak{g}^*} \Omega^\bullet(I,\mathfrak{g}^*) \times_{ \mathfrak{g}^*} \Bigg\{\begin{array}{l}\mathfrak{g}[1] \\  \mathfrak{g}^* \end{array} \Bigg\} \Bigg)~,\\  
 \Omega_I^{\xi,\mathbb{A},\xi'} &=\Omega_\mathrm{in}^{\xi}+ \Omega_\mathrm{in}^{\xi\mathbb{B}}+ \Omega_I^{\mathbb{B}}+\Omega_\mathrm{out}^{\mathbb{B}\xi'} 
  + \Omega_\mathrm{out}^{\xi'}~.
\eeq
Here top/bottom choice for the fiber product factors on the left/right corresponds to $\alpha$ or $\beta$ polarization on the left/right endpoint.
Note that the polarizations of the endpoints affect the BFV differential, which is given by the edge term defined by~(\ref{Omega A},\ref{Omega B}) plus the two endpoint-edge terms defined by~\eqref{HE_contributions_to_Omega}, plus two pure endpoint terms defined by~\eqref{Omega pure corner}.

One can also assign a space of states to a corner $p$ in $\alpha$- or $\beta$-polarization as follows:
\beq\label{H alpha}
 \mathcal{H}^\alpha_p &= \mathrm{Fun}_\mathbb{C}(\mathfrak{g}[1]) =\mathbb{C}\otimes\wedge^\bullet \mathfrak{g}^*~, \quad
  &&\Omega_p^\alpha= \frac{\mathrm{i}\hbar}{2} \Big\langle [\alpha,\alpha],\frac{\partial}{\partial \alpha} \Big\rangle~, \\
 \mathcal{H}^\beta_p &= \mathrm{Fun}_\mathbb{C}(\mathfrak{g}^*)=\mathbb{C}\otimes S^\bullet \mathfrak{g} ~,\quad 
  &&\Omega_p^\beta= 0~.
\eeq

Note that, as a cochain complex, $\mathcal{H}_I^{\alpha,\mathbb{A},\alpha}$ is quasi-isomorphic to $\mathcal{H}^\alpha_p$ -- the Chevalley-Eilenberg complex of the Lie algebra $\mathfrak{g}$.
Geometrically, this corresponds to the collapse of an $\mathbb{A}$-interval with endpoints in $\alpha$-polarization into a single $\alpha$-point. 
Likewise, the cochain complex $\mathcal{H}_I^{\beta,\mathbb{B},\beta}$ is quasi-isomorphic to $\mathcal{H}^\beta_p$:
\beq\label{collapse quasi-iso}
 \mathcal{H}_I^{\alpha,\mathbb{A},\alpha} \rightsquigarrow \mathcal{H}^\alpha_p~,\qquad 
 \mathcal{H}_I^{\beta,\mathbb{B},\beta} \rightsquigarrow \mathcal{H}^\beta_p~.
\eeq

One can regard $\mathcal{H}_p^\alpha$ and $\mathcal{H}_p^\beta$ as differential graded algebras.
The algebra structure on $\mathcal{H}_p^\alpha$ is the standard supercommutative multiplication in the exterior algebra, while for $\mathcal{H}_p^\beta$ we need to deform the na\"ive commutative product in the symmetric algebra into a star-product~$*_\hbar$ -- the deformation quantization of the Kirillov-Kostant-Souriaux Poisson structure on~$\mathfrak{g}^*$, as we explain below.

One can regard the space of states for the interval as a bimodule over the spaces of states associated to the end-points. 
The action of the end-point algebra $\mathcal{H}_p^\xi$ on the space of states $\mathcal{H}_I^{\xi,\mathbb{P},\xi'}$ for the edge is via multiplication in the algebra, e.g. $\psi(\alpha)\otimes \Psi(\alpha,\mathbb{P},\xi') \mapsto \psi(\alpha) \Psi(\alpha,\mathbb{P},\xi')$, $\psi(\beta)\otimes \Psi(\beta,\mathbb{P},\xi') \mapsto \psi(\beta) *_\hbar\Psi(\beta,\mathbb{P},\xi')$.
The reason we need to deform the product in $\mathcal{H}^\beta$ from the commutative one is that we want the edge to give a \emph{differential graded} bimodule over the corner spaces. 
In particular, the module structure map $\mathcal{H}_p^\beta\otimes \mathcal{H}_I^{\beta,\mathbb{A},\xi'} \rightarrow  \mathcal{H}_I^{\beta,\mathbb{A},\xi'}$ should be a chain map with respect to the differential $\Omega^{\beta \mathbb{A}}_p+\Omega^{\mathbb{A}}_I + \Omega^{\mathbb{A}\xi'}_{p'}$. 
This requirement is incompatible with the commutative product on $\mathcal{H}^\beta_p$ and forces the following associative non-commutative deformation $*_\hbar: \mathcal{H}^\beta\otimes \mathcal{H}^\beta\rightarrow \mathcal{H}^\beta$.

\begin{proposition}
 The associative product strucure $*_\hbar$ on $\mathcal{H}^\beta$ is fixed uniqely by the two  properties: 
 \begin{enumerate}[(i)]
  \item The module structure map $m:\mathcal{H}_p^\beta\otimes \mathcal{H}_I^{\beta,\mathbb{A},\xi'} \rightarrow  \mathcal{H}_I^{\beta,\mathbb{A},\xi'}$, obtained by extending $*_\hbar$ by linearity in the second factor, is a chain map.
  \item \label{prop star-product (ii)} $*_\hbar$ is unital with $\psi(\beta)=1$ the unit.
 \end{enumerate}
 The product $*_\hbar$ is explicitly described as follows:
 \begin{equation}\label{BCH star product}
  \mathrm{e}^{-\frac{\mathrm{i}}{\hbar}\langle \beta, x \rangle} *_\hbar \mathrm{e}^{-\frac{\mathrm{i}}{\hbar}\langle \beta, y \rangle} 
   =  \mathrm{e}^{-\frac{\mathrm{i}}{\hbar}\langle \beta, \mathrm{BCH}(x,y) \rangle}
 \end{equation}
 Here $x,y\in \mathfrak{g}$ are arbitrary parameters in the Lie algebra and $\mathrm{BCH}(x,y)=\log(\mathrm{e}^x \mathrm{e}^y)$ is the Baker-Campbell-Hausdorff group law.
\end{proposition}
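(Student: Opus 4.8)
The plan is to reduce the chain-map requirement (i) to a condition purely in the corner variable $\beta$, to recognize the Bernoulli generating function $\mathsf{F}_+$ as the infinitesimal generator of the Baker--Campbell--Hausdorff group law, and then to verify the closed formula~\eqref{BCH star product} and establish its uniqueness. First I would unpack (i). The edge differential is $\Omega_I^{\beta,\mathbb{A},\xi'}=\Omega_{p}^{\beta\mathbb{A}}+\Omega_I^{\mathbb{A}}+\Omega_{p'}^{\mathbb{A}\xi'}$, and left star-multiplication $L_\psi$ by $\psi(\beta)\in\mathcal{H}_p^\beta$ acts only on the $\beta=\beta_\mathrm{in}$ variable. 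Since $\Omega_I^{\mathbb{A}}$ differentiates the $\mathbb{A}$-field on the interval and $\Omega_{p'}^{\mathbb{A}\xi'}$ lives at the far endpoint, both commute with $L_\psi$ on the nose, while $\Omega_p^\beta=0$. Hence (i) collapses to the single requirement that $L_\psi$ be a graded chain map for the corner-edge operator $\Omega_p^{\beta\mathbb{A}}=\langle\beta,\mathsf{F}_+(\mathrm{ad}_{\mathrm{i}\hbar\frac{\partial}{\partial\beta}})\mathbb{A}_p\rangle$ alone; this is the only place where the product and the differential interact.

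Next I would pass to exponential generators. Using that $\mathrm{i}\hbar\frac{\partial}{\partial\beta}$ acts on $\mathrm{e}^{-\frac{\mathrm{i}}{\hbar}\langle\beta,x\rangle}$ as multiplication by $x\in\mathfrak{g}$, the derivation $\mathrm{ad}_{\mathrm{i}\hbar\frac{\partial}{\partial\beta}}$ becomes $\mathrm{ad}_x$, so the infinite-order operator $\mathsf{F}_+(\mathrm{ad}_{\mathrm{i}\hbar\frac{\partial}{\partial\beta}})$ reduces to the finite expression $\mathsf{F}_+(\mathrm{ad}_x)$ on such states. Since the functions $\mathrm{e}^{-\frac{\mathrm{i}}{\hbar}\langle\beta,x\rangle}$, $x\in\mathfrak{g}$, topologically span $\mathcal{H}^\beta=\mathrm{Fun}_{\mathbb{C}}(\mathfrak{g}^*)$, and since (ii) fixes $1$ as the unit, it suffices to determine $\mathrm{e}^{-\frac{\mathrm{i}}{\hbar}\langle\beta,x\rangle}*_\hbar\mathrm{e}^{-\frac{\mathrm{i}}{\hbar}\langle\beta,y\rangle}$.

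The bridge to the group law is the classical identity $\frac{d}{dt}\log(\mathrm{e}^x\mathrm{e}^{ty})=\mathsf{F}_+(\mathrm{ad}_{\log(\mathrm{e}^x\mathrm{e}^{ty})})\,y$ (and its mirror with $\mathsf{F}_-$ for the outgoing corner), which is precisely the right-trivialized derivative of the exponential map and produces exactly the Bernoulli generating function appearing in $\Omega_p^{\beta\mathbb{A}}$. I would feed the ansatz $\mathrm{e}^{-\frac{\mathrm{i}}{\hbar}\langle\beta,x\rangle}*_\hbar\mathrm{e}^{-\frac{\mathrm{i}}{\hbar}\langle\beta,y\rangle}=\mathrm{e}^{-\frac{\mathrm{i}}{\hbar}\langle\beta,\mathrm{BCH}(x,y)\rangle}$ into the reduced chain-map condition of the first step; by the previous identity this turns the chain-map equation into the differential equation defining $t\mapsto\mathrm{BCH}(x,ty)$, confirming that the BCH product intertwines $\Omega_p^{\beta\mathbb{A}}$ and that it is manifestly associative (being $\log(\mathrm{e}^x\mathrm{e}^y)$) and unital. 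For uniqueness I would argue order by order in $\hbar$: since $\mathsf{F}_+(0)=1$, the leading term of $\Omega_p^{\beta\mathbb{A}}$ is $\langle\beta,\mathbb{A}_p\rangle$, so at order $\hbar^0$ the chain-map condition forces the commutative product of $S^\bullet\mathfrak{g}$, and each higher Bernoulli correction in $\mathsf{F}_+$ then determines the next order of $*_\hbar$ uniquely; together with unitality and the density of the exponentials this pins the product down completely, so it must coincide with~\eqref{BCH star product}.

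The main obstacle I anticipate is the middle step: making rigorous the action of the infinite-order operator $\mathsf{F}_+(\mathrm{ad}_{\mathrm{i}\hbar\frac{\partial}{\partial\beta}})$ and, above all, showing that the chain-map functional equation integrates to the \emph{full} group law $\mathrm{BCH}(x,y)$ rather than merely reproducing its first-order-in-$y$ expansion. This requires carefully tracking the standard ordering (derivatives to the right) so that the $\beta$-multiplication in $\langle\beta,\cdot\rangle$ and the $\partial_\beta$ inside $\mathsf{F}_+$ interact correctly, and then identifying the resulting flow in $\mathfrak{g}$ with the one-parameter family $t\mapsto\log(\mathrm{e}^x\mathrm{e}^{ty})$.
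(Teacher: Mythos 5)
Your existence argument follows the paper's route: both reduce condition (i) to compatibility with the single operator $\Omega_p^{\beta\mathbb{A}}$ (the other summands of the differential manifestly commute with left multiplication), both work on the exponential generators $\mathrm{e}^{-\frac{\mathrm{i}}{\hbar}\langle\beta,x\rangle}$ where $\mathsf{F}_+(\mathrm{ad}_{\mathrm{i}\hbar\partial_\beta})$ collapses to $\mathsf{F}_+(\mathrm{ad}_x)$, and both rest on the same identity relating $\mathsf{F}_+$ to the first-order BCH expansion -- the paper writes it as $\mathrm{BCH}(x,y)=x+\mathsf{F}_+(\mathrm{ad}_x)y+\mathcal{O}(y^2)$, so that $\Omega_p^{\beta\mathbb{A}}$ becomes an infinitesimal right $*_\hbar$-multiplication by $\mathrm{e}^{-\frac{\mathrm{i}}{\hbar}\langle\beta,\epsilon\mathbb{A}_p\rangle}$ with $\epsilon$ an odd parameter, and then invokes associativity of BCH to commute it past $L_{\widetilde\psi}$; your packaging of the same identity as the flow equation for $t\mapsto\mathrm{BCH}(x,ty)$ is equivalent, and in fact you only ever need the derivative at $t=0$, so the worry you raise about ``integrating to the full group law'' does not arise. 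Where you genuinely diverge is uniqueness. The paper's argument is a short algebraic trick: the chain-map condition for any candidate product $\bullet$ is equivalent to the mixed associativity $\psi_1\bullet(\psi_2*_\hbar\psi_3)=(\psi_1\bullet\psi_2)*_\hbar\psi_3$, and setting $\psi_2=1$ (using unitality of both products) immediately forces $\bullet=*_\hbar$. Your order-by-order induction in $\hbar$ can be made to work, but as stated it is only an assertion: the claim that the $\hbar^0$ part of the condition ``forces the commutative product'' requires using unitality together with the fact that commuting with multiplication by all $\langle\beta,v\rangle$ and evaluating at $\Psi=1$ propagates to all polynomials, and the inductive step at order $\hbar^k$ needs the analogous evaluation-at-the-unit argument to pin down the bidifferential operator $B_k$ on linear functions and then extend multiplicatively. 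You should either carry out that induction or adopt the paper's one-line trick, which buys you uniqueness without any $\hbar$-expansion at all.
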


\begin{proof}
 Let us check that the star-product~\eqref{BCH star product} does indeed make the module structure map a chain map. 
 Note that, for $\psi(\beta)=\mathrm{e}^{-\frac{\mathrm{i}}{\hbar}\langle \beta, x \rangle}$, the action of $\Omega^{\beta\mathbb{A}}_p$ on $\psi$ can be written as 
 \beq
  \Omega^{\beta\mathbb{A}}_p\psi =  \langle \beta, \mathsf{F}_+(\mathrm{ad}_x)\mathbb{A}_p \rangle\,\psi 
   = \mathrm{i}\hbar\frac{\mathrm{d}}{\mathrm{d}\epsilon}\Big|_{\epsilon=0} 
    \mathrm{e}^{-\frac{\mathrm{i}}{\hbar}\langle \beta,\mathrm{BCH}(x,\epsilon\, \mathbb{A}_p) \rangle}
     =\mathrm{i}\hbar\frac{\mathrm{d}}{\mathrm{d}\epsilon}\Big|_{\epsilon=0} \psi *_\hbar 
      \mathrm{e}^{-\frac{\mathrm{i}}{\hbar}\langle \beta, \epsilon\, \mathbb{A}_p \rangle}~,
 \eeq
 with $\epsilon$ an odd, ghost degree~$-1$ parameter and $*_\hbar$ defined by~\eqref{BCH star product}. 
 Here we have used the identity $\mathrm{BCH}(x,y)=x+\mathsf{F}_+(\mathrm{ad}_x)y+\mathcal{O}(y^2)$. 
 This implies that for any $\Psi\in \mathcal{H}^{\beta,\mathbb{A},\xi'}_I$ we have:
 \beq
  \Omega^{\beta\mathbb{A}}_p\Psi = \mathrm{i}\hbar\frac{\mathrm{d}}{\mathrm{d}\epsilon}\Big|_{\epsilon=0} \Psi *_\hbar 
   \mathrm{e}^{-\frac{\mathrm{i}}{\hbar}\langle \beta, \epsilon\, \mathbb{A}_p \rangle} ~.
 \eeq
 Therefore, for any $\widetilde{\psi}\in \mathcal{H}^\beta_p$ we have:
 \beq
  m\circ (\mathrm{id}\otimes\Omega^{\beta\mathbb{A}})(\widetilde\psi\otimes \Psi) 
   = \mathrm{i}\hbar\frac{\mathrm{d}}{\mathrm{d}\epsilon}\Big|_{\epsilon=0} \widetilde{\psi}*_\hbar \big(\Psi*_\hbar 
    \mathrm{e}^{-\frac{\mathrm{i}}{\hbar}\langle \beta, \epsilon\, \mathbb{A}_p \rangle}\big) \\
  = \mathrm{i}\hbar\frac{\mathrm{d}}{\mathrm{d}\epsilon}\Big|_{\epsilon=0} \big(\widetilde{\psi}*_\hbar \Psi \big) *_\hbar 
   \mathrm{e}^{-\frac{\mathrm{i}}{\hbar}\langle \beta, \epsilon\, \mathbb{A}_p \rangle} 
    = \Omega^{\beta\mathbb{A}}\circ m (\widetilde{\psi}\otimes \Psi) ~.
 \eeq
 Here we used the associativity of the star-product~\eqref{BCH star product}. 
 Note that the other pieces of the differential, $\Omega^\mathbb{A}_I$ and $\Omega^{\xi'}_{p'}$, clearly commute with the module structure map $m$. 
 Thus we have proven that $m$, defined by~\eqref{BCH star product} and extended by $\mathrm{Fun}(\mathbb{A},\xi')$-linearity in the second factor, is indeed a chain map. 
 
 Moreover, assume that $\bullet$ is some unital associative product on $\mathcal{H}^\beta$ with $\psi(\beta)=1$ the unit. 
 Then the argument above shows that the module structure map $m$ defined using $\bullet$ is a chain map if and only if $\psi_1\bullet (\psi_2 *_\hbar \psi_3)=(\psi_1\bullet \psi_2)*_\hbar \psi_3$ for any $\psi_{1,2,3}\in \mathcal{H}^\beta$. 
 Choosing $\psi_2=1$, we obtain $\psi_1 \bullet \psi_3=\psi_1 *_\hbar \psi_3$. 
 This proves uniqueness of the star-product~\eqref{BCH star product}.
\end{proof}

Gluing two intervals over a point corresponds to taking the tensor product of the spaces of states for the intervals over the algebra associated to the point:%
\footnote{
 Note that, in dg setting, when taking the tensor product $M_1\otimes_\mathcal{A} M_2$ of a right $\mathcal{A}$-module $M_1$ and a left  $\mathcal{A}$-module $M_2$ over a dg algebra $\mathcal{A}$, the total differential is the sum of the differentials on $M_1$, $M_2$ minus the differential on $\mathcal{A}$.
}
\begin{equation}
 \begin{tikzpicture}[scale=.4, baseline=(x.base)]
  \coordinate (x) at (0,-0.3);
  
  \draw[decoration={markings, mark=at position 0.5 with {\arrow{>}}},
        postaction={decorate},black, thick] (-2,0) to node[above]{\small$\mathbb{P}_1$} node[below]{\small$I_1$} (2,0);

  \filldraw[black] (-2,0) node[above]{\footnotesize$\xi'$} circle (.15);
  \filldraw[black] (2,0) node[above]{\footnotesize$\xi$} node[below]{\footnotesize$p$} circle (.15);
  
  \draw[decoration={markings, mark=at position 0.5 with {\arrow{>}}},
        postaction={decorate},black, thick] (2,0) to node[above]{\small$\mathbb{P}_2$} node[below]{\small$I_2$} (6,0);

  \filldraw[black] (6,0) node[above]{\footnotesize$\xi''$} circle (.15);
 \end{tikzpicture}
 ~ \longrightarrow \quad
 \mathcal{H}= \mathcal{H}_{I_1}^{\xi',\mathbb{P}_1,\xi} \otimes_{\mathcal{H}_p^\xi}
 \mathcal{H}_{I_2}^{\xi,\mathbb{P}_2,\xi''} ~.
\end{equation}

The space of states~\eqref{H for stratified circle} for the stratified circle can then be written, in terms of the spaces of states for intervals and corners introduced above, as:
\begin{equation}\label{H on stratified circle as a tensor product}
 \mathcal{H}= \Big( \mathcal{H}_{I_1}^{\xi_n,\mathbb{P}_1,\xi_1} \otimes_{\mathcal{H}_{p_1}^{\xi_1}} \mathcal{H}_{I_2}^{\xi_1,\mathbb{P}_2,\xi_2}  \cdots 
  \otimes_{\mathcal{H}_{p_{n-1}}^{\xi_{n-1}}} \mathcal{H}_{I_n}^{\xi_{n-1},\mathbb{P}_n,\xi_n} \Big) 
   \otimes_{\mathcal{H}_{p_n}^{\xi_n} \otimes \big(\mathcal{H}_{p_n}^{\xi_n}\big)^{\mathrm{op}} } \mathcal{H}_{p_n}^{\xi_n} ~.
\end{equation}
Here the superscript $\mathrm{op}$ stands for the opposite algebra.

\begin{remark} 
 Let $\mathcal{I}=\cup_{k=1}^l I_k$ be the union of $l$ consecutive intervals on the stratified circle~\eqref{stratified circle II} and $\mathcal{J}=\cup_{k=l+1}^n I_k$ the union of the remaining intervals, and let $p=p_0$, $q=p_l$ be the points separating $\mathcal{I}$ and $\mathcal{J}$.
 The globalized partition function $Z$ for the disk $D$ filling the stratified circle is, by the mQME, an $\Omega$-closed element of the space of states 
 \begin{equation}
  \mathcal{H}_{S^1}=\mathcal{H}_{\mathcal{I}}\otimes_{\mathcal{H}_q}\mathcal{H}_{\mathcal{J}}\otimes_{\mathcal{H}_{p}\otimes \mathcal{H}_{p}^\mathrm{op}} \mathcal{H}_{p}\cong 
  \mathrm{Hom}_{ (\mathcal{H}_q , \mathcal{H}_p)-\mathrm{bimod} }(\mathcal{H}_{\bar{\mathcal{I}}},\mathcal{H}_{\mathcal{J}})~.
 \end{equation}
 Here on the right hand side we have the space of morphisms of dg bimodules over $\mathcal{H}_q$ on the left and $\mathcal{H}_p$ on the right; bar on $\bar{\mathcal{I}}$ stands for orientation reversal. 
 Thus, the partition function for a disk can be seen as a bimodule morphism between two bimodules associated to the two arcs constituting the boundary. 
 \[
  \begin{tikzpicture}[scale=.3]
   \filldraw[black!40] (0,0) circle (3);
   \draw[decoration={markings,mark=at position 1 with
    {\arrow[scale=4]{>}}},postaction={decorate}] (-2,0) to (-2.05,0);
   \draw[black, line width={1.3pt}, line cap=round] (-1.55,-.3) to (1.8,-.3);
   \draw[black, line width={1.3pt}, line cap=round] (-1.55,.3) to (1.8,.3);
   \node at (0,1.2) {$Z$};
   
   \draw[decoration={markings, mark=at position 0.5 with {\arrow{>}}}, postaction={decorate},black, thick] 
    (0,3) arc (90:270:3) node[midway, left]{\small$\mathcal{J}$};
    
   \draw[decoration={markings, mark=at position 0.5 with {\arrow{>}}}, postaction={decorate},black, thick] 
    (0,3) arc (90:-90:3) node[midway, right]{\small$\bar{\mathcal{I}}$};
    
   \filldraw (0,3) circle (.2) node[above]{\small$q$};
   \filldraw (0,-3) circle (.2) node[below]{\small$p$};
  \end{tikzpicture}
 \]
 Note that the picture for 2D~Yang-Mills we just described, mapping points to algebras, intervals to bimodules and disks to morphisms of bimodules, is in agreement with Baez-Dolan-Lurie setting of extended topological quantum field theory~\cite{lurie:TFT_classification, dolan:higher_dim_alg}, with the correction that our spaces of states depend on the choice of polarization and partition functions depend on the area of the surface (and pre-globalization partition functions additionally depend on residual fields).
\end{remark}

\subsubsection*{Towards quantization of codimension 2 corners in more general BV-BFV theories}

The algebra $\mathcal{H}^\beta,  *_\hbar$ is isomorphic to $U_{\hbar}(\mathfrak{g})$ -- the enveloping algebra of $\mathfrak{g}$ with the normalized Lie bracket~$\mathrm{i}\hbar [-,-]$.
This algebra arises as Kontsevich's deformation quantization~\cite{konstevich:def_quant, cattaneo:kontsevich_quant.} of the algebra of functions on $\mathfrak{g}^*$ equipped with the Kirillov-Kostant-Souriaux linear Poisson structure. 
This observation fits well into the following expected picture of quantization of corners of codimension~$2$. 

In a general gauge theory, a codimension~$2$ stratum $\gamma$ is classically associated a BFV ``corner phase space''~\cite{CMR:classical_BV} $\Phi_\gamma$ equipped with a degree~$+1$ symplectic form $\omega_\gamma$ and  a BFV~charge~$S_\gamma$ of degree~$+2$. 
On the level of quantization, we impose a polarization $\Phi_\gamma\simeq T^*[1]\mathcal{B}_\gamma$. 
The BFV~charge $S_\gamma$ generates a $P_\infty$ (Poisson up-to-homotopy) algebra structure on $C^\infty(\mathcal{B}_\gamma)$, coming from interpreting $S_\gamma$ as a self-commuting polyvector $\Pi$ on $\mathcal{B}_\gamma$.%
\footnote{
 Equivalently, the $P_\infty$ structure arises from $S_\gamma$ via the derived bracket construction, $\{\psi_1,\ldots,\psi_n\}_\Pi:=(\cdots (S_\gamma,\psi_1),\cdots,\psi_n)$ with $\psi_1,\ldots,\psi_n\in C^\infty(\mathcal{B}_\gamma)$. 
 The brackets $(-,-)$ on the r.h.s. are the Poisson brackets on functions on the phase space $\Phi_\gamma$ defined by $\omega_\gamma$.
} 
Then the quantum space of states $\mathcal{H}_\gamma$ is expected to be the $A_\infty$ algebra obtained as Kontsevich's deformation quantization of the $P_\infty$ algebra $C^\infty(\mathcal{B}_\gamma)$. 
In particular, the $A_\infty$ structure maps arise from Feynman diagrams on a thickening of $\gamma$ to $\gamma\times D$, with $D$ a $2$-disk, for a field theory coming from the AKSZ construction on the mapping space $\mathrm{Map}(T[1]D,\Phi_\gamma)$. 
We plan to revisit this construction in more detail in a future paper on corners on BV-BFV formalism.

Note that, in the case of 2D~Yang-Mills theory, the corner phase space is $\Phi_p=\mathfrak{g}[1]\oplus \mathfrak{g}^*$, with $\omega_p=\langle \delta\beta, \delta\alpha \rangle$, $S_p=\frac12 \langle \beta, [\alpha,\alpha] \rangle$. 
Deformation quantization of $C^\infty(\mathfrak{g}^*)$ with Poisson bivector $\Pi= \frac12 \langle \beta , [\frac{\partial}{\partial \beta}\stackrel{\wedge}{,} \frac{\partial}{\partial \beta}] \rangle$ yields the algebra~$\mathcal{H}^\beta$. 
Taking the opposite polarization, one gets the deformation quantization of $C^\infty(\mathfrak{g}[1])=\wedge \mathfrak{g}^*$ with $1$-vector $\Pi=\frac12 \langle [\alpha,\alpha], \frac{\partial}{\partial \alpha} \rangle$, which is the dg algebra~$\mathcal{H}^\alpha$.

In general, one expects all the structure maps on (and between) the spaces of states associated to various strata to come from Feynman diagrams. 

A related picture was obtained in~\cite{CF:coisotropic} in the context of Poisson sigma model on a disk with intervals on the boundary decorated with coisotropic submanifolds~$C_i$ of the Poisson target~$M$.
In this setting the quantization yields algebras assigned to intervals (deformation quantization of the rings of functions on~$C_i$) and bimodules assigned to the corners separating the intervals. 
In particular, the algebra $\mathcal{H}^\beta$ arises in this context as a quantization of the space-filling coisotropic in~$M=\mathfrak{g}^*$. 
This picture can be thought of as Poincaré dual to our picture on the boundary of a disk.

\subsubsection{Gluing regions along an interval and the Fourier transform property of BFV differentials}

Recall that the BFV differentials for an $\mathbb{A}$-circle and a $\mathbb{B}$-circle are related by Fourier transform. 
This property in particular implies that mQME is compatible with gluing: if $\Sigma=\Sigma_1\cup_{S^1}\Sigma_2$ a union of surfaces over a circle and if the partition functions $Z_{\Sigma_1}$, $Z_{\Sigma_2}$ are known to satisfy mQME, then the glued partition function $Z_\Sigma = \langle Z_{\Sigma_1},Z_{\Sigma_2} \rangle_{\mathcal{H}_{S^1}}$ automatically satisfies mQME on the glued surface. 

One has an analogous property in the setting with corners. 
Consider e.g. an $\mathbb{A}$-interval~$I$ parameterized by $t\in[0,1]$ with endpoints in polarizations $\xi,\xi'$ and consider a $\mathbb{B}$-interval $\widetilde{I}$ parameterized by $\widetilde{t}\in [0,1]$, with endpoints in $\xi',\xi$. 
Let $r\colon I\rightarrow \widetilde{I}$ be an orientation-reversing diffeomorphism $t\mapsto \widetilde{t}=1-t$. 
Gluing along $r$ corresponds to the following pairing of states on $I$ and $\widetilde{I}$:
\beq
 \langle -,- \rangle_I \colon \quad  \mathcal{H}^{\xi, \mathbb{A}, \xi'}_I &\otimes \mathcal{H}^{\xi', \mathbb{B}, \xi}_{\widetilde{I}}  
  &&\longrightarrow   &&\mathbb{C} \\ 
 \psi_1 &\otimes \psi_2  &&\mapsto  &&\int \mathcal{D} \mathbb{A}\, \mathcal{D} \mathbb{B}\; \psi_1(\xi, \mathbb{A},\xi')
  \cdot \mathrm{e}^{-\frac{\mathrm{i}}{\hbar}\int_I \langle r^*\mathbb{B},\mathbb{A} \rangle} \cdot \psi_2(\xi', \mathbb{B},\xi)
\eeq
One easily verifies the following:
\begin{equation}
 \langle (\Omega_\mathrm{in}^{\xi \mathbb{A}}+\Omega_I^{\mathbb{A}}+ \Omega_\mathrm{out}^{\mathbb{A}\xi'})\,\psi_1 , \psi_2 \rangle_I 
  = -\langle \psi_1 ,(\Omega_\mathrm{in}^{\xi' \mathbb{B}}+\Omega_{\widetilde{I}}^{\mathbb{B}}+ \Omega_\mathrm{out}^{\mathbb{B}\xi})\, \psi_2 \rangle_I~.
\end{equation}
Here we are making the Assumption~\ref{assump: admissible states} on states $\psi_1,\psi_2$. 
In other words, the operators $\Omega_\mathrm{in}^{\xi \mathbb{A}}+\Omega_I^{\mathbb{A}}+ \Omega_\mathrm{out}^{\mathbb{A}\xi'}$ and $\Omega_\mathrm{in}^{\xi' \mathbb{B}}+\Omega_{\widetilde{I}}^{\mathbb{B}}+ \Omega_\mathrm{out}^{\mathbb{B}\xi}$ are, up to sign, the Fourier transform of each other (when acting on admissible states).

This immediately implies the following. 
Assume that $\Sigma$ is a result of gluing of surfaces~$\Sigma_1$ and $\Sigma_2$ via attaching an interval $I\subset \partial \Sigma_1$ to $\widetilde{I}\subset \partial \Sigma_2$ along the diffeomorphism $r$. Then for $\Psi_1\in \mathcal{H}_{\partial \Sigma_1}$, $\Psi_2\in \mathcal{H}_{\partial \Sigma_2}$ any two states on the boundary of~$\Sigma_1$, $\Sigma_2$, we have
\begin{equation}
 \Omega_{\partial \Sigma} \langle\Psi_1,\Psi_2 \rangle_I =  \langle \Omega_{\partial \Sigma_1}\Psi_1,\Psi_2 \rangle_I 
  +  \langle \Psi_1,\Omega_{\partial \Sigma_2}\Psi_2 \rangle_I~,
\end{equation}
where $ \langle\Psi_1,\Psi_2 \rangle_I \in \mathcal{H}_{\partial \Sigma}$ is understood as the ``gluing'' of states $\Psi_1$, $\Psi_2$ along~$I$. 
\[
 \begin{tikzpicture}[scale=.5, baseline=(x.base)]
  
  \begin{scope}
   \shade[right color=black!15, left color=black!3] (-1.5,-2) rectangle (1.5,2);
   \node at (-1,0) {\small$\Sigma_1$};
   
   \draw[decoration={markings, mark=at position 0.5 with {\arrow{>}}}, postaction={decorate},black, thick] 
    (1.5,-2) node[below]{\small$\xi$} to node[left]{\small$\mathbb{A}$} node[right]{\small$I$} (1.5,2) node[above]{\small$\xi'$};
    
   \draw[decoration={markings, mark=at position 0.5 with {\arrow{>}}}, postaction={decorate},black, thick] 
    (1.5,2) to (-1.5,2);
   \draw[decoration={markings, mark=at position 0.5 with {\arrow{<}}}, postaction={decorate},black, thick] 
    (1.5,-2) to (-1.5,-2);
  \end{scope}
  
  \begin{scope}[ shift={(7,0)}, rotate=180]
   \shade[left color=black!15, right color=black!3] (-1.5,-2) rectangle (1.5,2);
   \node at (-1,0) {\small$\Sigma_2$};
   
   \draw[decoration={markings, mark=at position 0.5 with {\arrow{>}}}, postaction={decorate},black, thick] 
    (1.5,-2) node[above]{\small$\xi'$} to node[right]{\small$\mathbb{B}$} node[left]{\small$\widetilde{I}$} (1.5,2) node[below]{\small$\xi$};
    
   \draw[decoration={markings, mark=at position 0.5 with {\arrow{>}}}, postaction={decorate},black, thick] 
    (1.5,2) to (-1.5,2);
   \draw[decoration={markings, mark=at position 0.5 with {\arrow{<}}}, postaction={decorate},black, thick] 
    (1.5,-2) to (-1.5,-2);
  \end{scope}
  
  \draw[<->, thick] (2.5,1) to node[above]{\small$r$} (4.5,1);
  \draw[<->, thick] (2.5,-1) to (4.5,-1);
 \end{tikzpicture}
\]
In particular, if the partition functions on~$\Sigma_1,\Sigma_2$ are known to satisfy the mQME, the glued partition function $Z_\Sigma=\langle Z_{\Sigma_1},Z_{\Sigma_2} \rangle_I$ automatically satisfies the mQME on~$\Sigma$.

\subsubsection{Small model for states on an $\mathbb{A}$-interval}\label{sec: small model for A-states}

In preparation for the calculations of section~\ref{BF A-disk with b corners}, we want to present a ``small model'' for the space of states on an $\mathbb{A}$-interval, corresponding to the passage to a constant $1$-form field~$\mathbb{A}^{(1)}$ on the interval.
This is an extension of the discussion of section~\ref{AA polarization on the cylinder} (and in particular, formula~\eqref{i p Z= Z+ exact term}), and of section~\ref{Omega_A_cohomology}.

Consider a single interval in $\mathbb{A}$-polarization:
\begin{equation}\label{beta_A_beta_interval}
 \begin{tikzpicture}[scale=.4, baseline=(x.base)]
  \coordinate (x) at (0,-0.3);
  
  \draw[decoration={markings, mark=at position 0.5 with {\arrow{>}}},
        postaction={decorate},black, thick] (-2,0) to node[above]{\small$\mathbb{A}$} node[below]{\small$I$} (3,0);

  \filldraw[black] (-2,0) node[above]{\footnotesize$\mathbb{A}_0$} node[below]{\footnotesize$p_0$} circle (.15);
  \filldraw[black] (3,0) node[above]{\footnotesize$\mathbb{A}_1$} node[below]{\footnotesize$p_1$} circle (.15);
  
 \end{tikzpicture}
\end{equation}
We view its endpoints as corners in picture~\ref{pict_1} (non-polarized), with $\mathbb{A}_{0}$, $\mathbb{A}_{1}$ the limiting values of the $0$-form field $\mathbb{A}$ at the endpoints $p_0$, $p_1$. 
Equivalently, we can treat the endpoint in picture~\ref{pict_2},  putting $\alpha$-polarization on them, with  corner fields $\alpha_{0,1}$ identified with $\mathbb{A}_{0,1}$.

The space of states for the interval~\eqref{beta_A_beta_interval} is a cochain complex
\begin{equation}
 \mathcal{H}=\mathrm{Fun}_\mathbb{C}\big(\Omega^\bullet(I,\mathfrak{g})[1]\big) =\big\{\Psi(\mathbb{A})\big\}
\end{equation}
with differential
\begin{equation}\label{Omega for A disk with beta corners}
 \Omega = \mathrm{i}\hbar \Big( \int \Big\langle \mathrm{d}\mathbb{A}^{(0)}
  +[\mathbb{A}^{(0)},\mathbb{A}^{(1)}]\; ,\; \frac{\delta}{\delta \mathbb{A}^{(1)}} \Big\rangle 
   + \int \Big\langle \frac12\ [\mathbb{A}^ {(0)},\mathbb{A}^{(0)}] \;,\; \frac{\delta}{\delta \mathbb{A}^{(0)}} \Big\rangle  \Big) ~.
\end{equation}

One has the following ``small'' quasi-isomorphic model for the space of states -- the cochain complex
\begin{equation}\label{H_small}
 \mathcal{H}^\mathrm{small}=\mathrm{Fun}_\mathbb{C}\Big( C^\bullet(I,\mathfrak{g})[1]\Big)
  = \Big\{ \underline\Psi\big(\underline{\mathbb{A}}_{0},\, \underline{\mathbb{A}},\,  \underline{\mathbb{A}}_{1}\big) \Big\}~.
\end{equation}
Here $C^\bullet(I,\mathfrak{g})=\mathfrak{g}\oplus \mathfrak{g}[-1]\oplus \mathfrak{g}$ is the complex of  $\mathfrak{g}$-valued cellular cochains on the interval~$I$ endowed with the standard CW complex structure, with two $0$-cells $p_{0}$, $p_1$ and a single $1$-cell~$I$. 
Variables $\underline{\mathbb{A}}_{0},\, \underline{\mathbb{A}}_{1}\in \mathfrak{g}[1]$ are the values of the cochain on the $0$-cells~$p_{0}$ and~$p_1$ (endpoints), respectively, and $\underline{\mathbb{A}}\in \mathfrak{g}$ is the value of the cochain on the $1$-cell~$I$ itself. 
The differential on~$\mathcal{H}^\mathrm{small}$ is given by:
\beq\label{Omega small}
 \Omega^{\mathrm{small}} = \mathrm{i}\hbar \Big( \Big\langle \frac12 \big[\underline{\mathbb{A}}_{0},\, \underline{\mathbb{A}}_{0}\big]\; 
  ,\; \frac{\partial}{\partial \underline{\mathbb{A}}_{0}} \Big\rangle
   + \Big\langle \frac12 \big[\underline{\mathbb{A}}_{1},\, \underline{\mathbb{A}}_{1}\big]\; 
    ,\; \frac{\partial}{\partial \underline{\mathbb{A}}_{1}} \Big\rangle +&\\
 -\Big\langle \mathsf{F}_-(\mathrm{ad}_{\underline{\mathbb{A}}})\circ \underline{\mathbb{A}}_{0} 
  + \mathsf{F}_+(\mathrm{ad}_{\underline{\mathbb{A}}})\circ \underline{\mathbb{A}}_{1}~,
   ~\frac{\partial}{\partial \underline{\mathbb{A}}} \Big\rangle \Big)~.&
\eeq
The chain projection $p_\mathcal{H} \colon \mathcal{H} \rightarrow \mathcal{H}^\mathrm{small}$ is the following map:
\beq\label{p_H on A beta circle}
 \Psi \mapsto \bigg( \underline{\Psi} \colon \{\underline{\mathbb{A}}_{0},\, \underline{\mathbb{A}},\, \underline{\mathbb{A}}_{1}\} \mapsto 
  \Psi \Big( \mathbb{A}^{(0)}= \mathsf{G}_-(t,\mathrm{ad}_{\underline{\mathbb{A}}})\circ \underline{\mathbb{A}}_{0} 
   + \mathsf{G}_+(t,\mathrm{ad}_{\underline{\mathbb{A}}})\circ \underline{\mathbb{A}}_{1} ~,
    ~ \mathbb{A}^{(1)}= \mathrm{d}t\cdot \underline{\mathbb{A}} \Big) \bigg)~.
\eeq
Here we parameterize the interval by the coordinate $t\in [0,1]$ and $\mathsf{G}_\pm$ are the generating functions for Bernoulli polynomials~\eqref{G pm}.

The chain inclusion $i_\mathcal{H}\colon \mathcal{H}^\mathrm{small}\rightarrow \mathcal{H} $ is given as follows:
\begin{equation}\label{i_H on A beta circle}
 i_\mathcal{H} \colon \underline{\Psi} \mapsto \bigg( \Psi:\;\; \mathbb{A} \mapsto \underline{\Psi}
  \Big( \ \underline{\mathbb{A}}_{0}= \mathbb{A}_0~,~\underline{\mathbb{A}}_{1}= \mathbb{A}_{1}~ , ~ \underline{\mathbb{A}} =\log U(\mathbb{A}) \Big)\bigg)
\end{equation}
Here the group element $U(\cdots)\in G$ is the holonomy of the connection $1$-form along the interval~$I$.

\begin{remark}  
The space of states $\mathcal{H},\Omega$ is the Chevalley-Eilenberg complex (or the dual of the bar complex) of the differential graded Lie algebra of $\mathfrak{g}$-valued differential forms on the interval, $\Omega^\bullet(I,\mathfrak{g}),\mathrm{d},[-,-]$. 
Likewise, $\mathcal{H}^\mathrm{small},\Omega^\mathrm{small}$ is the Chevalley-Eilenberg complex for the $L_\infty$~algebra structure on $\mathfrak{g}$-valued cellular cochains on an interval, constructed in~\cite[see also~\cite{lawrence-sullivan:deformation, cheng:transferring_homotopy}]{mnev:simplicial_BF, mnev:discrete_BF}. 
This $L_\infty$ algebra arises as the homotopy transfer of the ``big'' algebra $\Omega^\bullet(I,\mathfrak{g})$ onto the deformation retract $C^\bullet(I,\mathfrak{g})$ -- cochains, realized as Whitney forms on the interval. 
Chain map~\eqref{p_H on A beta circle} corresponds to the $L_\infty$~morphism from $C^\bullet(I,\mathfrak{g})$ to $\Omega^\bullet(I,\mathfrak{g})$ constructed explicitly in~\cite[-- Statement 14]{mnev:simplicial_BF, mnev:discrete_BF}; it is a non-abelian deformation of the inclusion of cochains as Whitney forms. 
The map~\eqref{i_H on A beta circle}, constructed via holonomies, corresponds to the $L_\infty$ morphism from forms to cochains - the non-abelian version of the integration-over-cells map, cf.~\cite{bandiera:discretize}.
We give a proof of the chain map property of\eqref{i_H on A beta circle} in Appendix~\ref{app:chain}.
\end{remark}
\[
  \begin{tikzpicture}[scale=.35]
  
   \begin{scope}[shift={(18,0)}]
    \draw[decoration={markings, mark=at position .9999 with {\arrow{>}}},
         postaction={decorate},color=black, thick] (-.7,-3) to[out=140, in=-140] (-.7,3);
    \draw[decoration={markings, mark=at position 0.015 with {\arrow{<}}},
         postaction={decorate},color=black, thick] (.7,-3) to[out=40, in=-40] (.7,3);
   
    \node at (0,4) {\small $\mathcal{H}$};
    \node at (0,-4) {\small $\mathcal{H}^{\mathrm{small}}$};
    \node at (-3.2,0) {\small $i_\mathcal{H}$};
    \node at (3.2,0) {\small $p_\mathcal{H}$};
   \end{scope}
            
   \draw[decoration={markings, mark=at position .9999 with {\arrow{>}}},
        postaction={decorate},color=black, thick] (-.7,-3) to[out=140, in=-140] (-.7,3);
   \draw[decoration={markings, mark=at position 0.015 with {\arrow{<}}},
        postaction={decorate},color=black, thick] (.7,-3) to[out=40, in=-40] (.7,3);
  
   \node at (0,4) {\small forms};
   \node at (0,-4) {\small cell cochains};
   \node[fill=white] at (-2,1) {\small $L_\infty$ inclusion};
   \node[fill=white] at (2,-1) {\small $L_\infty$ projection};
   
   \draw[decoration={markings, mark=at position .9999 with {\arrow{>}}},
        postaction={decorate},color=black, thick] (4,1.5) to[out=20, in=160] node[above]{\small bar construction} (14,1.5);
   
  \end{tikzpicture}  
 \]

One has similar small models for the space of states on the $\mathbb{A}$-interval with endpoints in any combination of polarizations $\xi_0,\xi_1$. 
E.g. for both endpoints in $\beta$-polarization, we have the small model~(\ref{H_small},\ref{Omega small}) and the maps~(\ref{p_H on A beta circle},\ref{i_H on A beta circle}), where we adjoin the corner variables~$\beta_0$, $\beta_1$ on which the wavefunctions~$\Psi$, $\underline{\Psi}$ are allowed to depend, and we add corner-edge terms~$\Omega^{\beta\mathbb{A}}$,~$\Omega^{\mathbb{A}\beta}$~\eqref{HE_contributions_to_Omega} to the respective differentials  $\Omega$ and $\Omega^\mathrm{small}$.

Finally, consider a surface~$\Sigma$ with stratified boundary circles decorated with an arbitrary combination of polarizations of arcs and corners.
By the discussion above, we have a small quasi-isomorphic model $\mathcal{H}'$ for the space of states $\mathcal{H}$ corresponding to replacing the states on some (or all) $\mathbb{A}$-arcs with respective small models for $\mathbb{A}$-arcs in the formula (\ref{H on stratified circle as a tensor product}), and we have chain maps  $p_\mathcal{H}: \mathcal{H}\rightarrow \mathcal{H}'$, $i_\mathcal{H}: \mathcal{H}' \rightarrow \mathcal{H}$. 
They correspond to a quasi-isomorphism of complexes and thus there exists a chain homotopy $K_\mathcal{H}:\mathcal{H}\rightarrow  \mathcal{H}$ between the identity and the projection $i_\mathcal{H}\circ p_\mathcal{H}$. Therefore, we can apply the argument~\eqref{i p Z= Z+ exact term} to the partition function~$Z$ on~$\Sigma$:
\begin{equation}\label{reduction to const fields on an A-interval}
 i_\mathcal{H} \circ p_\mathcal{H}\  Z = Z+ (\Omega+\hbar^2 \Delta)(\cdots)~.
\end{equation}
In particular, one can recover~$Z$, modulo BV exact terms, by evaluating it on constant $1$-forms $\mathrm{d}t\cdot \underline{\mathbb{A}}_k$ on the boundary arcs, provided that their holonomy coincides with the holonomy of the original $\mathbb{A}^{(1)}$ field along the respective intervals, i.e. $\underline{\mathbb{A}}_k=\log U_{I_k}(\mathbb{A})$.

\subsection{BF $\mathbb{B}$-disk with two $\alpha$ corners} \label{Sect:B_disk_corners}

Let us consider now the case of a BF disk with the boundary split into two arcs $\gamma_i\colon [0,1]\rightarrow S^1$\,, $i=1,2$\,, with $\gamma_i(0)=v_0$ and $\gamma_i(1)=v_1$\,, both in $\mathbb{B}$ polarization.
On both vertices of the arcs we fix the value $\alpha_i$ for the restriction of the bulk $\mathsf{A}$ fields.
Expanding the vertices~$v_i$ into two edges in $\mathbb{A}$ polarization we can think of this disk as a square (figure~\ref{BB_disk}).

\begin{figure}[h]
   \[
   \begin{tikzpicture}[scale=.25, baseline=(x.base)]
  
    \draw[fill=black!15, thick] (0,0) circle (3);   

    \node (x) at (0,0) {};
    
    \node (a1) at (180:5) {${\mathbb{B}}$};
    \node (a2) at (3:5) {$\widetilde{\mathbb{B}}$};
    
    \draw[thick] (90:2.5) to (90:3.5) node[above] {$\alpha$};
    \draw[thick] (270:2.5) to (270:3.5) node[below] {$\widetilde{\alpha}$};

   \end{tikzpicture}
   \qquad\longleftrightarrow\qquad
   \begin{tikzpicture}[scale=.25, baseline=(x.base)]
  
    \filldraw[fill=black!15, thick, draw=black] (-3,-3) to (-3,3) to (3,3) to (3,-3) to (-3,-3);
    
    \draw[decoration={markings, mark=at position 0.6 with {\arrow{>}}},
      postaction={decorate},color=black, thick] (-3,-3) to node[above]{$t$} (3,-3);
    \draw[decoration={markings, mark=at position 0.6 with {\arrow{>}}},
      postaction={decorate},color=black, thick] (-3,-3) to node[right]{$\tau$} (-3,3);
      
    \draw[color=black, thick, dashed] (-3,-1) to (3,-1) node[above left]{$\mathsf{b}$};
    \draw[color=black, thick, dashed] (-1,-3) to (-1,3) node[below right]{$\mathsf{a}$};

    \node (x) at (0,0) {};
    
    \node (a1) at (45:5.5) {$\alpha$};
    \node (a1) at (225:5.5) {$\widetilde{\alpha}$};
    \node (a2) at (-45:5.5) {$\widetilde{\alpha}$};
    \node (b2) at (135:5.5) {$\alpha$};
    
    \node (a1) at (270:5) {$\widetilde{\mathbb{A}}=\widetilde{\alpha}$};
    \node (a1) at (90:5) {${\mathbb{A}}=\alpha$};
    \node (a2) at (177:5) {$\mathbb{B}$};
    \node (b2) at (0:5) {$\widetilde{\mathbb{B}}$};
    
   \end{tikzpicture}
   \]
   \caption{$\mathbb{B}$ disk with the boundary split into two arcs separated by points in A-polarization.}
   \label{BB_disk}
\end{figure}
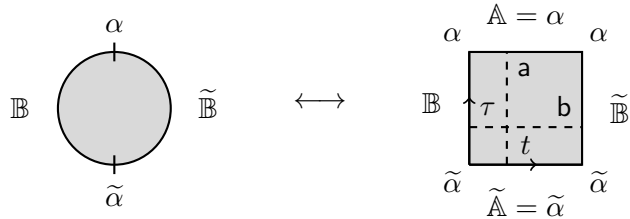

The square can be viewed as the product of two intervals, with $\mathbb{A}$ or $\mathbb{B}$ polarization on both endpoints respectively.
The zero-modes now contain 1-form components for the $\mathsf{a}$ and~$\mathsf{b}$ fields: $\mathsf{a}=\mathsf{a}_1 \mathrm{d}\tau $\,, $\mathsf{b}=\mathsf{b}^1 \mathrm{d}t$\,.
A possible choice for axial-gauge propagator is (cf.~appendix~\eqref{One-Dimensional Propagators}):
\beq\label{ABAB_square_prop}
\begin{aligned}
 &\eta(t,\tau;t'\tau') = (\Theta(\tau-\tau')-\tau) \mathrm{d}t' - (\Theta(t'-t)-t')\delta(\tau'-\tau)(\mathrm{d}\tau'-\mathrm{d}\tau)	 ~.
\end{aligned}
\eeq

The contributing Feynman diagrams to the effective action are wheels with~$n$ $\mathsf{a}$~zero-modes and trees, rooted either on the $\mathbb{B}_{(1)}$ boundary field or on the $\mathsf{b}$~zero-mode and ending on one $\mathbb{A}_{(0)}$~boundary field, with no bifurcations and the insertion of $n$~leafs decorated with $\mathsf{a}$~zero-modes~(Figure~\ref{BB_disk_diagrams}).

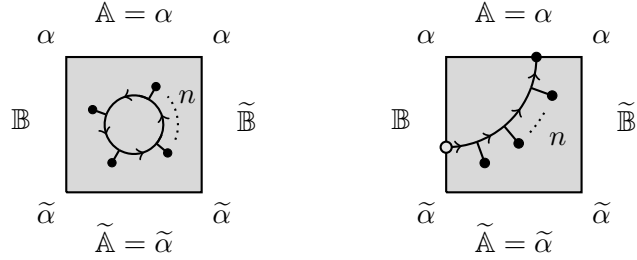
\begin{figure}[h]
   \[
   \begin{tikzpicture}[scale=.3, baseline=(x.base)]
  
    \filldraw[fill=black!15, thick, draw=black] (-3,-3) to (-3,3) to (3,3) to (3,-3) to (-3,-3);
    
    \node (x) at (0,0) {};
    
    \node (a1) at (45:5.5) {$\alpha$};
    \node (a1) at (225:5.5) {$\widetilde{\alpha}$};
    \node (a2) at (-45:5.5) {$\widetilde{\alpha}$};
    \node (b2) at (135:5.5) {$\alpha$};
    
    \node (a1) at (270:5) {$\widetilde{\mathbb{A}}=\widetilde{\alpha}$};
    \node (a1) at (90:5) {${\mathbb{A}}=\alpha$};
    \node (a2) at (177:5) {$\mathbb{B}$};
    \node (b2) at (5,.35) {$\widetilde{\mathbb{B}}$};
    
    \begin{scope}[scale=.65]
     \draw[decoration={markings, mark=at position 0.5 with {\arrow{>}}},
        postaction={decorate},color=black, thick] (-40:2) arc (-40:80:2);
        
    \draw[decoration={markings, mark=at position 0.5 with {\arrow{>}}},
        postaction={decorate},color=black, thick] (60:2) arc (60:160:2);
        
    \draw[decoration={markings, mark=at position 0.5 with {\arrow{>}}},
        postaction={decorate},color=black, thick] (160:2) arc (160:240:2);
        
    \draw[decoration={markings, mark=at position 0.5 with {\arrow{>}}},
        postaction={decorate},color=black, thick] (240:2) arc (240:340:2);
     
    \draw[color=black, thick] (-40:2) to (-40:3);
    \draw[color=black, fill=black] (-40:3) circle (8pt);
    
    \draw[color=black, thick] (60:2) to (60:3);
    \draw[color=black, fill=black] (60:3) circle (8pt);
    
    \draw[color=black, thick] (160:2) to (160:3);
    \draw[color=black, fill=black] (160:3) circle (8pt);
    
    \draw[color=black, thick] (240:2) to (240:3);
    \draw[color=black, fill=black] (240:3) circle (8pt);
    
    \draw[color=black, thick, dotted] (-20:3) arc   (-20:40:3)node[right]{$n$};
    \end{scope} 
    
   \end{tikzpicture}
   \qquad\qquad
   \begin{tikzpicture}[scale=.3, baseline=(x.base)]
  
    \filldraw[fill=black!15, thick, draw=black] (-3,-3) to (-3,3) to (3,3) to (3,-3) to (-3,-3);
    
    \node (x) at (0,0) {};
    
    \node (a1) at (45:5.5) {$\alpha$};
    \node (a1) at (225:5.5) {$\widetilde{\alpha}$};
    \node (a2) at (-45:5.5) {$\widetilde{\alpha}$};
    \node (b2) at (135:5.5) {$\alpha$};
    
    \node (a1) at (270:5) {$\widetilde{\mathbb{A}}=\widetilde{\alpha}$};
    \node (a1) at (90:5) {${\mathbb{A}}=\alpha$};
    \node (a2) at (177:5) {$\mathbb{B}$};
    \node (b2) at (5,.35) {$\widetilde{\mathbb{B}}$};

    \draw[decoration={markings, mark=at position 0.5 with {\arrow{>}}},
        postaction={decorate},color=black, thick] (-3,-1) arc (-90:-70:4);
    \draw[black, thick] ($(-3,3)+(-70:4)$) to ($(-3,3)+(-70:5)$);
    \draw[color=black, fill=black] ($(-3,3)+(-70:5)$) circle (6pt);

    \draw[decoration={markings, mark=at position 0.5 with {\arrow{>}}},
        postaction={decorate},color=black, thick] ($(-3,3)+(-70:4)$) arc (-70:-50:4);
    \draw[black, thick] ($(-3,3)+(-50:4)$) to ($(-3,3)+(-50:5)$);   
    \draw[color=black, fill=black] ($(-3,3)+(-50:5)$) circle (6pt);

    \draw[decoration={markings, mark=at position 0.5 with {\arrow{>}}},
        postaction={decorate},color=black, thick] ($(-3,3)+(-50:4)$) arc (-50:-20:4);
    \draw[black, thick] ($(-3,3)+(-20:4)$) to ($(-3,3)+(-20:5)$);   
    \draw[color=black, fill=black] ($(-3,3)+(-20:5)$) circle (6pt);

    \draw[decoration={markings, mark=at position 0.5 with {\arrow{>}}},
        postaction={decorate},color=black, thick] ($(-3,3)+(-20:4)$) arc (-20:0:4);
    
    \draw[color=black, thick, fill=black!10] (-3,-1) circle (7pt);
    \draw[color=black, fill=black] (1,3) circle (6pt);
    
    \draw[color=black, thick, dotted] ($(-3,3)+(-42:5)$) arc   (-42:-28:5)node[midway, below right]{$n$};
    
   \end{tikzpicture}
   \]
   \caption{Examples of the tree and 1-loop Feynman diagrams contributing to the effective action for the BF disk in $\mathbb{B}$ polarization with two $\alpha$ corners.}
   \label{BB_disk_diagrams}
\end{figure}

\begin{proposition}\label{BB_disk_part.funct.}
 The partition function for the BF~disk in $\mathbb{B}$~polarization with two $\alpha$~corners is:
 \beq\label{B_disk_2_corners}
  Z = \exp \Big( &\frac{\mathrm{i}}{\hbar} 
   \int_{\gamma} \langle \mathbb{B}, \mathsf{a} + \mathsf{G}_+(\tau, \mathrm{ad}_{\mathsf{a}_1}) \alpha 
    + \mathsf{G}_-(\tau, \mathrm{ad}_{\mathsf{a}_1})  \widetilde{\alpha} \rangle +\\ 
  -&\frac{\mathrm{i}}{\hbar}  \int_{\widetilde{\gamma}} \langle \widetilde{\mathbb{B}}, \mathsf{a} + \mathsf{G}_+(\tau, \mathrm{ad}_{\mathsf{a}_1}) \alpha 
   + \mathsf{G}_-(\tau, \mathrm{ad}_{\mathsf{a}_1}) \widetilde{\alpha} \rangle +\\
  +&\frac{\mathrm{i}}{\hbar}  \langle \mathsf{b}^1,  \mathsf{F}_+(\mathrm{ad}_{\mathsf{a}_1}) \alpha 
   + \mathsf{F}_-(\mathrm{ad}_{\mathsf{a}_1}) \widetilde{\alpha} \rangle \Big) ~
    \mathrm{det}\bigg(\frac{\sinh\big(\mathrm{ad}_{\mathsf{a}_1} /2\big)}{\mathrm{ad}_{\mathsf{a}_1} /2}\bigg) \cdot \rho_{\mathcal{V}} ~,
 \eeq
 where $\rho_{\mathcal{V}}= D^{\frac12} \mathsf{a} ~D^{\frac12} \mathsf{b}$ is the reference half-density on zero-modes.
\end{proposition}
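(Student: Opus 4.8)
The plan is to compute the effective action by summing Feynman diagrams built from the perturbative action~\eqref{YM_pert.action} with the axial-gauge propagator~\eqref{ABAB_square_prop}, exactly in the spirit of the $\mathbb{B}$-$\mathbb{B}$ cylinder computation of Proposition~\ref{BB_eff_act}. First I would record that the square is the product $I\times I$, so~\eqref{ABAB_square_prop} is an instance of the induced axial propagator~\eqref{axial_gauge_prop.}; its $\delta(\tau'-\tau)$ piece is parallel to the $\tau$-direction and its $\mathrm{d}t'$ piece is the horizontal (``perpendicular'') part. Since the two $\mathbb{A}$-edges carry only the constant corner values $\alpha,\widetilde\alpha$ (picture~\ref{pict_2}; equivalently, we may evaluate on constant boundary $0$-forms, which is legitimate modulo $(\Omega+\hbar^2\Delta)$-exact terms by~\eqref{reduction to const fields on an A-interval}), and since $\mathsf{a}=\mathsf{a}_1\,\mathrm{d}\tau$, $\mathsf{b}=\mathsf{b}^1\,\mathrm{d}t$ have fixed form-degrees, a degree-counting argument of the type used in Lemmas~\ref{no_trees_lemma} and~\ref{no_trees_AA_2} restricts the connected contributing graphs to exactly two families: linear chains (trees with no bifurcations) and single wheels (the $1$-loop diagrams of Figure~\ref{BB_disk_diagrams}).

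For the tree part I would organize the linear chains by their root. A chain rooted on a $\mathbb{B}$- (resp. $\widetilde{\mathbb{B}}$-) source at vertical coordinate $\tau$ runs through a sequence of trivalent BF vertices, each dressed by a single $\mathsf{a}_1$ zero-mode leg — so each internal vertex contributes a factor $\mathrm{ad}_{\mathsf{a}_1}$ — and terminates on one of the $\mathbb{A}$-boundary fields $\alpha$ or $\widetilde\alpha$. The only nonroutine input is the resummation: the iterated integrals of the $\delta$-supported propagator component along the chain produce, at order $n$ in $\mathrm{ad}_{\mathsf{a}_1}$, the $n$-th Bernoulli polynomial evaluated at $\tau$, so that summing the chain over its length $n\ge 0$ yields precisely the generating functions $\mathsf{G}_\pm(\tau,\mathrm{ad}_{\mathsf{a}_1})$ of~\eqref{G pm} acting on $\alpha$ (with $\mathsf{G}_+$) and on $\widetilde\alpha$ (with $\mathsf{G}_-$), together with the leading bare $\mathsf{a}$ term. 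This gives the first two lines of~\eqref{B_disk_2_corners}, with opposite overall signs for the $\mathbb{B}$- and $\widetilde{\mathbb{B}}$-rooted families dictated by the orientations of $\gamma,\widetilde\gamma$ and the sign conventions in the boundary source terms of~\eqref{YM_pert.action}. Chains rooted on the $\mathsf{b}^1$ zero-mode instead of a $\mathbb{B}$-source are handled identically, except that the root is a constant horizontal $1$-form integrated over the edge; the corresponding boundary-value integrals of the same power series evaluate to $\mathsf{F}_\pm(\mathrm{ad}_{\mathsf{a}_1})$ (consistently with $\mathsf{F}_+=\partial_t\mathsf{G}_+|_{t=0}$ and $\mathsf{F}_-=\partial_t\mathsf{G}_-|_{t=1}$), producing the third line of~\eqref{B_disk_2_corners}.

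For the loop part I would observe that the wheels carrying $n\ge 2$ insertions of $\mathsf{a}_1$ are governed by exactly the same configuration-space integrals as the $\mathbb{B}$-$\mathbb{B}$ cylinder wheels (cf.~the proof of Proposition~\ref{BB_eff_act} and the amplitude~\eqref{BB_loop_amp}), so summing $\tfrac1n\,\mathrm{tr}(\mathrm{ad}_{\mathsf{a}_1})^n B_n/n!$ over $n$ exponentiates to $\log\det\big(\sinh(\mathrm{ad}_{\mathsf{a}_1}/2)/(\mathrm{ad}_{\mathsf{a}_1}/2)\big)$, giving the determinant prefactor. Collecting the tree exponent and the loop determinant, and attaching the reference half-density $\rho_{\mathcal{V}}=D^{\frac12}\mathsf{a}\,D^{\frac12}\mathsf{b}$ fixed by the normalization convention, yields~\eqref{B_disk_2_corners}.

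The main obstacle is the bookkeeping of the chain resummation into the Bernoulli generating functions $\mathsf{G}_\pm,\mathsf{F}_\pm$: one must set up the recursion for the iterated $\delta$-term integrals, match its coefficients to the Bernoulli polynomials, and track signs coming from the two oppositely oriented arcs and from the $\mathrm{ad}$-ordering. As an independent check I would verify that the resulting $Z$ is annihilated by $\Omega+\hbar^2\Delta$, with $\Omega$ the boundary operator carrying the pure $\alpha$-corner contributions~\eqref{Omega pure corner} and the corner-edge contributions~\eqref{HE_contributions_to_Omega} — this is exactly the mQME of Proposition~\ref{prop: mQME II}, and the appearance of the same functions $\mathsf{F}_\pm,\mathsf{G}_\pm$ there is a strong consistency test.
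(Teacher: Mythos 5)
Your proposal is correct and follows essentially the same route as the paper's proof (given in Appendix~\ref{diagram_computations}): degree counting reduces the diagrams to linear chains and wheels, the chains are resummed via a generating-function/ODE argument into the Bernoulli-polynomial generating functions $\mathsf{G}_\pm$ and $\mathsf{F}_\pm$, and the wheel contribution is borrowed from the $\mathbb{B}$-$\mathbb{B}$ cylinder computation to produce the $\det\big(\sinh(\mathrm{ad}_{\mathsf{a}_1}/2)/(\mathrm{ad}_{\mathsf{a}_1}/2)\big)$ prefactor. The only caveat is that the bookkeeping you flag as the main obstacle (the recursion for the iterated $\delta$-term integrals and its boundary conditions) is precisely the nontrivial content of the paper's proof, so a complete write-up would need to carry that out explicitly as in~\eqref{tree_ampl}.
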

\begin{proof}
 See Appendix~\ref{diagram_computations}.
\end{proof}

\subsection{BF $\mathbb{B}$-disk with one $\alpha$ corner}\label{B_disk_with_one_a_corner}

Let us consider a disk in $\mathbb{B}$-polarization with a single corner in $\alpha$-polarization.
We will denote by $\alpha$ the value of the zero-form component of the $A$-fields on the corner.
Notice that the space of zero-modes is now empty, in contrast to the $\mathbb{B}$-disk without corners or with two $\alpha$~corners.

\begin{figure}[h!]
 \centering
 \[
  \begin{tikzpicture}[scale=.25, baseline=(x.base)]
  
    \draw[fill=black!15, thick] (0,0) circle (3);   

    \node (x) at (0,0) {};
    
    \node (a1) at (180:5) {${\mathbb{B}}$};
    
    \draw[thick] (90:2.5) to (90:3.5) node[above] {$\alpha$};

   \end{tikzpicture}
   \qquad\longleftrightarrow\qquad
   \begin{tikzpicture}[scale=.25, baseline=(x.base)]
  
    \draw[fill=black!15, thick] (0,0) circle (3);   

    \node (x) at (0,0) {};
    
    \node (a1) at (45:5) {$\alpha$};
    \node (a1) at (225:5) {${\alpha}$};
    \node (a2) at (-45:5) {${\alpha}$};
    \node (b2) at (135:5) {$\alpha$};
    
    \node (a1) at (270:5) {$\widetilde{\mathbb{A}}={\alpha}$};
    \node (a1) at (90:5) {${\mathbb{A}}=\alpha$};
    \node (a2) at (177:5) {$\mathbb{B}$};
    \node[right] (b2) at (0:4) {$\mathbb{A}'=\alpha$};

    \draw[thick] (45:2.5) to (45:3.5);
    \draw[thick] (-45:2.5) to (-45:3.5);
    \draw[thick] (135:2.5) to (135:3.5);
    \draw[thick] (225:2.5) to (225:3.5);

   \end{tikzpicture}
  \]
 \caption{$\mathbb{B}$ disk with one $\alpha$ corner as the ``collapse'' of three edges in a square.}
 \label{Balpha_disk}
\end{figure}
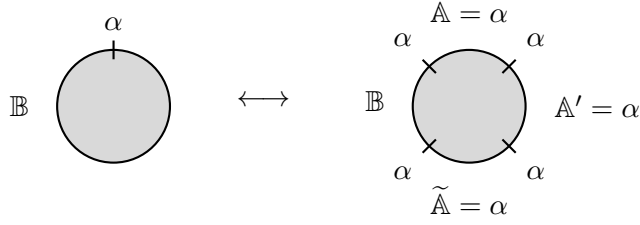

The corner can be expanded to an $\mathbb{A}$-polarized edge with $\mathbb{A}=\alpha$, which can be then split in three consecutive edges.
We then get a square, which is the product of an $\mathbb{A}$-$\mathbb{A}$ interval times an $\mathbb{A}$-$\mathbb{B}$ interval  (figure~\ref{Balpha_disk}).
We can thus choose the axial gauge propagator to compute the effective action.
If we denote with $t$ the coordinate on the $\mathbb{A}$-$\mathbb{A}$ interval and with $\tau$ the coordinate on the $\mathbb{A}$-$\mathbb{B}$ interval we have:
\beq
 \eta(t,\tau; t',\tau') = -\Theta(\tau'-\tau)\delta(t'-t)(\mathrm{d}t'-\mathrm{d}t) ~.
\eeq

Since there are no zero-modes and the boundary $\mathbb{A}$-field has only the zero-form component~$\alpha$\,, from degree counting we get that the only non-vanishing diagrams contributing to the partition function are the ones containing no interaction vertices:
\beq\label{B_disk_1_corner}
 Z[\mathbb{B},\alpha] = \mathrm{e}^{-\frac{\mathrm{i}}{\hbar}\int_I \langle\mathbb{B},\alpha \rangle} ~.
\eeq
\begin{remark}
If we compare this effective action with the one of the $\mathbb{B}$-disk without corners~\eqref{B_disk_eff_act} we notice that the corner field $\alpha$ plays here the role of the $\mathsf{a}$ zero-mode (the other term for action~\eqref{B_disk_eff_act}, containing only the zero-modes, is vanishing when restricted to the globalizing Lagrangian $\mathcal{L}=\{\mathsf{b}=0\}$).
Thus, integrating over the fields on the corner reproduces the globalized effective action for the $\mathbb{B}$-disk without corners.\\
We can also compare~\eqref{B_disk_1_corner} with the partition function of the $\mathbb{B}$-disk with two corners computed in~\eqref{B_disk_2_corners}.
We recover the partition function for the disk with one corner globalizing~\eqref{B_disk_2_corners} over $\mathcal{L}=\{\mathsf{a}=0\}$ and then integrating out one corner field $\alpha$.
\end{remark}

\subsection{BF $\mathbb{A}$-disk with one $\beta$-corner}\label{BF A-disk with b corners}

In order to calculate the one remaining building block of the theory, the partition function for an $\mathbb{A}$-disk with a single $\beta$-corner, we do the following. 
We first consider a disk~$D$ with boundary split into two intervals in $\mathbb{A}$ and $\mathbb{B}$-polarization with the two corners not decorated by polarization data (i.e. in the setting of the ``picture~\ref{pict_1}'' for corners, cf.~subsection~\ref{Corners_space_of_states}).%
\footnote{
 In fact, we can decorate the two corners with an arbitrary choice of polarizations~$\xi,\xi'$. 
 The partition function does not depend on this choice.
}
\beq\label{AB disk}
 \begin{tikzpicture}[scale=.5, baseline=(x.base)]
  \coordinate (x1) at (0,2);
  \coordinate (x2) at (0,-2);
  
  \filldraw[fill=black!15, draw=black!15] (x1)node[above=2pt]{\small$p'$} to[out=-20, in=90] (1.5,0)  node[right=2pt]{\small$\mathbb{A}$} to[out=-90, in=20] (x2)node[below=2pt]{\small$p$} to[out=160, in=-90] (-1.5,0) node[left=2pt]{\small$\mathbb{B}$} to[out=90, in=-160] (x1);
  
  \draw[decoration={markings, mark=at position 0.52 with {\arrow{<}}},
      postaction={decorate},, color=black, thick] (x1) to[out=-20, in=90] (1.5,0) to[out=-90, in=20] (x2);
  \draw[decoration={markings, mark=at position 0.52 with {\arrow{<}}},
      postaction={decorate},, color=black, thick] (x2) to[out=160, in=-90] (-1.5,0) to[out=90, in=-160] (x1);
  
  \filldraw[black, thick] (x1) circle (0.12);
  \filldraw[black, thick] (x2) circle (0.12);
  
  \draw[black, thick, <->] (-1,0.5) to (1,0.5);
  \draw[black, thick, <->] (-1,-0.5) to (1,-0.5);
  \node (x) at (0,0) {\small$r$};
 \end{tikzpicture}
 \qquad\longleftarrow\qquad
 \begin{tikzpicture}[scale=.3, baseline=(x.base)]
  
    \filldraw[fill=black!15, thick, draw=black] (-3,-3) to (-3,3) to (3,3) to (3,-3) to (-3,-3);
    
    \draw[decoration={markings, mark=at position 0.6 with {\arrow{>}}},
      postaction={decorate},color=black, thick] (-3,-3) to node[above right]{$\tau$} (3,-3);
    \draw[decoration={markings, mark=at position 0.6 with {\arrow{>}}},
      postaction={decorate},color=black, thick] (-3,-3) to node[above right]{$t$} (-3,3);
      
    \node (x) at (0,-0.25) {};
    
    \node (a1) at (45:5.5) {};
    \node (a1) at (225:5.5) {};
    \node (a2) at (-45:5.5) {};
    \node (b2) at (135:5.5) {};
    
    \node (a1) at (270:5) {};
    \node (a1) at (90:5) {};
    \node (a2) at (177:5) {$\mathbb{B}$};
    \node (b2) at (0:5) {${\mathbb{A}}$};
    
 \end{tikzpicture}
\eeq
The partition function is easily computed by expanding the corners into two intervals (with arbitrary polarization) and putting the axial gauge on the square.%
\footnote{
 Here we use the axial gauge with the propagator  $\eta(t,\tau;t',\tau')=\delta(t'-t)\,(\mathrm{d}t'-\mathrm{d}t)\,\Theta(\tau'-\tau) $.
} 
This yields the answer
\begin{equation}\label{Z for AB-disk}
 Z(\mathbb{A},\mathbb{B})=\mathrm{e}^{\frac{\mathrm{i}}{\hbar}\int_{\partial_{\mathbb{A}}D}\langle r^*\mathbb{B},\mathbb{A} \rangle}~,
\end{equation}
where~$r$ is an orientation-reversing involution on the boundary of the disk, mapping the $\mathbb{A}$-arc diffeomorphically onto the $\mathbb{B}$-arc, having the two corners as fixed points (in terms of the square, $r$~is the involution $(t,0)\leftrightarrow (t,1)$). 
This partition function satisfies the mQME, $\Omega Z=0$, with $\Omega=\Omega_{\partial_{\mathbb{A}}}^\mathbb{A}+\Omega_{\partial_{\mathbb{B}}}^\mathbb{B}+ \langle\mathbb{B}_p,\mathbb{A}_p\rangle-\langle\mathbb{B}_{p'},\mathbb{A}_{p'}\rangle$, as per Proposition~\ref{prop: mQME I}, and as one can easily check explicitly.

\begin{remark} \label{rem: collapses of AB disk}
One can consider collapsing the $\mathbb{A}$- or $\mathbb{B}$-arc on the boundary of the disk~\eqref{AB disk}:
\[
 \begin{tikzpicture}[scale=.5]
  \coordinate (x1) at (0,2);
  \coordinate (x2) at (0,-2);
  
  \filldraw[fill=black!15, draw=black, thick] (x1) to[out=-20, in=90] (1.5,0)  node[right=2pt]{\small$\mathbb{A}$} to[out=-90, in=20] (x2) to[out=160, in=-90] (-1.5,0) node[left=2pt]{\small$\mathbb{B}$} to[out=90, in=-160] (x1);
  
  \draw[-<, color=black, thick] (x1) to[out=-20, in=90] (1.5,0);
  \draw[-<, color=black, thick] (x2) to[out=160, in=-90] (-1.5,0);
  
  \filldraw[black, thick] (x1) circle (0.12);
  \filldraw[black, thick] (x2) circle (0.12); 
  
  \begin{scope}[shift={(-5,-5)}, rotate=180]
   \filldraw[fill=black!15, thick, draw=black] (0,0) circle (1.5);
   \draw[black, thick, ->] (0,-1.5) arc (-90:0:1.5) node[left=2pt]{\small$\mathbb{B}$};
   \filldraw[black] (-1.5,0) circle (.12) node[right=2pt]{\small$\alpha$};
  \end{scope}
  
  \begin{scope}[shift={(5,-5)}]
   \filldraw[fill=black!15, thick, draw=black] (0,0) circle (1.5);
   \draw[black, thick, ->] (0,-1.5) arc (-90:0:1.5) node[right=2pt]{\small$\mathbb{A}$};
   \filldraw[black] (-1.5,0) circle (.12) node[left=2pt]{\small$\beta$};
  \end{scope}
  
  \draw[->, thick, black] (-1.7,-1.7) to node[above left]{\small collapse} (-3.5,-3.5);
  \draw[->, thick, black] (1.7,-1.7) to node[above right]{\small collapse} (3.5,-3.5);
  
 \end{tikzpicture}
\]
\begin{itemize}
 \item Collapsing the $\mathbb{A}$-arc into an $\alpha$-corner, we obtain a $\mathbb{B}$-disk with a single $\alpha$-corner (in the picture~\ref{pict_2}).
 Moreover, evaluating the partition function~\eqref{Z for AB-disk} on $\mathbb{B}=\beta$, we obtain the partition function $Z(\mathbb{B},\alpha)=\mathrm{e}^{-\frac{\mathrm{i}}{\hbar}\int_{\partial D} \langle \mathbb{B},\alpha\rangle}$, which agrees with our result~\eqref{B_disk_1_corner} from Section~\ref{B_disk_with_one_a_corner} and, indeed, satisfies the mQME with $\Omega=\Omega_{\partial D}^\mathbb{B}-\langle \mathbb{B}\big|^{+0}_{-0} , \alpha \rangle + \mathrm{i}\hbar \langle \frac12 [\alpha,\alpha],\frac{\partial}{\partial \alpha} \rangle$. 
 Here $\mathbb{B}\big|^{+0}_{-0}$ is the jump of the field $\mathbb{B}$ when passing through the $\alpha$-corner in positive direction.
 \item Collapsing the $\mathbb{B}$-arc of the disk~\eqref{AB disk} into a $\beta$-corner, we obtain a $\mathbb{A}$-disk with a single $\beta$-corner. 
 However, evaluating the partition function~\eqref{Z for AB-disk} on $\mathbb{B}=\beta$ yields $\mathrm{e}^{\frac{\mathrm{i}}{\hbar}\int_{\partial D} \langle \beta,\mathbb{A}\rangle}$ which \emph{does not} satisfy the mQME! 
 The reason for this is that the gauge-fixing on the disk~\eqref{AB disk} which was used to compute the partition function~\eqref{Z for AB-disk}, which in turn came from the axial gauge on a square, is not ``collapsible'', i.e. fails Assumption~\ref{assump: collapsible gauge}, and therefore Proposition~\ref{prop: mQME II} does not apply and we obtained a nonsensical answer after the collapse of the $\mathbb{B}$-arc.
\end{itemize}
\end{remark}

Using the construction of Section~\ref{sec: small model for A-states}, we can consider the projection $p_\mathcal{H}$ to the ``small model'' for the states on the $\mathbb{A}$-arc followed by respective inclusion $i_\mathcal{H}$, cf.~(\ref{p_H on A beta circle},\ref{i_H on A beta circle}). 
Thus we obtain a version of the partition function, factored through the small model for $\mathbb{A}$-states:
\beq\label{Z AB disk factored}
 \widetilde{Z}(\mathbb{A},\mathbb{B})&=i_\mathcal{H}\circ  p_\mathcal{H}\, Z \\
 &= \mathrm{e}^{\frac{\mathrm{i}}{\hbar} \int_{\partial_{\mathbb{A}}D} \big\langle r^*\mathbb{B}^{(0)}, \mathrm{d}t \log U(\mathbb{A}) \big\rangle
  + \big\langle r^*\mathbb{B}^{(1)}, \mathsf{G}_-(t,\mathrm{ad}_{\log U(\mathbb{A})}) \, \mathbb{A}_p 
   +  \mathsf{G}_+(t,\mathrm{ad}_{\log U(\mathbb{A})}) \, \mathbb{A}_{p'} \big\rangle ~.
}
\eeq
Note that, by~\eqref{reduction to const fields on an A-interval}, $\widetilde{Z}=Z+\Omega(\cdots)$ -- a modification of the answer~\eqref{Z for AB-disk} by an $\Omega$-exact term; this deformation can be interpreted as corresponding to a computation in a different gauge.%
\footnote{
 We also remark that the answer~\eqref{Z AB disk factored} can be obtained directly, by starting with an $\mathbb{A}$-disk with two $\alpha$-corners, and gluing it along one of the boundary arcs to the ``bean''~\eqref{B_disk_2_corners}.
}
Also, observe that in~\eqref{Z AB disk factored}, the field~$\mathbb{B}^{(1)}$ only interacts with the corner values of~$\mathbb{A}^{(0)}$, and thus the gauge corresponding to the answer~\eqref{Z AB disk factored} is ``collapsible'', i.e., satisfies the Assumption~\ref{assump: collapsible gauge}. 
Therefore, we can collapse the $\mathbb{B}$-arc into a $\beta$-corner, as in Section~\ref{sec: Picture II for corners}, by setting $\mathbb{B}^{(0)}=\beta$, $\mathbb{B}^{(1)}=0$ in~\eqref{Z AB disk factored}. 
Thus we finally arrive to the following result.

\begin{proposition}
The partition function for an $\mathbb{A}$-disk with a single $\beta$-corner is:
\begin{equation}\label{Z A beta}
 Z(\mathbb{A},\beta)=\mathrm{e}^{\frac{\mathrm{i}}{\hbar}\langle \beta, \log U(\mathbb{A}) \rangle}~.
\end{equation}
\end{proposition}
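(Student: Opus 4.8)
The plan is to obtain the $\mathbb{A}$-disk with a single $\beta$-corner by the collapse procedure described in Section~\ref{sec: Picture II for corners}, starting from the mixed $\mathbb{A}$-$\mathbb{B}$ disk~\eqref{AB disk} and collapsing the $\mathbb{B}$-arc into a $\beta$-corner. The essential point, already flagged in Remark~\ref{rem: collapses of AB disk}, is that one \emph{cannot} naively set $\mathbb{B}=\beta$ in the raw partition function~\eqref{Z for AB-disk}, because the axial-gauge-on-a-square propagator used to compute~\eqref{Z for AB-disk} fails the collapsibility condition of Assumption~\ref{assump: collapsible gauge}; the resulting expression would violate the mQME. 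So the first step is to replace~\eqref{Z for AB-disk} by the cohomologous representative~\eqref{Z AB disk factored}, obtained by applying $i_\mathcal{H}\circ p_\mathcal{H}$ (the projection to the small model for $\mathbb{A}$-states from Section~\ref{sec: small model for A-states}, followed by the holonomy-based inclusion). By~\eqref{reduction to const fields on an A-interval} this differs from $Z$ only by an $\Omega$-exact term, hence lies in the same state-cohomology class and may legitimately be used.

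Second, I would verify the crucial structural feature of~\eqref{Z AB disk factored}: the $1$-form field $\mathbb{B}^{(1)}$ enters only through its pairing with the \emph{corner values} $\mathbb{A}_p,\mathbb{A}_{p'}$ (via the generating functions $\mathsf{G}_\pm$), and not through any bulk integral against a non-collapsible propagator. This is exactly the statement that the gauge underlying~\eqref{Z AB disk factored} is collapsible in the sense of Assumption~\ref{assump: collapsible gauge}. Granting this, Proposition~\ref{prop: mQME II} applies, and the collapse of the $\mathbb{B}$-arc into a $\beta$-corner is implemented simply by setting $\mathbb{B}^{(0)}=\beta$ and $\mathbb{B}^{(1)}=0$ in~\eqref{Z AB disk factored}. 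Setting $\mathbb{B}^{(1)}=0$ kills the entire $\mathsf{G}_\pm$-dependence, and the surviving term is $\frac{\mathrm{i}}{\hbar}\int_{\partial_{\mathbb{A}}D}\langle r^*\beta,\,\mathrm{d}t\,\log U(\mathbb{A})\rangle$. Since $\beta$ is a constant (corner) $0$-form and $r$ is an orientation-reversing reparametrization of the arc, the integral $\int \mathrm{d}t$ along the collapsed direction produces a factor of $1$, yielding exactly $\frac{\mathrm{i}}{\hbar}\langle\beta,\log U(\mathbb{A})\rangle$ and hence~\eqref{Z A beta}.

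As a consistency check, and as an independent confirmation of the result, I would invoke Remark~\ref{rem: proof of mQME via Abeta disk}: the candidate answer $Z_D=\mathrm{e}^{\frac{\mathrm{i}}{\hbar}\langle\beta,\log U(\mathbb{A})\rangle}$ is shown there to satisfy $(\Omega^{\mathbb{A}}+\Omega^{\beta\mathbb{A}}_p+\Omega^{\mathbb{A}\beta}_p)Z_D=0$ by a direct Baker--Campbell--Hausdorff computation, using the identity $\Omega^{\mathbb{A}}\log U(\mathbb{A})=-\mathrm{i}\hbar\big(\mathsf{F}_+(\mathrm{ad}_{\log U(\mathbb{A})})\mathbb{A}_{p-0}+\mathsf{F}_-(\mathrm{ad}_{\log U(\mathbb{A})})\mathbb{A}_{p+0}\big)$. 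This shows the answer is $\Omega$-closed with the corner contributions prescribed by~\eqref{HE_contributions_to_Omega}, which is precisely the compatibility required of a genuine partition function in the polarized-corner framework.

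The main obstacle is not the final algebra but the justification that the \emph{collapsible} representative~\eqref{Z AB disk factored} is the correct one to collapse, as opposed to~\eqref{Z for AB-disk}. Concretely, the delicate step is establishing that $\widetilde{Z}=i_\mathcal{H}\circ p_\mathcal{H}\,Z$ really is cohomologous to $Z$ \emph{and} satisfies Assumption~\ref{assump: collapsible gauge}; the first half rests on the chain-homotopy argument~\eqref{reduction to const fields on an A-interval} (which in turn depends on the chain-map properties of $i_\mathcal{H},p_\mathcal{H}$ proven via the small-model/$L_\infty$-transfer machinery of Section~\ref{sec: small model for A-states} and Appendix~\ref{app:chain}), and the second half is the observation that after the homotopy the $\mathbb{B}^{(1)}$-dependence has been pushed entirely onto the corners. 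Once both are in hand, the collapse and the final evaluation are immediate, and the BCH cross-check of Remark~\ref{rem: proof of mQME via Abeta disk} removes any residual doubt about the normalization and the sign conventions.
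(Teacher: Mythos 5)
Your proposal is correct and follows essentially the same route as the paper: pass from the $\mathbb{A}$-$\mathbb{B}$ disk answer~\eqref{Z for AB-disk} to the cohomologous, collapsible representative~\eqref{Z AB disk factored} via $i_\mathcal{H}\circ p_\mathcal{H}$, collapse the $\mathbb{B}$-arc by setting $\mathbb{B}^{(0)}=\beta$, $\mathbb{B}^{(1)}=0$, and confirm the result with the BCH/mQME check of Remark~\ref{rem: proof of mQME via Abeta disk}. The final evaluation (the $\mathsf{G}_\pm$ terms dying with $\mathbb{B}^{(1)}$ and the $t$-integral yielding $1$) is exactly the computation the paper performs.
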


Note that this answer has a rigidity property: it cannot be changed by a BV-exact term~$\Omega(\cdots)$ for a degree reason -- there no boundary/corner fields of negative degree needed to construct a degree~$-1$ primitive. 
The answer~\eqref{Z A beta} does indeed satisfy the mQME, i.e. is $\Omega$-closed, as we have verified explicitly in Remark~\ref{rem: proof of mQME via Abeta disk} above.

\subsection{Gluing arcs in $\mathbb{A}$ polarization} \label{section_gluing_arcs}

We want now to recover the YM gluing law of two arcs in $\mathbb{A}$ polarization.
To compute this gluing law we can use an intermediate BF disk with the boundary split in two arcs with $\mathbb{B}$ polarization, separated by points in $\alpha$ polarization (figure~\ref{gluing_A_arcs}).
Thus, gluing together two $\mathbb{A}$-arcs with endpoints in $\alpha$-polarization, via the ``bean''~\eqref{B_disk_2_corners}, for the partition function of the glued surface we obtain the following:
\beq\label{gluing_A_complete}
\begin{aligned}
  Z_\Sigma &= \int  \mathrm{d}\mathbb{B} \,\mathrm{d}\widetilde{\mathbb{B}} \, \mathrm{d}\widetilde{\mathbb{A}} \,
   \mathrm{d}\mathbb{A} \, \mathrm{d}\mathsf{a}_1 ~
    \mathrm{e}^{\frac{\mathrm{i}}{\hbar} 
     \int_{\gamma} \langle \mathbb{B}, \mathsf{a} - \mathbb{A} + \mathsf{G}_-(\tau, \mathrm{ad}_{\mathsf{a}_1}) \alpha_0 
      + \mathsf{G}_+(\tau, \mathrm{ad}_{\mathsf{a}_1})  \alpha_1 \rangle} \\
 &\qquad\qquad\cdot\mathrm{e}^{-\frac{\mathrm{i}}{\hbar}  \int_{\widetilde{\gamma}} \langle \widetilde{\mathbb{B}}, \mathsf{a} - \widetilde{\mathbb{A}} 
  + \mathsf{G}_-(\tau, \mathrm{ad}_{\mathsf{a}_1}) \alpha_0 
   + \mathsf{G}_+(\tau, \mathrm{ad}_{\mathsf{a}_1}) \alpha_1 \rangle} ~
    \mathrm{det}\bigg(\frac{\sinh\big(\mathrm{ad}_{\mathsf{a}_1} /2\big)}{\mathrm{ad}_{\mathsf{a}_1} /2}\bigg)
     Z_{\Sigma_1}[\mathbb{A}]\;Z_{\Sigma_2}[\widetilde{\mathbb{A}}]\\
 &= \int_{\mathsf{a}_1\in B_0} \mathrm{d}\mathsf{a}_1 ~
  ~\mathrm{det}\bigg(\frac{\sinh\big(\mathrm{ad}_{\mathsf{a}_1} /2\big)}{\mathrm{ad}_{\mathsf{a}_1} /2}\bigg) 
   Z_{\Sigma_1}\big[\mathbb{A}=\mathsf{a}+\mathsf{G}_-(\tau, \mathrm{ad}_{\mathsf{a}_1}) \alpha_0 
    + \mathsf{G}_+(\tau, \mathrm{ad}_{\mathsf{a}_1}) \alpha_1\big] \\
 &\qquad\qquad\cdot  \;Z_{\Sigma_2}\big[\widetilde{\mathbb{A}}=\mathsf{a}+\mathsf{G}_-(\tau, \mathrm{ad}_{\mathsf{a}_1}) \alpha_0 
   + \mathsf{G}_+(\tau, \mathrm{ad}_{\mathsf{a}_1}) \alpha_1\big] ~.
\end{aligned}
\eeq
Here the integration domain for the zero-mode~$\mathsf{a}_1$ is the ``Gribov region''~$B_0\subset \mathfrak{g}$\,.
Notice that in this gluing formula the states on the $\mathbb{A}$-arcs factor through the ``small model'' for the space of states introduced in Section~\ref{sec: small model for A-states}.

If we assume also that the all boundary strata of $\Sigma_1$, $\Sigma_2$ are in $\mathbb{A}$-polarization and that partition functions $Z_{\Sigma_1}$, $Z_{\Sigma_2}$ are globalized, then the partition functions of $\Sigma_1$, $\Sigma_2$ does not depend on the ghost fields $\mathbb{A}_{(0)}$,$\widetilde{\mathbb{A}}_{(0)}$%
\footnote{
 Independence on the ghosts can be seen by assembling the surface with $\mathbb{A}$-boundary by gluing $\mathbb{A}$-polygons using beans as above. 
 Partition functions for polygons do not depend on the ghosts and the gluing formula~\eqref{gluing_A_complete} does not generate ghost dependence. 
 A curious point is that the  answer for $\mathbb{A}$-$\mathbb{A}$~cylinder in Section~\ref{AA polarization on the cylinder} did contain ghost dependence which seems to contradict what we are saying here. 
 In fact, there is no contradiction, rather there are inequivalent gauge-fixings: one can obtain an $\mathbb{A}$-$\mathbb{A}$~cylinder from an $\mathbb{A}$-square, gluing two opposite sides using the bean~\eqref{B_disk_2_corners}. 
 Choosing the gauge-fixing for the globalization on the bean as in~\eqref{gluing_A_complete} -- integrating over $\mathsf{a}_1$ -- we get the answer for the cylinder without the ghost delta-function. 
 If instead we use the opposite globalization on the bean -- integrating over $\mathsf{b}^1$ -- we obtain the answer of Section~\ref{AA polarization on the cylinder}, involving the ghost delta-function.
} 
and so the gluing formula reduces to
\beq\label{gluing_A}
 Z_\Sigma = \int_G \mathrm{d}U(\mathbb{A}) ~ Z_{\Sigma_1}[U(\mathbb{A})]\;Z_{\Sigma_2}[U(\mathbb{A})]    ~,
\eeq
which coincides with the gluing formula for YM known in literature~\cite{witten:2d_quantum_gauge, Migdal:1975zg}.

\begin{figure}[h!]
\[
   \begin{tikzpicture}[scale=.25, baseline=(x.base)]
    \node (x) at (0,0) {};
    \draw[fill=black!15, draw=black!15] (0,0) circle (3);   

    \node (a1) at (0:5) {$\mathbb{A}$};

    \node (a1) at (40:5) {${{\alpha_1}}$};
    \node (a2) at (-40:5) {${{\alpha_0}}$};

    \draw[color=black, thick] (40:3) arc (40:110:3);
    \draw[color=black, thick] (-40:3) arc (-40:40:3);
    \draw[color=black, thick] (-40:3) arc (-40:-110:3);
    \draw[color=black, thick, dotted] (110:3) arc (110:250:3);

    \draw[thick] (40:2.5) to (40:3.5);
    \draw[thick] (110:2.5) to (110:3.5);
    \draw[thick] (-40:2.5) to (-40:3.5);
    \draw[thick] (-110:2.5) to (-110:3.5);
   \end{tikzpicture}
   \quad
   \begin{tikzpicture}[scale=.25, baseline=(x.base)]
  
    \draw[fill=black!15, thick] (0,0) circle (3);   

    \node (x) at (0,0) {};
    \node at (0,0) {BF};
    
    \node (a1) at (180:5) {${\mathbb{B}}$};
    \node (a2) at (3:5) {$\widetilde{\mathbb{B}}$};
    
    \draw[thick] (90:2.5) to (90:3.5) node[above] {$\alpha_1$};
    \draw[thick] (270:2.5) to (270:3.5) node[below] {$\alpha_0$};

   \end{tikzpicture}
   \quad
   \begin{tikzpicture}[scale=.25, baseline=(x.base)]
    \node (x) at (0,0) {};
    \draw[fill=black!15, draw=black!15] (0,0) circle (3);   

    \node (a1) at (180:5) {$\widetilde{\mathbb{A}}$};

    \node (a1) at (-140:5) {${{\alpha_0}}$};
    \node (a2) at (140:5) {${{\alpha_1}}$};

    \draw[color=black, thick] (60:3) arc (60:300:3);
    \draw[color=black, thick, dotted] (-60:3) arc (-60:60:3);

    \draw[thick] (140:2.5) to (140:3.5);
    \draw[thick] (60:2.5) to (60:3.5);
    \draw[thick] (-60:2.5) to (-60:3.5);
    \draw[thick] (-140:2.5) to (-140:3.5);
   \end{tikzpicture}
   \quad=\quad
   \begin{tikzpicture}[scale=.25, baseline=(x.base)]
    \node (x) at (0,0) {};
    \draw[fill=black!15, draw= black!15] (0,0) circle (3);

    \draw[color=black, thick, dotted] (-30:3) arc (-30:30:3);
    \draw[color=black, thick, dotted] (150:3) arc (150:210:3);
    \draw[color=black, thick] (30:3) arc (30:150:3);
    \draw[color=black, thick] (210:3) arc (210:330:3);
    
    \draw[thick] (30:2.5) to (30:3.5);
    \draw[thick] (90:2.5) to (90:3.5);
    \draw[thick] (150:2.5) to (150:3.5);
    \draw[thick] (210:2.5) to (210:3.5);
    \draw[thick] (270:2.5) to (270:3.5);
    \draw[thick] (330:2.5) to (330:3.5);
    \draw[dashed, thick] (90:3) to (-90:3);
    
    \node (a1) at (-90:5) {${{\alpha_0}}$};
    \node (a2) at (90:5) {${{\alpha_1}}$};

   \end{tikzpicture}
   \]
  \caption{Two $\mathbb{A}$-polarized boundaries glued together using an intermediate $\mathbb{B}$-$\mathbb{B}$ BF disk.}
  \label{gluing_A_arcs}
\end{figure}
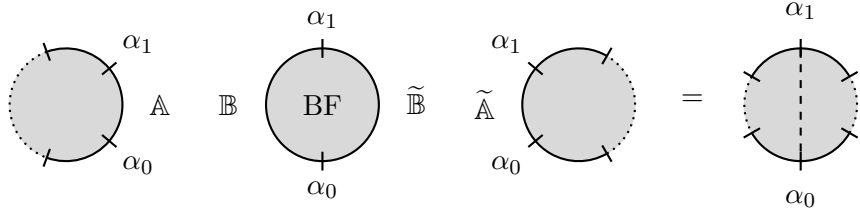

\subsection{2D YM partition function on surfaces with boundaries}\label{2d YM Partition Function on Surfaces with Boundaries}

We can now compute the partition function on a general surface with boundaries.
Indeed, any surface with boundary can be obtained by gluing edges of some polygon (or a collection of polygons -- any triangulation or a cellular decomposition gives a presentation of the surface of this kind).
Thus using the gluing properties of BV-BFV theories we can compute the YM partition function on a general surface with boundary starting from the partition function on the disk with the boundary split in several arcs $\gamma_i$\,:
\beq
 Z_{D^2}[\mathbb{A}_1,\dots,\mathbb{A}_n] = \sum_{R} (\mathrm{dim}~R)~ \chi_R\big(U_{\gamma_1}(\mathbb{A}_1)\cdots U_{\gamma_n}(\mathbb{A}_n)\big)~
  \mathrm{e}^{-\frac{\mathrm{i}\hbar a}{2}C_2(R)} ~.
\eeq
Each time we glue together two arcs, using the property~\eqref{gluing_A}, we have an integral of the kind:
\beq
 \int_G \mathrm{d} U ~ \chi_R(VUWU^\dag ) = \frac{\chi_R(V)\chi_R(W)}{\mathrm{dim}~R}	~,\\
 \int_G \mathrm{d} U ~ \chi_R(VU)\chi_R(WU^\dag) = \frac{\chi_R(VW)}{\mathrm{dim}~R}	~. 
\eeq
This way we get the following result.

\begin{theorem}
The globalized YM partition function on a surface with genus~$g$ and $b$~boundaries in the $\mathbb{A}$~polarization is:
\beq\label{2d-YM_part.funct}
 Z_{\Sigma_{g,b}}[\mathbb{A}_1,\dots,\mathbb{A}_b] = \sum_{R} (\mathrm{dim}~R)^{2-2g-b} \mathrm{e}^{-\frac{\mathrm{i}\hbar a}{2}C_2(R)} \prod_{i=1}^b 
  \chi_R\big(U_{b_i}(\mathbb{A}_i)\big)~.
\eeq
\end{theorem}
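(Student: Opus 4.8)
The plan is to prove formula~\eqref{2d-YM_part.funct} by induction on the cell structure of the surface, building $\Sigma_{g,b}$ from a single polygon whose edges are glued in pairs, and using the gluing law~\eqref{gluing_A} together with the character orthogonality relations stated just above the theorem. First I would fix a standard presentation of $\Sigma_{g,b}$: take a $(4g+b)$-gon (a disk with its boundary subdivided into arcs), where $2g$ pairs of edges are identified according to the word $\prod_{j=1}^g [a_j,b_j]$ producing the genus, and the remaining $b$ arcs are left free, becoming the $b$ boundary circles carrying the $\mathbb{A}$-fields $\mathbb{A}_1,\dots,\mathbb{A}_b$. The starting point is the $\mathbb{A}$-polygon partition function
\beq
 Z_{D^2}[\mathbb{A}_1,\dots,\mathbb{A}_n] = \sum_{R} (\dim R)\; \chi_R\big(U_{\gamma_1}(\mathbb{A}_1)\cdots U_{\gamma_n}(\mathbb{A}_n)\big)\;
  \mathrm{e}^{-\frac{\mathrm{i}\hbar a}{2}C_2(R)}~,
\eeq
which is itself obtained from the $\mathbb{A}$-disk result~\eqref{A_disk_YM_part_funct} by the splitting move~\eqref{corner_rules_1} distributing the boundary into arcs.

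Next I would carry out the edge-gluings one at a time, each contributing one application of~\eqref{gluing_A}, i.e. an integral $\int_G \mathrm{d}U$ over the common holonomy. The two character integrals recorded before the theorem are exactly the two types of identification that occur: gluing two edges appearing in the polygon word with \emph{opposite} orientation (as in a commutator $[a,b]=aba^{-1}b^{-1}$) produces the pattern $\int_G \mathrm{d}U\,\chi_R(VUWU^\dagger)=\chi_R(V)\chi_R(W)/\dim R$, while gluing two edges of the \emph{same} orientation produces $\int_G \mathrm{d}U\,\chi_R(VU)\chi_R(WU^\dagger)=\chi_R(VW)/\dim R$. The key bookkeeping observation is that each of these integrals leaves a single character $\chi_R$ of the \emph{remaining} holonomies and multiplies the summand by one extra factor of $(\dim R)^{-1}$. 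Since assembling the closed genus-$g$ part with $b$ boundary holes from the polygon requires precisely $2g$ edge-identifications (the $2g$ pairs building up the $g$ handles), the total accumulated power of $\dim R$ is $1 - 2g$, and the free boundary arcs survive as the product $\prod_{i=1}^b \chi_R(U_{b_i}(\mathbb{A}_i))$, each such arc carrying its own factor which, after the gluings that attach it, contributes one more $(\dim R)^{-1}$; tracking these bookkeeping exponents carefully yields the total exponent $2-2g-b$. The area factor $\mathrm{e}^{-\frac{\mathrm{i}\hbar a}{2}C_2(R)}$ is inert under the gluing integrals (it does not depend on the group variable $U$) and, by the area-additivity built into the $\mathbb{A}$-$\mathbb{B}$ cylinder of Section~\ref{Sect:AB_cyl} and the concentration of $\mu$ on building block~(I), simply exponentiates the total area $a=\int_\Sigma\mu$.

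The main obstacle I expect is the careful bookkeeping of the powers of $\dim R$ and of the orientation data in the polygon word: one must verify that, regardless of the order in which edges are glued and regardless of which standard presentation is chosen, the net exponent of $\dim R$ comes out to the topological invariant $2-2g-b=\chi(\Sigma_{g,b})$. The cleanest way to handle this is to note that each gluing reduces the number of independent group integrations by one while contributing exactly one factor $(\dim R)^{-1}$, and to match the final count against the Euler characteristic via $\chi(\Sigma_{g,b})=2-2g-b$; equivalently, one can argue by the standard fact that the Migdal–Witten weight $(\dim R)^{\chi}$ is the unique gluing-consistent assignment, and that~\eqref{gluing_A} together with the polygon answer reproduces precisely this TQFT-style composition. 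A secondary point requiring care is that the gluing formula~\eqref{gluing_A} is the \emph{reduced} form valid when all boundary strata are in $\mathbb{A}$-polarization and the partition functions are globalized with the ghost-independence established in the footnote to Section~\ref{section_gluing_arcs}; I would therefore first confirm that this ghost-independence is preserved at every inductive step, so that the simplified integral~\eqref{gluing_A} rather than the full formula~\eqref{gluing_A_complete} applies throughout, after which the character integrals close the computation.
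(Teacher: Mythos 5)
Your proposal follows essentially the same route as the paper: start from the $\mathbb{A}$-polygon partition function $Z_{D^2}$, realize $\Sigma_{g,b}$ by identifying edges of the polygon in pairs, and apply the gluing law~\eqref{gluing_A} together with the two character integrals, each gluing contributing one factor of $(\mathrm{dim}~R)^{-1}$. The only point needing care --- which the paper also leaves implicit --- is the exponent count: with a presentation using $2g$ handle gluings plus $b-1$ connecting-arc gluings (and a leftover $\chi_R(\mathbb{I})=\mathrm{dim}~R$ only in the closed case), the net power is $1-(2g+b-1)=2-2g-b$, matching the Euler characteristic as you indicate.
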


\appendix

\section{Wilson loop observables}\label{Wilson_loop_observables}

Let us consider now observables in 2D~YM.
These are operators on the Hilbert space $\mathcal{H}_\Gamma$ associated to some boundary $\Gamma$ which, in $\mathbb{A}$ polarization, is the space of functions of the holonomy $U_\Gamma(\mathbb{A})$\,.

Let us consider for example the multiplication operator for the factor $\chi_R(U_\Gamma(\mathbb{A}))$ for some representation $R$\,.
We can compute the matrix element of this operator between two states defined by the partition functions on two surfaces $\Sigma_1$ and $\Sigma_2$ with the same boundary $\Gamma = S^1$\,.
Using the gluing rule~\eqref{S1_gluing} for boundaries in $\mathbb{A}$ polarization we get:
\beq\label{non-int.loop}
 &\langle Z_{\Sigma_1} | \chi_R(U_\Gamma(\mathbb{A})) | Z_{\Sigma_2} \rangle = 
  \int_G \mathrm{d}U ~ Z_{\Sigma_1}(U^\dag) \chi_R(U) Z_{\Sigma_2}(U) \\
 &= \sum_{R_1,R_2} (\mathrm{dim}~R_1)^{1-2g_1} (\mathrm{dim}~R_2)^{1-2g_2} \mathrm{e}^{-\frac{\mathrm{i}\hbar a_1}{2}C_2(R_1) 
  -\frac{\mathrm{i}\hbar a_2}{2}C_2(R_2)} \int_G \mathrm{d}U ~ \chi_{R_1}(U^\dag) \chi_{R}(U) \chi_{R_2}(U)\\
 &= \sum_{R_1,R_2} (\mathrm{dim}~R_1)^{1-2g_1} (\mathrm{dim}~R_2)^{1-2g_2} \mathrm{e}^{-\frac{\mathrm{i}\hbar a_1}{2}C_2(R_1) 
  -\frac{\mathrm{i}\hbar a_2}{2}C_2(R_2)} N^{R_1}_{R,R_2}  ~,
\eeq
where we used the expression~\eqref{2d-YM_part.funct} for the partition functions of the surfaces~$\Sigma_i$\,, with genus~$g_i$\,, and where $N^{R_1}_{R,R_2}$ are the fusion numbers defined by the decomposition of the product of irreducible representations: $R\otimes R_2 = \oplus_{R_1} N^{R_1}_{R,R_2} R_1$\,.
This quite obviously corresponds to the computation of the expectation value of a non self-intersecting Wilson loop $W_R(\Gamma)$ on the surface $\Sigma_1\bigcup_\Gamma \Sigma_2$\,.

More generally, we can consider operators going from some space of ``inbound'' states to some ``outbound'' states: $\mathcal{O}\colon \mathcal{H}_{\mathrm{in}} \rightarrow  \mathcal{H}_{\mathrm{out}}$\,.
Such an operator can be represented by a surface (possibly with corners) with the appropriate boundary components, i.e. such that the boundary Hilbert space is $\mathcal{H}^*_{\mathrm{in}}\otimes\mathcal{H}_{\mathrm{out}}$\,, and a particular state corresponding to~$\mathcal{O}$\,.
The operator now acts on the inbound states by gluing.
For example to the (non self-intersecting) Wilson loop we computed above we can associate a cylinder in $\mathbb{A}$-$\mathbb{A}$ polarization and the state $\chi_R(U(\mathbb{A})) \delta(U(\mathbb{A}),U(\mathbb{A}'))$\,.

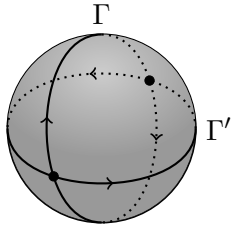
\begin{figure}[h]
\centering
\begin{tikzpicture}[scale=.25]
 \draw[fill=black!40, thick] (0,0) circle (5);   
 \begin{scope}
   \clip (0,0) circle (5);
   \filldraw[fill=black!40] (0,0) circle (5);
   \begin{scope}[yscale=.75]
   \shade[inner color=black!20, outer color=black!40] (0,4) circle (6);
   \end{scope}
 \end{scope}
 
 \draw[decoration={markings, mark=at position 0.55 with {\arrow{>}}},
      postaction={decorate},color=black, thick] (0,-5) to[in=180, out=180] (0,5) node[above]{$\Gamma$};
 \draw[decoration={markings, mark=at position 0.55 with {\arrow{>}}},
      postaction={decorate},color=black, thick, dotted] (0,5) to[in=0, out=0] (0,-5);
 
 \draw[decoration={markings, mark=at position 0.55 with {\arrow{>}}},
      postaction={decorate},color=black, thick] (-5,0) to[in=-90, out=-90] (5,0) node[right]{$\Gamma'$};
 \draw[decoration={markings, mark=at position 0.55 with {\arrow{>}}},
      postaction={decorate},color=black, thick, dotted] (5,0) to[in=90, out=90] (-5,0);
      
 \draw[color=black, fill=black] (-2.55,-2.55) circle (7pt); 
 \draw[color=black, fill=black] (2.55,2.55) circle (7pt); 

\end{tikzpicture}
\caption{Two intersecting Wilson loops $\Gamma$ and $\Gamma'$ on the 2-sphere.}
\label{inters.loops}
\end{figure}

Consider now the case of two Wilson loops $W_\Gamma (R) W_{\Gamma'}(R')$ intersecting in 2 points: $\Gamma \cap \Gamma' = v_1\cup v_2$\,.
We can view them as 4 separate arcs $\gamma_i,\gamma_i'$\,, $i=1,2$\,, joining the two intersection points.
These intersecting Wilson loops can be thought as a multiplication operator on the space of states of the 4 arcs -- multiplication by the factor $\chi_R\big(U_{\gamma_1}(\mathbb{A}_1)U_{\gamma_2}(\mathbb{A}_2)\big)\chi_{R'}\big(U_{\gamma'_1}(\mathbb{A}'_1)U_{\gamma'_2}(\mathbb{A}'_2)\big)$\,.
We can thus compute matrix elements between states defined by surfaces with opportune boundary components.
Let us consider for example four disks, each with the boundary circle separated into two arcs, glued into a sphere with two intersecting Wilson loops as in figure~\ref{inters.loops}:
\begin{equation}
\begin{aligned}
 \langle W_\Gamma (R) W_{\Gamma'}(R') \rangle_{S^2} = \sum_{\stackrel{R_1,R_2,}{_{R_3,R_4}}} 
  \mathrm{e}^{-\frac{\mathrm{i}\hbar}{2} \sum_i a_iC_2(R_i)} \prod_i (\mathrm{dim}~R_i) \int_G \mathrm{d}U_1 \mathrm{d}U_2 \mathrm{d}U_3 \mathrm{d}U_4 ~ 
   \chi_R(U_1 U_3)  \\
 \cdot \chi_{R'}(U_2 U_4) \chi_{R_1}(U_4 U_1) \chi_{R_2}(U_1^\dag U_2)\chi_{R_3}(U_2^\dag U_3^\dag) \chi_{R_4}(U_3 U_4^\dag)	~.
\end{aligned}
\end{equation}
This integral can be evaluated using the Peter-Weyl theorem (part 3) which implies:
\beq\label{edge_int}
 \int_G \mathrm{d}U~ R_1(U)_{i'}^i R_2(U)_{j'}^j R_3(U^\dag)^{k'}_k = \frac{1}{\mathrm{dim}~R_3} \sum_{\mu} C_\mu(R_1,R_2;R_3)^{ij}_k C^*_\mu(R_1,R_2;R_3)_{i'j'}^{k'}	~,
\eeq
where $C_\mu(R_1,R_2;R_3)^{ij}_k$ are Clebsch-Gordan coefficients.%
\footnote{
 If we have representations $R_1,R_2$ we can decompose their product into the sum of irreducible representations.
 Let $\{e_1^i\}$ and $\{e_2^j\}$ be two basis of the representation spaces of $R_1$ and $R_2$ respectively, and let $\{e_{\mu_a}^k\}$ be a basis of their tensor product such that the product representation is in the block-diagonal form, where $a$ denotes the irreps and $\mu_a$ labels the various copies of the representation $R_a$ appearing in the product $R_1\otimes R_2$\,.
 The Clebsh-Gordan coefficients are defined as the basis changing coefficients: \[e_1^i\otimes e_2^j = \sum_a\sum_{\mu_a} C_{\mu_a}(1,2;a)^{ij}_k e_{\mu_a}^k ~.\]
} 
We get:
\begin{equation}
\begin{aligned} 
 &\langle W_\Gamma (R) W_{\Gamma'}(R') \rangle_{S^2} =\sum_{\stackrel{R_1,R_2,}{_{R_3,R_4}}} \mathrm{e}^{-\frac{\mathrm{i}\hbar }{2}\sum_i a_iC_2(R_i)} 
  \frac{\mathrm{dim}~R_1}{\mathrm{dim}~R_3} \sum_{\overset{\mu_1,\mu_2}{_{\mu_3,\mu_4}}} C_{\mu_1}(R,R_1;R_2)^{ij}_k C^*_{\mu_1}(R,R_1;R_2)_{i'j'}^{k'}	  \\
 &\cdot  C_{\mu_2}(R',R_2;R_3)^{lk}_m  C^*_{\mu_2}(R',R_2;R_3)_{l'k'}^{m'} C_{\mu_3}(R,R_4;R_3)^{i'n'}_{m'} 
  C^*_{\mu_3}(R,R_4;R_3)_{in}^{m} C_{\mu_4}(R',R_1;R_4)^{l'j'}_{n'} \\
 &\cdot C^*_{\mu_4}(R',R_1;R_4)_{lj}^{n}= \sum_{\stackrel{R_1,R_2,}{_{R_3,R_4}}} \mathrm{e}^{-\frac{\mathrm{i}\hbar }{2} \sum_i a_iC_2(R_i)} \prod_{i=1,\dots,4} (\mathrm{dim}~R_i)~  	
  \sum_{\overset{\mu_1,\mu_2}{_{\mu_3,\mu_4}}}								\big\{\begin{smallmatrix}
                                                                                                           R&R_1&R_2\\
                                                                                                           R'&R_3&R_4
                                                                                                        \end{smallmatrix}\big\}^{\mu_1\mu_2}_{\mu_3\mu_4}~
													\big\{\begin{smallmatrix}
                                                                                                           R'&R_1&R_4\\
                                                                                                           R&R_3&R_2
                                                                                                        \end{smallmatrix}\big\}_{\mu_1\mu_2}^{\mu_3\mu_4} ~,
\end{aligned}
\end{equation}
where $\big\{\begin{smallmatrix} R&R_1&R_2\\R'&R_3&R_4\end{smallmatrix}\big\}^{\mu_1\mu_2}_{{\mu_3\mu_4}}$ are Wigner 6-j symbols.%
\footnote{
 6-j symbols are defined by:
 \[
  \bigg\{\begin{matrix} R_1&R_2&R_3\\R_4&R_5&R_6\end{matrix}\bigg\}^{\mu_1\mu_2}_{{\mu_3\mu_4}} = 
  \frac{C_{\mu_1}(R_1,R_2;R_3)^{ij}_k}{(\mathrm{dim}~R_3)^{\frac{1}{2}}}
  \frac{C_{\mu_2}(R_4,R_3;R_5)^{lk}_m}{(\mathrm{dim}~R_5)^{\frac{1}{2}}}
  \frac{C^*_{\mu_3}(R_1,R_6;R_5)_{in}^m}{(\mathrm{dim}~R_5)^{\frac{1}{2}}} 
  \frac{C^*_{\mu_4}(R_4,R_2;R_6)_{lj}^n}{(\mathrm{dim}~R_6)^{\frac{1}{2}}} ~.
 \]
}

We can generalize this to compute the value of any number of (possibly intersecting) Wilson loops over any surface with boundary.
Given a set of Wilson loops we can consider separately the various Wilson lines connecting intersection points.%
\footnote{
 If a loop has no intersections, then its contribution will be $N^{R_3}_{R_1,R_2}$ as in equation~\eqref{non-int.loop}.
}
Each line carries a group variable and and contributes with the integral~\eqref{edge_int}, where $R_1$ is the representation of the Wilson loop containing that line and $R_2,R_3$ are the representations carried by the two regions adjacent to that line.
The main observation is that this integral factorises into the product of two Clebsch-Gordan coefficents, each depending only on indices living on one of the two edges of the line.%
\footnote{
 Each oriented boundary, or Wilson loop, carries the character of the holonomy of $\mathbb{A}$\,.
 If we split the circle into various arcs, then it will carry the character of the products of the holonomies over different arcs, multiplied according to the orientation of the loop. 
 This defines inbound and outbound indices for the holonomy over each oriented arc.
}
Thus when 4 lines meet at an intersection point, the factors associated to that intersection combine to give a 6-j symbol, as in the previous example:
\beq\label{6-j}
 \begin{tikzpicture}[scale=.55, baseline=(x2.base)]
  \coordinate (x) at (0,0);
 
  \coordinate (v1) at (-2,-2);
  \coordinate (v2) at (2,2);
  
  \coordinate (v1') at (-2,2);
  \coordinate (v2') at (2,-2);
  
  \node (x1) at (0,2) {$R_1$};
  \node (x2) at (2,0) {$R_2$};
  \node (x3) at (0,-2) {$R_3$};
  \node (x4) at (-2,0) {$R_4$};
  
  \draw[decoration={markings, mark=at position 0.5 with {\arrow{>}}},
      postaction={decorate},color=black, thick] (v1) node[left, below]{$R$}node[left]{$\mu_3$} to (x);
  \draw[decoration={markings, mark=at position 0.6 with {\arrow{>}}},
      postaction={decorate},color=black, thick] (x) to (v2)node[right, above]{$R$} node[right]{$\mu_1$};      
  \draw[decoration={markings, mark=at position 0.5 with {\arrow{>}}},
      postaction={decorate},color=black, thick] (v1') node[right, above]{$R'$}node[left]{$\mu_4$} to (x);
  \draw[decoration={markings, mark=at position 0.6 with {\arrow{>}}},
      postaction={decorate},color=black, thick] (x) to (v2') node[right, below]{$R'$} node[right]{$\mu_2$};
      
  \draw[color=black, fill=black] (0,0) circle (6pt); 

 \end{tikzpicture}
 \qquad\rightsquigarrow\qquad \bigg\{\begin{matrix} R&R_1&R_2\\R'&R_3&R_4\end{matrix}\bigg\}^{\mu_1\mu_2}_{{\mu_3\mu_4}} \equiv G(R,R';\dots)	~.
\eeq
Finally, the expectation value of a set of Wilson loops $\{\Gamma_l\}$ on a surface $\Sigma$ is given by the following formula:
\beq
 \langle \prod_l W_{\Gamma_l}(R_l) \rangle_\Sigma = \sum_{R_\lambda} \prod_\lambda \mathrm{e}^{-\frac{\mathrm{i}\hbar }{2} a_\lambda C_2(R_\lambda)} 
  (\mathrm{dim}~R_\lambda)^{2-2g_\lambda - b_\lambda} \prod_{b_\lambda}\chi_{R_\lambda}(U_{b_\lambda}) \\
 \cdot \prod_v\sum_{\mu_i} G_v(R_l; R_\lambda; \mu_i) \prod_{l_0} N(R_{l_0};R_\lambda) 	~.
\eeq
where the index $\lambda$ runs over connected components $\Sigma_\lambda$ of the surface obtained by cutting $\Sigma$ along the Wilson loops, $b_\lambda$ labels the boundaries of $\Sigma$ contained in $\Sigma_\lambda$\,, $v$ labels the intersections between loops, 
$G_v$ indicates the 6-j symbol at the vertex $v$ evaluated on the surrounding representations according to~\eqref{6-j} and $N(R_{l_0};R_\lambda)$ denotes the fusion numbers for the non-intersecting Wilson loops labelled by~$l_0$\,.

\section{Propagators}

We collect in this appendix the computations of the propagators used in this paper.
We will firstly consider one-dimensional BF propagator on the circle and on the interval with the various possible polarizations on the two end-points.
Then we will use these to compute the axial-gauge propagator on 2D surfaces, in particular on the cylinder~$S^1\times I$ and on the square~$I\times I$\,.

\subsection{One-dimensional propagators}\label{One-Dimensional Propagators}

\subsubsection{Propagator on the circle}

Let us consider non-abelian BF theory on the circle~$S^1$\,.
We are looking for the propagator when we expand the action with respect to the trivial connection.
In this case the kinetic term is $\int_{S^1} \langle{B},\mathrm{d}{A}\rangle$\,.
The space of zero-modes is thus given by the de Rham cohomology: $\mathcal{V} = H_{\mathrm{dR}}(S^1;\mathfrak{g})[1]\oplus H_{\mathrm{dR}}(S^1;\mathfrak{g}^*)$\,.
If~$\tau\in[0,1]$ is the coordinate of the circle, we have the corresponding basis~$[1],[\mathrm{d}\tau]$ for the cohomology of the circle and the following coordinate expression for the zero modes:
\beq
 \mathsf{a} = \mathsf{a}_0 + \mathsf{a}_1\mathrm{d}\tau ~,\qquad \mathsf{b} = \mathsf{b}^0 + \mathsf{b}^1\mathrm{d}\tau ~,
\eeq
where~$\mathsf{a}_{(i)}\in\mathfrak{g}$ and~$\mathsf{b}_{(i)}\in\mathfrak{g}^*$\,.
A Hodge decomposition for the de Rham complex of the circle is given by the following induction data:
\beq
 \Pi\omega(\tau) & 
  = \int_{S^1}(\mathrm{d}\tau'-\mathrm{d}\tau)\omega(\tau') ~, \\
 K\omega(\tau) &
  = \int_{S^1} \Big(\Theta(\tau-\tau')-\tau + \tau'-\frac{1}{2}\Big) \omega(\tau')~.
\eeq
The extension to the space of fields --Lie-algebra valued differential forms-- is immediate and the resulting propagator is:%
\footnote{
 The Lie-algebra part of the propagator in this paper is always the identity~$\delta^a_b$ and will be often omitted.
}
\beq
 \begin{tikzpicture}[baseline=(x.base)]
  \node[above] (0,0) {$\eta_{S_1}$};
  \node[right] (x) at (1,0) {$a,\tau'$};
  \draw[decoration={markings, mark=at position 0.5 with {\arrow{>}}},
        postaction={decorate},thick] (-1,0)node[left]{$b,\tau$} to (x);
 \end{tikzpicture}
 ~=\eta_{S^1}(\tau,b;\tau',a) = \Big(\Theta(\tau-\tau')-\tau + \tau'-\frac{1}{2}\Big) \delta^a_b~.
\eeq

\subsubsection{Propagators on the interval}

\subsubsection*{Interval in $\mathbb{A}$-$\mathbb{B}$ polarization}

Let us consider now BF theory on the unit interval~$I=[0,1]$ with~$\mathbb{B}$ polarization at~$\{0\}$ and~$\mathbb{A}$ polarization at~$\{1\}$\,.
The space of bulk fields is now given by differential forms with Dirichlet boundary conditions on one of the two endpoints.
The cohomology of the differential~$\mathrm{d}$ on this space of differential forms is vanishing, thus the space of zero-modes is empty~$\mathcal{V}=0$\,.
The chain homotopy~$K$ is now
\beq
 K\omega (t) = \int_t^1 \omega(t')
\eeq
and we have the corresponding propagator
\beq
 \begin{tikzpicture}[baseline=(x.base)]
  \node[above] (0,0) {$\eta$};
  \node[right] (x) at (1,0) {$t'$};
  \draw[decoration={markings, mark=at position 0.5 with {\arrow{>}}},
        postaction={decorate},thick] (-1,0)node[left]{$t$} to (x);
 \end{tikzpicture}
 ~=\eta(t;t') = -\Theta(t'-t) ~.
\eeq
Notice that propagation can only occur if~$t<t'$\,, i.e. moving away from the~$\mathbb{B}$ endpoint and toward the~$\mathbb{A}$ endpoint of the interval.

\subsubsection*{Interval in $\mathbb{A}$-$\mathbb{A}$ polarization}

If we take the~$\mathbb{A}$ polarization on both endpoints of the interval, the~${A}$ fields will have Dirichlet boundary conditions at the endpoints while the~${B}$ fields will have free boundary conditions: $\mathcal{Y}=\Omega(I,\partial I;\mathfrak{g})[1]\oplus\Omega(I;\mathfrak{g}^*)$\,.
The cohomology is concentrated in form-degree~1 for the~$A$ fields and in form-degree~0 for the~$B$ fields
\beq
 \mathcal{V} = H(I,\partial I;\mathfrak{g})[1]\oplus H(I;\mathfrak{g}^*) \simeq \mathfrak{g}[1]\oplus\mathfrak{g}^* ~,
\eeq
so that the form-degree expansion of the zero modes is $ \mathsf{a} = \mathsf{a}_1\mathrm{d}t \,,~ \mathsf{b} = \mathsf{b}^0$\,.
The chain retraction is given by the following data:
\beq
 \eta(t,t') = \Theta(t-t') -t ~,\qquad \pi(t,t')=-\mathrm{d}t~.
\eeq

\subsubsection*{Interval in $\mathbb{B}$-$\mathbb{B}$ polarization}

The interval with~$\mathbb{B}$ polarization on both endpoints has the role of~$A$ and~$B$ fields reversed with respect to the previous case.
The space of bulk fields is $\mathcal{Y}=\Omega(I;\mathfrak{g})[1]\oplus\Omega(I,\partial I;\mathfrak{g}^*)$ and the zero-modes are: $ \mathsf{a} = \mathsf{a}_0\,,~ \mathsf{b} = \mathsf{b}^1\mathrm{d}t $\,.
The propagator and the projection to cohomology are:
\beq
 \eta(t,t') = -\Theta(t'-t) +t' ~,\qquad \pi(t,t')=\mathrm{d}t'~.
\eeq

\subsection{Axial gauge propagators on the cylinder}\label{Axial_Gauge_Cylinder}

Consider now a cylinder $S^1\times I$ and let~$t$ denote the coordinate of the interval, $\tau$~the coordinate along the circle, $\chi_i$~a basis for the cohomology of~$S^1$ and~$\chi^i$ its dual basis.
Using the 1-dimensional propagators of appendix~\ref{One-Dimensional Propagators}, from the axial-gauge formula~\eqref{axial_gauge_prop.} we get the following propagators on the cylinder.
\beq\label{axial_gauge_table}
\begin{tabular}{cc|c|l}
  &&zero-modes&\multicolumn{1}{c}{ 
  \begin{tikzpicture}[baseline=(x.base)]
  \node[above] (0,0) {$\eta(\tau,t;\tau',t')$};
  \node[right] (x) at (1.2,0) {$\tau',t'$};
  \draw[decoration={markings, mark=at position 0.5 with {\arrow{>}}},
        postaction={decorate},thick] (-1.2,0)node[left]{$\tau,t$} to (x);
 \end{tikzpicture}} 	\\ \cline{2-4}
 \multirow{5}{*}{\rotatebox[origin=c]{90}{Polarization}} & \multicolumn{1}{l|}{$\mathbb{A-B}$}	
   & $0$	
    & $-\Theta(t'-t)\delta(\tau-\tau')(\mathrm{d}\tau'-\mathrm{d}\tau)$ 	\\ \cline{2-4}
  &\multirow{2}{*}{$\mathbb{A-A}$}	& \multicolumn{1}{l|}{$\mathsf{a} = \mathsf{a}_i \; \mathrm{d}t\wedge\chi^i $}
    & \multirow{2}{*}{$(\Theta(t-t')-t)\delta(\tau'-\tau)(\mathrm{d}\tau'-\mathrm{d}\tau) - \mathrm{d}t\big( \Theta(\tau-\tau')-\tau-\tau'-\frac{1}{2} \big)$}      \\ 
  &&\multicolumn{1}{l|}{$\mathsf{b} = \mathsf{b}^i \; \chi_i $}&   \\ \cline{2-4}
  &\multirow{2}{*}{$\mathbb{B-B}$}	& \multicolumn{1}{l|}{$\mathsf{a} = \mathsf{a}_i \; \chi^i $}
    & \multirow{2}{*}{$(t'-\Theta(t'-t))\delta(\tau'-\tau)(\mathrm{d}\tau'-\mathrm{d}\tau) + \mathrm{d}t'\big( \Theta(\tau-\tau')-\tau-\tau'-\frac{1}{2} \big)$}      \\ 
  &&\multicolumn{1}{l|}{$\mathsf{b} = \mathsf{b}^i \; \mathrm{d}t\wedge\chi_i $}&   \\
\end{tabular}
\eeq
Reversing the role of the circle and the interval in formula~\eqref{axial_gauge_prop.} we would obtain different expressions for the propagator, called for the cylinder \emph{horizontal gauge}, but we don't need this choice in this paper.

\section{Computations of some Feynman diagrams}\label{diagram_computations}

We present here the proofs of Propositions~\ref{BB_eff_act},~\ref{BB_disk_part.funct.}, consisting in the evaluation of tree and loop diagrams in the axial gauge.
These computations are variations of the ones contained in~\cite{mnev:discrete_BF},~Lemma~3~and~4, obtained in the 1-dimensional setting.

\begin{proof}[Proof of Proposition~\ref{BB_eff_act}]
We have to evaluate the 1-loop diagrams of figure~\ref{BB_cyl}.
The amplitude for a diagram with $n\geqslant 2$ vertices is:
\beq\label{BB_loop_amp}
 \begin{aligned}
  &\frac{1}{n} \mathrm{tr} (\mathrm{ad}_{\mathsf{a}_1}^n) \int_{(S^1)^n}\mathrm{d}\tau_1\cdots\mathrm{d}\tau_n\; 
   \eta_{S^1}(\tau_1;\tau_2)\cdots\eta_{S^1}(\tau_{n-1};\tau_{n}) \eta_{S^1}(\tau_n;\tau_1)\\
  &= \frac{1}{n} \mathrm{tr} (\mathrm{ad}_{\mathsf{a}_1})^n\; \mathrm{tr}\big(K(\chi_1\wedge\bullet)\big)^n ~.
 \end{aligned}
\eeq
where we chose the basis $\chi_0=1,\chi_1=\mathrm{d}\tau$ for $H^\bullet (S^1)$ and $K$ is the chain homotopy with integral kernel $\eta_{S^1}(\tau;\tau')=\Theta(\tau-\tau')-\tau+\tau'-\frac{1}{2}$\,.
We will compute $\mathrm{tr}\big(K(\chi_1\wedge\bullet)\big)^n$ in the monomial basis $1,\tau,\tau^2,\dots$\,. 
Let us define the generating function:
\beq
 f_m(x,\tau) = \sum_{n=0}^\infty x^n \big(K(\chi_1\wedge\bullet)\big)^n \tau^m ~.
\eeq
Applying $xK(\chi_1\wedge\bullet)$ on both sides we get
\beq\label{gen.func.eq}
 xK(\chi_1f_m)(x,\tau)= f_m(x,\tau) - \tau^m
\eeq
and, differentiating w.r.t. $\tau$\,, we obtain the differential equation:
\beq
 \frac{\partial }{\partial \tau} f_m = xf_m + m \tau^{m-1} -x \int_0^1\mathrm{d}\tau \; f_m ~.
\eeq
Solutions to the above equation are of the form
\beq\label{fm1}
 f_m(x,\tau) = A(x)\mathrm{e}^{x\tau} + B(x) + \mathrm{e}^{x\tau}\int_0^\tau \mathrm{d}\tilde{\tau}\; m\tilde{\tau}^{m-1}\mathrm{e}^{-x\tilde{\tau}} ~,
\eeq
where $A(x)=\frac{1}{\mathrm{e}^x-1}(1-\mathrm{e}^x\int_0^1\mathrm{d}\tau\; m\tau^{m-1}\mathrm{e}^{-x\tau})$ and $B(x)$ is to be determined from the boundary conditions.
Since $K(\chi_1 f_m)(x,0)=K(\chi_1f_m)(x,1)$\,, from~\eqref{gen.func.eq} we have:
\beq
\begin{aligned}
 f_m(x,1)-1&=f_m(x,0)= xK(\chi_1f_m)(x,0)\\
 &= x\int_0^1\mathrm{d}{\tau}\; \Big( A(x)\mathrm{e}^{x{\tau}}+B(x) + \mathrm{e}^{x\tau}\int_0^\tau \mathrm{d}\tilde{\tau}\; 
  m\tilde{\tau}^{m-1}\mathrm{e}^{-x\tilde{\tau}} \Big)({\tau}-\frac{1}{2})\\
 &=A(x)g(x) + C(x) ~,
\end{aligned}
\eeq
where $g(x)=x\int_0^1\mathrm{d}\tilde{\tau}\;(\tilde{\tau}-\frac{1}{2})\mathrm{e}^{x\tilde{\tau}}$ and $C (x)= x\int_0^1\mathrm{d}{\tau}\;
 \mathrm{e}^{x\tau}({\tau}-\frac{1}{2})\int_0^\tau \mathrm{d}\tilde{\tau}\; m\tilde{\tau}^{m-1}\mathrm{e}^{-x\tilde{\tau}} $\,.
Moreover from~\eqref{fm1} we have $f_m(x,0)=A(x)+B(x)$ and thus:
\beq\label{fm2}
 f_m(x,\tau)= \frac{\mathrm{e}^{x\tau}-1}{\mathrm{e}^{x}-1}\bigg(1-\mathrm{e}^x\int_0^1\mathrm{d}\tilde{\tau}\; 
  m\tilde{\tau}^{m-1}\mathrm{e}^{-x\tilde{\tau}}\bigg) + \mathrm{e}^{x\tau}\int_0^\tau \mathrm{d}\tilde{\tau}\; 
  m\tilde{\tau}^{m-1}\mathrm{e}^{-x\tilde{\tau}} + f_m(x,0) ~.
\eeq
We can now extract the trace of powers of $\mathcal{M}:= K(\chi_1\wedge\bullet)$ from the series of the coefficients of $\tau^m$ in the expansion of $f_m$\,:
\beq
\begin{aligned}
 &f_{mm}(x) := \sum_{n=0}^\infty \langle \tau^m|  x^n \mathcal{M}^n |\tau^m\rangle~,\\
 \Rightarrow \quad &\sum_{m=1}^{\infty} (f_{mm}(x) -1) = \sum_{n=1}^{\infty} x^n \mathrm{tr}\mathcal{M}^n ~.	
\end{aligned}
\eeq
The coefficients $f_{mm}(x)$ can be read from~\eqref{fm2}:
\beq
\begin{aligned}
 f_m(x,\tau)&= \frac{\mathrm{e}^{x\tau}-1}{\mathrm{e}^{x}-1} \sum_{k=0}^{m-1} \frac{m!}{(m-k)!}x^{-k} - \sum_{k=0}^{m-1} \frac{m!}{(m-k)!}\tau^{m-k}x^{-k} + f_m(x,0),\\
 \Rightarrow\quad  f_{mm}(x)&=1-\frac{1}{\mathrm{e}^{x}-1} \sum_{k=m+1}^\infty \frac{x^k}{k!}~.
\end{aligned}
\eeq
Thus we get:
\beq
\begin{aligned}
 &\sum_{n=1}^{\infty} x^n \mathrm{tr}\mathcal{M}^n &&= -\frac{1}{\mathrm{e}^{x}-1} \sum_{m=1}^\infty \sum_{k=m+1}^\infty \frac{x^k}{k!} =
  -\frac{1}{\mathrm{e}^{x}-1} \sum_{k=2}^\infty  \frac{k-1}{k!}x^k  \\
 &&&= 1-x-\frac{x}{\mathrm{e}^{x}-1} = -\frac{1}{2}x - \sum_{n=2}^\infty \frac{B_n}{n!} x^n~,\\
 &\Rightarrow\quad \mathrm{tr}\mathcal{M}^n&&=-\frac{B_n}{n!}\qquad \mathrm{for}~n\geqslant2~,
\end{aligned}
\eeq
where $B_n$ are the Bernoulli numbers.
\end{proof}

\begin{proof}[Proof of Proposition~\ref{BB_disk_part.funct.}]
We have to evaluate the diagrams of the kind depicted in figure~\ref{BB_disk_diagrams}.
The amplitude~$I_n$ for a tree rooted on a $\mathbb{B}$~boundary field and ending on~$\alpha$ is:
\beq
 I_n=(-1)^{n+1} \int_{I^{\times (n+1)}} \langle \mathbb{B}_{(1)}(\tau_0), \eta_I(\tau_0,\tau_1)\cdots \eta_I(\tau_{n-1},\tau_n) \,
  \eta_I(\tau_n,1)\, \mathrm{ad}^n_{\mathsf{a}_1} \alpha \rangle \mathrm{d}\tau_0 \cdots \mathrm{d}\tau_n ~,
\eeq
where $\eta_I(\tau,\tau')= \Theta(\tau-\tau')-\tau$.
The result of this integral can be expressed in terms of the Bernoulli polynomials:
\beq\label{tree_ampl}
 I_n=(-1)^n \int_{I} \big\langle \mathbb{B}_{(1)}(\tau), \frac{B_{n+1}(\tau) - B_{n+1}}{(n+1)!}\, \mathrm{ad}^n_{\mathsf{a}_1} \alpha \big\rangle~.
\eeq
To prove~\eqref{tree_ampl}, let's define the operator $Kg(\tau):= \int_I \eta_I (\tau,\tau') g(\tau') \mathrm{d}\tau'$ and the generating function
\beq
 f(x;\tau) := \sum_{n=0}^\infty x^n K^n(t)~.
\eeq
This function satisfies the differential equation:
\beq
 \frac{\partial}{\partial \tau} f(x;\tau) = x f(x;\tau) + 1- \int_I f(x;\tau') \mathrm{d}\tau' = x f(x;\tau) + C(x)~,
\eeq
where $C(x)$ does not depend on $\tau$\,.
Since only the term $n=0$ contributes to~$f$ evaluated on the endpoints $\tau=0,1$, $f$ satisfies $f(x;0)=0$, $f(x;1)=1$.
Solving the differential equation with this boundary conditions we get:
\beq
 f(x;\tau) = \frac{1-\mathrm{e}^{x\tau}}{1-\mathrm{e}^{x}} = \frac1x \big( \frac{x}{1-\mathrm{e}^{x}} 
  - \frac{x\mathrm{e}^{x\tau}}{1-\mathrm{e}^{x}} \big) = \sum_{n=0}^\infty \frac{B_{n+1}(\tau) - B_{n+1}}{(n+1)!} x^n ~.
\eeq

Since $K(1)=0$, we have $K^n(\eta_I(\tau;1))= - K^n(\eta_I(\tau;0))$.
Thus, similar contributions to~\ref{tree_ampl} come from trees ending on~$\widetilde{\alpha}$ (the main difference being in the term for $n=0$) or rooted on $\widetilde{\mathbb{B}}$ or on $\mathsf{b}$.
By summing over~$n$ we get, for the tree part of the effective action:
\beq
 \mathcal{S}^{\mathrm{eff.}}_{\mathrm{tree}} = \int_0^1 \langle \mathbb{B}_{(1)}(\tau) - \widetilde{\mathbb{B}}_{(1)}(\tau), 
  \mathsf{G}_+(\tau, \mathrm{ad}_{\mathsf{a}_1}) \alpha + \mathsf{G}_-(\tau, \mathrm{ad}_{\mathsf{a}_1}) \widetilde{\alpha} \rangle \mathrm{d}\tau + \\
 + \langle \mathsf{b}^1,  \mathsf{F}_+(\mathrm{ad}_{\mathsf{a}_1}) \alpha + \mathsf{F}_-(\mathrm{ad}_{\mathsf{a}_1}) \widetilde{\alpha} \rangle ~.
\eeq

The amplitude for a wheel diagram is:
\beq
 -\frac{\mathrm{i}\hbar}{n}\mathrm{tr}(\mathrm{ad}_{\mathsf{a}_1}^n) \int_{I^n} \mathrm{d}\tau_1\cdots\mathrm{d}\tau_n\; 
   \eta_{I}(\tau_1;\tau_2)\cdots\eta_{I}(\tau_{n-1};\tau_{n}) \eta_{I}(\tau_n;\tau_1) ~.
\eeq
This integral is the same as the one appearing in~\cite{mnev:discrete_BF} and can be computed with the technique used to prove equation~\ref{BB_loop_amp}. 
The result for the loop contribution to the effective action is thus:
\beq
 \mathcal{S}^{\mathrm{eff.}}_{\mathrm{loop}} =- \mathrm{i}\hbar \sum_{n\geqslant 2} \frac{1}{n!} \mathrm{tr}(\mathrm{ad}_{\mathsf{a}_1}^n) \frac{B_n}{n} 
  =  - \mathrm{i}\hbar~ \mathrm{tr} \bigg( \log \bigg( \frac{\mathrm{sinh} (\mathrm{ad}_{\mathsf{a}_1}/2)}{\mathrm{ad}_{\mathsf{a}_1}/2} \bigg)\bigg)~.
\eeq
\end{proof}

\section{Proof of Proposition~\ref{prop: Omega^2=0}}\label{Omega^2=0}

Here we present a direct computational proof that $\Omega^2=0$ for a stratified circle, with any choice of polarizations on the edges and corners.

First note that edge contributions $\Omega^\mathbb{A}_I$, $\Omega^\mathbb{B}_I$ and pure corner contributions $\Omega^\alpha_p$, $\Omega^\beta_p$ all square to zero. 
Also, edge contributions and pure corner contributions commute. 
In particular, we have
\begin{equation}\label{Omega^2 corner contributions}
 \Omega^2=\sum_{k} \big(\Omega^{ \mathbb{P}_{k}\xi_k \mathbb{P}_{k+1}}_{p_k}\big)^2 
  + \big[\Omega^{ \mathbb{P}_{k}\xi_k \mathbb{P}_{k+1}}_{p_k}, \Omega_{I_k}^{\mathbb{P}_k}+\Omega_{I_{k+1}}^{\mathbb{P}_{k+1}}\big]~.
\end{equation}

Denote $\mathrm{BCH}(x,y)=\log(\mathrm{e}^x \mathrm{e}^y)$ for $x,y\in \mathfrak{g}$ -- the Baker-Campbell-Hausdorff group law. 
We will need the following identities
\begin{equation}\label{BCH identities}
 \mathrm{BCH}(x,y)=x+\mathsf{F}_+(\mathrm{ad}_x)y+\mathcal{O}(y^2)~ , \qquad
  \mathrm{BCH}(x,y)=y-\mathsf{F}_-(\mathrm{ad}_y)x+\mathcal{O}(x^2)
\end{equation}
which are the cases of the BCH formula when either first or second argument is infinitesimal.

Let us study e.g. a $\beta$-corner $p$ surrounded by a $\mathbb{B}$-edge $I'$ on the left and an $\mathbb{A}$-edge $I$ on the right. 
We have $\Omega^{\mathbb{B}\beta\mathbb{A}}_p=\Omega^{\beta\mathbb{A}}_p$ given by~\eqref{HE_contributions_to_Omega}. 
Applying this operator to a wavefunction of form $\psi(\beta)=\mathrm{e}^{-\frac{\mathrm{i}}{\hbar}\langle \beta,x \rangle}$, with $x\in\mathfrak{g}$ a parameter, yields
\begin{equation}
 \Omega^{\beta\mathbb{A}}_p\psi = \langle \beta, \mathsf{F}_+(\mathrm{ad}_x)\mathbb{A}_p \rangle\,\psi 
  = \mathrm{i}\hbar\frac{\mathrm{d}}{\mathrm{d}\epsilon}\, \mathrm{e}^{-\frac{\mathrm{i}}{\hbar}\langle \beta,\mathrm{BCH}(x,\epsilon\, \mathbb{A}_p) \rangle}~,
\end{equation}
with $\epsilon$ an odd parameter. 
Here we have used the first identity in~\eqref{BCH identities}. 
Note that similarly one can write $\Omega^{\mathbb{A}\beta}_p\psi = -\mathrm{i}\hbar \frac{\mathrm{d}}{\mathrm{d}\epsilon} \mathrm{e}^{-\frac{\mathrm{i}}{\hbar}\langle \beta ,\mathrm{BCH}(\epsilon\, \mathbb{A}_p,x) \rangle}$, using the second identity in~\eqref{BCH identities}.
This implies
\beq\label{Omega^betaA squared BCH computation}
 \big(\Omega^{\beta\mathbb{A}}_p\big)^2\psi
  &= (\mathrm{i}\hbar)^2\frac{\mathrm{d}}{\mathrm{d}\epsilon_2}\, \frac{\mathrm{d}}{\mathrm{d}\epsilon_1}\, \mathrm{e}^{-\frac{\mathrm{i}}{\hbar}\langle \beta,
   \mathrm{BCH}(\mathrm{BCH}(x,\epsilon_1\, \mathbb{A}_p),\epsilon_2\, \mathbb{A}_p) \rangle} \\
 &= (\mathrm{i}\hbar)^2\frac{\mathrm{d}}{\mathrm{d}\epsilon_2}\, \frac{\mathrm{d}}{\mathrm{d}\epsilon_1}\, \mathrm{e}^{-\frac{\mathrm{i}}{\hbar}\langle \beta,
  \mathrm{BCH}(x,\mathrm{BCH}(\epsilon_1\, \mathbb{A}_p,\epsilon_2\, \mathbb{A}_p)) \rangle} \\
 &= (\mathrm{i}\hbar)^2\frac{\mathrm{d}}{\mathrm{d}\epsilon_2}\, \frac{\mathrm{d}}{\mathrm{d}\epsilon_1}\, \mathrm{e}^{-\frac{\mathrm{i}}{\hbar}\langle \beta,
  \mathrm{BCH}(x,(\epsilon_1+\epsilon_2)\, \mathbb{A}_p-\frac12 \epsilon_1\epsilon_2\, [\mathbb{A}_p,\mathbb{A}_p]) \rangle}\\
 &= (\mathrm{i}\hbar)^2\frac{\mathrm{d}}{\mathrm{d}\epsilon_2}\, \frac{\mathrm{d}}{\mathrm{d}\epsilon_1}\, \mathrm{e}^{-\frac{\mathrm{i}}{\hbar}\langle \beta,
    \mathrm{BCH}(x,-\frac12 \epsilon_1\epsilon_2\, [\mathbb{A}_p,\mathbb{A}_p]) \rangle} \\
 &= -\frac{\mathrm{i}\hbar}{2}\Big\langle  [\mathbb{A}_p,\mathbb{A}_p] , \frac{\partial}{\partial \mathbb{A}_p}\Big\rangle \Omega^{\beta \mathbb{A}}_p \psi 
  = - [\Omega_I^\mathbb{A},\Omega^{\beta \mathbb{A}}_p]\psi ~.
\eeq
Note that the main trick of this computation is the use of associativity of the BCH formula. 
Operators $\big(\Omega^{\beta\mathbb{A}}_p\big)^2$, $[\Omega_I^\mathbb{A},\Omega^{\beta \mathbb{A}}_p]$ are multiplication operators in the variable $\mathbb{A}_p$, thus the computation above, for $\psi$ independent of $\mathbb{A}_p$ is sufficient to ascertain that $\big(\Omega^{\beta\mathbb{A}}_p\big)^2 + [\Omega^{\beta \mathbb{A}}_p, \Omega_I^\mathbb{A}]=0$ as operators. 
Further, note that $[\Omega^{\beta\mathbb{A}}_p, \Omega_{I'}^\mathbb{B}]$ contains derivatives in $\mathbb{B}^{(1)}_p$ and therefore vanishes on \emph{admissible} states, in the sense of Assumption~\ref{assump: admissible states}. 
This proves that the contribution of a $\mathbb{B}\beta\mathbb{A}$ corner to $\Omega^2$ (cf.~the right hand side of~\eqref{Omega^2 corner contributions}) vanishes. 
The case $\mathbb{A}\beta\mathbb{B}$ is an orientation reversal of the case we just studied; it is treated analogously and also yields a zero contribution to the r.h.s. of~\eqref{Omega^2 corner contributions}.

Case of an $\mathbb{A}\beta\mathbb{A}$ corner is treated similarly. 
Here $\Omega^{\mathbb{A}\beta\mathbb{A}}_p=\Omega^{\mathbb{A}\beta}_p+ \Omega^{\beta\mathbb{A}}_p$. 
We have $\big(\Omega^{\beta\mathbb{A}}_p\big)^2+[\Omega^{\beta \mathbb{A}}_p, \Omega_I^\mathbb{A}]=0$ as above, and similarly $\big(\Omega^{\mathbb{A}\beta}_p\big)^2+[\Omega^{\mathbb{A}\beta}_p, \Omega_{I'}^\mathbb{A}]=0$. 
We also need to understand the term $[\Omega^{\mathbb{A}\beta}_p,\Omega^{\beta \mathbb{A}}_p]$, which is done similarly to~\eqref{Omega^betaA squared BCH computation}:
\beq
 \Omega^{\mathbb{A}\beta}_p \Omega^{\beta \mathbb{A}}_p \psi 
  &= -(\mathrm{i}\hbar)^2 \frac{\mathrm{d}}{\mathrm{d}\epsilon_2} \frac{\mathrm{d}}{\mathrm{d}\epsilon_1} 
   \mathrm{e}^{-\frac{\mathrm{i}}{\hbar}\langle \beta , \mathrm{BCH}(\epsilon_2\, \mathbb{A}_{p-0} , \mathrm{BCH}(x, \epsilon_1\, \mathbb{A}_{p+0}) ) \rangle}\\
 &= -(\mathrm{i}\hbar)^2 \frac{\mathrm{d}}{\mathrm{d}\epsilon_2} \frac{\mathrm{d}}{\mathrm{d}\epsilon_1} 
  \mathrm{e}^{-\frac{\mathrm{i}}{\hbar}\langle \beta , \mathrm{BCH}( \mathrm{BCH}(\epsilon_2\, \mathbb{A}_{p-0} ,x), \epsilon_1\, \mathbb{A}_{p+0}) ) \rangle}
   = - \Omega^{\beta \mathbb{A}}_p \Omega^{\mathbb{A}\beta}_p \psi ~.
\eeq
Hence, $[\Omega^{\mathbb{A}\beta}_p,\Omega^{\beta \mathbb{A}}_p]=0$ and the contribution of an $\mathbb{A}\beta\mathbb{A}$ corner to the r.h.s. of~\eqref{Omega^2 corner contributions} vanishes.

In the case of an $\mathbb{A}\alpha\mathbb{B}$ corner, we have $\Omega^{\mathbb{A}\alpha\mathbb{B}}_p=\Omega^\alpha_p+\Omega^{\alpha\mathbb{B}}_p$. 
By a computation similar to~\eqref{Omega^betaA squared BCH computation}, one shows that $(\big(\Omega^{\alpha\mathbb{B}}_p\big)^2+ [\Omega^\alpha_p,\Omega^{\alpha\mathbb{B}}_p])\psi=0$ for $\psi=\mathrm{e}^{-\frac{\mathrm{i}}{\hbar}\langle \mathbb{B}_p,x \rangle}$. 
Together with $\big(\Omega^\alpha_p\big)^2=0$, this shows that $\big(\Omega^{\mathbb{A}\alpha\mathbb{B}}_p \big)^2=0$. 
Furthermore, $[\Omega^{\mathbb{A}\alpha\mathbb{B}}_p ,\Omega^\mathbb{A}_{I'}]=0$ and $[\Omega^{\mathbb{A}\alpha\mathbb{B}}_p ,\Omega^\mathbb{B}_{I}]=0$ on admissible states. 
Thus, the contribution an $\mathbb{A}\alpha\mathbb{B}$ to the r.h.s. of~\eqref{Omega^2 corner contributions} also vanishes. 
Orientation-reversed case $\mathbb{B}\alpha\mathbb{A}$ is similar.

Case of a $\mathbb{B}\alpha\mathbb{B}$ corner is similar to the above: we have  $\Omega^{\mathbb{B}\alpha\mathbb{B}}_p =\Omega^{\mathbb{B}\alpha}_p +\Omega^\alpha_p + \Omega^{\alpha\mathbb{B}}_p$. 
As above, we have $\big(\Omega^{\alpha\mathbb{B}}_p\big)^2+ [\Omega^\alpha_p,\Omega^{\alpha\mathbb{B}}_p]=0$ and similarly $\big(\Omega^{\mathbb{B}\alpha}_p\big)^2 + [\Omega^\alpha_p,\Omega^{\mathbb{B}\alpha}_p]=0$. 
One also trivially has $[\Omega^{\mathbb{B}\alpha}_p,\Omega^{\alpha\mathbb{B}}_p]=0$. 
Thus, $\big(\Omega^{\mathbb{B}\alpha\mathbb{B}}_p \big)^2=0$. 
Also, the corner contribution to $\Omega$ commutes with the edge terms on admissible states. 
This proves that the contribution of a $\mathbb{B}\alpha\mathbb{B}$ corner to the r.h.s. of~\eqref{Omega^2 corner contributions} vanishes, too.

Cases of $\mathbb{A}\alpha\mathbb{A}$ and  $\mathbb{B}\beta\mathbb{B}$ corners are trivial.
This finishes the proof that all terms in the sum~\eqref{Omega^2 corner contributions} over the corners vanish, and thus $\Omega^2=0$ for an arbitrarily stratified and polarized circle.

\section{A check of the chain map property of the inclusion of the small model for $\mathbb{A}$-states on an interval into the full model}\label{app:chain}

One can check directly that~\eqref{i_H on A beta circle} is indeed a chain map. 
First, it is clearly an algebra morphism (w.r.t. the standard supercommutative pointwise product on $\mathrm{Fun}(\cdots)$), so it is enough to check the chain map property on a set of generators of $\mathcal{H}^\mathrm{small}$. 
Assume for simplicity that $\mathfrak{g}\subset \mathrm{Mat}_N$ is a matrix Lie algebra and choose as generators 
\beq
 f_{k,\rho}:=\mathrm{tr}\, \rho\, \underline{\mathbb{A}}_{k} ~,\qquad g_{\rho}:=\mathrm{tr}\,\rho\, \mathrm{e}^{\underline{\mathbb{A}}} ~,
\eeq
with $\rho\in \mathrm{Mat}_N$ arbitrary parameter and $k=0,1$. 
From~\eqref{Omega small} and~\eqref{i_H on A beta circle}, we immediately obtain that $i_\mathcal{H}\circ \Omega^\mathrm{small}= \Omega\circ i_\mathcal{H} $ when applied to the generators $f_{k,\rho}$.
For $g_{\rho}$, from~(\ref{Omega small},\ref{i_H on A beta circle}) and from the rule for the deformation of holonomy under an infinitesimal gauge transformation, we obtain:
\beq
 (i_\mathcal{H}\circ \Omega^\mathrm{small}) g_{\rho} = (\Omega\circ i_\mathcal{H}) g_{\rho} 
  = -\mathrm{i}\hbar\, \mathrm{tr}\, \rho \,\big(U(\mathbb{A})\cdot  \mathbb{A}_1 - \mathbb{A}_0\cdot U(\mathbb{A}) \big)~.
\eeq
Here we used the observation that $\Omega^\mathrm{small}g_{\rho}=-\mathrm{i}\hbar \, \mathrm{tr}\, \rho ( \mathrm{e}^X c_1 - c_0 \mathrm{e}^X )$ with shorthand notation $c_0=\underline{\mathbb{A}}_{0}$, $c_1=\underline{\mathbb{A}}_{1}$, $X=\underline{\mathbb{A}}$. 
Indeed, we have 
\beq
 &\frac{\mathrm{i}}{\hbar}\Omega^\mathrm{small}g_{\rho} =\mathrm{tr}\, \rho \sum_{p,q\geq 0} \frac{1}{(p+q+1)!}X^p(\mathsf{F}_+(\mathrm{ad}_X)c_1 
  + \mathsf{F}_-(\mathrm{ad}_X)c_0)X^q \\
 &= \mathrm{tr}\, \rho \sum_{p,q,j,l\geq 0}\frac{1}{(p+q+1)!} \bigg(\frac{(-1)^l B^+_{j+l}}{j!l!} X^{j+p} c_1 X^{l+q} 
  - \frac{(-1)^l B^-_{j+l}}{j!l!} X^{j+p}c_0 X^{l+q}\bigg)~,
\eeq
where $B^+_i$ and $B^-_i$ are the Taylor coefficients of $\mathsf{F}_+(x)$ and $-\mathsf{F}_-(x)$, respectively. 
Note that, for~$x,y$ scalars, we have
\beq
 &\sum_{p,q,j,l\geq 0}\frac{1}{(p+q+1)!}\frac{(-1)^l B^+_{j+l}}{j! l!}x^{j+p}y^{l+q}
  =\Bigg(\sum_{p,q\geq 0}\frac{x^py^q}{(p+q+1)!} \Bigg) \Bigg(\sum_{j,l\geq 0}\frac{(-1)^l B^+_{j+l}}{j!l!}x^jy^l\Bigg) \\
 &=\frac{\mathrm{e}^x-\mathrm{e}^y}{x-y}\cdot \mathsf{F}_+(x-y)=\mathrm{e}^x
\eeq
and, similarly, $\sum_{p,q,j,l\geq 0}\frac{1}{(p+q+1)!}\frac{(-1)^l B^-_{j+l}}{j! l!}x^{j+p}y^{l+q}=\mathrm{e}^y$.
Thus:
\beq
 \frac{\mathrm{i}}{\hbar}\Omega^\mathrm{small}g_{\rho}=\mathrm{tr}\,\rho \big(\mathrm{e}^X c_1 - c_0 \mathrm{e}^X\big)
\eeq
as claimed.

\newpage
\bibliography{Bibliography}
\end{document}